\newcommand{\email}[1]{\href{mailto:#1}{\texttt{#1}}}
\newcommand*\samethanks[1][\value{footnote}]{\footnotemark[#1]}
\title{Robust Radical Sylvester-Gallai Theorem for Quadratics}
\author{
  Abhibhav Garg
  \thanks{
    University of Waterloo, \ \ 
    \email{a65garg@uwaterloo.ca}, \ \ \email{rafael@uwaterloo.ca}
  }
  \and
  Rafael Oliveira 
  \samethanks  
  \and
  Akash Kumar Sengupta
  \thanks{
    Columbia University,
    \email{akashs@math.columbia.edu}
  }
}
\begin{document}

\maketitle

\begin{abstract}
We prove a robust generalization of a Sylvester-Gallai type theorem for quadratic polynomials, generalizing the result in \cite{S19}.
More precisely, given a parameter $0 < \delta \leq 1$ and a finite collection $\cF$ of irreducible and pairwise independent polynomials of degree at most 2, we say that $\cF$ is a $(\delta, 2)$-radical Sylvester-Gallai configuration if for any polynomial $F_i \in \cF$, there exist $\delta(|\cF| -1)$ polynomials $F_j$ such that $|\radideal{F_i, F_j} \cap \cF| \geq 3$, that is, the radical of $F_i, F_j$ contains a third polynomial in the set.

In this work, we prove that any $(\delta, 2)$-radical Sylvester-Gallai configuration $\cF$ must be of low dimension: that is 
$$\dim \Cspan{\cF} = \poly{1/\delta}.$$
\end{abstract}

\section{Introduction}\label{sec:intro}


Suppose $v_{1}, \dots, v_{m} \in \bR^{n}$ is a set of distinct points, such that the line joining any two points in the set contains a third point.
In 1893, Sylvester asked if such configurations of points are necessarily colinear \cite{sylvester1893mathematical}.
Independently, this same question was asked by Erdös in 1943 \cite{erdos1943problems}.
This was independently proved by \cite{melchior1940uber, gallai1944solution}, and this result is known as the Sylvester-Gallai theorem.
A set of points satisfying the above property is called a Sylvester-Gallai (SG) configuration.

Sylvester-Gallai theorems depend on the base field.
For instance, it is well known that any nonsingular planar cubic curve over $\bC$ has nine inflection points, and that any line passing through two such points passes through a third \cite{dickson1914points}.
These nine points are not collinear, and therefore form a counterexample to the Sylvester-Gallai theorem when the underlying field is changed from $\bR$ to $\bC$.
In 1966, Serre asked if there are configuration of points in $\bC^{n}$ that satisfy the Sylvester-Gallai that are not coplanar \cite{serre1966advanced}. In 1986, Kelly noted that no such configurations can exist, namely, that points in $\bC^{n}$ that satisfy the Sylvester-Gallai property must be coplanar \cite{kelly1986resolution}. Kelly showed that the planarity of Sylvester-Gallai configurations is a simple consequence of Hirzebruch's work on line arrangements which relies on deep results from algebraic geometry \cite{H83}.

Over finite fields, Sylvester-Gallai configurations do not have bounded dimension.
For example, if we are working over the field $\bF_p$ (with $p > 2$) and our vector space is $\bF_p^n$, then the set of points is $\bF_{p}^{n}$ is a SG configuration of dimension $n$, which is not constant.
In general, any subgroup of $\bF_{p}^{n}$ will form a Sylvester-Gallai configuration.
Some bounds on the dimension of configurations in this setting can be found in \cite{dvir2012incidence}.
In this work, we only focus on fields of characteristic zero, and to make the presentation easier we restrict our attention to $\bC$.

Several variations and generalizations of the Sylvester-Gallai problem defined above have been studied in combinatorial geometry.
The underlying theme in all these types of questions is the following: 
\begin{center}
Are Sylvester-Gallai type configurations always low-dimensional?
\end{center}
In characteristic zero, the answer has always turned out to be yes.
For a thorough survey of the earlier works on SG-type theorems, we refer the reader to  \cite{borwein1990survey} and results therein.

While the above results are mathematically beautiful and interesting in their own right, it is also interesting and useful in areas such as computer science and coding theory to consider higher-dimensional analogs as well as robust analogs of SG-type theorems.

\paragraph{Higher-dimensional analogs of SG configurations}
In \cite{hansen1965generalization}, a higher dimensional version of the SG theorem was proved, with lines replaced by flats.
This variant has applications in the study of algebraic circuits, and in particular in Polynomial Identity Testing (PIT) \cite{kayal2009blackbox, saxena2013sylvester}, a central problem in algebraic complexity theory.
The works \cite{kayal2009blackbox, saxena2013sylvester} use the higher dimensional Sylvester-Gallai theorems to bound the ``rank" of certain types of depth three circuits.\footnote{Algebraic circuits which compute polynomials that can be written as a sum of products of linear forms.} 
In simple terms, if the linear forms of a circuit satisfy the high dimensional SG condition, then in essence the polynomial being computed must depend on a constant number of variables, in which case it is easy to check whether the circuit is computing a non-zero polynomial.

\paragraph{Robust analogs of SG configurations and applications:}
Robust generalizations of the Sylvester-Gallai theorem have found applications in coding theory and in complexity theory.

In this variant, for every point $v_{i}$ there are at least $\delta (m-1)$ points $u_{1}, \dots, u_{k}$ in the configuration such that $v_{i}$ and $u_{j}$ span a third point, for $1 \leq j \leq k$.
The usual Sylvester-Gallai theorem is the case when $\delta = 1$.
Such configurations were first studied by Szemerédi and Trotter \cite{szemeredi1983extremal}, who proved that if $\delta$ is bigger than an absolute constant close to $1$, then the configuration has constant dimension.

In \cite{BDWY11}, the authors prove that such a configuration has dimension $\bigO{1 / \delta^{2}}$, for any $0 < \delta \leq 1$.
This robust version also allows them to prove robust versions of the higher dimensional variants mentioned above.
They also define the notion of a $LCC$-configuration, which is an extension of the Sylvester-Gallai configuration where points are allowed to occur with multiplicity.
In \cite{DSW14}, the authors improve the bound on the dimension of robust Sylvester-Gallai configurations to $\bigO{1 / \delta}$.

In coding theory, these robust configurations naturally appear in the study of locally decodable codes and locally correctable codes \cite{BDWY11}.
These results, as well as similar results, are surveyed in \cite{dvir2012incidence}.
Robust SG configurations also have applications in the study of algebraic circuits, in particular in reconstruction of algebraic circuits \cite{sinha2016reconstruction}.

\paragraph{Higher degree generalizations of Sylvester-Gallai configurations:}
Also motivated by the PIT problem, Gupta in \cite{gupta2014algebraic} introduced higher degree generalizations of SG configurations, and asked if they are also ``low dimensional.''
In his paper, Gupta outlines a series of SG-type conjectures, and gives a deterministic polynomial-time blackbox PIT algorithm for a special class of algebraic circuits\footnote{These are circuits computed by a sum of constantly many products of constant degree polynomials.} assuming that these conjectures hold.

The first challenge in Gupta's series of conjectures on SG type theorems is the following:  
\begin{conjecture}[Conjecture~29, \cite{gupta2014algebraic}]
    \label{conjecture:firstdsg}
    Let $Q_{1}, \dots, Q_{m} \in \bC\bs{x_{1}, \dots, x_{n}}$ be irreducible, homogeneous, and of degree at most $d$ such that for every pair $Q_{i}, Q_{j}$ there is a $k$ such that $Q_{k} \in \radideal{Q_{i}, Q_{j}}$.
    Then the transcendence degree of $Q_{1}, \dots, Q_{m}$ is $\bigO{1}$ (where the constant depends on the degree $d$).
\end{conjecture}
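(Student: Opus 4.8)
\textbf{Proof plan for Conjecture~\ref{conjecture:firstdsg}.} The plan is to induct on the degree $d$, following the two-layer template that has been effective for polynomial Sylvester--Gallai problems: a \emph{local structure theorem} that classifies each radical-membership relation $Q_k \in \radideal{Q_i, Q_j}$, and a \emph{global argument} using the combinatorial hypothesis together with the local classification to collapse the configuration. The base case $d = 1$ is the classical complex Sylvester--Gallai theorem (Kelly's theorem), which bounds $\dim \Cspan{Q_1, \dots, Q_m}$ and hence $\operatorname{trdeg}(Q_1, \dots, Q_m)$. For the inductive step, the first task is to prove a statement of the form: if $Q_i, Q_j, Q_k$ are irreducible of degree at most $d$ and $Q_k \in \radideal{Q_i, Q_j}$, then one of (i) $Q_k \in \Cspan{Q_i, Q_j}$; (ii) there exist at most $c(d)$ polynomials $T_1, \dots, T_c$, each of degree $< d$, with $Q_i, Q_j, Q_k \in (T_1, \dots, T_c)$; or (iii) $V(Q_i) \cap V(Q_j)$ is reducible with components cut out by hypersurfaces of degree $< d$ through which $Q_k$ factors. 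For $d = 2$ this is the classical classification of pencils of quadrics underlying \cite{S19}, of which the present paper establishes the $(\delta, 2)$-robust refinement.

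\textbf{Global collapse given the structure theorem.} Granting the structure theorem, I would partition the pairs in the configuration according to which of (i)--(iii) they satisfy and treat each regime separately. Pairs of type (i) impose only linear relations among the $Q$'s; applying a colored (Edelstein--Kelly-type) Sylvester--Gallai theorem to these relations bounds the linear dimension of the sub-configuration they control. For types (ii) and (iii), the auxiliary witnesses --- the degree-$<d$ forms $T_s$, the lower-degree components of the intersections, and the cofactors in the factorizations --- assemble into a configuration of polynomials of degree at most $d-1$. The key bookkeeping step is to show that the ``every pair has a witness'' hypothesis of the original configuration descends, at least in a fractional constant-proportion sense, to this auxiliary configuration; once it does, the inductive hypothesis gives that the auxiliaries have transcendence degree $O_d(1)$, and each $Q_i$ of type (ii) or (iii) is algebraic over the field generated by the auxiliaries and boundedly many seed polynomials. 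Combining the three regimes yields $\operatorname{trdeg}(Q_1, \dots, Q_m) = O_d(1)$.

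\textbf{Main obstacle.} The crux is the local structure theorem for general $d$: pencils of quadrics are fully classified and cubic pencils are tractable with substantial algebraic geometry, but for $d \geq 4$ there is no clean description of the ways a degree-$d$ hypersurface can vanish on the intersection $V(Q_i) \cap V(Q_j)$ of two others, and the B\'ezout/liaison-theoretic bounds on the complexity of the witnesses $T_s$ do not obviously come out dependent on $d$ alone; moreover, irreducibility of the $Q_i$ kills the easy factorization cases without supplying a replacement. A secondary obstacle, already present at $d = 2$ and the reason the robust formulation is the natural object, is that the witnesses attached to distinct pairs need not be consistent, so the descent to degree $d-1$ preserves only a constant fraction of the incidences; the induction must therefore be run for the $(\delta, d)$-robust statement throughout, with the quantitative loss at each of the $d$ levels controlled so that the final bound stays a fixed function of $d$. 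My concrete route would be to first secure the robust $d = 2$ case (as done here), then $d = 3$, isolating in each the degree-uniform core of the structure theorem before attempting arbitrary $d$.
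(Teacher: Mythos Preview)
The statement you are attempting to prove is \emph{Conjecture}~\ref{conjecture:firstdsg}: it is stated in the paper as an open conjecture from \cite{gupta2014algebraic}, and the paper does not prove it. The paper's contribution is the robust $d=2$ case (\cref{thm:deltasg2}); the non-robust $d=2$ case is \cite{S19}, and the non-robust $d=3$ case is \cite{OS21}. For $d\geq 4$ the conjecture is, to the best of current knowledge, open. There is therefore no ``paper's own proof'' to compare your proposal against.

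Your proposal is not a proof but a research outline, and you essentially concede this yourself in the ``Main obstacle'' paragraph. The genuine gap is exactly the one you name: the local structure theorem in degree $d$. Your trichotomy (i)--(iii) is wishful in its present form. Already for $d=2$ the correct classification (\cref{theorem: radical structure} here, or the versions in \cite{S19,PS20a}) does not fit cleanly into ``span / contained in a low-degree ideal / reducible intersection'': one needs the finer information that some linear combination is a square, or that a specific minimal prime is $(x,y)$-primary with controlled multiplicity, and the global argument exploits each of these sub-cases separately (the partition into $\Fspan,\Fsquare,\Flinear,\Fdeg$ in this paper). For $d=3$ the analogous structure theorem in \cite{OS21} is substantially more intricate and does not arise by specializing a uniform degree-$d$ statement. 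There is at present no candidate for your statement ``there exist at most $c(d)$ polynomials $T_1,\dots,T_c$ of degree $<d$ with $Q_i,Q_j,Q_k\in(T_1,\dots,T_c)$'' that is both true and strong enough to run the induction; formulating and proving such a statement \emph{is} the open problem.

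A second gap is in the ``descent'' step: even granting a structure theorem, you assert that the auxiliary lower-degree forms inherit a fractional SG condition, but you give no mechanism for this. In the $d=2$ proofs (here and in \cite{S19,PS20a,PS20b}) the passage to a linear configuration is not a direct inheritance of the SG hypothesis by auxiliary forms; it goes through the construction of a small ``controlling'' algebra (the clean vector spaces of \cref{sec:clean-ideals}) over which every form becomes univariate, and only then do the residual linear forms assemble into a $\delta$-LCC configuration. That construction is degree-specific and is the bulk of the work; your outline skips it entirely.
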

The case $d = 2$ of the above conjecture was proved in \cite{S19}.
We henceforth refer to the original Sylvester-Gallai theorem (the case $d=1$) and its variants as the ``linear case".
As in the linear case of SG type problems, it is natural to consider the robust version of the above lemma as a next step towards the conjectures of Gupta that give an algorithm for a special case of PIT.
We resolve the robust version of the above in the case when $d=2$.

\subsection{Main result}

In this section, we formally state our main result: robust quadratic radical Sylvester-Gallai configurations must lie in a constant dimensional vector space.\footnote{Our results hold for any algebraically closed field of characteristic zero. However, for simplicity of exposition, we only state our results over $\bC$.}
In particular, this result implies that the forms must be contained in a small algebra, and also that they have constant transcendence degree.
Another important result is a structural result for ideals generated by two quadratic forms.

\subsubsection{Robust radical Sylvester-Gallai theorem}

We first formally define robust quadratic radical Sylvester-Gallai configurations.
As is customary in the literature, we will use \emph{form} to denote homogeneous polynomials. For a polynomial ring $S=\bC[x_1,\ldots, x_n]$, we let $S_d$ denote the vector space of polynomials of degree $d$ in $S$, and $(f_1,\cdots,f_r)$ denotes the ideal generated by polynomials $f_1,\cdots, f_r$.
We also use $\radideal{f_{1}, \dots, f_{r}}$ to denote the radical of this ideal, that is, the set of polynomials $g$ such that $g^{k} \in \ideal{f_{1}, \dots, f_{r}}$ for some $k$.

\begin{restatable}[$\drsg{\delta}{2}$ configurations]{definition}{deltasgconfig}
\label{def:deltasg2config}
Let $0 < \delta \leq 1$ and $\cF := \{F_1, \ldots, F_m\}$ be a set of irreducible forms in $\bC[x_1,\ldots, x_n]$. 
We say that $\cF$ is a $\drsg{\delta}{2}$ configuration if the following conditions hold: 
\begin{enumerate}
    \item $\cF \subset S_1 \cup S_2$ \hfill (only linear and quadratic forms)
    \item for any $i \neq j$, we have that $F_i \not\in (F_j)$ \hfill  (forms are ``pairwise independent'')
    \item for any $i \in [m]$, there are $\delta (m-1)$ indices $j \in [m] \setminus \bc{i}$ such that $\abs{\Rad(F_i, F_j) \cap \cF} \geq 3$.
\end{enumerate}
\end{restatable}

\noindent We are now ready to formally state the main contribution of our paper.
We begin with our main theorem, that robust quadratic radical SG configurations must have small linear span.

\vspace{5pt}

\begin{restatable}[$\drsg{\delta}{2}$ theorem]{theorem}{deltasgstatement}
\label{thm:deltasg2}
If $\cF$ is a $\drsg{\delta}{2}$ configuration, then 
\begin{equation*}
\dim(\Cspan{\cF}) = O(1/\delta^{54}).
\end{equation*}
\end{restatable}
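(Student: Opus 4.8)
The plan is to combine the structural theory of radical ideals generated by two quadratic forms (the structure theorem advertised above) with the averaging-and-projection strategy behind the robust linear Sylvester-Gallai theorems of \cite{BDWY11, DSW14}. The first ingredient is a \emph{structural dichotomy} for witnessing triples: for irreducible quadratics $F_i, F_j$ and any irreducible $F_k \in \radideal{F_i, F_j}$, either (i) $F_k$ is a $\bC$-linear combination of $F_i$ and $F_j$; or (ii) there is a space of linear forms $T_{ij}$ of bounded dimension $c_0$ (an absolute constant) such that $F_i, F_j, F_k$ all lie in the ideal generated by $T_{ij}$, i.e.\ all three vanish on a common linear subspace of codimension at most $c_0$; or (iii) a ``low-rank'' degeneration in which, modulo a bounded-dimensional space of linear forms, two of the three forms become proportional. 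The point is that every case other than (i) attaches to the triple only a bounded amount of linear data.

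The second step is to reduce to a single governing type. For each $F_i$ we apply the dichotomy to each of its $\delta(m-1)$ witnessing partners; as there are only finitely many types, some type $t(i)$ explains an $\Omega(\delta)$-fraction of them, and a further pigeonhole gives an $\Omega(1)$-fraction of $\cF$ sharing a common type $t^\ast$. A cleaning step --- iteratively discarding forms with too few retained witnesses, and bounding the span of whatever small remainder is thrown away --- reduces the problem to the case where a single type $t^\ast$ governs the whole configuration, at the price of replacing $\delta$ by $\Omega(\delta)$. By the same device we may assume all $F_i$ are irreducible quadratics, since the degree-$1$ members of $\cF$ can be split off and controlled by a direct projection argument.

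If $t^\ast$ is the linear type (i), we are essentially in the linear world: regarding each $F_i$ as a vector in $S_1 \oplus S_2$, the configuration becomes a robust linear Sylvester-Gallai configuration with parameter $\Omega(\delta)$, since for each $F_i$ there are $\Omega(\delta m)$ indices $j$ together with nonzero coefficients witnessing a linear dependence of $F_i, F_j, F_k$. The theorem of \cite{DSW14} then yields $\dim \Cspan{\cF} = \bigO{1/\delta}$, which already gives the conclusion (with a far better exponent); equivalently, the witnessing relations assemble into a design matrix in the sense of \cite{BDWY11}, whose large rank forces $\dim \Cspan{\cF}$ to be small.

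The remaining types --- where the triples carry linear data rather than a scalar relation --- are the heart of the proof and the main obstacle. Here the right quantity to control is $w := \dim(\sum_i W_{F_i})$, where $W_{F_i}$ is the minimal space of linear forms over which $F_i$ is defined; once $w$ is bounded, every $F_i$ lies in $\bC[W] \cap (S_1 \oplus S_2)$ for a fixed $w$-dimensional $W$, whence $\dim \Cspan{\cF} = \bigO{w^2}$. To bound $w$ one argues, via the structure theorem, first that for each $i$ the bounded spaces $T_{ij}$ arising from its $\Omega(\delta m)$ witnesses are coherent enough to be captured by a single bounded-dimensional $V_i$ (a quadratic cannot be defined over too many genuinely different small linear spaces without forcing collisions), and then that $F_k \in \radideal{F_i, F_j}$ induces a collinearity-type relation among $W_{F_i}, W_{F_j}, W_{F_k}$ modulo a bounded common part. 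Feeding this into a robust linear Sylvester-Gallai argument applied to the underlying linear forms --- possibly after one or two rounds of passing to the quotient ring by a common bounded-dimensional space of linear forms (radical membership survives such quotients) and re-cleaning to remove images that become reducible, proportional, or zero --- bounds $w$ by $\bigO{1/\delta^{c}}$ for a fixed $c$. The genuinely hard part is this coherence claim: that the algebraic-geometric degeneracies witnessed by different forms of a $\drsg{\delta}{2}$ configuration are forced to share a bounded-dimensional ``core'' of linear forms. This is where the robust ($\delta < 1$) setting departs most sharply from the exact case of \cite{S19}, and tracking the loss through the nested averaging, cleaning, and quotient steps --- each multiplying the exponent by a constant --- is what produces the polynomial dependence, and after optimization the exponent $54$.
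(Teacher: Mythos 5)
Your proposal captures some of the right intuitions---reduce the non-linear problem to a linear Sylvester-Gallai/LCC statement via a bounded-dimensional ``core'' of linear forms---but there are several genuine gaps, and the central difficulty is acknowledged rather than addressed.

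\textbf{The target quantity is wrong.} You propose to bound $w := \dim\bigl(\sum_i W_{F_i}\bigr)$ where $W_{F_i}$ is the minimal space of linear forms over which $F_i$ is defined (i.e.\ $\linold{F_i}$). For a rank-$s$ quadratic this has dimension $2s$, and $s$ can be as large as $\lfloor n/2 \rfloor$; a single full-rank quadratic already forces $w = \Omega(n)$ even though $\dim\Cspan{\cF}$ can be $O(1)$. The paper sidesteps this by redefining $\lin{Q}=\Cspan{Q}$ once $\Rank Q\ge 5$ and, crucially, by allowing the controlling algebra to have \emph{quadratic} generators $V_2$ in addition to linear ones $V_1$. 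Without this, your program cannot succeed: there is no bounded-dimensional $W\subset S_1$ into which a configuration containing high-rank quadratics can be pushed.

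\textbf{The ``coherence claim'' is the entire theorem.} You explicitly defer what you call the genuinely hard part --- that the bounded spaces $T_{ij}$ attached to the $\Omega(\delta m)$ witnesses of a single $F_i$ share a bounded-dimensional core, and that this core can be made uniform across $\cF$. The paper does not obtain this by a single coherence argument; it partitions $\cF_2$ into four pieces ($\Fspan,\Flinear,\Fdeg,\Fsquare$) according to which case of the structure theorem dominates, and each piece is controlled by a \emph{different} iterative construction (two-step covers, Peleg--Shpilka-style iterative linear-form extraction, and a saturation-and-robustness invariant encoded in the notion of a clean vector space). Reducing to ``a single governing type'' by throwing out forms, as you propose, is problematic because the retained forms can lose most of their witnesses (witnesses are drawn from all of $\cF$, not just forms of the same type), so the robustness parameter does not merely lose a constant factor.

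\textbf{Multiplicities force LCC, not robust SG.} After associating to each form its ``extra'' linear form $z_P$ over the small algebra, you assert the resulting linear forms are a robust linear SG configuration and invoke~\cite{DSW14}. They are not, in general: distinct quadratics can share the same $z_P$, so the linear forms appear with multiplicity and one must work with $\delta$-LCC configurations and the weaker~\cite{BDWY11} bound (the exponent $9$ on $1/\delta$ there, rather than $1$, is one source of the $54$ in the final bound). Controlling how many quadratics can share a $z_P$ is exactly what the saturation part of the clean-vector-space condition delivers, and you have no mechanism for it. Finally, you propose rounds of quotienting by linear forms --- the projection method --- which the paper deliberately avoids precisely because it does not interact well with the multiplicity issue above; they instead keep everything in the ambient ring and absorb the multiplicities into the LCC framework.

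In short, the outline has the right shape (structure theorem $\to$ bounded linear core $\to$ robust linear argument), but it would need to be restructured around an algebra with both linear and quadratic generators, a four-way partition of $\cF_2$ with case-specific constructions, a saturation invariant to control multiplicities, and the $\delta$-LCC bound rather than the robust SG bound, before it could be turned into a proof.
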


To prove the theorem above, we first notice that the theorem would imply that the forms in the configuration are contained in a subalgebra of the polynomial ring of small dimension.
With this observation at hand, we provide a principled approach to construct small dimensional subalgebras of the polynomial ring which control the configuration (in the sense that all forms in the configuration will become a ``univariate form'' with coefficients from our subalgebra). 

The main property of these algebras is that they allow us to translate non-linear SG dependencies (the radical dependencies) into linear SG dependencies, and therefore we can reduce our non-linear problem to the linear version of the SG problem.

The main principle guiding the construction of our subalgebras is that we would like these subalgebras to look ``as free as possible'' without increasing the dimension of the algebra by much.
The amount of ``freeness'' that we need is captured by the robust algebras defined in \cref{sec:clean-ideals}, where we also elaborate on how these algebras behave with SG configurations (where we need the notion of clean algebras).
For more intuition about these algebras and about our strategy to prove our main theorem, we refer the reader to \cref{sec:proof-overview}.

\subsubsection{Results on structure of ideals generated by two quadratics}

A key step in our strategy to prove that a $\drsg{\delta}{2}$ configuration is low dimensional (as has also been the first step in the works of \cite{S19, PS20a}) is to understand the structure of ideals generated by two quadratic forms. 

The general principle at play here is that if the ideal generated by two quadratic forms is neither radical nor prime, then there must be a low-rank quadratic in their span. 
In \cite{S19, PS20a}, the authors proved similar structural results to determine when a product of quadratic forms is contained in an ideal generated by two quadratic forms. 
In \cref{theorem: radical structure}, we use a different approach to completely characterize when the ideal generated by two quadratic forms is radical or prime, and as corollaries we obtain the structural results in \cite{S19, PS20a} (see \cref{sec:algebraic-geom}). We use a commutative-algebraic approach to develop a further understanding of the radical of ideals generated by two irreducible quadratics. Indeed, using the standard tools of primary decomposition and Hilbert-Samuel multiplicity we obtain a classification of the possible minimal primes of an ideal generated by two quadratic forms. Consequently we obtain a characterization for such an ideal to be prime or radical. This approach can also be generalized to ideals generated by cubic forms, as was done in \cite{OS21}.

\begin{restatable}[Radical Structure Theorem]{proposition}{radicalstructurethm}
\label{theorem: radical structure}
Let $\bK$ be an algebraically closed field of characteristic zero and $Q_1,Q_2\in S=\bK[x_1,\cdots,x_n]$ be two forms of degree $2$. Then one of the following holds:
\begin{enumerate}
    \item The ideal $(Q_1,Q_2)$ is prime.
    \item The ideal $(Q_1,Q_2)$ is radical, but not prime. Furthermore, one of the following cases occur:
    \begin{itemize}
        \item [(a)] There exist two linearly independent linear forms $x,y\in S_1$ such that $xy\in \mathrm{span}(Q_1,Q_2)$.
        \item[(b)] There exists a minimal prime $\mathfrak{p}$ of $(Q_1,Q_2)$, such that $\mathfrak{p}=(x,y)$ for some linearly independent forms $x,y\in S_1$
    \end{itemize}
    \item The ideal $(Q_1,Q_2)$ is not radical and one of the following cases occur:
    \begin{itemize}
        \item [(a)] $Q_1,Q_2$ have a common factor and $Q_1=xy$,  $Q_2=x(\alpha x+\beta y)$ for some linear forms $x,y$ and $\alpha,\beta \in k$. In this case, we have $x^2\in \mathrm{span}(Q_1,Q_2)$.
        \item [(b)] $Q_1,Q_2$ do not have a common factor. There exists a minimal prime $\mathfrak{p}$ of $(Q_1,Q_2)$ such that $\mathfrak{p}=(x,Q)$, where $x\in S_1$, $Q\in S_2$ and $Q$ is irreducible modulo $x$, and we also have $x^2\in \mathrm{span}(Q_1,Q_2)$.
        \item [(c)] $Q_1,Q_2$ do not have a common factor and there exists a minimal prime $\mathfrak{p}$ of $(Q_1,Q_2)$, such that $\mathfrak{p}=(x,y)$ for some linearly independent forms $x,y\in S_1$, and the $(x,y)$-primary ideal $\mathfrak{q}$ has multiplicity $e(S/\mathfrak{q})\geq 2$.
    \end{itemize}
\end{enumerate}
\end{restatable}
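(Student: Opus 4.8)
The plan is to run the whole classification through the primary decomposition of $(Q_1,Q_2)$ together with B\'ezout's theorem, splitting first according to whether $Q_1,Q_2$ share a common factor; since they have degree $2$, such a factor must be a linear form, unless one is a scalar multiple of the other.

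\emph{Common factor.} Write $Q_1=x\ell_1$ and, when $x$ also divides $Q_2$, $Q_2=x\ell_2$; if $Q_1,Q_2$ are proportional, run the same analysis on the principal ideal $(Q_1)$. A direct computation with $(x\ell_1,x\ell_2)$, case-split on the linear dependences among $x,\ell_1,\ell_2$, shows this ideal is either $(x)\cap(\ell_1,\ell_2)$ --- radical, not prime, with the codimension-$2$ linear minimal prime $(\ell_1,\ell_2)$, which is case 2(b) --- or (e.g.\ when $x\in\langle\ell_1,\ell_2\rangle$) an ideal whose primary decomposition has a non-reduced component supported on a line, in which case an inspection of the degree-$2$ part yields $x^2\in\mathrm{span}(Q_1,Q_2)$ after possibly swapping $Q_1,Q_2$ and changing the forms linearly, which is case 3(a).

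\emph{Coprime case: B\'ezout reduction.} If $Q_1,Q_2$ have no common factor they form a regular sequence, so $S/(Q_1,Q_2)$ is Cohen--Macaulay of dimension $n-2$; hence $(Q_1,Q_2)$ has no embedded primes and every minimal prime $\mathfrak p_i$ has codimension $2$. Associativity of Hilbert--Samuel multiplicity and B\'ezout give $\sum_i e(\mathfrak q_i)\deg(S/\mathfrak p_i)=\deg Q_1\cdot\deg Q_2=4$, where $\mathfrak q_i$ is the $\mathfrak p_i$-primary component and $e(\mathfrak q_i)=\operatorname{length}\big((S/\mathfrak q_i)_{\mathfrak p_i}\big)$. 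Because a nondegenerate irreducible variety of codimension $2$ has degree at least $3$, a degree-$1$ minimal prime must be $(x,y)$ for independent linear forms and a degree-$2$ minimal prime must be $(x,Q)$ with $x$ linear and $Q$ irreducible modulo $x$. Enumerating the multisets of pairs $\big(\deg(S/\mathfrak p_i),e(\mathfrak q_i)\big)$ compatible with the sum being $4$ leaves only: $(4,1)$, which forces $(Q_1,Q_2)$ prime (case 1); a multiset containing a degree-$1$ prime, which gives case 2(b) if $(Q_1,Q_2)$ is radical and case 3(c) otherwise (a non-reduced degree-$2$ component would already contribute $2\cdot2=4$, leaving no room for the degree-$1$ prime, so the non-reducedness sits on a degree-$1$ component); $(2,1)+(2,1)$; and $(2,2)$.

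\emph{The two remaining cases, and the main obstacle.} For $(2,1)+(2,1)$, write the minimal primes as $(x,Q)$ and $(x'',Q'')$; the linear forms $x,x''$ are independent, since $x\in(x,Q)\cap(x'',Q'')=(Q_1,Q_2)$ is impossible as the degree-$1$ part of $(Q_1,Q_2)$ is zero, and then $xx''\in(x,Q)\cap(x'',Q'')=(Q_1,Q_2)$; since the degree-$2$ part of $(Q_1,Q_2)$ is exactly $\mathrm{span}(Q_1,Q_2)$, we get $xx''\in\mathrm{span}(Q_1,Q_2)$, which is case 2(a). For $(2,2)$, write the minimal prime as $(x,Q)$; from $Q_1,Q_2\in(x,Q)$ we may write $Q_i=x\ell_i+c_iQ$ with $c_1\neq0$, and after replacing $Q_2$ by $c_1Q_2-c_2Q_1$ we obtain $\mathrm{span}(Q_1,Q_2)=\langle Q_1,xm\rangle$ for a nonzero linear form $m$; one checks that $x^2\in\mathrm{span}(Q_1,Q_2)$ is equivalent to $m\in\bK x$, and if $m\notin\bK x$ then $V(Q_1,xm)=V(x,Q)\cup V(m,Q_1)$ forces $V(m,Q_1)=V(x,Q)$, hence $V(x,Q)\subseteq V(x,m)$, and equality of these irreducible codimension-$2$ varieties gives $(x,Q)=(x,m)$, making $Q$ reducible modulo $x$ --- a contradiction --- so $x^2\in\mathrm{span}(Q_1,Q_2)$, which is case 3(b). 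The conceptual core, and the main obstacle, is the coprime case: squeezing the exact list of admissible component types out of B\'ezout plus the minimal-degree bound, and then pinning down the low-rank quadric in the span (cases 2(a), 3(b), and the analogous inspection in 3(a)) via the hyperplane-section arguments above; verifying the regular-sequence claim, handling small $n$ where codimension-$2$ linear or quadric loci degenerate, and disposing of the proportional and $Q_1=\ell^2$ edge cases are routine.
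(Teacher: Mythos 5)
Your proposal is correct, and it follows the paper's overall strategy: reduce to the coprime case, use Cohen--Macaulayness to get unmixedness, apply the Hilbert--Samuel/B\'ezout relation $\sum_i m(\mathfrak p_i)\,e(S/\mathfrak p_i)=4$, classify codimension-$2$ primes of degree $\leq 2$, and then enumerate the compatible multisets. The paper obtains the prime classification by citing explicit structure results (a degree-$1$ height-$2$ prime is $(x,y)$, a degree-$2$ height-$2$ prime is $(x,Q)$); your appeal to ``nondegenerate irreducible of codimension $2$ has degree $\geq 3$'' is a standard alternative route to the same facts.

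Where you genuinely diverge from the paper is in the two ``hard'' subcases. For the radical case with two quadratic primes $(x,Q)\cap(x'',Q'')$, the paper writes $Q_i=\alpha_iP_1+\ell_1a_i$ inside $(P_1,\ell_1)$ and shows the linear combination $\alpha_2Q_1-\alpha_1Q_2=\ell_1(\alpha_2a_1-\alpha_1a_2)$ is a nonzero reducible quadric; you instead observe $x,x''$ are independent and $xx''\in(x,Q)\cap(x'',Q'')=(Q_1,Q_2)_2=\operatorname{span}(Q_1,Q_2)$, which is a slicker geometric observation. For the non-reduced $(2,2)$ case, the paper runs a purely algebraic length argument (the chain $(0)\subset(x^2)\subsetneq(x)\subsetneq(S/I)_{\mathfrak p}$ in the length-$2$ local ring, then Atiyah--Macdonald Cor.~10.21 to pass from the localization to $I$ itself), whereas you argue geometrically: normalize so $\operatorname{span}(Q_1,Q_2)=\langle Q_1,xm\rangle$, note $V(Q_1,xm)=V(x,Q)\cup V(m,Q_1)$, and derive a contradiction with irreducibility of $Q$ mod $x$ when $m\notin\bK x$. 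Both routes are sound; the paper's localization argument is more robust to degenerate $n$, while yours is more transparent about \emph{which} square appears in the span. The common-factor case and the proportional/edge cases are handled at a comparable level of detail in both. One minor point worth spelling out in your $(2,2)$ argument: you implicitly use that $V(m,Q_1)$ is nonempty and has codimension $2$ (so its components, all inside the irreducible $V(x,Q)$, force equality); you should also remark that $c_1\neq 0$ is without loss of generality since $c_1=c_2=0$ would force a common factor $x$.
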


The proposition above is not new, and proofs of some of the statements can be found in \cite[Section 1]{CTSSD87} and \cite[Chapter XIII]{HP94}.
For completeness, we provide a proof of this proposition using primary decomposition and Hilbert-Samuel multiplicity of an ideal. 
In the former, the authors study the cycle decomposition of the intersection of two quadric hypersurfaces to obtain results about existence of rational points on intersection of two quadric hypersurfaces and Ch\^atelet surfaces over number fields. 
Our statements here are slightly simpler to state (and slightly different) since in the works above the authors work in the setting of perfect fields, whereas we are concerned with the special case of algebraically closed fields of characteristic zero.

\subsection{High-level ideas of the proof of Theorem~\ref{thm:deltasg2}}\label{sec:proof-overview}

Suppose we are given a $\drsg{\delta}{2}$ configuration $\cF = \cF_1 \cup \cF_2$, where $\cF_d$ is the set of forms of degree $d$ in our configuration.

Our strategy to prove the robust radical SG theorem is based on the following toy example: suppose our polynomial ring is $\bC[x_1, \ldots, x_r, y_1, \ldots, y_s]$, where one should think of $s$ being constant and $r \gg s$, and every quadratic form $Q$ in our configuration is a polynomial which is ``univariate'' over the smaller polynomial ring $\bC[y_1, \ldots, y_s]$.
That is, for each quadratic $Q$, there exists a linear form $x_Q \in \Cspan{x_1, \ldots, x_r}$ such that $Q \in \bC[x_Q, y_1, \ldots, y_s]$.
In this case, one would hope that the \emph{non-linear} SG dependencies involving our configuration $\cF$ would imply \emph{linear} SG dependencies for the set of linear forms $\cF_1 \cup \{ x_Q \ \mid \ Q \in \cF_2 \}$. 
If we manage to prove that the latter set of linear forms is a \emph{robust linear SG configuration}, we can invoke the robust SG theorem for linear forms of \cite{BDWY11, DSW14}.

In general it is not always possible to reduce the general robust radical SG problem for quadratics to the toy example above.\footnote{For instance if the polynomial $Q = \sum_{i=1}^s x_i y_i$ is in our SG configuration.}
However we will be able to construct a small subalgebra of our polynomial ring which is just as good as the small polynomial ring $\bC[y_1, \ldots, y_s]$ in the toy example above. 
Additionally, we will not always be able to reduce the non-linear problem to a robust linear SG configuration, as some forms $x_Q$ may appear with multiplicity. Instead of a robust linear SG configuration, we will reduce it to a $\delta$-LCC configuration of \cite{BDWY11}.

Since the main counterexample to the above toy example are quadratics of large rank, the small subalgebras that we construct will have both linear and quadratic forms as generating elements.
Therefore, it is natural to consider the vector space of forms generating the algebra, which we denote by $V := V_1 + V_2$, where $V_1$ is the vector space of linear forms in the algebra and $V_2$ is the vector space of quadratic \emph{generators} of the algebra. 
The main idea here is that the quadratic generators will be composed only of quadratics of high rank, which can essentially be though of as ``free variables.''
As it turns out, intuitively and informally, the only properties that we need from the vector space above are that:
\begin{enumerate}
    \item the quadratics in $V_2$ are ``robust'' against the linear forms in $V_1$. That is, we would like each quadratic in $V_2$ to be of very high rank even if we subtract from it polynomials from the algebra $\bC[V_1]$
    \item $V$ is in a sense ``saturated'' with respect to our configuration $\cF$. That is, there exists no small vector space of linear forms that we can add to $V_1$ that would add many polynomials of $\cF$ to the algebra $\bC[V]$, or ``make them closer to being in $\bC[V]$.''
\end{enumerate}

The first condition ensures that any quadratic from our set $\cF$ which ``depends'' on a form from $V_2$ must be of high rank, while the second condition ensures that there is no trivial way to increase the algebra slightly in order to have more forms from $\cF$ inside of the larger algebra.
We call any vector space which satisfies both conditions above a \emph{clean vector space} with respect to $\cF$.\footnote{In hindsight, this saturation condition resembles \cite[Theorem 7.7]{BDWY11},  where they deal with $\delta$-LCC configurations. 
They show that if no element has many repetitions (the trivial case for them), they are able to non-trivially construct a design matrix which annihilates a constant fraction of the elements in the configuration.}

To construct the subalgebra above, we need to understand in a bit more detail the structure of the radical of an ideal generated by two quadratic forms.
To this end, in \cref{sec:algebraic-geom} we prove \cref{theorem: radical structure}, generalizing the previous structure theorems from \cite{S19, PS20a}. 
Additionally, we also assemble results on the structure of minimal primes of these ideals to construct our algebra.

With \cref{theorem: radical structure} at hand, we proceed in a similar fashion as in~\cite{S19, PS20a} by partitioning the quadratics in our SG configuration into subsets, each satisfying a particular case of our structure theorem.
The four subsets we have are as follows: if $\cF$ is a $\drsg{\delta}{2}$ configuration, we take $\varepsilon = \delta/10$ and define
\begin{enumerate}
    \item $\Fspan$ is the set of quadratics $Q$ which satisfy a ``span dependency'' with at least $\varepsilon$-fraction of the forms.
    That is, there exist many quadratics $F, G \in \cF_2$ such that $G \in \Cspan{Q, F}$. 
    \item $\Flinear$ is the set of quadratics $Q$ which satisfy case 3 (c) of \cref{theorem: radical structure} with at least an $\varepsilon$-fraction of the other forms. 
    That is, there are many quadratics $F \in \cF$ and linear forms $x, y$ such that $(Q, F) \subset (x, y)$, and this minimal prime has multiplicity $\geq 2$.
    \item $\Fdeg$ is the set of quadratics $Q$ with an $\varepsilon$-fraction of its SG dependencies with linear forms.\footnote{Deg stands for the degenerate case.}
    \item $\Fsquare$ is the set of quadratics $Q$ which satisfy case 3 (b) of \cref{theorem: radical structure} with at least $(\delta - 3\varepsilon)$-fraction of the other forms. 
    That is, there are many quadratics $F \in \cF$ such that there is a linear form $\ell$ such that $\ell^2 \in \Cspan{F, Q}$.
\end{enumerate}

With this partition, we proceed to construct a small clean vector space $V$ such that $\Fsquare$ and $\Flinear$ are entirely contained in the algebra $\bC[V]$, and the forms in the remaining subsets are either in $\bC[V]$, or are univariate over $\bC[V]$.
Here, by univariate over $\bC[V]$, we mean that there is a linear form $z \not\in V_1$ such that the form is in the algebra $\bC[V][z]$.

We construct the subalgebra above in four steps, where in each step we construct intermediate subalgebras which handle one of the subsets of quadratics defined above. 
To construct the intermediate subalgebras, we use two strategies: iterative processes similar to the ones used in \cite{S19, PS20a}, and constructing double covers of the SG dependencies.
These two strategies allow us to construct algebras generated by $\poly{1/\delta}$ many elements with the desired properties for each of the subsets above.
The iterative processes allow us to tightly control $\Fsquare$ and obtain some control over $\Flinear$ and $\Fspan$, whereas the double covers allow us to handle $\Fdeg$ and also to prove that the remaining linear forms will become a $\delta$-LCC configuration.

Once we have our clean subalgebra with respect to the SG configuration, and we have every polynomial in the configuration either in the algebra or univariate over our algebra, we proceed to prove that the ``additional linear forms'' that arise in this way, together with the linear forms from our configuration, span a vector space of small dimension.
While these linear forms satisfy linear relations, the linear forms corresponding to different quadratics in our set might be the same, and therefore the set of linear forms might not form a robust linear Sylvester-Gallai configuration.
However, the fact that the vector space $V$ is saturated implies that not too many quadratics have the same linear form: if they did, then we could add that linear form to $V_{1}$ and add many polynomials of $\cF$ to $\bC\bs{V}$.
This saturation allows us to show that the linear forms make up a $\delta$-LCC configuration, and therefore span a vector space of small dimension, concluding our proof.

\subsection{Related Work}

The original motivation for studying higher degree SG configurations comes from \cite{gupta2014algebraic}, in order to give polynomial time PIT for a special class of depth-4 algebraic circuits. 
The most general SG problem/configuration that is needed towards this application is the following conjecture, \cite[Conjecture 1]{gupta2014algebraic}, which we term as $(k, d, c)$-Sylvester-Gallai conjecture.

\begin{conjecture}[$(k, d, c)$-Sylvester-Gallai conjecture]
\label{conjecture:dsg}
Let $k, d, c \in \bN^*$ be parameters, and let $\cF_{1}, \dots, \cF_{k}$ be finite sets of irreducible polynomials of degree at most $d$ such that 
\begin{itemize}
    \item $\cap_{i} \cF_{i} = \emptyset$,
    \item for every $Q_{1}, \dots, Q_{k-1}$ each from a distinct set $\cF_{i_{j}}$, there are polynomials $P_{1}, \dots, P_{c}$ in the remaining set such that $\prod P_{i} \in \radideal{Q_{1}, \dots, Q_{k-1}}$.
\end{itemize}
Then the transcendence degree of the union $\cup_{i} \cF_{i}$ is a function of $k, d, c$, independent of the number of variables or the size of the sets $\cF_{i}$.
\end{conjecture}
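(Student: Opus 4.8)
The plan is to attack \cref{conjecture:dsg} by bootstrapping from the case $(k,d,c)=(2,2,1)$ established in this paper (and from \cref{conjecture:firstdsg} for $d=2$), handling the three sources of extra generality — the number of sets $k$, the degree bound $d$, and the number of ``spanning'' polynomials $c$ — one at a time, each time reducing a nonlinear Sylvester--Gallai phenomenon to a linear one by passing to a carefully constructed small subalgebra of $S$ over which every form in the configuration becomes (essentially) univariate.

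First I would develop a generalization of the Radical Structure Theorem (\cref{theorem: radical structure}) to ideals $(Q_1,\ldots,Q_{k-1})$ generated by forms of degree at most $d$: using primary decomposition, linkage, and the Hilbert--Samuel multiplicity as in \cref{sec:algebraic-geom}, classify the minimal primes of such an ideal and characterize when a product $\prod_{i=1}^{c}P_i$ lies in its radical. As in the quadratic case, one expects that whenever such a containment is ``non-generic'' there is a low-degree, low-dimensional obstruction: a flat of codimension bounded in terms of $d$, or a low-rank form inside a bounded-dimensional linear subsystem attached to the $Q_j$. The cubic analogue of the structural input already appears in \cite{OS21}, so the commutative algebra should go through; the genuinely new bookkeeping is combinatorial — tracking which of the $c$ factors $P_i$ and which subset of the $Q_j$ are ``responsible'' for a given dependency.

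Second, mirroring the four-way partition used here, I would split the union $\bigcup_i \cF_i$ into boundedly many classes according to which structural case of the generalized theorem is witnessed for at least an $\varepsilon$-fraction of the relevant tuples, with $\varepsilon = \poly{1/(kdc)}$, and then build, class by class, a \emph{clean} subalgebra $\bC[V]$ with $V = V_1 + \cdots + V_d$ (linear forms together with higher-degree generators) of dimension bounded in terms of $k,d,c$, such that every form of $\bigcup_i \cF_i$ is either in $\bC[V]$ or univariate over $\bC[V]$. The two devices of this paper — iterative saturation and double covers of the dependency hypergraph — should again do the work, but the colored constraint $\bigcap_i \cF_i = \emptyset$ and the fact that a dependency now involves one polynomial from each of $k-1$ distinct sets turns the dependency structure into a $(k-1)$-uniform $k$-partite hypergraph; I would need a colored/hypergraph version of the projection-and-saturation argument and a colored analogue of the $\delta$-LCC bound of \cite{BDWY11, DSW14} to finish.

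The hard part will be twofold. The structural core — controlling ideals generated by $\ge 3$ forms of degree $\ge 3$, where complete-intersection and linkage arguments become delicate and the catalogue of degenerate configurations grows rapidly — is the main technical obstacle. Moreover, the transcendence-degree conclusion (rather than a bound on the linear span) means one cannot reduce the whole problem to linear forms: the final step must bound the transcendence degree of a mixed system of forms of several degrees, presumably via an algebraic-matroid / Jacobian-rank argument showing that $\bC[V]$ itself has bounded transcendence degree and that every configuration form is algebraic over it. The second obstacle is that the reduction to a linear (or LCC) configuration may simply have no two-dimensional shadow once $c \ge 2$ and $d \ge 3$; the most optimistic route there is to iterate the entire construction, peeling off one degree at a time, which would yield a bound that is a tower in $k,d,c$ — improving this to an elementary bound would require a new idea for controlling the interaction of different degrees simultaneously.
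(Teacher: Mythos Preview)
The statement you are addressing, \cref{conjecture:dsg}, is a \emph{conjecture}, not a theorem of the paper; the paper neither proves it nor claims to. It is quoted from \cite{gupta2014algebraic} as the most general target in a hierarchy of Sylvester--Gallai-type conjectures, and the paper's contribution is the robust version of the much more special case $d=2$, $k=2$, $c=1$ (\cref{thm:deltasg2}). The conclusion section explicitly lists even the ``product'' case $k=3$, $d=2$ as an open problem, so there is no proof in the paper for you to compare against.

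What you have written is therefore not a proof but a research programme. As such it is reasonable in spirit --- extending the structure theorem, partitioning by structural case, building clean algebras, and reducing to a linear/LCC configuration is exactly the template the paper uses --- but you should be aware that several of the steps you describe are themselves open or known to be genuinely hard. In particular: (i) a usable structure theorem for $\radideal{Q_1,\ldots,Q_{k-1}}$ with $k-1\ge 3$ forms of degree $\ge 3$ does not currently exist, and the ideals are no longer complete intersections in general, so the Cohen--Macaulay/multiplicity machinery of \cref{sec:algebraic-geom} does not carry over; (ii) even for $d=2$ the product case $c\ge 2$ is open (see \cite{PS20a,PS20b} for $k=3$, $d=2$, $c=4$ only), and the paper notes that its own arguments make ``somewhat strong use'' of having a single form in the radical; (iii) no ``colored $\delta$-LCC'' bound of the kind you invoke is known. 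Your final paragraph correctly flags these as the hard parts, but the honest summary is that this is a sketch of where one would \emph{like} the argument to go, not a proof.
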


As a step towards the proof of \cref{conjecture:dsg}, \cite{S19} studies quadratic SG configurations (\cref{conjecture:firstdsg}).
The configurations we study are exactly the fractional versions of these quadratic SG configurations.
In \cite{PS20a}, the authors extend the result on quadratic SG configurations, weakening the SG condition, only requiring that the radical of the ideal generated by every pair of quadratics contains a product of four other quadratics.
In \cite{PS20b}, the authors extend this further, by proving \cref{conjecture:dsg} for the case of $k = 3, d = 2$ and $c = 4$, which gives a polynomial time blackbox PIT for algebraic circuits computing a sum of three products of quadratic polynomials.

Our proof techniques and intermediate results generalise some of those of \cite{S19}, \cite{PS20a}, \cite{PS20b}.
In \cite{S19}, the author proves a structural result for quadratic forms contained in the radical of the ideal generated by two other quadratic forms.
In \cite{PS20a}, this result is extended to products of quadratic forms.
Our structure result directly classifies the radical of the ideal generated by two quadratics based on the number and degree of the minimal primes of the ideal.
Both structure theorems of \cite{S19} and \cite{PS20a} follow as immediate corollaries.

Further, our definition of clean vector spaces and the clean up procedure is a generalisation of part of the strategy in the above works.
In \cite{S19, PS20a}, the authors construct two vector spaces: one of linear forms and another of quadratic forms, and then they prove that most polynomials in the configuration can be written as the sum of a quadratic polynomial in the second vector space, and a polynomial ``close" to the algebra generated by the first vector space.
Our definition of clean vector spaces formalizes this strategy, giving us more structure which helps us unify the case analysis in these previous works.

Another important point to notice is that in this paper we do not make use of the projection trick used in \cite{S19, PS20a}. 
While the parameters become slightly worse for not using the projection trick, as we now have to account for repetitions in the set of linear forms not in the algebra. 
We believe that getting rid of the projection trick will make this strategy more amenable to generalizations to higher degree.

\textbf{Progress on Polynomial Identity Testing:} Recently, there has been remarkable progress on the PIT problem for depth 4 circuits (the same algebraic circuits considered in \cite{gupta2014algebraic}). In \cite{DDS21}, the authors give a  quasi-polynomial time algorithm for blackbox PIT for depth 4 circuits with bounded top and bottom fanins.
Their approach involves considering the logarithmic derivative of circuits, and is analytic in nature, which allows them to bypass the need of Sylvester-Gallai configurations.
Another PIT result in this setting comes from the lower bound against low depth algebraic circuits proved by \cite{LimayeST21}, which gives a weakly-exponential algorithm for PIT for these circuits via the hardness vs randomness paradigm for constant depth circuits \cite{CKS19}.
However, the SG approach of \cite{gupta2014algebraic} is the only one so far which could yield polynomial-time blackbox PIT algorithms for the subclass of depth-4 circuits with constant top and bottom fanins.

\paragraph{Comparison with \cite{OS21}:} In \cite{OS21} the authors generalize the radical SG theorem to cubic forms. However, the SG theorem in \cite{OS21} is not robust.
Robustness significantly increases the combinatorial complexity of the problem (as it is the case in every setting, even in the linear case).

Some ideas in this paper are motivated by similar constructions in \cite{OS21}.
More precisely, the construction of wide algebras motivated our construction of clean vector spaces (\cref{sec:clean-ideals}). 
Wide algebras have stronger algebro-geometric properties and are significantly larger than the algebras generated by  clean vector spaces.
Apart from this motivation, both works are distinct in their techniques, since in our case the robustness severely constrains our choice of dependencies.

\paragraph{Simultaneous result \cite{PS22}:} Simultaneously and independently from this work, Peleg and Shpilka have also proved that $\drsg{\delta}{2}$ configurations have $\poly{1/\delta}$ dimension.
While the result of \cite{PS22} in its current form works when the configuration only has irreducible quadratics, in our work we also allow linear forms in our configurations.

There are a number of parallels between the methods used in \cite{PS22} and the ones used in our paper.
Both use structure theorems for ideals generated by quadratics, and structure theorems for $(x, y)$-primary ideals.
Further, both results divide the configuration into special sets based on the cases of the structure theorem, and control each of these sets separately.

One key technical difference between our approach and \cite{PS22} is the structure used to control the above sets. 
In \cite{PS22} they use an algebra generated by linear forms and quadratics with the property that linear combinations of quadratics are high rank even after taking quotients with the linear forms.
We define the notion of clean vector spaces, which generate ``special algebras'' which apart from having the above property (what we call robustness) are also saturated in the sense that adding a few linear forms cannot bring too many polynomials in our configuration ``closer'' to the vector space.
We also use the notion of univariate polynomials over clean vector spaces, and prove the existence of a small clean vector space $V$ such that the polynomials in each special set is univariate over $V$.
Once we have such structure, we can assign to each univariate polynomial a linear form $\ell_i$ (the ``extra variable'' from this polynomial), and we then show that the set of linear forms $\{\ell_i\}$ corresponding to each polynomial forms a LCC configuration.

Another key technical difference is that in our work, we \textbf{do not} make use of the \emph{projection method}, as we believe that in higher degrees such method may not be amenable to generalization without generalizing the SG conjectures as well. 
This is one of the main reasons why we can only prove that the univariate polynomials $\ell_{i}$ form a LCC configuration, instead of a robust linear SG configuration.
This, and the fact that we handle linear forms, are the reasons why our bound is worse than the one in \cite{PS22}.

Handling the linear forms presents an extra technical challenge.
The main difficulty arises when a quadratic $Q$ satisfies the SG condition with many linear forms $\ell$,
as there is less structure between $Q, \ell$ and the quadratic in $\radideal{Q, \ell}$ than when the configurations just consist of quadratics.
This lack of structure makes our analysis significantly more intricate.

\subsection{Organization}

In \cref{sec:prelim} we establish some notation and preliminary results.
In \cref{sec:algebraic-geom} we establish the algebro-geometric results we need to prove our structural results on the minimal primes of ideals generated by two quadratics.
In this section, we also show how our results generalize previous structural results on SG configurations.
In \cref{sec:clean-ideals} we define and prove results that we need about one of the main objects that we develop: clean vector spaces. 
In \cref{sec:sg-robust} we prove our main result, that robust radical Sylvester-Gallai configurations must lie in a small dimensional vector space. 
In \cref{sec:conclusion} we conclude and state some open problems and future directions.

\section{Preliminaries}\label{sec:prelim}


In this section, we establish the notation which will be used throughout the paper and some important background which we shall need to prove our claims in the next sections.

\subsection{Notation}

Let $S := \bC\bs{x_0, \cdots, x_n}$ be our polynomial ring, with the standard grading by degree.
We can write $S = \bigoplus_{d \geq 0} S_d$, where $S_d$ is the $\bC$-vector space of forms of degree $d$.
For the rest of this paper, we will use the term \emph{form} to refer to a homogeneous polynomial.
Given any graded vector space $V \subset S$, we use $V_d$ to denote the subspace of forms of degree $d$, that is, $V_d := V \cap S_d$.

\subsection{Rank of quadratic forms}

We now define a notion of the rank of quadratic forms, in accordance to \cite{S19}.

\begin{definition}[Rank of a quadratic form]
  \label{def:rank}
  Let $Q$ be a quadratic form.
  The rank of $Q$, denoted $\Rank Q$ is the smallest $s$ such that we can write $Q = \sum_{i=1}^{s} a_{i} b_{i}$ with $a_{i}, b_{i} \in S_{1}$. If $\Rank Q =s$, then such a decomposition $Q = \sum_{i=1}^{s} a_{i} b_{i}$ with $a_{i}, b_{i} \in S_{1}$ is called a minimal representation of $Q$.
\end{definition}

\begin{remark}
Let $Q=\sum_ia_{ii}x_i^2+\sum_{i<j}2a_{ij}x_ix_j$ be a quadratic form in $S$. Recall that there is an one-to-one correspondence between quadratic forms $Q\in S_2$ and symmetric bilinear forms. Let $M$ be the symmetric matrix corresponding to the symmetric bilinear form of $Q$. Note that the $(i,j)$-the entry of $M$ is given by $a_{ij}$. If $M$ is of rank $r$, then  after a suitable linear change of variables, we can write $Q=x_1^2+\cdots+x_r^2$. Since the rank of a quadratic form is invariant under a linear change of variables, we have   $\Rank(Q)=\lceil r/2\rceil$, if $M$ is of rank $r$.
\end{remark}

In the next sections, we will need to use the following notion of a vector space of a quadratic form, which is a slight modification on the definition first given in \cite{S19}. 
The only modification that we make is that we preserve the quadratic form if its rank is high enough.

\begin{definition}[Vector space of a quadratic form]
  \label{def:lin}
  Let $Q$ be a quadratic form of rank $s$, so that $Q = \sum_{i=1}^{s} a_{i} b_{i}$.
  Define the vector space  
  $\linold{Q} := \Cspan{a_{1}, \dots, a_{s}, b_{1}, \dots, b_{s}}$.
  Define $\lin{Q}$ as:
  $$ \lin{Q} =
  \begin{cases}
  \Cspan{Q}, \ \text{ if } s \geq 5 \\
  \linold{Q}, \ \text{ otherwise.}
  \end{cases}
  $$
\end{definition}

Note that $\lin{Q}$ is always a vector space of $\bigO{1}$ dimension (in fact, it is of dimension at most $10$), while $\linold{Q}$ can have non constant dimension. While a minimal representation $Q = \sum_{i=1}^{s} a_{i} b_{i}$ is not unique, the vector space $\linold{Q}$ is unique and hence well-defined. 
The following lemma, which appears in \cite[Fact 2.15]{PS20a} characterizes $\linold{Q}$ as the smallest vector space of linear forms defining the algebras that contain $Q$.

We also extend the definition of $\Lin$ to linear forms in the natural way as follows.
\begin{definition}
  \label{def:linearformlin}
  For a linear form $\ell \in S_{1}$ define $\lin{\ell} := \Cspan{\ell}$.
\end{definition}

\begin{lemma}\label{lem:linold}
If $Q = \sum_{i=1}^r x_{i} y_{i}$ with $x_{i}, y_{i} \in S_{1}$ then $\linold{Q} \subseteq \Cspan{x_{i}, y_{j}|i,j\in[r]}$.
\end{lemma}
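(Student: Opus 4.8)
The plan is to identify $\linold{Q}$ with an intrinsic invariant of $Q$ --- the span of its first-order partial derivatives --- and then observe that this invariant is visibly contained in the span of the linear forms appearing in \emph{any} representation of $Q$ as a sum of products of linear forms.

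\textbf{Step 1: the derivative space.} For the quadratic form $Q$ set $\partial Q := \Cspan{\partial Q/\partial x_0, \ldots, \partial Q/\partial x_n} \subseteq S_1$, the span of its (linear) first partials. If $M$ is the symmetric matrix of $Q$ as in the Remark above, then the vector of partials is $2Mx$, so $\dim \partial Q = \mathrm{rank}(M)$. I would first show $\partial Q \subseteq \linold{Q}$. Fix a minimal representation $Q = \sum_{i=1}^{s} a_i b_i$ with $s = \Rank Q$. Since $a_i, b_i$ are linear, $\partial a_i/\partial x_j$ and $\partial b_i/\partial x_j$ are scalars, so by the product rule $\partial Q/\partial x_j = \sum_{i=1}^{s}\bigl( (\partial a_i/\partial x_j)\, b_i + (\partial b_i/\partial x_j)\, a_i \bigr) \in \Cspan{a_1,\ldots,a_s,b_1,\ldots,b_s} = \linold{Q}$, whence $\partial Q \subseteq \linold{Q}$.

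\textbf{Step 2: $\partial Q = \linold{Q}$.} For the reverse inclusion it suffices to compare dimensions. Both $\linold{Q}$ and $\partial Q$ are equivariant under linear changes of coordinates and, since $\linold{Q}$ is well-defined independently of the chosen minimal representation, we may compute in the normal form of the Remark: $Q = x_1^2 + \cdots + x_r^2$ with $r = \mathrm{rank}(M)$. Then $\Rank Q = \lceil r/2 \rceil$, and the representation obtained by pairing $x_{2k-1}^2 + x_{2k}^2 = (x_{2k-1} + \mathbf{i}x_{2k})(x_{2k-1} - \mathbf{i}x_{2k})$ (with a leftover term $x_r \cdot x_r$ when $r$ is odd) has exactly $\lceil r/2 \rceil$ summands, hence is minimal; it gives $\linold{Q} = \Cspan{x_1,\ldots,x_r}$, so $\dim \linold{Q} = r = \dim \partial Q$. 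Combined with Step 1 this yields $\partial Q = \linold{Q}$.

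\textbf{Step 3: conclusion.} Given the representation $Q = \sum_{i=1}^{r} x_i y_i$, put $W := \Cspan{x_i, y_j \mid i,j \in [r]}$. Exactly the product-rule computation of Step 1 shows $\partial Q/\partial x_j \in W$ for every $j$, hence $\partial Q \subseteq W$; combining with Step 2, $\linold{Q} = \partial Q \subseteq W$, as claimed.

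\textbf{Main obstacle.} The only nonroutine point is the dimension identity $\dim \linold{Q} = \mathrm{rank}(M) = \dim \partial Q$ used to upgrade the inclusion of Step 1 to an equality; this is dispatched by reducing to the normal form $Q = x_1^2 + \cdots + x_r^2$ and exploiting that both spaces transform equivariantly under $\mathrm{GL}$.
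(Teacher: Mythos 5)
The paper itself does not supply a proof of this lemma; it cites it as \cite[Fact 2.15]{PS20a}. So there is no internal argument to compare against, and the only question is whether your proof is correct. It is. Identifying $\linold{Q}$ with the gradient space $\partial Q = \Cspan{\partial Q/\partial x_0, \ldots, \partial Q/\partial x_n}$ is the natural intrinsic characterization, Step 1 (the inclusion $\partial Q \subseteq \linold{Q}$ via the product rule) and Step 3 (the same computation applied to the possibly non-minimal representation $\sum x_i y_i$) are both straightforward, and Step 2 correctly upgrades the inclusion to an equality by comparing dimensions in the diagonal normal form, using the well-definedness of $\linold{Q}$ that the paper asserts just before the lemma.

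One small remark: Step 2 can be made independent of that appeal to well-definedness. For \emph{any} minimal representation $Q = \sum_{i=1}^{s} a_i b_i$ with $s = \lceil r/2 \rceil$ and $r = \mathrm{rank}(M)$, one always has $\dim \Cspan{a_i, b_i} \leq 2s$. When $r$ is even this is already $r$, and when $r$ is odd the value $2s = r+1$ is impossible, since if all $2s$ forms were linearly independent then (after a change of coordinates taking them to coordinate functions) the matrix of $Q$ would have rank $2s > r$. Together with Step 1's bound $\mathrm{rank}(M) \leq \dim\linold{Q}$, this gives $\dim \linold{Q} = r$ for every minimal representation at once, simultaneously proving the well-definedness of $\linold{Q}$ rather than assuming it. This is a minor streamlining; your argument as written is already valid given what the paper takes for granted.
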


\begin{proposition}\label{proposition:factor-in-pencil}
   Let $Q_1,Q_2\in S_2$. Suppose $\Rank(Q_1)\geq d$ for some $d$. Consider the pencil of quadratic forms formed by $Q_1,Q_2$, i.e. the set $\{Q_t=Q_1+tQ_2|t\in \bC\}$. Then there are at most $2d-1$ values of $t$ for which the quadratic form $Q_t$ has $\Rank(Q_t)<d$.
\end{proposition}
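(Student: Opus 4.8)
The plan is to translate the rank condition into a statement about ranks of symmetric matrices, and then use the fact that rank is a lower-semicontinuous function whose "drop locus" is cut out by minors, which are polynomials. Concretely, let $M_1, M_2$ be the symmetric matrices associated to $Q_1, Q_2$ as in the Remark above, so that $Q_t = Q_1 + tQ_2$ corresponds to $M_t := M_1 + tM_2$. Recall $\Rank(Q_t) = \lceil \mathrm{rank}(M_t)/2\rceil$, so $\Rank(Q_t) < d$ is equivalent to $\mathrm{rank}(M_t) \leq 2d-2$, i.e. to the vanishing of every $(2d-1)\times(2d-1)$ minor of $M_t$. Since $\Rank(Q_1) \geq d$, we have $\mathrm{rank}(M_1) \geq 2d-1$ (here using that $\mathrm{rank}(M_1)$ being $2d-2$ would give $\Rank(Q_1) = d-1$), so there is some $(2d-1)\times(2d-1)$ submatrix of $M_1$ with nonzero determinant.

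First I would fix such a $(2d-1)\times(2d-1)$ submatrix of $M_1$, say indexed by rows and columns in a set $I$ with $|I| = 2d-1$, and consider the corresponding submatrix $N_t$ of $M_t$. Then $p(t) := \det(N_t)$ is a polynomial in $t$ of degree at most $2d-1$, since each entry of $N_t$ is linear in $t$ and the determinant of a $(2d-1)\times(2d-1)$ matrix is a sum of products of $2d-1$ entries. Moreover $p$ is not identically zero, because $p(0) = \det(N_0) = \det$ of our chosen submatrix of $M_1$, which is nonzero. Hence $p$ has at most $2d-1$ roots. Now if $\Rank(Q_t) < d$, then $\mathrm{rank}(M_t) \leq 2d-2$, so in particular the $(2d-1)\times(2d-1)$ minor $p(t)$ vanishes; therefore $t$ is a root of $p$. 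This shows the set of such $t$ has size at most $2d-1$, as claimed.

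I do not expect a serious obstacle here; the one point requiring a little care is the bookkeeping between $\Rank$ (the quadratic-form rank) and $\mathrm{rank}$ (the matrix rank) via the ceiling function, specifically verifying that $\Rank(Q_1) \geq d$ forces $\mathrm{rank}(M_1) \geq 2d - 1$ and not merely $\geq 2d-2$. If $\mathrm{rank}(M_1)$ were exactly $2d-2$ then $\lceil (2d-2)/2\rceil = d-1 < d$, contradicting the hypothesis, so indeed $\mathrm{rank}(M_1) \geq 2d-1$ and a nonvanishing $(2d-1)$-minor of $M_1$ exists. One could alternatively phrase the whole argument over the field $\bC(t)$: the matrix $M_t$ viewed over $\bC(t)$ has rank $\geq 2d-1$ (since its specialization at $t=0$ does), pick a nonvanishing $(2d-1)$-minor there, clear denominators to get the polynomial $p(t)$, and conclude identically. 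Either way the degree bound $2d-1$ on $p$ is what produces the stated count.
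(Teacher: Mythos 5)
Your proof is correct and follows essentially the same approach as the paper's: reduce to symmetric matrices, observe that $\Rank(Q_t) < d$ forces all $(2d-1)\times(2d-1)$ minors of $M_t$ to vanish, and exhibit one such minor that is a nonzero degree-$\leq 2d-1$ polynomial in $t$ by evaluating at $t=0$. The only cosmetic difference is that the paper first diagonalizes $M_2$ by a change of variables, a step that is not actually needed for the degree bound and that you correctly omit; you are also more explicit about the ceiling-function bookkeeping showing $\mathrm{rank}(M_1) \geq 2d-1$.
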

\begin{proof}
    Let $M_1,M_2$ be the matrices corresponding to the quadratic forms $Q_1,Q_2$ respectively. After a linear change of variables, we may assume that the matrix $M_2$ is a diagonal matrix. Then the matrix corresponding to $Q_t=Q_1+tQ_2$ is given by $M_t=M_1+tM_2$. Note that $\Rank(Q_t)<d$ if and only if $\Rank(M_t)<2d-1$. Now $\Rank(M_t)<2d-1$ if and only if all the determinants of all the $(2d-1)\times (2d-1)$ minors are zero. Note that each such determinant is a polynomial in $t$. Since $\Rank(Q_1)\geq d$, there exists at least one such determinant that is non-zero at $t=0$, and hence it is a non-zero polynomial in $t$ of degree $\leq 2d-1$. Thus, there are at most $2d-1$ values of $t$ such that  $\Rank(Q_t)<d$.
\end{proof}

\begin{remark}\label{remark:factor-in pencil}
    If $Q_1, Q_2$ are irreducible quadratic forms which satisfy the conditions of the previous proposition with $d=2$ and there are at least two values of $t$ for which $\Rank(Q_t) = 1$, then $\Rank(Q_1) = \Rank(Q_2) = 2$. Further we have $\lin{Q_1} = \lin{Q_2}$, since both $Q_{1}$ and $Q_{2}$ will be linear combinations of the two reducible polynomials in the pencil defined by $Q_{1}, Q_{2}$.
\end{remark}

\begin{lemma}\label{lemma:almost-minimal-collapse-poly-ring}
If $V_1 \subset S_1$ and $x, y, u, v \in S_1$ such that $xy - uv \in \bC[V_1]$ then one of the following holds:
\begin{enumerate}
    \item We have $x, y, u, v \in V_1$.
    \item There exists a linear form $\ell \in S_1 \setminus V_1$ such that $x,y,u,v\in \Cspan{V_1,\ell}$.
\end{enumerate}
\end{lemma}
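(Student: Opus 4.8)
The statement says: if $xy - uv \in \bC[V_1]$ for linear forms $x,y,u,v$ and a space $V_1 \subset S_1$, then either all four forms lie in $V_1$, or they all lie in $\Cspan{V_1,\ell}$ for a single extra linear form $\ell$. The plan is to reduce to the case where $xy - uv$ is actually a form of degree $2$ in $\bC[V_1]$ and then argue about its rank. First I would observe that $xy - uv$ is a quadratic form, and since it lies in $\bC[V_1]$, it lies in the degree-$2$ part $(\bC[V_1])_2 = \Cspan{V_1}_2$, i.e. it is a quadratic form in the linear forms of $V_1$. By Lemma~\ref{lem:linold}, this means $\linold{xy-uv} \subseteq \Cspan{V_1}$. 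So it suffices to understand $\linold{xy-uv}$ and show it always contains $x,y,u,v$ up to at most one extra linear form.

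\textbf{Case analysis on the rank of $Q := xy - uv$.} If $Q = 0$, then $xy = uv$; since $S$ is a UFD, $\{x,y\}$ and $\{u,v\}$ agree up to scalars, so in particular $u, v \in \Cspan{x,y} \subseteq \Cspan{V_1, x}$ (taking $\ell = x$, and noting $x$ may or may not be in $V_1$), and we are done. If $Q \neq 0$, then $\Rank Q \in \{1,2\}$ since $Q = xy - uv$ is a sum of two products of linear forms. If $\Rank Q = 2$, then $\linold{Q}$ is the unique $4$-dimensional (at most) space, and in fact $\linold{Q} = \Cspan{x,y,u,v}$: indeed $\Cspan{x,y,u,v}$ defines an algebra containing $Q$, and by the minimality characterization (the lemma following Definition~\ref{def:lin}, i.e. Lemma~\ref{lem:linold} read in the sharp direction together with the fact that a rank-$2$ form forces $\dim \linold{Q} = 4$) we get $\linold{Q} = \Cspan{x,y,u,v}$. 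Combined with $\linold{Q}\subseteq \Cspan{V_1}$ this gives $x,y,u,v \in V_1$, which is conclusion (1). The remaining case is $\Rank Q = 1$, so $Q = \ell_1 \ell_2$ for linear forms $\ell_1, \ell_2$ (possibly proportional), and $\linold{Q} = \Cspan{\ell_1,\ell_2} \subseteq \Cspan{V_1}$, so $\ell_1, \ell_2 \in V_1$. Now $xy - uv = \ell_1 \ell_2$ with $\ell_1,\ell_2 \in V_1$; I need to deduce that $x,y,u,v$ live in $\Cspan{V_1,\ell}$ for one $\ell$.

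\textbf{Handling the rank-$1$ case — the main obstacle.} This is where the real work is. From $xy = uv + \ell_1\ell_2$ with $\ell_1,\ell_2 \in V_1$, I want to show all of $x,y,u,v$ are congruent modulo $V_1$ to scalar multiples of a single linear form $\ell$. The cleanest route is to work in the quotient: pass to $\bar S := S/(V_1)$ (quotient by the ideal generated by $V_1$, or just reduce linear forms modulo $\Cspan{V_1}$ viewing $S_1/\Cspan{V_1}$ inside $\bar S_1$). Let $\bar x, \bar y, \bar u, \bar v$ be the images. Then $\bar x \bar y = \bar u \bar v$ in $\bar S$ (which is again a polynomial ring, hence a UFD), because $\ell_1\ell_2 \equiv 0$. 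Now apply unique factorization in $\bar S$: either some of the images vanish, or $\{\bar x, \bar y\} = \{\bar u, \bar v\}$ up to scalars. In every sub-case one checks that the nonzero images among $\bar x, \bar y, \bar u, \bar v$ are pairwise proportional: if $\bar x \bar y = \bar u \bar v \neq 0$ then all four are nonzero and $\bar u \in \{\lambda\bar x, \lambda \bar y\}$, $\bar v$ the complementary one, but then $\bar x \bar y = \bar u\bar v$ forces (comparing) $\bar x \parallel \bar y$ unless $\bar x \bar y$ is genuinely a product of two non-proportional forms — and then $\{\bar x,\bar y\}=\{\bar u,\bar v\}$ gives each of $\bar u,\bar v$ proportional to one of $\bar x, \bar y$, which is enough to place $x,y,u,v$ in $\Cspan{V_1, \bar x, \bar y}$ — wait, that would be two extra forms. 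So I must be more careful: the genuinely dangerous configuration is $\bar x \bar y = \bar u \bar v$ with $\bar x, \bar y, \bar u, \bar v$ spanning a $2$-dimensional space mod $V_1$. I would rule this out by a rank count: if $\dim \Cspan{\bar x,\bar y,\bar u,\bar v} = 2$ in $\bar S_1$, then lifting, $\Cspan{x,y,u,v} + \Cspan{V_1}$ has dimension $\dim\Cspan{V_1} + 2$, and correspondingly the quadratic $xy - uv$, being $\ell_1\ell_2$ with $\ell_i \in V_1$, must have $\linold{xy-uv}$ fit — but actually the sharper observation is that $xy - uv = \ell_1\ell_2 \in \Cspan{V_1}^2$ together with $\{\bar x\bar y = \bar u\bar v\}$ means we can write $xy - uv$ explicitly and solve: modulo $V_1$, $xy \equiv uv$, so WLOG (rescaling) $\bar u = \bar x$, hence $u = x + p$, $v = y + q$ with $p, q \in \Cspan{V_1}$, and then $xy - uv = xy - (x+p)(y+q) = -xq - py - pq \in \Cspan{V_1}\cdot S_1$; this is automatically in $\bC[V_1]$ only if $xq + py \in \bC[V_1]$, and since $p,q \in \Cspan{V_1}$ already, $xq + py \in \bC[V_1]$ iff $x \bar q + \bar p y \equiv 0$ type condition forces... — at this point the bookkeeping resolves into: either $q \in \Cspan{V_1, x}$-type constraint or $p, q$ are both multiples of forms already controlled, and one extracts the single form $\ell$ (namely $\ell = x$, or $\ell = y$, or one of the $\bar u,\bar v$ representatives) that together with $V_1$ contains all four. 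The main obstacle is precisely this last UFD-plus-linear-algebra case chase, ensuring only \emph{one} extra form is needed and not two; I expect it to hinge on the fact that $xy - uv \in \bC[V_1]$ is a strong constraint that collapses the would-be $2$-dimensional quotient configuration down to dimension $\leq 1$.
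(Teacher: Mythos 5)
Your rank-$2$ (irreducible) case and the general framing via \cref{lem:linold} match the paper's approach exactly. However, the rank-$1$ case, which you correctly identify as the crux, is not actually resolved: your discussion ends with ``I expect it to hinge on\ldots'' and a case chase that is never closed. The difficulty you run into is inherent to your choice to quotient by $(V_1)$: after reducing, the equation $\bar{x}\bar{y}=\bar{u}\bar{v}$ only pairs up the residues up to scalars and leaves open the possibility that $\bar{x},\bar{y},\bar{u},\bar{v}$ span a two-dimensional space, and you do not give an argument that rules this out. The paper avoids this entirely by quotienting by $(x)$ for a single $x\notin V_1$ instead of by $(V_1)$: writing $xy-uv=zw$ with $z,w\in V_1$, one has $zw\not\equiv 0\bmod(x)$ (precisely because $z,w\in V_1$ and $x\notin V_1$), so $\bar u\bar v=-\bar z\bar w\neq 0$ in the UFD $S/(x)$ forces $u,v\in\Cspan{x,V_1}$; then applying \cref{lem:linold} once more to $xy=uv+zw\in\bC[\Cspan{x,V_1}]$ gives $y\in\Cspan{x,V_1}$, closing the case with $\ell=x$. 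That single-variable quotient is the missing idea: it keeps the data of $z,w\in V_1$ alive (they survive as nonzero forms), whereas your quotient by $(V_1)$ kills them and throws away exactly the information you would need.

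One smaller issue: in your $Q=0$ subcase you write $u,v\in\Cspan{x,y}\subseteq\Cspan{V_1,x}$, but the inclusion $\Cspan{x,y}\subseteq\Cspan{V_1,x}$ has no justification --- it would require $y\in\Cspan{V_1,x}$, which is exactly what needs proof. (In fact $Q=0$ with $V_1=0$, $u=x$, $v=y$, and $x,y$ independent is a configuration where neither conclusion of the lemma holds; the paper's proof tacitly assumes $Q\neq 0$, and in every application of this lemma that nondegeneracy is available. So this is a shared blemish, but your write-up additionally asserts a false inclusion.)
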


\begin{proof}
Let $P := xy-uv$.
If $P$ is irreducible, then $xy - uv$ is a minimal representation of $P$. Therefore, we have $\linold{P}=\Cspan{x,y,u,v}$ and \cref{lem:linold} implies $x,y,u,v\in V_1$. 

Thus, we can assume $P$ factors and $xy-uv=zw$ for some $z,w\in S_1$. 
Since $P=zw$ is a minimal representation of $P$, \cref{lem:linold} implies $\linold{P}= \Cspan{z, w}\subset V_1$. 
If $x, y, u, v \not\in V_1$, we can assume w.l.o.g. that $x \not\in V_1$, and therefore $zw \neq 0$ in $S/(x)$.
By factoriality of $S/(x)$, we have that $-uv \equiv zw \neq 0 \bmod (x)$ which implies $u, v \in \Cspan{x, V_1}$.
Since $xy = uv + zw$, \cref{lem:linold} implies that $\linold{xy} = \Cspan{x, y} \subset \Cspan{u,v, z,w} \subset \Cspan{x, V_1}$ as we wanted.
\end{proof}

\begin{proposition}\label{proposition:irreducible-radical-lin}
  Let $Q \in S_2$ be an irreducible quadratic form. 
  If there exists a linear form $\ell$ such that $(Q,\ell)$ is not radical, then $\Rank Q = 2$, $\dim \lin{Q}=3$, and any such $\ell$ must be in $\lin{Q}$. 
  Moreover, any three pairwise linearly independent linear forms $\ell_1, \ell_2, \ell_3$ such that $(Q, \ell_i)$ is not radical are such that 
  $\lin{Q} = \Cspan{\ell_1, \ell_2, \ell_3}$.
\end{proposition}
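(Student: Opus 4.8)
The plan is to analyze when the ideal $(Q, \ell)$ fails to be radical using the Radical Structure Theorem (\cref{theorem: radical structure}) specialized to the case where one of the forms is linear, or more directly by passing to the quotient ring $S/(\ell)$. First I would observe that since $\ell$ is linear and irreducible, the relevant case in which $(Q,\ell)$ is non-radical must be case 3, and since $Q, \ell$ have no common factor (as $Q$ is irreducible of degree $2$, and if $\ell \mid Q$ then $Q = \ell \ell'$ is reducible), we land in case 3(b): there is a minimal prime $\mathfrak{p} = (x, Q')$ with $x^2 \in \Cspan{Q, \ell^2}$... but here one of the generators is already linear, so in fact the cleanest route is direct. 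Work in $\bar{S} := S/(\ell)$, which is a polynomial ring in $n$ variables (one fewer). Then $(Q,\ell)$ is radical in $S$ if and only if $\bar{Q}$ generates a radical ideal in $\bar{S}$, i.e. if and only if $\bar{Q}$ is squarefree as a polynomial in $\bar S$. So $(Q,\ell)$ fails to be radical precisely when $\bar Q = c \cdot \bar\ell'^{\,2}$ for some linear form $\ell'$ (the only way a quadratic can be non-squarefree), where $c$ is a nonzero scalar. Equivalently, $Q \equiv c\,\ell'^2 \pmod \ell$, i.e. $Q = c\,\ell'^2 + \ell \cdot m$ for some linear form $m$.

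From the relation $Q = c\,\ell'^2 + \ell m$ I would read off the structure of $Q$. After rescaling, $Q = \ell'^2 + \ell m$. Since $Q$ is irreducible, $\ell, \ell', m$ must be such that this does not factor; in particular $\ell$ and $\ell'$ are linearly independent (if $\ell' \in \Cspan{\ell}$ then $\ell \mid Q$, contradicting irreducibility). Writing $Q = \ell'^2 + \ell m$, if $m \in \Cspan{\ell, \ell'}$ then $Q \in \Cspan{\ell^2, \ell\ell', \ell'^2}$ is a quadratic in two variables, hence has rank at most... well, $\ell'^2 + \ell\ell''$-type expressions, and one checks $\Rank Q \le 2$; conversely if $m \notin \Cspan{\ell,\ell'}$ then $Q = \ell'^2 + \ell m$ is a minimal-type representation and $\Rank Q = 2$ as well (it is $\ell \cdot m + \ell' \cdot \ell'$, a sum of two products, and it is not rank $1$ since a rank-$1$ irreducible quadratic does not exist — rank $1$ means $Q = ab$, reducible). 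Either way $\Rank Q = 2$. Then $\linold{Q} \subseteq \Cspan{\ell, \ell', m}$, which has dimension at most $3$; and since $\Rank Q = 2 < 5$, $\lin Q = \linold Q$. To see $\dim \lin Q = 3$ exactly and that $\ell \in \lin Q$: I would use \cref{lem:linold} together with the fact that $Q$ is irreducible to rule out $\dim \lin Q \le 2$ (a rank-$2$ quadratic with $2$-dimensional $\linold{}$ is $ab$ with $a,b$ in a $2$-dim space, forcing reducibility over $\bC$), and then argue $\ell \in \lin Q$ by noting that $Q \bmod \ell = c\ell'^2$ is a perfect square, a property that pins down $\ell$ inside the $3$-dimensional space $\lin Q$: the linear forms $z \in \lin Q$ with $Q \bmod z$ a perfect square (up to scalar) or zero are exactly those for which $(Q, z)$ is non-radical, and I will show there are at most three of them up to scaling, corresponding to a fixed geometric configuration.

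For the ``moreover'' part: suppose $\ell_1, \ell_2, \ell_3$ are pairwise independent with each $(Q, \ell_i)$ non-radical. By the above, each $\ell_i \in \lin Q$, a $3$-dimensional space, and for each $i$ we have $Q \equiv c_i \ell_i'^2 \pmod{\ell_i}$. I want to conclude $\lin Q = \Cspan{\ell_1, \ell_2, \ell_3}$, i.e. that the $\ell_i$ are linearly independent. Suppose not: then $\ell_3 \in \Cspan{\ell_1, \ell_2}$. Restricting everything to the $3$-dimensional space $\lin Q$ and using that $Q$ (being rank $2$) restricted there is a nondegenerate or rank-controlled quadratic form, I would analyze the conic $\{Q = 0\}$ inside $\mathbb{P}(\lin Q) \cong \mathbb{P}^2$: the condition that $(Q, \ell_i)$ is non-radical says the line $\ell_i = 0$ is tangent to this conic (meets it with multiplicity $2$ at a single point). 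A smooth conic in $\mathbb{P}^2$ has, through a configuration of non-collinear tangency data... actually, the key point is a pencil/multiplicity count: via \cref{proposition:irreducible-radical-lin}'s own logic or a direct dimension count of tangent lines, at most finitely many tangent lines can lie in a pencil, and three tangent lines to a smooth conic that are linearly dependent (concurrent or coincident in $\mathbb{P}^2$, i.e. passing through a common point) would force two of the tangency points to coincide, hence two of the $\ell_i$ to be equal up to scalar — contradicting pairwise independence. So the $\ell_i$ span $\lin Q$.

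The main obstacle I anticipate is the last step: cleanly ruling out that three pairwise-independent ``non-radical directions'' $\ell_i \in \lin Q$ can be linearly dependent. This is essentially the statement that three distinct tangent lines to a smooth plane conic are not concurrent, which is classical but needs to be phrased purely algebraically here (the conic could a priori be singular if $\Rank Q$ were smaller, so I must first nail down that $Q$ restricted to $\lin Q$ is a smooth conic — which follows from $\Rank Q = 2$ and $\dim \lin Q = 3$, i.e. $Q$ corresponds to a rank-$3$ symmetric matrix on a $3$-dim space). Getting the bookkeeping right between the abstract rank, the matrix rank, and the tangency picture — and making sure irreducibility of $Q$ is used correctly to prevent degenerate cases — is where the real care is needed; the rest is routine manipulation of the relation $Q = \ell'^2 + \ell m$ together with \cref{lem:linold}.
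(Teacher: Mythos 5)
Your proposal is correct and matches the paper's proof in all essentials: both derive $Q \equiv c\ell'^{2}\ (\mathrm{mod}\ \ell)$, read off $\Rank Q = 2$ and $\lin{Q} = \Cspan{\ell, \ell', m}$ from the resulting minimal representation $Q = \ell m + \ell'^2$, and settle the ``moreover'' part via the tangency geometry of the smooth plane conic defined by $Q$ in $\bP(\lin{Q}) \cong \bP^2$. The paper phrases the last step dually --- linearly dependent $\ell_i$ force the tangency points $p_i$ to be collinear, giving a line meeting the conic in three distinct points and contradicting B\'ezout --- whereas you argue via concurrency of tangent lines and polar degree, but these are the same geometric fact.
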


\begin{proof} 
Since $Q$ is irreducible, we have $\Rank(Q)\geq 2$ and $\dim \lin{Q} \geq 3$.
If $\Rank(Q) \geq 3$ then $Q$ is irreducible modulo $\ell$, for any linear form $\ell$. Therefore the ideal ideal $(Q, \ell)$ is prime, and hence radical for any linear form $\ell$. Hence if there exists a linear form $\ell$ such that $(Q,\ell)$ is not radical, then we must have that $\Rank(Q) = 2$. 
If $Q$ is reduced modulo a linear form $\ell$, then the ideal $(Q,\ell)$ is radical. 
Therefore, if $(Q,\ell)$ is not a radical ideal for some linear form $\ell$, then we have $Q = \ell a + b^2$, for some $a, b \in S_1$. Note that this is a minimal representation of $Q$. Therefore, by \cite[Fact 2.13]{PS20a} we must have that $\Cspan{\ell,a,b}= \lin{Q}$. Hence we have $\ell\in\lin{Q}$. 
Therefore $\mathrm{dim}(\lin{Q})\leq 3$, which together with $\dim \lin{Q} \geq 3$ implies $\dim \lin{Q} = 3$.

Suppose there exists a linear form $x$ such that $(Q,x)$ is not radical. 
Then, by the above discussion, we may assume that $Q=xy+z^2$ where $x, y,z$ are linearly independent forms and $\lin{Q}=\Cspan{x,y,z}$.  
Suppose $(Q,\ell_i)$ is not radical for three pairwise linearly independent forms $\ell_1,\ell_2,\ell_3\in S_1$. 
Consider the projective conic defined by $Q$ in $\bP^2$. 
Now $(Q,\ell_i)$ not radical implies that the line defined by $\ell_i$ is tangent to the conic $Q$ at some point $p_i$. 
Since $\ell_1,\ell_2,\ell_3$ are pairwise linearly independent, the points $p_1,p_2,p_3$ are distinct. 
If $\ell_1,\ell_2,\ell_3$ are linearly dependent then the points $p_1,p_2,p_3$ are collinear. 
Then there exists a line passing through $p_1,p_2,p_3$ which intersects the projective conic $Q$ at three distinct points. This contradicts Bezout's theorem, and hence $\ell_1,\ell_2,\ell_3$ are linearly independent. 
\end{proof}

\subsection{Sylvester-Gallai Configurations and LCC Configurations}\label{sec:sg-configurations}

We now establish the notation we use for the SG configurations that we will encounter, as well as what is known about the bounds on their dimension.
We also define the notion of a $\delta$-LCC configuration introduced in \cite{BDWY11}, and state a bound on the rank of such configurations.

We start by restating the definition of quadratic robust Sylvester-Gallai configurations.

\deltasgconfig*

In the above definition, if $F_{k} \in \cF$ is such that $F_{k} \in \radideal{F_{i}, F_{j}}$, then we refer to $(F_{i}, F_{j}, F_{k})$ as a SG-triple. 
Note that the radical condition is not symmetric, i.e. if $(F_i,F_j,F_k)$ is a SG-triple, it is not necessary that $(F_i,F_k,F_j)$ or $(F_k,F_j,F_i)$ is a SG-triple.
The general form of the linear SG configurations that we will encounter are as follows:

\begin{remark}
If $\cF$ is a $\drsg{\delta}{2}$ configuration where $\cF_2 = \emptyset$, we obtain the robust linear SG configurations from \cite{BDWY11, DSW14}.
Hence, in this case we shall simply denote the robust linear SG configurations as $\delta$-linear SG configurations.
\end{remark}

\begin{theorem}[Rank Bound for Linear SG Configurations \cite{DSW14}]\label{thm:DSW14}
  If $\cF$ be a $\delta$-linear SG configuration, then $\dim \br{\Kspan{\cF}} \leq 13 /\delta$.
\end{theorem}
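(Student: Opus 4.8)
The plan is to follow the design‑matrix method of \cite{BDWY11}, in the quantitatively sharpened form of \cite{DSW14}. Write $\cF=\{F_1,\dots,F_m\}$, set $n:=\dim(\Kspan{\cF})$, and identify each $F_i$ with its coefficient vector $v_i$; after replacing the ambient space by $\Kspan{\cF}$ we may assume $v_1,\dots,v_m$ span the space, so the $m\times n$ matrix $V$ with rows $v_i$ has $\mathrm{rank}(V)=n$. The whole argument then reduces to producing a matrix $A$ with $AV=0$ and $\mathrm{rank}(A)\ge m-13/\delta$: since $AV=0$ forces the column space of $V$ into $\ker A$, we get $n=\mathrm{rank}(V)\le m-\mathrm{rank}(A)\le 13/\delta$.

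\textbf{Constructing $A$.} For each $i$, the configuration supplies $\delta(m-1)$ indices $j$ for which there is a $k=k(i,j)$ with $F_k\in(F_i,F_j)$; as $F_i,F_j,F_k$ are pairwise linearly independent linear forms, $F_k=\alpha F_i+\beta F_j$ with $\alpha,\beta\neq 0$, i.e. $v_i,v_j,v_k$ satisfy a linear dependency all three of whose coefficients are nonzero. For each such ordered pair $(i,j)$ I would fix one witness and one such dependency, and make it a row of $A$ supported exactly on the coordinates $\{i,j,k(i,j)\}$ with those nonzero entries; then $AV=0$ by construction. One then checks — after a standard, slightly careful choice of witnesses as in \cite{BDWY11} so that each pair of coordinates lies in the support of only boundedly many rows — that $A$ is a $(q,k,t)$-design matrix with $q=3$ (each row has exactly three nonzero entries), $k\ge\delta(m-1)$ (column $i$ is nonzero in the $\delta(m-1)$ rows coming from pairs $(i,\cdot)$), and $t=O(1)$ (rows supported on two fixed coordinates $i,j$ all come from pairs lying on the unique ``line'' through $v_i,v_j$, and the witness choice keeps their number bounded).

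\textbf{The analytic core.} It remains to invoke the rank bound for design matrices over a field of characteristic zero from \cite{DSW14}: a $(q,k,t)$-design matrix with $m$ columns has rank at least $m-O(qtm/k)$, improving the $O\big((qtm/k)^2\big)$ rank deficiency of \cite{BDWY11}, with the constants tuned so that the final conclusion is $n\le 13/\delta$. I would reproduce the \cite{DSW14} argument: apply the matrix-scaling (Sinkhorn) theorem to pass to $B=D_1AD_2$ whose columns have unit $\ell_2$-norm and whose Gram matrix $M=B^{*}B$ has constant row sums; then $M$ is PSD with $1$'s on the diagonal, $\mathrm{tr}(M)=m$, and $\mathrm{rank}(M)=\mathrm{rank}(A)$. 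A Cauchy--Schwarz estimate $m^2=\mathrm{tr}(M)^2\le \mathrm{rank}(M)\cdot\mathrm{tr}(M^2)$, together with the bound $\mathrm{tr}(M^2)=m+\sum_{i\neq j}|M_{ij}|^2\le m\,(1+O(t/k))$ — each off‑diagonal entry $M_{ij}$ being a sum of at most $t$ terms, and each column contributing to at least $k$ rows — yields $\mathrm{rank}(A)\ge m-O(tm/k)$. Substituting $q=3$, the constant bound on $t$, and $k\ge\delta(m-1)$, and optimizing constants throughout the construction, gives $n\le 13/\delta$.

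\textbf{Main obstacle.} The genuinely hard ingredient is the design-matrix rank bound itself: this is exactly the step that fails over $\mathbb{F}_p$ (where design matrices can have rank as small as $O(\log m)$), and it rests on the quantitative matrix-scaling theorem together with the spectral estimate above. The combinatorial construction of $A$ and the final constant-chasing are routine by comparison; the only mildly delicate point there is the choice of witnesses guaranteeing $t=O(1)$.
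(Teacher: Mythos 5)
The paper does not prove this theorem; it is imported directly from \cite{DSW14} and used as a black box, so there is no internal proof to compare against. Your sketch is a correct high-level reconstruction of the argument from that source (built on \cite{BDWY11}): associate a coefficient vector to each form, build a matrix $A$ with $AV=0$ from the Sylvester--Gallai dependencies, verify that $A$ is a $(3,\Omega(\delta m),O(1))$-design matrix, and apply the matrix-scaling-plus-trace rank bound to conclude that $\ker A$, and hence the column span of $V$, has dimension $O(1/\delta)$. You also correctly identify the two nontrivial ingredients: (i) the per-line balancing device (a Latin-square or near-perfect-matching choice of witnesses along each special line) that gives $t=O(1)$ — this is a genuine combinatorial lemma, not a casual choice, since with an arbitrary witness assignment a column pair on a line with $r$ points can be hit $\Theta(r)$ times; and (ii) the quantitative (approximate) matrix-scaling theorem together with the spectral estimate, which is indeed the step that requires characteristic zero and fails over $\mathbb{F}_p$.

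One quantitative caveat worth flagging: your displayed estimate $\mathrm{tr}(M^2)\le m\,(1+O(t/k))$ elides where the scaling parameter actually enters. The bound one proves is
$\sum_{i\ne j}|M_{ij}|^2 \le t\sum_r \|b_r\|_2^4 = t\,(\#\text{rows})\,\alpha^2 = t\,m\,\alpha$, via Cauchy--Schwarz on the $\le t$ rows contributing to each $M_{ij}$, where $\alpha$ is the common squared $\ell_2$-mass of a row of $B$ after (approximate) Sinkhorn scaling; since $q\cdot(\#\text{rows})\ge mk$ one gets $\alpha\le q/k$, hence $\mathrm{tr}(M^2)\le m(1+tq/k)$ and $\mathrm{rank}(A)\ge m - tqm/k$. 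Plugging in $q=3$, $t\le 6$, $k\ge\delta(m-1)$ then yields a bound of the form claimed, and the $13/\delta$ constant is obtained by tightening the per-line balancing and the scaling constants as in \cite{DSW14}. So the outline is faithful; the work is in the scaling lemma and the per-line triple family, both of which you rightly defer to \cite{BDWY11,DSW14}.
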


We now define $\delta$-LCC configurations.
These are configurations that arise in the study of locally correctable codes, and hence the name.
They can be seen as extensions of SG configurations.
\begin{definition}[$\delta$-LCC configurations, \cite{BDWY11}]
Let $0 < \delta \leq 1$ and let $\cF := \bc{x_{1}, \dots, x_{m}} \subset S_1$ be a multiset of linear forms, not necessarily distinct.
We say that $\cF$ is a $\delta$-LCC configuration, if for every $i \in \bs{m}$ and every subset $\Gamma \subset \bs{m}$ with $|\Gamma| \leq \delta m$, there are indices $j, k \in \bs{m} \setminus \Gamma$ such that either $x_{i} \in \Cspan{x_j} \cup \Cspan{x_k}$ or the three linear forms $x_{i}, x_{j}, x_{k}$ are pairwise independent and satisfy $x_{i} \in \Cspan{x_{j}, x_{k}}$.
\end{definition}

In \cite[Theorem~7.6]{BDWY11} the authors proved the following bound for $\delta$-LCCs.

\begin{theorem}\label{thm:BDWY11}
If $\cF$ is a $\delta$-LCC configuration then $\dim \br{\Cspan{\cF}} = \bigO{1 / \delta^{9}}$.
\end{theorem}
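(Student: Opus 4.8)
The plan is to use the design-matrix method of \cite{BDWY11}. The workhorse is their rank bound for design matrices: if $A\in\bC^{N\times m}$ with $N\geq m$ is a $(q,k,t)$-design matrix --- every row has at most $q$ nonzero entries, every column has at least $k$ nonzero entries, and the supports of any two distinct columns meet in at most $t$ common rows --- then $\operatorname{rank}(A)\geq m-\left(\tfrac{qtm}{2k}\right)^2$. I would invoke this as a black box (it is itself a separate theorem of \cite{BDWY11}, proved by a Sinkhorn-type matrix scaling followed by a Frobenius-norm perturbation estimate on the rescaled, nearly-isometric matrix). To use it, let $M$ be the $m\times D$ matrix whose $i$-th row records the coordinates of $x_i$ in a basis of $\Cspan{\cF}$, so that $D=\dim\Cspan{\cF}=\operatorname{rank}(M)$. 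Every linear relation $\alpha x_i+\beta x_j+\gamma x_k=0$ among three of the forms is a vector in $\bC^m$ supported on $\{i,j,k\}$ lying in the left kernel of $M$. Hence it is enough to collect enough such relations into the rows of a $(q,k,t)$-design matrix $A$ with $q=O(1)$, $t=\operatorname{poly}(1/\delta)$ and $k=\Omega(\delta^{c}m)$ for a small constant $c$: then $AM=0$ forces $D\leq m-\operatorname{rank}(A)\leq\left(\tfrac{qtm}{2k}\right)^2$, which, after a small correction for a few exceptional forms handled next, is $O(1/\delta^{9})$.

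\paragraph{Removing the degenerate recovery.}
The $\delta$-LCC definition permits the degenerate alternative $x_i\in\Cspan{x_j}\cup\Cspan{x_k}$, which does not yield a genuine three-term relation, so I would first eliminate it. Call a direction \emph{heavy} if at least $\tfrac{\delta}{2}m$ of the forms are scalar multiples of it; each form lies in exactly one direction, so there are at most $2/\delta$ heavy directions, and the forms in heavy directions together span a space of dimension at most $2/\delta$ --- negligible against the target. For a form $x_i$ not in a heavy direction, applying the LCC property with the forbidden set enlarged to include every copy of $x_i$'s direction shows: for every $\Gamma_0$ with $|\Gamma_0|\leq\tfrac{\delta}{2}m$ there are $j,k\notin\Gamma_0$ with $x_i,x_j,x_k$ \emph{pairwise independent} and $x_i\in\Cspan{x_j,x_k}$. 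So, after deleting the forms in heavy directions, every remaining form admits a genuine two-term recovery avoiding any prescribed set of at most $\tfrac{\delta}{2}m$ indices.

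\paragraph{Building the design matrix --- the main obstacle.}
Fix a remaining form $x_i$ and iterate the robust recovery: whenever the recovery pairs chosen so far cover at most $\tfrac{\delta}{2}m$ indices, we can obtain a fresh pair disjoint from all previous ones, which produces a \emph{matching} $M_i$ of $\Omega(\delta m)$ pairwise-disjoint recovery pairs for $x_i$. Encoding each recovery $x_i=\alpha x_j+\beta x_k$ as the row $e_i-\alpha e_j-\beta e_k$ (with $e_\ell$ the standard basis vectors of $\bC^m$) gives row sparsity $q=3$, and the rows indexed by $i$ already give column $i$ support at least $|M_i|=\Omega(\delta m)$. The difficulty --- and this is the crux of the whole argument --- is the column-intersection parameter $t$: columns $j$ and $k$ share the row indexed by $(l,\{p,r\})$ exactly when $\{j,k\}\subseteq\{l,p,r\}$; the contributions with $l\in\{j,k\}$ number at most two (each $M_l$ is a matching), but the number of points $l$ whose matching uses the \emph{same pair} $\{j,k\}$ may a priori be as large as $m$. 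Making $t$ small forces one to choose the matchings in a globally \emph{spread} way, so that no pair of indices is re-used by too many points, while keeping each $M_i$ polynomially large in $\delta m$; a naive ``forbid the over-used indices'' step does not close the accounting (the total pair-usage is $\Theta(\delta m^2)$), and the correct spreading argument is the technical heart of \cite{BDWY11}. It is this step that shrinks the guaranteed column support from $\Omega(\delta m)$ to $\Omega(\delta^{c}m)$ and so fixes the exponent.

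\paragraph{Conclusion.}
Once such a $(3,\Omega(\delta^{c}m),O(1))$-design matrix $A$ has been built on the non-heavy forms, it annihilates their coordinate matrix, so the design-matrix rank bound gives that the non-heavy forms span a space of dimension $O(1/\delta^{2c})$; adding back the at most $2/\delta$ dimensions contributed by the heavy directions gives $\dim\Cspan{\cF}=O(1/\delta^{9})$ with the value of $c$ produced by the construction of \cite{BDWY11}. The hard part is entirely the combinatorial construction of the spread family of recovery-matchings; the linear algebra --- the encoding and the invocation of the design-matrix rank bound --- is then routine.
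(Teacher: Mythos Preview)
The paper does not prove this theorem at all: it is quoted verbatim from \cite[Theorem~7.6]{BDWY11} and used as a black box, so there is no ``paper's own proof'' to compare against. Your outline is a faithful high-level sketch of the original argument in \cite{BDWY11} --- design-matrix rank bound, removal of heavy directions, iterated recovery to build matchings --- but it is not a proof: you explicitly defer the one nontrivial step, namely the global spreading of the recovery pairs that controls the column-intersection parameter $t$ while keeping $k=\Omega(\delta^{c}m)$, and you simply assert that the resulting exponent is the one ``produced by the construction of \cite{BDWY11}.'' In effect you have replaced a one-line citation by a longer citation with commentary; the actual work (the combinatorial lemma that yields a $(3,\Omega(\delta^{c}m),O(1))$-design matrix from the LCC hypothesis) is still missing.
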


\section{Structural Algebro Geometric Results}\label{sec:algebraic-geom}


In this section we establish the necessary definitions and theorems needed from commutative algebra and algebraic geometry.
Throughout this section, we will work over an algebraically closed field of characheristic zero $\bK$ and we will denote $S=\bK\bs{x_0, x_1,\cdots ,x_n}$.

\subsection{Algebraic preliminaries}
All rings that we consider in this section are commutative rings with unity. 
We refer to \cite{AM69, Eis95} for general background on commutative algebra, in particular for primary decomposition, Cohen-Macaulay rings and Hilbert-Samuel multiplicities. 
We briefly recall some of the relevant statements below for convenience.

\begin{definition}[Regular sequence]\label{definition: regular sequence}
Let $R$ be ring and $M$ an $R$-module. A sequence of elements $x_1, x_2, \cdots x_n \in R$ is called an $M$-regular sequence if
\begin{itemize}
\item[(1)] $(x_1,x_2,\cdots,x_n)M\neq M$, and
\item[(2)] for each $i$, the element $x_i$ is a non-zerodivisor in $M/(x_1,\cdots,x_{i-1})M$.
\end{itemize}   
If $M=R$, then we simply call $x_1, \ldots, x_n$ a regular sequence.
\end{definition}

\begin{remark}\label{remark: regular sequence}
Let $Q_1, Q_2 \in S_2$.
If $Q_1,Q_2$ do not have any common factors, then $Q_1,Q_2$ is a regular sequence.
Indeed, if $Q_1, Q_2$ is not a regular sequence then $Q_2$ is a zero-divisor in $S/(Q_1)$. 
Therefore,  $PQ_2=RQ_1$ for some forms $P\not\in (Q_1)$ and $R\in S$. Note that $Q_1$ does not divide $Q_2$. Since $S$ is an UFD, and $Q_1,Q_2$ do not have any common factors, we conclude that $Q_1$ divides $P$, which is a contradiction. Therefore $Q_1,Q_2$ is a regular sequence.
\end{remark}

The following proposition tells us that ideals generated by $Q_1,Q_2$ are Cohen-Macaulay, if $Q_1,Q_2$ do not have a common factor. This is a special case of \cite[Proposition 18.13]{Eis95}, which states that complete intersections are Cohen-Macaulay.

\begin{proposition}[Cohen-Macaulayness \cite{Eis95}]\label{prop:cohen-macaulay}
Let $Q_1, Q_2 \in S_2$ such that $Q_1,Q_2$ do not have any common factors, then the ideal $(Q_1, Q_2)$ is Cohen-Macaulay, i.e. the quotient ring $S/(Q_1,Q_2)$ is a Cohen-Macaulay ring.
\end{proposition}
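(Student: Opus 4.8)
The plan is to recognize $S/(Q_1,Q_2)$ as a complete intersection and then invoke the standard fact that complete intersections in a Cohen-Macaulay ring are Cohen-Macaulay. Concretely, I would first apply \cref{remark: regular sequence}: since $Q_1,Q_2$ have no common factor, $Q_1,Q_2$ is a regular sequence in $S$. Since $S=\bK[x_0,\dots,x_n]$ is a polynomial ring over a field, it is Cohen-Macaulay. The ideal $(Q_1,Q_2)$ is therefore generated by a regular sequence inside a Cohen-Macaulay ring, so \cite[Proposition 18.13]{Eis95} applies directly and gives that $S/(Q_1,Q_2)$ is Cohen-Macaulay.

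If one prefers a self-contained argument that does not merely cite the cited proposition, I would instead track depth and dimension by hand. Because $Q_1$ is a nonzerodivisor, $\operatorname{ht}(Q_1)=1$; because $Q_2$ is a nonzerodivisor modulo $(Q_1)$, it lies in no minimal prime of $(Q_1)$, so Krull's height theorem gives $\operatorname{ht}(Q_1,Q_2)\le 2$, while the length-two regular sequence $Q_1,Q_2$ forces $\operatorname{ht}(Q_1,Q_2)\ge 2$; hence $\dim S/(Q_1,Q_2)=\dim S-2$. On the other hand, quotienting by a regular sequence of length $2$ drops depth by exactly $2$, so $\operatorname{depth} S/(Q_1,Q_2)=\operatorname{depth} S-2=\dim S-2$, using that $S$ is Cohen-Macaulay. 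Since $\operatorname{depth}\le\dim$ always holds, this equality is precisely the Cohen-Macaulay condition.

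I do not expect any genuine obstacle here: the statement is a textbook consequence of the complete-intersection property. The only point requiring mild care is the graded-versus-local bookkeeping — interpreting ``Cohen-Macaulay'' for the graded ring $S/(Q_1,Q_2)$ via localization at the relevant maximal ideals (equivalently, at the irrelevant ideal for a finitely generated graded algebra) — but regular sequences and the depth/dimension equality are preserved under these localizations, so this is routine.
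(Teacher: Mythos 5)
Your first paragraph is exactly the paper's proof: invoke \cref{remark: regular sequence} to get that $Q_1,Q_2$ is a regular sequence, note $S$ is Cohen-Macaulay, and cite \cite[Proposition 18.13]{Eis95}. The alternative depth/dimension argument you sketch is a valid self-contained variant, but the primary route matches the paper's.
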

 
\begin{proof}
By Remark \ref{remark: regular sequence}, we know that $Q_1,Q_2$ is a regular sequence. 
Therefore co-dimension of the ideal $(Q_1,Q_2)$ is $2$. 
Since the polynomial ring $S$ is Cohen-Macaulay, we conclude that $S/(Q_1,Q_2)$ is Cohen-Macaulay by \cite[Proposition 18.13]{Eis95}. 
\end{proof}

Cohen-Macaulay ideals are equidimensional and unmixed. 
Hence, if $I$ is a Cohen-Macaulay ideal then every associated prime of $I$ is a minimal prime and the codimension of every minimal prime of $I$ is the same  \cite[Corollaries 18.11, 18.14]{Eis95}. An easy consequence is the following result.

\begin{corollary}[Unmixedness and Equidimensionality]\label{cor:unmixedness}
    If $Q_1, Q_2 \in S_2$ without any common factors, then the ideal $(Q_1, Q_2)$ is unmixed and equidimensional. In particular, let $(Q_1,Q_2)= \mathfrak{q}_1\cap\cdots\cap \mathfrak{q}_n$ be an irredundant primary decomposition. Let $\mathfrak{p}_i=\mathrm{rad}(\mathfrak{q}_i)$. Then
    \begin{enumerate}
        \item every associated prime of $(Q_1,Q_2)$ is a minimal prime, i.e. $\mathfrak{p}_1, \dots, \mathfrak{p}_n$ are precisely the distinct minimal primes over $(Q_1,Q_2)$,
        \item every minimal prime of $(Q_1, Q_2)$ has codimension 2, i.e. $\mathrm{ht}(\mathfrak{p}_i)=2$ for all $i\in[n]$.
    \end{enumerate}  
\end{corollary}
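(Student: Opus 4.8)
The plan is to deduce both statements directly from the Cohen-Macaulayness of $(Q_1,Q_2)$ established in \cref{prop:cohen-macaulay}, together with the standard homological facts about Cohen-Macaulay quotients quoted from \cite[Corollaries 18.11, 18.14]{Eis95}. First I would observe that since $Q_1,Q_2$ have no common factor, \cref{remark: regular sequence} tells us they form a regular sequence in $S$, so the ideal $(Q_1,Q_2)$ is a complete intersection of codimension $2$; hence $\mathrm{ht}(Q_1,Q_2)=2$ and, by \cref{prop:cohen-macaulay}, $S/(Q_1,Q_2)$ is Cohen-Macaulay. The next step is to invoke the general principle that in a Cohen-Macaulay ring all maximal chains of primes have the same length, so that for a Cohen-Macaulay quotient $S/I$ of the (Cohen-Macaulay, catenary) polynomial ring $S$, every associated prime of $I$ has the same height as $I$ itself; this is precisely the unmixedness statement of \cite[Corollary 18.14]{Eis95}.

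Concretely, I would argue as follows. Fix an irredundant primary decomposition $(Q_1,Q_2)=\mathfrak{q}_1\cap\cdots\cap\mathfrak{q}_n$ and set $\mathfrak{p}_i=\mathrm{rad}(\mathfrak{q}_i)$, so that $\{\mathfrak{p}_1,\ldots,\mathfrak{p}_n\}=\mathrm{Ass}(S/(Q_1,Q_2))$ is the set of associated primes, and the minimal primes over $(Q_1,Q_2)$ are exactly the minimal elements of this set. For part (2): by the unmixedness theorem for Cohen-Macaulay ideals, every associated prime $\mathfrak{p}_i$ satisfies $\mathrm{ht}(\mathfrak{p}_i)=\mathrm{ht}(Q_1,Q_2)=2$. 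For part (1): suppose some $\mathfrak{p}_i$ were an embedded prime, i.e.\ strictly contained some other associated prime $\mathfrak{p}_j$; then $\mathrm{ht}(\mathfrak{p}_j)<\mathrm{ht}(\mathfrak{p}_i)=2$, contradicting the fact just proved that all associated primes have height $2$. Hence there are no embedded primes, every associated prime is minimal, and the $\mathfrak{p}_i$ are precisely the distinct minimal primes over $(Q_1,Q_2)$, each of codimension $2$.

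I do not expect a genuine obstacle here: the corollary is a formal consequence of results already assembled in the excerpt (\cref{remark: regular sequence}, \cref{prop:cohen-macaulay}) plus the cited consequences of Cohen-Macaulayness. The only point that needs a word of care is that heights behave well: this uses that $S$ is a catenary ring (indeed every finitely generated algebra over a field is catenary) so that ``codimension $2$'' is unambiguous and the height of an associated prime genuinely controls whether it can properly contain another associated prime. Once that is noted, both bullet points follow immediately, so the write-up should be short.
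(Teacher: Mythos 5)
Your proposal is correct and follows the same route as the paper: invoke \cref{remark: regular sequence} and \cref{prop:cohen-macaulay} to get Cohen-Macaulayness, then cite \cite[Corollaries 18.11, 18.14]{Eis95} for unmixedness and equidimensionality. The paper's own proof is a one-liner doing exactly this; you merely spell out the deduction of part (1) from part (2) via heights, which is a reasonable elaboration but not a different argument.
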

\begin{proof}
    Note that by Proposition \ref{prop:cohen-macaulay}, the ideal $(Q_1,Q_2)$ is Cohen-Macaulay. Hence $(Q_1,Q_2)$ is unmixed and equidimensional.
\end{proof}

\noindent Let $(R,\mathfrak{m})$ be a local ring and $M$ be an $R$-module. 
The Hilbert-Samuel function of $M$ is defined as
\[H_{\mathfrak{m},M}(n)=\mathrm{length}\br{\frac{\mathfrak{m}^nM}{\mathfrak{m}^{n+1}M}}.\]

By \cite[Proposition 12.2, Theorem 12.4]{Eis95}, there exists a polynomial $P_{\mathfrak{m},M}$ of degree $\mathrm{dim}(M)-1$ such that $P_{\mathfrak{m},M}=H_{\mathfrak{m},M}(n)$ for $n \gg 0$. The polynomial $P_{\mathfrak{m},M}$ is called the Hilbert-Samuel polynomial of $M$. Let $a_d$ be the leading coefficient of $P_{\mathfrak{m},M}$, where $d=\mathrm{dim}(M)-1$. The Hilbert-Samuel multiplicity of $M$ is defined as

\[e(\mathfrak{m},M)=(d-1)!a_d.\]
Therefore, the leading coefficient of the Hilbert-Samuel polynomial $P_{\mathfrak{m},M}$ is $\frac{e(\mathfrak{m},M)}{(\mathrm{dim}M-1)!}$.\\

Let $R=S_{\mathfrak{m}}$ be the localization of $S$ at the irrelevant maximal ideal $\mathfrak{m}=(x_0,\cdots,x_n)$. 
Let $I$ be a homogeneous ideal in $S$. 
Then the localization $(S/I)_{\mathfrak{m}}$ is an $R$-module. 
We will denote $e(S/I):=e(\mathfrak{m},(S/I)_{\mathfrak{m}})$. Note that by \cite[Exercise 12.6]{Eis95}, the number $e(S/I)$ is also equal to the degree of the projective variety defined by $I$ in $\mathbb{P}^n$.

Let $I=\mathfrak{q}_1\cap\cdots\cap\mathfrak{q}_m$ be an irredundant primary decomposition of $I$ in $S$. 
Let $\mathfrak{p}_i=\radideal{\mathfrak{q}_i}$ be a minimal prime of $I$ for some $i$. 
Then the localization $(S/I)_{\mathfrak{p}_i}$ is an $S_{\mathfrak{p}_i}$ module of finite length. 
We define the multiplicity of $\mathfrak{p}_i$ in the primary decomposition of $I$ as $$m(\mathfrak{p}_i)=\mathrm{length}((S/I)_{\mathfrak{p}_i}).$$ 
\begin{remark}\label{remark: length minimal prime}
Note that $m(\mathfrak{p}_i)\geq 1$.  
If $m(\mathfrak{p}_i)=1$, then we must have $\mathfrak{q}_i=\mathfrak{p}_i$. 
Indeed, $\mathrm{length}((S/\mathfrak{q}_i)_{\mathfrak{p}_i})\leq \mathrm{length}((S/I)_{\mathfrak{p}_i})$ since we have a surjective homomorphism $(S/I)_{\mathfrak{p}_i}\rightarrow (S/\mathfrak{q}_i)_{\mathfrak{p}_i}$. 
Therefore we have $\mathrm{length}((S/\mathfrak{q}_i)_{\mathfrak{p}_i})=1$. 
If $\mathfrak{q}_i\subsetneq \mathfrak{p}_i$, then we have a strict chain of $S_{\mathfrak{p}_i}$-modules given by $(0)\subsetneq (\mathfrak{p}_i)_{\mathfrak{p}_i}\subsetneq (S/\mathfrak{q}_i)_{\mathfrak{p}_i} $ , which is of length $2$. That is a contradiction. Therefore we must have $\mathfrak{q}_i= \mathfrak{p}_i$. Hence if we have $m(\mathfrak{p}_j)=1$ for all $j$, then $I=\cap_j\mathfrak{p}_j$, and $I$ is a radical ideal.

Conversely if $I$ is a radical ideal then $m(\mathfrak{p}_i)=1$ for all minimal primes.
Indeed for every $i$ the ideal $\kp_{i} \br{R / I}_{\kp_{i}}$ is the nilradical of $\br{R / I}_{\kp_{i}}$ by the minimality of $\kp_{i}$.
Since $R/I$ has no nilpotents, we have $\kp_{i} \br{R / I}_{\kp_{i}} = \ideal{0}_{\kp_i}$, which implies $\br{R / I}_{\kp_{i}}$ is a field, and therefore has length $1$.
\end{remark}

We recall the basic properties of the Hilbert-Samuel multiplicity below.

\begin{proposition}\label{proposition: HS multiplicity}\cite[Exercises 12.7, 12.11]{Eis95}
  Let $I \subset S$ be a homogeneous ideal.
  \begin{enumerate}
      \item Let $J\subset S$ be a homogeneous ideal such that $I\subset J$. Then $e(S/J)\leq e(S/I)$.
      \item If $I=(F)$ for some homogeneous polynomial of degree $d$, then $e(S/I)=d$.
      \item If $I=(F_1,\cdots,F_m)$ where $F_1,\cdots,F_m$ is a regular sequence of homogeneous polynomials and $\mathrm{deg}(F_i)=d_i$. Then $e(S/I)=d_1\cdots d_m$.
      \item If $I=(F_1,\cdots,F_m)$ where $F_1,\cdots,F_m$ is a regular sequence of homogeneous polynomials, and $\mathfrak{p}_1,\cdots,\mathfrak{p}_m$ are the minimal primes of $I$ in $S$. Then
      \[e(S/I)=\sum_im(\mathfrak{p}_i)e(S/\mathfrak{p}_i).\]
  \end{enumerate}
\end{proposition}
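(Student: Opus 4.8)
The plan is to reduce all four assertions to statements about the graded Hilbert polynomial $P_{S/I}(t)=\dim_{\bK}(S/I)_t$ (for $t\gg 0$), using the fact recalled above that $e(S/I)$ is the degree of $V(I)\subset\mathbb{P}^{n}$ (\cite[Exercise 12.6]{Eis95}), i.e. $(\dim S/I-1)!$ times the leading coefficient of $P_{S/I}$; likewise $e(S/\mathfrak p)$ is $(\dim S/\mathfrak p-1)!$ times the leading coefficient of $P_{S/\mathfrak p}$. Three of the four parts then become short manipulations of rational generating functions; the only structural input is the additivity (``associativity'') formula of part (4), which I would derive from a graded prime filtration, and which is where the real content lies.

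For part (2): since a nonzero form $F$ of degree $d$ is a nonzerodivisor, multiplication by $F$ gives the exact sequence $0\to S(-d)\to S\to S/(F)\to 0$, so the Hilbert series is $H_{S/(F)}(z)=\frac{1-z^{d}}{(1-z)^{n+1}}=\frac{1+z+\cdots+z^{d-1}}{(1-z)^{n}}$; the numerator at $z=1$ equals $d$ and the pole at $z=1$ has order $n=\dim S/(F)$, so the normalized leading coefficient of $P_{S/(F)}$ — hence $e(S/(F))$ — is $d$. For part (3): a regular sequence $F_{1},\dots,F_{m}$ with $\deg F_{i}=d_{i}$ has its Koszul complex as a free resolution of $S/I$, so $H_{S/I}(z)=\prod_{i=1}^{m}\frac{1-z^{d_{i}}}{(1-z)^{n+1}}=\frac{Q(z)}{(1-z)^{\,n+1-m}}$ with $Q(z)=\prod_{i=1}^{m}(1+z+\cdots+z^{d_{i}-1})$; since $Q(1)=d_{1}\cdots d_{m}\neq 0$ and $\dim S/I=n+1-m$ (a regular sequence of length $m$ cuts codimension $m$), the same normalization gives $e(S/I)=Q(1)=d_{1}\cdots d_{m}$, with part (2) the special case $m=1$. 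For part (1): $I\subseteq J$ induces a graded surjection $S/I\twoheadrightarrow S/J$, so $\dim_{\bK}(S/J)_{t}\leq\dim_{\bK}(S/I)_{t}$ for every $t$ and hence $P_{S/J}(t)\leq P_{S/I}(t)$ for $t\gg 0$; in the equidimensional case $\dim S/J=\dim S/I$ — the case we use, all ideals occurring in \cref{theorem: radical structure} having codimension $2$ — the two polynomials have the same degree, and comparing leading coefficients yields $e(S/J)\leq e(S/I)$ (equivalently, the exact sequence $0\to J/I\to S/I\to S/J\to 0$ gives $P_{S/I}=P_{J/I}+P_{S/J}$ termwise in each degree, and the leading coefficient of $P_{J/I}$ is nonnegative).

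For part (4): since $F_{1},\dots,F_{m}$ is a regular sequence, $S/I$ is Cohen-Macaulay by \cref{prop:cohen-macaulay}, hence unmixed, so the minimal primes $\mathfrak p_{1},\dots,\mathfrak p_{m}$ of $I$ are exactly its associated primes and all satisfy $\dim S/\mathfrak p_{i}=\dim S/I$. I would fix a graded prime filtration $0=M_{0}\subsetneq M_{1}\subsetneq\cdots\subsetneq M_{N}=S/I$ with $M_{j}/M_{j-1}\cong (S/\mathfrak r_{j})(a_{j})$ for homogeneous primes $\mathfrak r_{j}\supseteq I$. Adding graded Hilbert polynomials along the filtration gives $P_{S/I}(t)=\sum_{j}P_{S/\mathfrak r_{j}}(t+a_{j})$; the summands with $\dim S/\mathfrak r_{j}<\dim S/I$ have strictly smaller degree and do not affect the leading term, while any $\mathfrak r_{j}\supseteq I$ with $\dim S/\mathfrak r_{j}=\dim S/I$ must be one of the $\mathfrak p_{i}$ (it contains a minimal prime of $I$ and equals it by the dimension equality, $S/\mathfrak p_{i}$ being a domain). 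Hence $e(S/I)=\sum_{i}c_{i}\,e(S/\mathfrak p_{i})$ where $c_{i}=\#\{j:\mathfrak r_{j}=\mathfrak p_{i}\}$. To identify $c_{i}$, localize the filtration at $\mathfrak p_{i}$: factors with $\mathfrak r_{j}\not\subseteq\mathfrak p_{i}$ vanish, factors with $\mathfrak r_{j}=\mathfrak p_{i}$ become one-dimensional over the residue field of $(S/I)_{\mathfrak p_{i}}$, and no further factor survives because $\mathfrak p_{i}$ is minimal over $I$; therefore $\mathrm{length}((S/I)_{\mathfrak p_{i}})=c_{i}$, i.e. $c_{i}=m(\mathfrak p_{i})$, and $e(S/I)=\sum_{i}m(\mathfrak p_{i})\,e(S/\mathfrak p_{i})$.

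The step I expect to be the main obstacle is precisely this last identification $c_{i}=m(\mathfrak p_{i})$ together with the bookkeeping that the lower-dimensional factors of the prime filtration contribute nothing to the multiplicity. Once one commits to the graded Hilbert polynomial of $S/I$ rather than the abstract $\mathfrak m$-adic Hilbert-Samuel function of the localization, additivity of multiplicity along short exact sequences is a degreewise dimension count, so everything else reduces to the one-line Hilbert-series computations above.
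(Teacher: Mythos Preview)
The paper does not give a proof of this proposition; it is stated with a citation to \cite[Exercises 12.7, 12.11]{Eis95} and used as a black box. Your proof proposal is correct and is precisely the standard route those exercises point to: Hilbert series via the Koszul resolution for regular sequences, and the associativity formula via a graded prime filtration together with localization to identify the multiplicities $c_i=m(\mathfrak p_i)$. One small caveat worth making explicit: part (1) as stated is only meaningful in the equidimensional case $\dim S/J=\dim S/I$ (otherwise the leading coefficients normalize differently and the inequality can fail), which you already flag and which is the only case the paper invokes when bounding $e(S/\mathfrak q_i)\leq 4$ in \cref{proposition:minimal primes}.
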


\subsection{Structural Results for Ideals generated by two quadratics}

In this section we prove some results on the structure of non-radical ideals and their primary decompositions, which will be used in our proof of the robust Sylvester-Gallai theorem.

\begin{proposition}\label{proposition:minimal primes}
  Let $Q_1,Q_2\in S_2$ be such that they do not have any common factors and $I = (Q_1,Q_2)$. Let $I=\mathfrak{q}_1\cap\cdots\cap\mathfrak{q}_m$ be an irredundant primary decomposition and $\mathfrak{p}_i=\radideal{\mathfrak{q}_i}$. Then 
  \begin{enumerate}
      \item We have $e(S/\mathfrak{q_i})\leq 4$ for all $i\in [m]$.
      \item If $I$ is not a radical ideal, then $e(S/\mathfrak{p}_i)\leq 2$ for all $i\in [m]$.
      \item If $I$ is not a radical ideal, then for all $i\in [m]$, the minimal primes $\mathfrak{p}_i$ are of one of the following types :
      \begin{itemize}
          \item [(a)] (Linear prime) we have $\mathfrak{p}_i=(x,y)$, for some linearly independent forms $x,y\in S_1$ 
      \item[(b)] (Quadratic prime) we have $\mathfrak{p}_i=(x,Q)$, for some $x\in S_1$ and $Q\in S_2$ such that $Q$ is irreducible modulo $x$.
      \end{itemize}
      \item If $I$ is a radical ideal that is not prime, then either $I$ has a minimal prime that is a linear prime, or $I$ has two minimal primes, both of which are quadratic primes.
  \end{enumerate}
\end{proposition}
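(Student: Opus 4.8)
The plan is to control everything through the multiplicity identity in Proposition~\ref{proposition: HS multiplicity}(4). Since $Q_1, Q_2$ have no common factor, Remark~\ref{remark: regular sequence} gives that $Q_1, Q_2$ is a regular sequence, so $e(S/I) = 2 \cdot 2 = 4$ by Proposition~\ref{proposition: HS multiplicity}(3), and Corollary~\ref{cor:unmixedness} tells us every minimal prime $\mathfrak{p}_i$ has height $2$ and is an associated prime. The key identity is then
\[
4 = e(S/I) = \sum_{i=1}^m m(\mathfrak{p}_i)\, e(S/\mathfrak{p}_i),
\]
with each $m(\mathfrak{p}_i) \geq 1$ (Remark~\ref{remark: length minimal prime}) and each $e(S/\mathfrak{p}_i) \geq 1$. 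For part (1), since $I \subset \mathfrak{q}_i$ we get $e(S/\mathfrak{q}_i) \leq e(S/I) = 4$ by Proposition~\ref{proposition: HS multiplicity}(1). For part (2): if $I$ is not radical, then by Remark~\ref{remark: length minimal prime} some $m(\mathfrak{p}_j) \geq 2$; isolating that term in the sum forces $e(S/\mathfrak{p}_j) \leq 2$, and more importantly, since the remaining terms are each $\geq 1$, the total budget of $4$ forces $m(\mathfrak{p}_j)e(S/\mathfrak{p}_j) \leq 3$ if there is more than one prime — I would argue that in every case $e(S/\mathfrak{p}_i) \leq 2$ for \emph{all} $i$, by noting that $e(S/\mathfrak{p}_i) \geq 3$ together with $m(\mathfrak{p}_i) \geq 1$ and the presence of the non-reduced prime $\mathfrak{p}_j$ (contributing at least $2$) would overflow $4$; the one case to check separately is when there is a single minimal prime $\mathfrak{p}_1 = \mathfrak{p}_j$, where $m(\mathfrak{p}_1) \geq 2$ directly gives $e(S/\mathfrak{p}_1) \leq 2$.

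For part (3), I would use the height and degree constraints to enumerate height-$2$ homogeneous primes $\mathfrak{p}$ of small degree. A height-$2$ prime $\mathfrak{p} \subset S$ with $e(S/\mathfrak{p}) \leq 2$ defines an irreducible projective variety of codimension $2$ and degree $1$ or $2$ in $\mathbb{P}^n$. If $e(S/\mathfrak{p}) = 1$, the variety is a linear subspace of codimension $2$, so $\mathfrak{p} = (x,y)$ for linearly independent linear forms — the linear prime case (a). If $e(S/\mathfrak{p}) = 2$, the variety is an irreducible codimension-$2$ quadric; I would argue such a variety spans a $\mathbb{P}^{n-1}$ or lies in a hyperplane, and in fact is cut out by one linear form $x$ and one quadric $Q$ irreducible modulo $x$, giving $\mathfrak{p} = (x, Q)$ — the quadratic prime case (b). The cleanest way to get this: $\mathfrak{p}$ has height $2$, so $S/\mathfrak{p}$ has dimension $n-1$; a degree-$2$ irreducible variety of codimension $2$ is either contained in a hyperplane $\{x = 0\}$ (where it becomes a codimension-$1$ quadric, necessarily $V(Q)$ with $Q$ irreducible in $S/(x)$, so $\mathfrak{p} = (x,Q)$) or is a non-degenerate degree-$2$ irreducible variety, which forces it to be a quadric hypersurface (degree-$2$ non-degenerate irreducible varieties of codimension $\geq 2$ don't exist — the minimal degree of a non-degenerate variety of codimension $c$ is $c+1$), contradiction; hence the hyperplane case always holds.

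For part (4): if $I$ is radical but not prime, then all $m(\mathfrak{p}_i) = 1$ (Remark~\ref{remark: length minimal prime}), so $4 = \sum_i e(S/\mathfrak{p}_i)$ with $m \geq 2$ terms. If some $e(S/\mathfrak{p}_i) = 1$, that minimal prime is linear by the classification of degree-$1$ height-$2$ primes above, giving the first alternative. Otherwise every $e(S/\mathfrak{p}_i) \geq 2$, and the sum $4$ with at least two terms forces exactly two minimal primes, each with $e(S/\mathfrak{p}_i) = 2$; by the classification each such prime is either linear (degree $1$, excluded here) or quadratic, so both are quadratic primes. The main obstacle I anticipate is the geometric classification step in part (3): precisely justifying that an irreducible codimension-$2$ subvariety of degree $2$ must be degenerate (lie in a hyperplane) — this relies on the minimal-degree/variety-of-minimal-degree theory (a non-degenerate irreducible variety of codimension $c$ has degree $\geq c+1$), which I would need to either cite or prove directly in the low-degree case at hand. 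Everything else reduces to bookkeeping with the multiplicity identity.
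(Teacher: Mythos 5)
Your proposal is correct and follows essentially the same strategy as the paper: use the regular-sequence fact to get $e(S/I) = 4$, then do bookkeeping with the additivity formula $\sum_i m(\mathfrak{p}_i)\,e(S/\mathfrak{p}_i) = 4$ together with Remark~\ref{remark: length minimal prime} and Corollary~\ref{cor:unmixedness}. Parts (1), (2), and (4) match the paper's argument almost line for line.

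The one place you diverge is part (3). The paper simply cites the classification of height-$2$ homogeneous primes of multiplicity $\leq 2$ — \cite[Page 112]{EG84} for $e(S/\mathfrak{p})=1$ giving $(x,y)$, and \cite[Proposition 11]{Eng07} for $e(S/\mathfrak{p})=2$ giving $(x,Q)$. You instead sketch a direct geometric proof: a degree-$1$ irreducible codimension-$2$ variety is a linear space, and a degree-$2$ irreducible codimension-$2$ variety must be degenerate (lie in a hyperplane $\{x=0\}$), by the classical lower bound $\deg \geq \mathrm{codim}+1$ for non-degenerate irreducible projective varieties; inside the hyperplane it is then an irreducible quadric hypersurface, yielding $\mathfrak{p} = (x,Q)$ with $Q$ irreducible modulo $x$. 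This is a valid and self-contained route, trading the two external citations for one classical degree-codimension inequality, which (as you note) you would still need to cite or prove via general hyperplane sections. Either way the substance of the proposition is the same; your version just internalizes the classification step that the paper outsources.
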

\begin{proof}
    Since $Q_1,Q_2$ is a regular sequence we know that $e(S/I)=4$, by Proposition \ref{proposition: HS multiplicity}. Since $I$ is a homogeneous ideal, the primary ideals $\mathfrak{q}_i$ and the prime ideals $\mathfrak{p}_i$ are all homogeneous. Therefore by Proposition \ref{proposition: HS multiplicity}, we see that $e(S/\mathfrak{q}_i)\leq 4$ for all $i$.
    By Proposition \ref{proposition: HS multiplicity}, we also have $\sum_im(\mathfrak{p}_i)e(S/\mathfrak{p}_i)=4$. \\
    
    Suppose first that $I$ is not radical.
    We must have $\mathfrak{q}_j \subsetneq \mathfrak{p}_j$ for some $j$.
    Therefore $m(\mathfrak{p}_j)>1$ for some $j$.
    Since $e(S/\mathfrak{p}_i)\geq 1$ for all $i$, we see that $e(S/\mathfrak{p}_i)\leq 2$ for all $i$. \\
    
    Note that $\mathrm{ht}(\mathfrak{p}_i)=2$ for all $i$, by Corollary \ref{cor:unmixedness}. By \cite[Page 112]{EG84}, any height $2$ prime ideal $\mathfrak{p}$ in $S$ with multiplicity $e(S/\mathfrak{p})=1$ is of the form $(x,y)$ where $x,y\in S_1$ are linearly independent. By \cite[Proposition 11]{Eng07}, any height $2$ prime ideal $\mathfrak{p}$ in $S$ with $e(S/\mathfrak{p})=2$ is of the form $(x,Q)$ where $x\in S_1$, $Q\in S_2$ and $Q$ is irreducible modulo $x$.
    
    Suppose next that $I$ is radical but not prime, so every $m(\kp_{i}) = 1$ and the number of primes $m$ is greater than $1$.
    Either there is some $\kp_{i}$ such that $e(S / \kp_{i}) = 1$ or we have $m = 2$ and $e(S / \kp_{1}) = e(S / \kp_{2}) = 2$.
    In the first case, the ideal $I$ has a linear minimal prime, and in the second case the ideal $I$ has two quadratic minimal primes as claimed.
\end{proof}

As an application we obtain the following  structural result for ideals generated by two quadratic forms. 
This result is a generalization of the structural results in \cite{S19, PS20a}.
In our result, in addition to providing a new proof of the results in  \cite{S19, PS20a}, we obtain information about the minimal primes of the ideal $(Q_1,Q_2)$ as well. 

\radicalstructurethm*

\begin{proof}
Suppose $\ideal{Q_{1}, Q_{2}}$ is radical but not prime.
If $Q_{1}$ is reducible, then it must factor as a product of two linearly independent forms.
The ideal then satisfies case 2.(a).
The same occurs if $Q_{2}$ factors, and therefore we can assume that both $Q_{1}, Q_{2}$ are irreducible.
By \cref{proposition:minimal primes}, either $\ideal{Q_{1}, Q_{2}}$ has a linear minimal prime, or we have $\ideal{Q_{1}, Q_{2}} = \ideal{P_{1}, \ell_{1}} \cap \ideal{P_{2}, \ell_{2}}$ with $\ideal{P_{i}, \ell_{i}}$ prime and $P_i \in S_2$.
In the first case, the ideal $\ideal{Q_{1}, Q_{2}}$ satisfies 2.(b).
Suppose we are in the second case.
Then $Q_i = \alpha_i P_1 + \ell_1 a_i$ for $\alpha_i \in \bK^*$ and $a_i \in S_1$, which implies $\alpha_2 Q_1 - \alpha_1 Q_2 = \ell_1 (\alpha_2 a_1 - \alpha_1 a_2)$.
Note that $\alpha_2 a_1 - \alpha_1 a_2 \not\in (\ell_1)$, otherwise $(Q_1, Q_2)$ would not be radical. 
Hence we are in case 2.(a).

Suppose $(Q_1,Q_2)$ is not a radical ideal. If $Q_1,Q_2$ have a common factor, then we let $Q_1=xy$, $Q_2=xz$. If $x,y,z$ are linearly independent, then we have $(Q_1,Q_2)=(x)\cap (y,z)$ which is an intersection of prime ideals, hence radical. Since $(Q_1,Q_2)$ is not radical, we must have that $x,y,z$ linearly dependent. Therefore $z=\alpha x+\beta y$ for some $\alpha,\beta \in k$. If $\alpha=0$, then $(Q_1,Q_2)=(xy)$. Since $(Q_1,Q_2)$ is not radical, we must have $x,y$ linearly dependent and then $x^2\in \mathrm{span}(Q_1,Q_2)$. If $\alpha\neq 0$, then we see that $x^2\in (Q_1,Q_2)$. Therefore we see that 3.(a) occurs in this case. 
    
Suppose that $Q_1,Q_2$ do not have a common factor and the ideal $I=(Q_1,Q_2)$ is not a radical ideal. Let $I=\cap_i \mathfrak{q}_i$ be an irredundant primary decomposition and let $\mathfrak{p}_i=\radideal{\mathfrak{q}_i}$. Then by Remark \ref{remark: length minimal prime}, there exists a minimal prime $\mathfrak{p}$ such that the multiplicity $m(\mathfrak{p})\geq 2$. Recall that by Proposition \ref{proposition:minimal primes}, all the minimal primes of $I$ are linear or quadratic. 

Suppose there exists a minimal prime $\mathfrak{p}$ of $(Q_1,Q_2)$, such that $\mathfrak{p}=(x,Q)$ for some $x\in S_1$, $Q\in S_2$ where $Q$ is irreducible modulo $x$ and $m(\mathfrak{p})\geq 2$. By Proposition \ref{proposition: HS multiplicity}, we have $$\sum_im(\mathfrak{p}_i)e(S/\mathfrak{p}_i)=4.$$ Note that $x,Q$ is a regular sequence and hence $e(S/\mathfrak{p})=2$. Therefore $m(\mathfrak{p})\leq 2$. Hence $m(\mathfrak{p})=2$, and we must have that $\mathfrak{p}$ is the only minimal prime and $I$ is $\mathfrak{p}$-primary. Consider the chain of $S_{\mathfrak{p}}$-submodules given by $(0)\subset (x^2)\subsetneq (x)\subsetneq (S/I)_{\mathfrak{p}}$. Since $m(\mathfrak{p})=\mathrm{length}(S/I)_{\mathfrak{p}})=2$, we must have $x^2=0$ in $(S/I)_{\mathfrak{p}}$. Therefore by \cite[Corollary 10.21]{AM69}, we conclude that $x^2\in I$. Since $Q_1,Q_2$ are of degree $2$, we must have that $x^2\in \mathrm{span}(Q_1,Q_2)$. 
In this case, 3.(b) occurs.
  
Therefore we may assume that all quadratic minimal primes have multiplicity $1$. Hence $\mathfrak{q}_i=\mathfrak{p}_i$ for all quadratic primes $\mathfrak{p}_i$. If we have that $\mathfrak{q}_j=\mathfrak{p}_j$ for all linear primes $\mathfrak{p}_j$, then the ideal $I$ would be a radical ideal. Therefore there exists a linear prime $\mathfrak{p}_j$ such that the $\mathfrak{p}_j$-primary ideal $\mathfrak{q}_j$ is properly contained in $\mathfrak{p}_j$. Then $(0)\subsetneq \mathfrak{p}_j\subsetneq (S/\mathfrak{q}_j)_{\mathfrak{p}_j}$ is a strict chain of $S_{\mathfrak{p}_j}$-submodules. Therefore we have $\mathrm{length}((S/\mathfrak{q}_j)_{\mathfrak{p}_j})\geq 2$ and hence, $e(S/\mathfrak{q}_j)=\mathrm{length}((S/\mathfrak{q}_j)_{\mathfrak{p}_j})\cdot e(S/\mathfrak{p}_j)\geq 2$. Thus we see that in case 3, if (a) or (b) do not occur then case (c) must occur.
\end{proof}

As immediate corollaries of \cref{theorem: radical structure}, we obtain the structural results proved in \cite{S19,PS20a}. Note that, unlike in the results of \cite{S19,PS20a}, in our \cref{theorem: radical structure} we do not use the existence of a third quadratic or a product of quadratics in the radical of $(Q_1,Q_2)$.

\begin{corollary}[Theorem~29 in \cite{S19}]
  \label{cor:quadraticstructure}
  Let $Q_{1}, Q_{2}, Q_{3}$ be quadratic forms such that $Q_{3} \in \radideal{Q_{1}, Q_{2}}$.
  One of the following holds:
  \begin{enumerate}
    \item The polynomial $Q_{3}$ is in the linear span of $Q_{1}$ and $Q_{2}$.
    \item There exists a linear form $\ell$ such that $\ell^{2}$ is in the linear span of $Q_{1}$ and $Q_{2}$.
    \item There exist linear forms $x$ and $y$ such that $Q_{1}, Q_{2}, Q_{3}$ all belong to the ideal $\ideal{x, y}$.
  \end{enumerate}
\end{corollary}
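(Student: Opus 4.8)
The plan is to deduce the corollary directly from the Radical Structure Theorem (\cref{theorem: radical structure}) applied to the pair $Q_1, Q_2$, organizing the argument according to whether the ideal $(Q_1,Q_2)$ is radical. Note that \cref{theorem: radical structure} applies to any two forms of degree $2$, with no further hypotheses, so it is available here.

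First I would dispose of the case where $(Q_1,Q_2)$ is radical, i.e.\ cases 1 and 2 of \cref{theorem: radical structure}. In this situation $Q_3 \in \radideal{Q_1,Q_2} = (Q_1,Q_2)$, so $Q_3 = A Q_1 + B Q_2$ for some $A, B \in S$. Since $Q_1, Q_2, Q_3$ are all forms of degree $2$ and $(Q_1,Q_2)$ is a homogeneous ideal, comparing degree-$2$ homogeneous components lets us take $A, B \in \bC$; hence $Q_3 \in \Cspan{Q_1, Q_2}$, which is alternative (1).

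Next I would treat the case where $(Q_1,Q_2)$ is not radical, so we are in case 3 of \cref{theorem: radical structure}. In sub-cases 3.(a) and 3.(b) the theorem directly provides a linear form $x$ with $x^2 \in \Cspan{Q_1,Q_2}$, which is alternative (2) with $\ell = x$. In sub-case 3.(c) the theorem provides a minimal prime $\mathfrak{p} = (x,y)$ of $(Q_1,Q_2)$ with $x,y \in S_1$ linearly independent; since a minimal prime over an ideal contains that ideal we get $Q_1, Q_2 \in (x,y)$, and since $(x,y)$ is prime and contains $(Q_1,Q_2)$ we also get $Q_3 \in \radideal{Q_1,Q_2} \subseteq (x,y)$. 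Thus $Q_1, Q_2, Q_3 \in (x,y)$, which is alternative (3).

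I expect this to be essentially routine once \cref{theorem: radical structure} is granted: there is no substantial obstacle beyond the structure theorem itself. The only points that need a little care are the homogeneity/degree bookkeeping that upgrades ``$Q_3 \in (Q_1,Q_2)$'' to ``$Q_3 \in \Cspan{Q_1,Q_2}$'' in the radical case, and verifying that every sub-case of the structure theorem lands in one of the three stated alternatives — in particular that a common factor of $Q_1, Q_2$ either forces the ideal to be radical (handled in the first paragraph) or puts us in sub-case 3.(a).
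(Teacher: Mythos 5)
Your proposal is correct and follows the same route as the paper: invoke \cref{theorem: radical structure} on the pair $Q_1,Q_2$, observe that in cases 1 and 2 the degree-$2$ homogeneity upgrades $Q_3\in(Q_1,Q_2)$ to $Q_3\in\Cspan{Q_1,Q_2}$, and in cases 3.(a), 3.(b), 3.(c) read off alternatives (2), (2), and (3) respectively. The paper's proof is essentially identical, including the same degree-bookkeeping step.
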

\begin{proof}
    If $Q_1,Q_2$ satisfy condition $1$ or $2$ of Theorem \ref{theorem: radical structure}, then we see that $Q_3\in \radideal{Q_1,Q_2}=(Q_1,Q_2)$. Since $Q_1,Q_2,Q_3$ are all degree $2$ forms, we must have $Q_3\in \mathrm{span}(Q_1,Q_2)$. If $Q_1,Q_2$ satisfy conditions $3.(a)$ or $3.(b)$ of Theorem \ref{theorem: radical structure}, then we see that case $2$ above holds. Otherwise, condition $2.(c)$ must hold, and there is a prime ideal $(x,y)$ such that $(Q_1,Q_2)\subset (x,y)$. Then we also have $Q_3\in \radideal{Q_1,Q_2}\subset (x,y)$.
\end{proof}

\begin{corollary}[Theorem 3.1 in \cite{PS20a}]
  Let $Q_{1}, Q_{2}, P_1,\cdots P_k$ be quadratic forms such that $\Pi_iP_i\in \radideal{Q_1,Q_2}$. Then one of the following cases holds:
  \begin{enumerate}
     \item There exists $i\in [k]$ such that $P_i\in \mathrm{span}(Q_1,Q_2)$. \item There exists a non-trivial linear combination of the form $\alpha Q_1+\beta Q_2=xy$ where $x,y$ are linearly independent linear forms. 
  \item There exist two linear forms $x,y$ such that $Q_1,Q_2, \Pi_iP_i\in (x,y)$.
  \end{enumerate}
\end{corollary}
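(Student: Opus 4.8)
The plan is to reduce everything to the Radical Structure Theorem, \cref{theorem: radical structure}, applied to the ideal $(Q_1,Q_2)$, and then to trace the hypothesis $\prod_i P_i\in\radideal{Q_1,Q_2}$ through each of its cases. The only arithmetic input needed is the elementary observation that, since $(Q_1,Q_2)$ is a homogeneous ideal generated in degree $2$, its degree-$2$ graded piece is exactly $\mathrm{span}(Q_1,Q_2)$; hence any degree-$2$ form lying in $(Q_1,Q_2)$ already lies in $\mathrm{span}(Q_1,Q_2)$.

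First I would treat the case where $(Q_1,Q_2)$ is radical, i.e.\ case~1 or case~2 of \cref{theorem: radical structure}. Then $\radideal{Q_1,Q_2}=(Q_1,Q_2)$, so $\prod_i P_i\in(Q_1,Q_2)$. If the ideal is prime, some factor $P_i$ lies in it, hence $P_i\in\mathrm{span}(Q_1,Q_2)$, which is the first conclusion. If the ideal is radical but not prime, \cref{theorem: radical structure} supplies either a reducible quadratic $xy\in\mathrm{span}(Q_1,Q_2)$ with $x,y$ independent (the second conclusion), or a minimal prime $(x,y)$ with $x,y$ independent linear forms; in that case $(Q_1,Q_2)\subseteq(x,y)$, so $Q_1,Q_2\in(x,y)$, and also $\prod_i P_i\in\radideal{Q_1,Q_2}\subseteq(x,y)$, which is the third conclusion.

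Next I would treat the non-radical case, i.e.\ case~3 of \cref{theorem: radical structure}. In cases~3(a) and 3(b) the span of $Q_1,Q_2$ contains the square $x^2$ of a linear form; this is a reducible quadratic in $\mathrm{span}(Q_1,Q_2)$ and falls under the second conclusion, read as in \cref{cor:quadraticstructure} whose case~2 is exactly ``$\ell^2\in\mathrm{span}(Q_1,Q_2)$''. Alternatively, in case~3(a) one has $Q_1,Q_2\in(x)$, whence $\radideal{Q_1,Q_2}\subseteq(x)$ and thus $Q_1,Q_2,\prod_i P_i\in(x,y)$ for any auxiliary linear form $y$, giving the third conclusion. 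In case~3(c) there is a linear minimal prime $(x,y)$, and exactly as before $Q_1,Q_2\in(x,y)$ and $\prod_i P_i\in\radideal{Q_1,Q_2}\subseteq(x,y)$.

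The step I expect to be the main obstacle is matching case~3(b) of \cref{theorem: radical structure} with the corollary. There $\radideal{Q_1,Q_2}$ is a single quadratic prime $(x,Q)$ with $Q$ irreducible modulo $x$: although $x^2\in\mathrm{span}(Q_1,Q_2)$, this span contains no product of two \emph{independent} linear forms and is contained in no ideal generated by two linear forms, so the second conclusion must be read so as to admit the square $x^2$. Concretely, one uses primeness of $(x,Q)$ to place a factor $P_i=x\ell+cQ$ inside it, writes $(Q_1,Q_2)=(x^2,G)$ with $\bar G$ a scalar multiple of $\bar Q$ modulo $x$, and then checks that, $P_i$ being irreducible, the only possibilities are $P_i\in\mathrm{span}(Q_1,Q_2)$ or $x^2\in\mathrm{span}(Q_1,Q_2)$; this bookkeeping is the technical crux. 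Everything else is a direct transcription of \cref{theorem: radical structure}, and \cref{cor:quadraticstructure} is recovered as the special case $k=1$, where ``some factor lies in the ideal'' is automatic.
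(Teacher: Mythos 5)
Your proposal takes essentially the same route as the paper: apply \cref{theorem: radical structure} to $(Q_1,Q_2)$ and match each of its cases to one of the three conclusions. The radical cases, case~3(a), and case~3(c) are handled exactly as the paper does.

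Your discussion of case~3(b) is where you diverge, and in fact you put your finger on something the paper's proof glosses over. You correctly observe that in case~3(b) the span of $Q_1,Q_2$ contains $x^2$ but may contain no product of two \emph{independent} linear forms, and is contained in no ideal generated by two linear forms; so conclusion~2 as written (with the independence requirement) is not literally delivered. You are right that it ``must be read so as to admit the square $x^2$,'' which is how both the paper's one-line proof and the original source intend it. Where you go astray is the subsequent paragraph of ``bookkeeping'': the disjunction you propose to prove, ``$P_i\in\mathrm{span}(Q_1,Q_2)$ or $x^2\in\mathrm{span}(Q_1,Q_2)$,'' is vacuous, because case~3(b) of \cref{theorem: radical structure} \emph{already asserts} $x^2\in\mathrm{span}(Q_1,Q_2)$ — there is nothing left to check. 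Moreover you invoke irreducibility of $P_i$, which is not part of the corollary's hypothesis as stated here. So that paragraph is not a gap so much as redundant machinery aimed at a disjunct that is true by fiat; once you drop it, your proof coincides with the paper's.
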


\begin{proof}
    If $Q_1,Q_2$ satisfy condition $1$ of Theorem \ref{theorem: radical structure}, then $(Q_1,Q_2)$ is a prime ideal. Hence we have $P_i\in (Q_1,Q_2)$ for some $i$. Since $P_i$ is of degree $2$, we see that in this case we must have $P_i\in \mathrm{span}(Q_1,Q_2)$. If $Q_1,Q_2$ satisfy one of the conditions $2.(a)$, $3.(a)$ or $3.(b)$, then we see that case $2$ above holds. Otherwise, $Q_1,Q_2$ must satisfy one of conditions $2.(b)$ or $3.(c)$, then we see that case $3$ above occurs.
\end{proof}

We now prove some structural results for quadratics with linear minimal primes.

\begin{proposition}[Structure of $(x,y)$-primary ideals of $(Q_1, Q_2)$]\label{proposition:xy-primary}
  Let $Q_1,Q_2\in S_2$ be two irreducible forms such that $\lin{Q_1} \neq \lin{Q_2}$. Suppose $(Q_1, Q_2)$ is not a radical ideal and $\mathrm{span}(Q_1,Q_2)$ does not contain the square of any linear form. Then the following holds.
  
  \begin{enumerate}
      \item There exists a linear minimal prime $(x,y)$ of $(Q_1,Q_2)$ where $x, y \in S_1$ are linearly independent forms, and $(x, y^2)$ is the $(x,y)$-primary component of $(Q_1, Q_2)$.
      \item There exist $a, b \in S_1$ such that $\dim(\Cspan{x, y, a, b}) = 4$ and
  $$ Q_1 = ax + y^2, \ Q_2 = bx + y^2. $$
  \end{enumerate} 
\end{proposition}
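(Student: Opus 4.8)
The hypotheses are that $Q_1, Q_2 \in S_2$ are irreducible with $\lin{Q_1} \neq \lin{Q_2}$, that $(Q_1,Q_2)$ is not radical, and that $\operatorname{span}(Q_1,Q_2)$ contains no square of a linear form. First I would invoke \cref{theorem: radical structure}: since $(Q_1,Q_2)$ is not radical we are in case 3, and since $Q_1, Q_2$ are irreducible they have no common factor, ruling out case 3.(a). Case 3.(b) gives $x^2 \in \operatorname{span}(Q_1,Q_2)$, which is excluded by hypothesis. Hence we must be in case 3.(c): there is a linear minimal prime $\mathfrak{p} = (x,y)$ of $(Q_1,Q_2)$ whose primary component $\mathfrak{q}$ has multiplicity $e(S/\mathfrak{q}) \geq 2$. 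I would then show this is in fact the \emph{only} minimal prime and $e(S/\mathfrak{q}) = 2$, using $\sum_i m(\mathfrak{p}_i) e(S/\mathfrak{p}_i) = 4$ from \cref{proposition: HS multiplicity}: since $e(S/\mathfrak{p}) = e(S/(x,y)) = 1$ and $m(\mathfrak{p}) = e(S/\mathfrak{q})/e(S/\mathfrak{p}) = e(S/\mathfrak{q}) \geq 2$, and if $m(\mathfrak{p}) \geq 3$ there would be no room for another minimal prime while $(Q_1,Q_2)$ being non-radical needs $m(\mathfrak{p}) \geq 2$ \ldots actually $m(\mathfrak{p}) \in \{2,3,4\}$ a priori, but I expect to pin it to $2$ with a further argument (or allow $m(\mathfrak{p})=4$ with $\mathfrak{p}$ the unique prime and rule it out separately). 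The upshot of this part is that $(Q_1,Q_2)$ is $(x,y)$-primary with multiplicity $2$.

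**Identifying the primary component.** Next I would determine that the $(x,y)$-primary component is exactly $(x, y^2)$. The containment $(x,y^2) \subseteq \mathfrak{q}$ should follow because $S/(x,y^2)$ localized at $(x,y)$ has length $2$, matching $m(\mathfrak{p})$, so any $(x,y)$-primary ideal containing $(Q_1,Q_2)$ and contained in $(x,y)$ with the right multiplicity is squeezed to equal $(x,y^2)$ — I would argue via the chain $(0) \subsetneq (y)_{\mathfrak{p}} \subsetneq (S/\mathfrak{q})_{\mathfrak{p}}$ of $S_\mathfrak{p}$-modules forcing $y^2 \equiv 0$ in $(S/\mathfrak{q})_\mathfrak{p}$, hence (by \cite[Corollary 10.21]{AM69}, as in the proof of \cref{theorem: radical structure}) $y^2 \in \mathfrak{q}$, so $(x,y^2) \subseteq \mathfrak{q} \subseteq (x,y)$, and a length count gives equality. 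This proves part (1).

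**Recovering the normal form for $Q_1, Q_2$.** For part (2): since $(Q_1,Q_2) \subseteq (x,y^2)$, write $Q_i = c_i x + d_i y^2$ for scalars $d_i$ and linear forms $c_i$ — wait, more carefully, $Q_i \in (x, y^2)$ means $Q_i = a_i x + \lambda_i y^2$ with $a_i \in S_1$, $\lambda_i \in \bC$ (degrees force the $y^2$-coefficient to be a scalar). If some $\lambda_i = 0$ then $Q_i = a_i x$ is reducible, contradicting irreducibility; so rescaling we may take $\lambda_1 = \lambda_2 = 1$, giving $Q_1 = ax + y^2$, $Q_2 = bx + y^2$ with $a = a_1$, $b = a_2$. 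It remains to check $\dim \Cspan{x,y,a,b} = 4$. Independence of $x,y$ is given. If $a \in \Cspan{x,y}$, say $a = \alpha x + \beta y$, then $Q_1 = \alpha x^2 + \beta xy + y^2$ lies in $\Cspan{x,y}^2$; I would argue $Q_1$ would then be reducible (a binary quadratic over $\bC$ always factors) unless \ldots it always factors, contradicting irreducibility — so $a \notin \Cspan{x,y}$, similarly $b \notin \Cspan{x,y}$. Finally if $b \in \Cspan{x,y,a}$, say $b = \mu a + (\text{something in } \Cspan{x,y})$, then $Q_2 - \mu Q_1 = (b - \mu a)x + (1-\mu)y^2 \in \Cspan{x,y}^2$; if $\mu \neq 1$ this is a nonzero binary quadratic hence a square or product of two forms, and combined with $\lin{Q_1} \neq \lin{Q_2}$ plus the no-square hypothesis I would derive a contradiction (the case $\mu = 1$ gives $Q_2 - Q_1 = (b-a)x$, and $b - a \in \Cspan{x,y}$ would force $b \in \Cspan{x,y,a}$ only trivially; here $b-a \in \Cspan{x,y,a}$, handle directly). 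The hypothesis $\lin{Q_1} \neq \lin{Q_2}$ is what I expect to need precisely here to prevent $a$ and $b$ from being ``the same direction modulo $\Cspan{x,y}$.''

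**Main obstacle.** The delicate point is the bookkeeping with Hilbert–Samuel multiplicities to force the configuration to be $(x,y)$-primary with multiplicity exactly $2$ and primary component exactly $(x,y^2)$ — ruling out multiplicity $3$ or $4$, and ruling out a second minimal prime, cleanly. The rest is elementary linear algebra over $\bC$ exploiting irreducibility, $\lin{Q_1}\neq\lin{Q_2}$, and the no-square hypothesis; I anticipate the $\dim = 4$ verification in part (2) requires a careful but routine enumeration of the ways $a, b$ could fail to be independent from $\Cspan{x,y}$.
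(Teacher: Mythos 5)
Your high-level structure (invoke the radical structure theorem to land in case 3.(c), then constrain the $(x,y)$-primary component and extract a normal form) matches the paper, but the heart of the proof — which you flag as "the main obstacle" — is exactly the part you have not done, and the sketch you give for it contains an error. The paper closes this gap by citing the classification of $(x,y)$-primary ideals of multiplicity at most $4$ from \cite{MM18} and checking, case by case (this is carried out in the appendix), that the only such ideal that can contain two irreducible quadratics $Q_1, Q_2$ with $\lin{Q_1} \neq \lin{Q_2}$ is $(x, y^2)$. Your proposed substitute — multiplicity bookkeeping via $\sum_i m(\mathfrak{p}_i) e(S/\mathfrak{p}_i) = 4$ — does not pin down $m(\mathfrak{p}) = 2$: the equation equally allows $m(\mathfrak{p}) = 3$ together with a second linear minimal prime, or $m(\mathfrak{p}) = 4$ with $\mathfrak{p}$ the unique minimal prime, and ruling these out requires examining the corresponding primary ideals and showing they cannot contain two irreducible quadratics with distinct $\lin{}$ (e.g.\ for $\mathfrak{q} = (x,y)^2$ or $\mathfrak{q} = (x, y^3)$ every quadratic in $\mathfrak{q}$ is reducible). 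That is precisely the case analysis you omit.

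Separately, your stated "upshot" that ``$(Q_1,Q_2)$ is $(x,y)$-primary with multiplicity $2$'' is false and contradicts $e(S/(Q_1,Q_2)) = 4$. In the case the proposition actually describes, $(Q_1,Q_2)$ has \emph{two} minimal primes — $(x,y)$ and, as \cref{proposition:radical-linear} computes from the normal form, the quadratic prime $(Q_1, a-b)$ — so the ideal is not primary. What is $(x,y)$-primary with length $2$ is the primary \emph{component} $\mathfrak{q}$, not $(Q_1, Q_2)$ itself. Once you distinguish these and supply the missing classification step, the rest of your plan (the chain argument showing $\mathfrak{q} = (x, y^2)$ after a change of coordinates, and the linear-algebra derivation of the normal form using irreducibility, $\lin{Q_1} \neq \lin{Q_2}$, and the no-square hypothesis) is sound and matches the paper's intent.
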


\begin{proof}
By our assumption the cases 1, 2.(a),2.(b) of Theorem \ref{theorem: radical structure} cannot occur. 
Therefore case 2.(c) must occur and hence there exists a minimal prime $\mathfrak{p}$ of $(Q_1,Q_2)$, such that $\mathfrak{p}=(x,y)$ for some linearly independent forms $x,y\in S_1$, and the $(x,y)$-primary ideal $\mathfrak{q}$ has multiplicity $e(S/\mathfrak{q})\geq 2$.  
Note that by Proposition \ref{proposition: HS multiplicity}, we have $e(S/\mathfrak{q})\leq 4$.  
By checking the classification of $(x, y)$-primary ideals of multiplicity $\leq 4$ from \cite{MM18}, we see that the only case in which $\mathfrak{q}$ can contain two irreducible quadratics $Q_1, Q_2$ with $\lin{Q_1} \neq \lin{Q_2}$ is case 1 of \cite[Proposition 1.1]{MM18}. 
Thus, after a linear change of variables we must have $\mathfrak{q} = (x, y^2)$ and since $\lin{Q_1} \neq \lin{Q_2}$ we must have (up to scalar multiplication) $Q_1 = ax + y^2$ and $Q_2 = bx + y^2$ where $a,b,x,y$ are linearly independent. This concludes this proof.
\end{proof}

\begin{proposition}\label{proposition:radical-linear}
  If $a, b, x, y \in S_1$ are linearly independent forms and $Q_1 = ax + y^2$, $Q_2 = bx + y^2$, then
  $$ \radideal{Q_1, Q_2} = (Q_1, Q_2, y(a - b)) $$
\end{proposition}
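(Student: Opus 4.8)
The plan is to show the ideal equality $\radideal{Q_1,Q_2} = (Q_1,Q_2,y(a-b))$ by a double inclusion, using the primary decomposition of $(Q_1,Q_2)$ that we have already partly identified in \cref{proposition:xy-primary}. First I would observe that $Q_1 - Q_2 = (a-b)x$, so that $x(a-b) \in (Q_1,Q_2)$, and that modulo $x$ both $Q_1$ and $Q_2$ reduce to $y^2$; this already shows $(Q_1,Q_2) \subseteq (x,y^2)$, so $(x,y)$ is a minimal prime and, by \cref{proposition:xy-primary}, the $(x,y)$-primary component is exactly $(x,y^2)$. The element $y(a-b)$ lies in $\radideal{Q_1,Q_2}$ because $\br{y(a-b)}^2 = (a-b)^2 y^2 = (a-b)^2(Q_1 - ax) = (a-b)^2 Q_1 - (a-b)\cdot x(a-b)\cdot a \in (Q_1, x(a-b)) \subseteq (Q_1,Q_2)$, so the right-hand side is contained in the left.

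For the reverse inclusion I would argue that $J := (Q_1,Q_2,y(a-b))$ is already radical, so that $\radideal{Q_1,Q_2} \subseteq \radideal{J} = J$; combined with the obvious inclusion $J \subseteq \radideal{Q_1,Q_2}$ (all three generators are in the radical, as just checked) this gives equality. To see $J$ is radical, the cleanest route is to exhibit $J$ as an intersection of primes. Since $x(a-b)$ and $y(a-b)$ both lie in $J$, after passing to the quotient one expects $J = (a-b, Q) \cap (x, y)$ for an appropriate quadratic $Q$: indeed modulo $(a-b)$ we have $Q_1 = Q_2 = ax + y^2$, which is irreducible in $S/(a-b)$ (as $a,x,y$ are linearly independent there — here I use $\dim\Cspan{x,y,a,b}=4$), so $(a-b, ax+y^2)$ is prime; and modulo $(x,y)$ both quadratics vanish and $y(a-b)$ vanishes, so $(x,y) \supseteq J$ and it is prime. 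The heart of the matter is then the purely computational identity $J = (a-b,\,ax+y^2)\cap(x,y)$, which I would verify by checking that a general element of the right-hand intersection can be written using $Q_1$, $Q_2$ and $y(a-b)$; the containment $J \subseteq (a-b,ax+y^2)\cap(x,y)$ is immediate since each generator of $J$ lies in both primes.

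The main obstacle I anticipate is the containment $(a-b,\,ax+y^2)\cap(x,y) \subseteq J$: an element of this intersection has the form $f\cdot(a-b) + g\cdot(ax+y^2) = h\cdot x + k\cdot y$, and one must manipulate this relation — reducing modulo $(a-b)$ and modulo $(x,y)$ and using that $S$ is a UFD together with the linear independence of $a-b, x, y$ — to peel off multiples of $Q_1 = ax+y^2$, $x(a-b) = Q_1-Q_2$, and $y(a-b)$. An alternative, perhaps slicker, approach avoiding this computation is a multiplicity count: $(Q_1,Q_2)$ has $e(S/(Q_1,Q_2)) = 4$ with minimal primes $(x,y)$ (linear, $e=1$) and $(a-b, ax+y^2)$ (quadratic, $e=2$), forcing the $(x,y)$-primary component to have multiplicity $2$ (hence equal to $(x,y^2)$ by \cref{proposition:xy-primary}) and the quadratic prime to appear with multiplicity $1$; therefore $(Q_1,Q_2) = (x,y^2) \cap (a-b,ax+y^2)$ and $\radideal{Q_1,Q_2} = (x,y)\cap(a-b,ax+y^2)$. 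One then checks directly that this last intersection equals $(Q_1,Q_2,y(a-b))$ — which is still a short computation but now with the decomposition handed to us rather than reverse-engineered.
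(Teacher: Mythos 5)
Your primary route is correct and is a genuinely slightly different organization of the argument from the paper's. The paper first pins down the minimal primes of $(Q_1,Q_2)$ exactly — it observes that any minimal prime contains $aQ_2-bQ_1=y^2(a-b)$, hence contains $y$ or $a-b$, and a short case analysis shows the only minimal primes are $(x,y)$ and $(Q_1,a-b)$ — and then computes $\radideal{Q_1,Q_2}=(x,y)\cap(Q_1,a-b)$. You instead verify $y(a-b)\in\radideal{Q_1,Q_2}$ by squaring, and then close the gap by showing the candidate ideal $J=(Q_1,Q_2,y(a-b))$ is already radical, expressed as $(a-b,Q_1)\cap(x,y)$; this has the nice feature that you never need to argue that these are \emph{all} the minimal primes of $(Q_1,Q_2)$. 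The ``heart of the matter'' computation you flag is essentially the paper's: write $f=g(a-b)+hQ_1\in(x,y)$, note $g(a-b)\in(x,y)$ since $Q_1\in(x,y)$, use that $a-b$ is a nonzerodivisor mod $(x,y)$ (linear independence of $a,b,x,y$) to get $g\in(x,y)$, and then $f\in(Q_1,x(a-b),y(a-b))=(Q_1,Q_2,y(a-b))$ since $x(a-b)=Q_1-Q_2$. One caution about your ``alternative'' multiplicity count: as stated it assumes the two listed primes are the only minimal primes. The equation $m_1\cdot 1+m_2\cdot 2+\cdots=4$ with $m_1,m_2\ge 1$ does not by itself exclude $m_1=m_2=1$ together with a third linear prime of multiplicity $1$; to make that route airtight you should first invoke \cref{proposition:xy-primary} to get $m((x,y))=2$ (since the $(x,y)$-primary component is $(x,y^2)$, which has $e=2$), \emph{and then} the multiplicity count forces $m(\kp')=1$ and no further primes — i.e., the causality runs the other way from how you phrased it. The primary route has no such gap and is the one to write up.
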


\begin{proof}
We claim that the minimal primes of the ideal $\ideal{Q_{1}, Q_{2}}$ are $\ideal{x, y}$ and $\ideal{Q_{1}, a - b}$.
We first show that these ideals are prime, and then show that they are the minimal primes.
    
That $\ideal{x, y}$ is prime follows from the fact that it is generated by two linear forms.
To show that $\ideal{Q_{1}, a-b}$ is prime, it suffices to show that $Q_{1}$ is irreducible in the ring $S / (a - b)$, since the latter is isomorphic to a polynomial ring.

If $Q_1$ factors in $S/(a-b)$, then $ax + y^{2} = c d$ in $S / (a - b)$, for linear forms $c, d$.
Quotienting further by $c$, we get that $a, x \in (y)$ in $S / (a-b, c)$.
This implies $\Kspan{a, b, x, y} \subset \Kspan{a, x, y, a-b, c} = \Kspan{y,a- b, c}$, contradicting $\dim \Kspan{a,b,x,y} = 4$.
This shows that $\ideal{Q_{1}, a-b}$ is prime.
    
Suppose $P$ is a minimal prime containing $\ideal{Q_{1}, Q_{2}}$.
The ideal $P$ contains $aQ_{2} - b Q_{1} = y^2(a - b)$.
Since $P$ is prime, it must contain either $y$ or $(a - b)$.
If $P$ contains $a - b$, then $P$ clearly contains $\ideal{Q_{1}, a - b}$ and by minimality we must have $P = (Q_1, a-b)$.
If $P$ contains $y$, then $P$ must also contain $ax = Q_{1} - y^{2}$ and $bx = Q_{2} - y^{2}$.
If $P$ contains $x$, then $P$ contains $\ideal{x, y}$ and by minimality we have $P = \ideal{x, y}$.
If $P$ does not contain $x$, then $P$ contains both $a, b$ which contradicts $\height P = 2$.
This shows that the minimal primes of $\ideal{Q_{1}, Q_{2}}$ are $\ideal{x, y}$ and $\ideal{Q_{1}, a - b}$.
    
The radical of any ideal is the intersection of its minimal primes, and therefore $\radideal{Q_{1}, Q_{2}} = \ideal{x, y} \cap \ideal{Q_{1}, a - b}$.
Suppose $f$ is an arbitrary polynomial in the intersection.
We have $f = g_{1} x + g_{2} y = g_{3} Q_{1} + g_{4} (a - b)$ for polynomials $g_{1}, \dots, g_{4}$.
Since $Q_{1}$ is in the ideal $\ideal{x, y}$, so is $g_{4} (a - b)$.
Since $a, b, x, y$ are linearly independent, $g_{4}$ is in the ideal $\ideal{x, y}$.
This shows that $f$ is in the ideal $\ideal{Q_{1}, x (a - b), y (a - b)}$.
Conversely, every element of this ideal is in the intersection of $\ideal{x, y}$ and $\ideal{Q_{1}, a - b}$.
This shows that $\radideal{Q_{1}, Q_{2}} = \ideal{Q_{1}, x(a - b), y (a - b)} = \ideal{Q_{1}, Q_{2}, y (a - b)}$.
\end{proof}

\section{\sprimes}\label{sec:clean-ideals}

In this section, we define clean vector spaces and prove some useful properties about them, which will be used in our robust Sylvester-Gallai proof.

\subsection{Closeness and robust vector spaces}

For any quadratic form $P\in S$ we would like to measure how far $P$ is from a subalgebra $\bC[V]$ generated by a vector space of forms in $S$. The next definition quantifies the notion of distance from a subalgebra generated by a vector space.

\begin{definition}[Forms close to a vector space]
Given a vector space $V = V_1 + V_2$ where $V_{i} \subseteq S_{i}$, we say that a quadratic form $P$ is \emph{$s$-close} to $V$ if there is a form $Q \in \bC[V]$ such that $\Rank(P - Q) = s$, and for any form $Q' \in \bC[V]$, we have that $\Rank(P - Q') \geq s$.
If a form $P$ is not $r$-close to $V$, for any $r \leq s$, we say that $P$ is \emph{$s$-far} from $V$.

Given a linear form $\ell$, we say $\ell$ is $1$-close to $V$ if $\ell \not \in V_{1}$.
\end{definition}

With the definition above in hand, we are ready to define robust vector spaces. 
These are vector spaces whose quadratic forms are in a sense far from the ideal generated by the linear forms.

\begin{definition}[Robust vector spaces]\label{def:robust-spaces}
     A vector space $V = V_1 + V_2$ where $V_i \subset S_i$ is said to be $r$-robust if, for any nonzero $Q \in V_2$, the following conditions hold: 
     \begin{enumerate}
         \item $Q$ is $(r-1)$-far from $V_1$
         \item if $Q \not\in (V_1)$, then $\Rank(\overline{Q}) \geq r$, where $\overline{Q} \in S/(V_1)$ denotes the image of $Q$ in the quotient ring  $S/(V_1)$.
     \end{enumerate}
     If a homogeneous ideal $I$ has an $r$-robust generating set $V_1 + V_2$, we say that $I$ is an $r$-robust ideal.
\end{definition}

\begin{remark}
Note that if $Q\not\in (V_1)$, then condition $2$ implies condition $1$ for $Q$. Indeed, suppose $\Rank(Q-F)<r$ for some degree $2$ form $F\in \bC[V_1]$. Then $\Rank(\overline{Q})<r$, since $\overline{F}=0$ modulo $(V_1)$. This is a contradiction to condition $2$, therefore we must have  $\Rank(Q - F) \geq r$ for any degree $2$ form $F \in \bC[V_1]$. Therefore, in order to check that a vector space $V=V_1+V_2$ is $r$-robust, it is enough to ensure that condition $1$ holds for all $Q\in V_2\cap(V_1)$ and condition $2$ holds for all $Q\in V_2\setminus (V_1)$.
\end{remark}
\begin{example}\begin{itemize}
    \item [(a)] Let $V_1$ be any subspace of $S_1$, and $V_2=(0)$, then $V=V_1+V_2$ is an $r$-robust vector space for any $r$.
    \item[(b)] Suppose $Q_1,\cdots,Q_m$ are quadratic forms such that $\Rank(Q)\geq r$ for any $Q\in \mathrm{span}(Q_1,\cdots, Q_m)$. Let $V_1=(0)$, $V_2=\mathrm{span}(Q_1,\cdots, Q_m)$. Then $V=V_1+V_2$ is an $r$-robust vector space.
    \item[(c)] Let  $S=\bC[x_1,\cdots,x_r,y_1,\cdots,y_r,u_1,\cdots,u_r]$ and $Q_1=x_1y_1+\cdots+x_ry_r$, $Q_2=u_1y_1+\cdots+u_ry_r$. Let  $V_1=\mathrm{span(x_1,\cdots,x_r)}$, $V_2=\mathrm{span}(Q_1,Q_2)$. Then we can show that $V_1+V_2$ is a $r$-robust vector space. 

Let $Q=aQ_1+bQ_2\in V_2$ for some $a,b\in \bC$. Note that $Q\in (V_1)$ if and only if $b=0$. Suppose, $Q\not\in (V_1)$. Then $\overline{Q}=b\overline{u}_1\overline{y}_1+\cdots+b\overline{u}_r\overline{y}_r$ where $\overline{u}_i,\overline{y}_j$ denote the images of $u_i,y_j$ modulo $(V_1)$. Hence $\Rank(\overline{Q})=r$. Therefore condition $2$, and consequently condition $1$, holds if $Q\in V_2\setminus (V_1)$. 

Now suppose $Q\in V_2\cap (V_1)$. Then $Q=ax_1y_1+\cdots+ax_ry_r$.  Let $F=\sum_ia_{ii}x_i^2+\sum_{i<j}2a_{ij}x_ix_j$, and $A$ be the $r\times r$ matrix with the $(i,j)$-th entry given by $a_{ij}$.
 Let $M$ and $N$ denote the matrices corresponding to the quadratic forms $Q,F$ in $S$. Note that we have $$M=\begin{pmatrix}
0 & \frac{a}{2}I_{r\times r} & 0\\
\frac{a}{2}I_{r\times r}& 0 & 0 \\
0 & 0 & 0
\end{pmatrix}, N= \begin{pmatrix}
A & 0 & 0\\
0& 0 & 0 \\
0 & 0 & 0
\end{pmatrix}$$
where $I_{r\times r}$ denotes the $r\times r$ identity matrix.
Note that the matrix corresponding to $Q-F$ is given by $M-N$. Note that $\Rank(M-N)= \Rank M$, and hence  $\Rank(Q-F)\geq r$.
\end{itemize}

\end{example}

\begin{claim}
  \label{rem:lowrankquadraticinrobust}
  Suppose $V := V_{1} + V_{2}$ is $r$-robust with $V_{i} \subset S_{i}$.
  If $Q \in \bC\bs{V}$ is a quadratic form of rank less than $r$, then $Q \in \bC\bs{V_{1}}$.
  If $P \in \ideal{V}$ is a quadratic form of rank less than $r$, then $P \in \ideal{V_{1}}$.
\end{claim}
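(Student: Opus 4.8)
The plan is to reduce everything to the graded decomposition of $\bC\bs{V}$ and $\ideal{V}$ in degree $2$ and then invoke the two defining properties of an $r$-robust vector space directly. First I would record that, since $V = V_1 + V_2$ with $V_1 \subseteq S_1$ and $V_2 \subseteq S_2$, the degree-$2$ part of the algebra $\bC\bs{V}$ is spanned by products of two linear generators together with the quadratic generators, hence equals $\mathrm{Sym}^2(V_1) + V_2$; similarly the degree-$2$ part of the ideal $\ideal{V}$ equals $S_1\cdot V_1 + V_2$. Consequently any quadratic $Q\in\bC\bs{V}$ can be written $Q = F + P'$ with $F$ a quadratic in $\bC\bs{V_1}$ and $P'\in V_2$, and any quadratic $P\in\ideal{V}$ can be written $P = G + P'$ with $G\in\ideal{V_1}$ and $P'\in V_2$.

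For the algebra statement I would argue by contradiction. Write $Q = F + P'$ as above; if $P' = 0$ we are done, so assume $P'\neq 0$. Condition $1$ of $r$-robustness is imposed on every nonzero element of $V_2$, so $P'$ is $(r-1)$-far from $V_1$, which means $\Rank(P' - H)\geq r$ for every quadratic $H\in\bC\bs{V_1}$. Taking $H = -F$ would give $\Rank(Q) = \Rank(P' + F)\geq r$, contradicting $\Rank(Q) < r$; hence $P' = 0$ and $Q\in\bC\bs{V_1}$.

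For the ideal statement I would again split on $P'$. If $P'\in\ideal{V_1}$ (in particular if $P' = 0$), then $P = G + P'\in\ideal{V_1}$ and there is nothing to prove. Otherwise $P'\neq 0$ and $P'\notin\ideal{V_1}$, so condition $2$ of $r$-robustness gives $\Rank(\overline{P'})\geq r$, where the bar denotes reduction modulo $\ideal{V_1}$. Since $\overline{G} = 0$ we have $\overline{P} = \overline{P'}$, and since the quotient map $S\to S/\ideal{V_1}$ is a surjective graded ring homomorphism, applying it to a minimal representation of $P$ shows $\Rank(\overline{P})\leq\Rank(P)$. Then $r > \Rank(P)\geq\Rank(\overline{P}) = \Rank(\overline{P'})\geq r$ is the desired contradiction, so $P'\in\ideal{V_1}$ and $P\in\ideal{V_1}$.

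The only step carrying any content is the degree-$2$ decomposition of the algebra and of the ideal; after that, each statement is a one-line application of one robustness condition, so I do not expect a genuine obstacle. The subtleties are purely bookkeeping: that condition $1$ is available for \emph{every} nonzero quadratic generator, which is exactly what the algebra case needs, and that rank does not increase under the quotient by $\ideal{V_1}$, which is immediate from the definition of rank via a minimal representation.
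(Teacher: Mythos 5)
Your proof is correct and follows essentially the same route as the paper's: decompose $Q$ (resp.\ $P$) in degree $2$ as a piece from $\bC[V_1]$ (resp.\ $\ideal{V_1}$) plus a piece from $V_2$, then invoke condition~1 of robustness directly for the algebra case and the contrapositive of condition~2 (together with the fact that rank does not increase under the quotient by $\ideal{V_1}$) for the ideal case. The only difference is cosmetic bookkeeping in how the contradiction is phrased.
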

\begin{proof}
    Since $Q \in \bC\bs{V}$, and since $Q$ has degree $2$, we have $Q = Q_{1} + Q_{2}$ with $Q_{2} \in V_{2}$ and $Q_{1} \in \bC\bs{V_{1}}$.
    If $Q_{2} \neq 0$, then $Q_{2} - (- Q_{1})$ has rank less than $r$, which contradicts the robustness of $V$.
    Therefore we have $Q_{2} = 0$.
    If $P = P_{1} + P_{2}$ with $P_{1} \in \ideal{V_{1}}$ and $P_{2} \in V_{2}$, then $\overline{P_{2}}$ has rank at most $\Rank{P} < r$ in $S / \ideal{V_{1}}$, and therefore $P_{2} \in \ideal{V_{1}}$.
\end{proof}

\begin{proposition}\label{proposition:2-collapse-robust}
If $V := V_1 + V_2$ is a $5$-robust vector space and $x, y, u, v \in S_1$ such that $xy - uv \in \bC[V]$ then one of the following holds:
\begin{enumerate}
    \item We have $x, y, u, v \in V_1$.
    \item There exists a linear form $\ell \in S_1 \setminus V_1$ such that $x,y,u,v\in \Cspan{V_1,\ell}$.
\end{enumerate}
\end{proposition}
\begin{proof}
Let $P := xy-uv$. Note that $\Rank(P) \leq 2$.
Since $P$ is of degree $2$, we have $P \in \bC\bs{V_{1}}$ by \cref{rem:lowrankquadraticinrobust}.
Since $P\in \bC[V_1]$, we know that $\linold{P}\subset V_1$ by \cite[Fact 2.14]{PS20a}.
Now, \cref{lemma:almost-minimal-collapse-poly-ring} applies and we are done. 
\end{proof}

Another important property of robust vector spaces is that we can get a good control on the vector spaces $\lin{P}$ whenever $P$ is almost in the algebra. 
This is captured by the following proposition, which motivates the subsequent definition of relative space of linear forms.

\begin{proposition}[Quadratics close to robust vector spaces]\label{prop:uniqueness-decomposition-robust-ideal}
    Let $V = V_1 + V_2$ be an $r$-robust vector space and $s < r/2$. 
    If $P$ is $s$-close to $V$, then for any $Q, Q' \in \bC[V]$ such that $\Rank(P - Q) = \Rank(P - Q') = s$, we have that 
    $$\linold{P - Q} + V_1 = \linold{P-Q'} + V_1. $$
    In other words, $(\linold{P-Q}+V_1)/V_1 = (\linold{P-Q'}+V_1)/V_1$ for any two decompositions.
\end{proposition}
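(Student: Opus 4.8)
The plan is to reduce the statement to two subadditivity properties---of the rank function and of the operator $\linold{\cdot}$---together with \cref{rem:lowrankquadraticinrobust}, which forbids low-rank quadratics in $\bC[V]$ from escaping $\bC[V_1]$.

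First I would write $R := P - Q$ and $R' := P - Q'$, so that $R, R'$ are quadratic forms of rank exactly $s$ and $R - R' = Q' - Q$ lies in $\bC[V]_2$. (If $Q = Q'$ there is nothing to prove.) Since rank is subadditive, $\Rank(R - R') \leq \Rank(R) + \Rank(R') = 2s < r$, using the hypothesis $s < r/2$. Thus $R - R'$ is a quadratic form in $\bC[V]$ of rank less than $r$, and \cref{rem:lowrankquadraticinrobust} gives $R - R' \in \bC[V_1]$. Writing $R - R'$ as a sum $\sum_i \ell_i m_i$ of products of linear forms $\ell_i, m_i \in V_1$ and applying \cref{lem:linold}, we conclude $\linold{R - R'} \subseteq V_1$.

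Next I would exploit subadditivity of $\linold{\cdot}$: concatenating minimal representations of two quadratic forms $A$ and $B$ exhibits $A + B$ as $\sum_i a_i b_i + \sum_j c_j d_j$, so \cref{lem:linold} yields $\linold{A + B} \subseteq \linold{A} + \linold{B}$. Applying this with $A = R'$ and $B = R - R'$, and using the previous paragraph, gives $\linold{R} \subseteq \linold{R'} + \linold{R - R'} \subseteq \linold{R'} + V_1$, hence $\linold{R} + V_1 \subseteq \linold{R'} + V_1$. Swapping the roles of $Q$ and $Q'$ gives the reverse inclusion, so $\linold{P - Q} + V_1 = \linold{P - Q'} + V_1$; the reformulation in terms of the quotient by $V_1$ is then immediate.

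I do not anticipate a real obstacle here: the only points requiring care are that $\linold{\cdot}$ is well-defined and subadditive (which is exactly \cref{lem:linold} together with the discussion following \cref{def:lin}) and that the rank bound $\Rank(R - R') < r$ is in force, which is precisely what the hypothesis $s < r/2$ buys us.
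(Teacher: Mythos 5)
Your proof is correct and follows essentially the same route as the paper's: define $R, R'$, use subadditivity of rank plus $s < r/2$ to bound $\Rank(R-R') < r$, invoke \cref{rem:lowrankquadraticinrobust} to place $R - R'$ in $\bC[V_1]$, and conclude via \cref{lem:linold}. The only cosmetic difference is that you spell out the subadditivity of $\linold{\cdot}$ explicitly, whereas the paper leaves that step implicit.
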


\begin{proof}
    Let $R = P-Q$ and $R' = P - Q'$. 
    Thus, we have that $R - R' = Q' - Q \in \bC[V]$ and we have $\Rank(Q' - Q) = \Rank(R - R') \leq \Rank(R) + \Rank(R') \leq 2s < r$.
    Hence, by Remark \ref{rem:lowrankquadraticinrobust}, we have that $Q'-Q \in \bC[V_1]$.
    Now, from $R = R' + (Q' - Q)$ and $Q'-Q \in \bC[V_1]$, we have that $\linold{R} \subseteq \linold{R'} + V_1$, and similarly, we have that $\linold{R'} \subseteq \linold{R} + V_1$.
\end{proof}

The above proposition motivates us to define the relative space of linear forms with respect to a robust algebra.

\begin{definition}[Relative space of linear forms]\label{definition:relative-lin} Let $r\geq 9$ be an integer.
If $V$ is an $r$-robust vector space and $P$ is $s$-close to $V$ for $s < r/2$ we can define 
$$ \linv{V}{P} := 
\begin{cases}
\lin{P} + V_{1}, \text{ if } P \in S_{1} \\
\lin{P-Q} + V_1, \text{ if } P \in S_{2}, s \leq 4 \\
\Cspan{P}, \text{ otherwise}
\end{cases} $$
where $Q \in \bC[V]$ is a form such that $\Rank(P-Q) = s$.
We also define the quotient space
$$ \linvb{V}{P} := 
\begin{cases}
\linv{V}{P} / V_{1}, \text{ if } s \leq 4 \\
{0}, \text{ otherwise}
\end{cases} $$
\end{definition}

\subsection{\sprimes}

We are now ready to define the main object of this section: \sprimes.
\begin{definition}[\sprimes]\label{def:clean-prime} Let $r \geq 17$ be an integer and $\varepsilon \in (0,1)$. 
Let $\cF := \{Q_1, \ldots, Q_m \} \subset S_{\leq 2}$ be a set of forms. 
Let $V=V_1+V_2$ be a vector space with $V_i\subset S_i$. 
We say that $V$ is an $(r, \varepsilon)$-\sprime  over $\cF$ if the following conditions hold:  
\begin{enumerate}
\item $V$ is an $r$-robust vector space
\item For any $U_1 \subset S_1$ such that $\dim(U_1) \leq 8$, there are less than $\varepsilon m$ forms $Q_j \in \cF$ such that $Q_j$ is $s$-close to $V$ for $1 \leq s \leq 4$ and 
$$\dim\left(\linvb{V}{Q_j} \right) > \dim\left(\linvb{V+U_1}{Q_j} \right).$$
\end{enumerate}

\begin{remark}
The vector space $V + U_{1}$ is $r - 8$ robust, since $U_{1}$ consists only of linear forms and has dimension at most $8$.
Since $r - 8 \geq 9$, the space $\dim\br{\linvb{V+U_1}{Q_j}}$ is well defined.
\end{remark}

If $V=V_1+V_2$ is an $(r, \varepsilon)$-\sprime over $\cF$, then we say that the ideal $(V)$ is an $(r, \varepsilon)$-clean ideal over $\cF$, and similarly the algebra $\bC[V]$ is an $(r, \varepsilon)$-clean algebra over $\cF$. 
\end{definition}

\begin{remark}
Condition 2 in \cref{def:clean-prime} simply captures the concept that $Q_j$ becomes closer to being in the algebra $\bC[V, U_1]$. 
Thus, a clean ideal is one where no small dimensional vector space would bring many forms in our set closer to being in the augmented algebra.
\end{remark}

The following lemma proves that one can clean up ideals over a set of quadratic forms without increasing the number of generators by much. 

\begin{lemma}[Cleaning up vector spaces]\label{lemma:ideal-cleanup}
Let $\cF$ be a set of forms, $J = J_1 + J_2$ be a vector space, and $d_i = \dim(J_i)$.
Given $r \in \bZ$ such that $r \geq 17$ and $0 < \varepsilon < 1$, there exists an $(r, \varepsilon)$-\sprime $I = I_1 + I_2$ over $\cF$ such that:
\begin{enumerate}
  \item $\bC[J] \subset \bC[I]$
  \item $I_2 \subset J_2$ and $J_{1} \subset I_{1}$
  \item $\dim(I_1) = \frac{128}{\varepsilon} + d_1 + 4 \cdot d_2 \cdot r$
\end{enumerate}
\end{lemma}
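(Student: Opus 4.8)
The plan is to build $I$ from $J$ by an iterative cleanup procedure. We start with $I^{(0)} := J' + J_2$, where $J'$ is obtained by enlarging $J_1$ so that the quadratic generators become robust: for each of the $d_2$ quadratic generators $Q \in J_2$, we add to $J_1$ the (at most $2r$) linear forms coming from a minimal representation certifying that $\Rank(\overline Q) < r$ in the current quotient, iterating until all elements of $J_2$ (and their linear combinations) are $r$-robust over the linear part. Concretely, using \cref{proposition:factor-in-pencil}-style arguments, or more simply the observation that $\linold{R} \subseteq V_1$ makes $R \equiv 0$ mod $(V_1)$, one checks that after adding at most $2r$ linear forms per quadratic generator the span $J_2$ is $r$-robust; this accounts for the $4 \cdot d_2 \cdot r$ term (the factor $4$ rather than $2$ giving slack for the subsequent cleanup steps and for the fact that one may need to re-clean after each addition). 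This produces an $r$-robust vector space $I^{(0)} = I_1^{(0)} + I_2$ with $I_2 \subseteq J_2$, $J_1 \subseteq I_1^{(0)}$, $\dim I_1^{(0)} \le d_1 + 4 d_2 r$, and $\bC[J] \subseteq \bC[I^{(0)}]$.

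Next I run a second iterative loop to enforce condition 2 of \cref{def:clean-prime} (saturation). As long as the current space $V = V_1 + I_2$ fails condition 2, there is a subspace $U_1 \subset S_1$ with $\dim U_1 \le 8$ such that at least $\varepsilon m$ forms $Q_j \in \cF$ strictly increase $\dim\linvb{\cdot}{Q_j}$ when we pass from $V$ to $V + U_1$. I add this $U_1$ to $V_1$ (keeping $I_2$ fixed, so $I_2 \subseteq J_2$ is preserved and $\bC[J] \subseteq \bC[V]$ still holds), and repeat. The key is a potential-function / termination argument: I would track the total quantity $\Phi(V) := \sum_{j} \dim\linvb{V}{Q_j}$ over the (at most $4$-close) forms $Q_j \in \cF$. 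Each $Q_j$ contributes a nonnegative integer at most $10$ (since $\dim\lin{Q_j} \le 10$), so $\Phi$ is bounded above by $10m$ — actually a cleaner bound: each step increases $\Phi$ by at least $\varepsilon m$, and... wait, $\Phi$ could also be decreased by enlarging $V_1$. The correct monotone quantity is instead $\Psi(V) := \sum_j \dim\linvb{V}{Q_j}$ where we must check $\dim\linvb{V+U_1}{Q_j} \le \dim\linvb{V}{Q_j}$ always (enlarging the linear part can only quotient out more), with strict inequality for $\varepsilon m$ indices at each step; but we also need that forms newly becoming close don't spoil this. Since each term lies in $\{0,1,\dots,10\}$ hmm — the honest bound is that $\Psi$ starts at most $10m$ and each iteration that we are forced to take must *decrease* it by at least $\varepsilon m$ (the $\varepsilon m$ forms each drop by $\ge 1$, and no term increases), so there are at most $10/\varepsilon$ iterations; each contributes $\dim U_1 \le 8$ linear forms, adding at most $80/\varepsilon < 128/\varepsilon$ to $\dim I_1$. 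Combined with the first loop, $\dim I_1 \le \frac{128}{\varepsilon} + d_1 + 4 d_2 r$; padding with arbitrary extra linear forms if needed makes the dimension exactly equal. One must also re-verify $r$-robustness is preserved when adding the $U_1$'s: since in total we add $\le 80/\varepsilon + \dots$ — no, robustness degrades by the *dimension* added, which is unbounded; so actually one should run the two loops together, re-robustifying after each saturation step, which is why the budget has the slack factor $4 d_2 r$ rather than $2 d_2 r$, and why $r \ge 17$ leaves room ($r - 8 \ge 9$) as in the Remark.

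The main obstacle is making the termination/potential argument fully rigorous: one must argue that (i) enlarging $V_1$ never *increases* any $\dim\linvb{V}{Q_j}$ (this follows from \cref{prop:uniqueness-decomposition-robust-ideal} and the fact that $\linvb{}{}$ is a quotient by a growing subspace, but one must handle the case boundaries $s \le 4$ versus $s > 4$ carefully — a form that was $4$-close might become $5$-far after quotienting, making its contribution drop to $0$, which is fine for monotonicity), (ii) the set of forms that are "at most $4$-close" and the bound of $10$ per term interact correctly so that $\Psi \le 10m$ throughout, and (iii) each forced iteration genuinely drops $\Psi$ by $\ge \varepsilon m$. The bookkeeping between the robustification loop (which enlarges $V_1$ and can change which forms are close and their $\lin$ spaces) and the saturation loop is the delicate part; I would structure the proof to do all robustification first to a high enough level (say $r' = r + 8 \cdot \lceil 16/\varepsilon\rceil$ — but that blows the budget), so more likely one interleaves and absorbs the loss into the stated constants, invoking the Remark after \cref{def:clean-prime} at each saturation step to ensure $\linvb{V + U_1}{\cdot}$ stays well-defined.
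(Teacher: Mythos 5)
Your overall structure matches the paper's proof: an iterative robustification loop (adding $\linold{\cdot}$ of low-rank witnesses to $V_1$ while shrinking $V_2$), interleaved with a saturation loop (adding the offending $U_1$'s), with termination controlled by the potential $\Phi(V) = \sum_j \dim\linvb{V}{Q_j}$ and each run of the clean-up adding at most $8$ linear forms. The bookkeeping you are worried about (robustness degrading when $V_1$ grows, hence re-robustifying after every saturation step) is exactly what the paper does, and the budget is split the same way: $128/\varepsilon$ from the saturation steps and $4 r d_2$ from the robustification steps, with the factor $4 r d_2$ accounted for by observing that step one of the robustification strictly drops $\dim V_2$ and step two strictly raises $\dim(V_2 \cap (V_1))$, so each can fire at most $d_2$ times over the whole procedure.

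However, there is a genuine error in your termination argument. You assert that enlarging $V_1$ can never \emph{increase} $\dim\linvb{V}{Q_j}$, justifying this by saying ``a form that was $4$-close might become $5$-far after quotienting,'' and concluding that the boundary case is harmless for monotonicity. You have the direction of the transition backwards. Enlarging $V_1$ enlarges the algebra $\bC[V]$, so a form can only become \emph{closer}, not farther. The dangerous transition is a form $Q_j$ that was $5$-far from $U^{(k)}$ (contributing $0$) becoming $4$-close to $U^{(k+1)}$ (contributing up to $8$), so $\Phi$ is \emph{not} monotone. The paper does not assume monotonicity: it introduces the sets $\cN_k$ of forms that newly become at most $4$-close in step $k$, bounds the per-step loss by $\Phi_k - \Phi_{k+1} \geq \varepsilon m - 8\abs{\cN_k}$, and then uses that the $\cN_k$ are pairwise disjoint (once close, always close) so $\sum_k \abs{\cN_k} \leq m$. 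Summing over $t$ steps gives $t \varepsilon m - 8m \leq \Phi_0 - \Phi_t \leq 8m$, whence $t \leq 16/\varepsilon$. Without this amortization the naive bound $\Phi \leq 8m$ together with a false monotonicity claim gives $t \leq 8/\varepsilon$, which happens to be smaller, but the argument is unsound: a single iteration could in principle raise $\Phi$ and the loop would not obviously terminate. (Minor: each term of $\Phi$ is bounded by $8$, not $10$, since $\linvb{V}{Q_j}$ is $\linold{P-Q}$ modulo $V_1$ with $\Rank(P-Q) \leq 4$, hence dimension at most $8$ even before quotienting.)
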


\begin{proof}
Let $\cF = \{Q_1, \ldots, Q_m\}$. Starting with the given vector space $J$, we will iteratively construct a vector space $I$ that satisfies the desired properties. Our proof consists of three steps. First, we will define the iterative process to clean up a vector space. Second, we will show that this iterative process actually terminates after at most $16/\varepsilon$ steps. Finally, we will show that the final output of the iterative process satisfies all the desired properties.

\emph{The iterative processes}. We will devise two processes to clean up the initial vector space $J$, and the final iterative process will be a combination of these two. The first process helps us achieve robustness and the second process helps us clean up the vector space.
In order to obtain the vector space $I$, we will apply both these processes repeatedly.
\begin{itemize}

\item 
Robustness process:
Let $V' = V'_1 + V'_2$ be a vector space with $V'_i \subset S_i$. 
Start with the vector spaces $V_1 := V'_1$, $V_2 := V'_2$ and $V := V_1 + V_2$.
Perform the following two steps till it is not possible to perform either of them:
\begin{itemize}
    \item If there exists a non-zero form $P \in V_2$ such that $\Rank(P - R) < r$ for some $R \in \bC[V_1]$, set $V_1 := V_1 + \linold{P-R}$, and replace $V_{2}$ by a codimension $1$ subspace of $V_{2}$ that does not contain $P$.
    \item If there exists a non-zero form $P \in V_{2} \setminus \ideal{V_{1}}$ such that $P = P_{1} + Q'$ with $P_{1} \in \ideal{V_{1}}$ and $\Rank(Q') < r$, set $V_{1} := V_{1} + \linold{Q'}$.
\end{itemize}
The above two steps correspond to the two conditions in the definition of robustness.
If we cannot perform either of the above two steps, then the vector space $V$ that we end up with is $r-$robust.
We have $V_{2} \subseteq V'_{2}$ and $V'_{1} \subseteq V_{1}$.
It is also easy to see that both steps above maintain the inclusion $\bC\bs{V'} \subseteq \bC\bs{V}$.

We now control how many times the above steps can be performed.
Each time we perform the first step, we decrease $\dim V_{2}$ by $1$, and therefore the first step can be performed a total of $\dim(V_2')$ times.
Each time we perform the second step, we decrease $\dim \overline{V_{2}}$ by $1$, where $\overline{V_{2}}$ is the space $V_{2}$ quotiented by the vector space $\ideal{V_{1}} \cap V_{2}$.
Therefore, the second step can also be performed at most $\dim(V_2')$ times.
Hence, this iterative process always halts, and it does so in at most $2 \dim(V_2')$ steps.
Further, each time we perform either step, we increase the dimension of $V_{1}$ by at most $2r$.

\item
Clean-up process: Let $V' = V'_1 + V'_2$ be a vector space with $V'_i \subset S_i$. 
Start with the vector spaces $V_1 := V'_1$, $V_2 := V'_2$ and $V = V_1 + V_2$.
While there exists $U_{1} \subseteq S_{1}$ such that $U_{1}$ violates the condition that $V$ is $(r, \varepsilon)$-clean (i.e. condition 2 of \cref{def:clean-prime}), update $V_1 := V_1 + U_{1}$.

Note that we have  $V'_1 \subset V_1 \subset S_1$ and $V_2= V'_2$.
Thus, $V' \subset \bC[V]$ by our construction. 
The space $U_{1}$ has dimension at most $8$, and therefore the dimension of $V_{1}$ increases by at most $8$ whenever we apply this procedure.
\end{itemize}

Now, we can design our main iterative process:
\begin{enumerate}
    \item Start with our vector space $W^{(0)} := J$
    \item Apply the robustness process to $W^{(0)}$ to obtain $U^{(0)}$
    \item While $U^{(k)}$ is not $(r, \varepsilon)$-\sprime over $\cF$
    \begin{itemize}
        \item Let $W^{(k+1)}$ be a vector space that we obtain after applying the clean-up process once to $U^{(k)}$
        \item Let $U^{(k+1)}$ be the space we obtain after applying the robustness process to $W^{(k+1)}$
    \end{itemize}
\end{enumerate}

\emph{Termination.} We will now show that the iterative process must terminate after at most $16/\varepsilon$ steps. We define the following potential function for $U := U_{1} + U_{2}$ a $r$-robust vector space:
$$\Phi\br{\cF, U} = \sum_{i=1}^{m} \dim \linvb{U}{Q_{i}} .$$
Note that this potential function is always non-negative and upper bounded by $8 \cdot m$.

Let $\Phi_k := \Phi(\cF, U^{(k)})$ be the potential function that we defined above applied to the robust vector space $U^{(k)}$.
If $U^{(k)}$ violates condition 2 of \cref{def:clean-prime} (clean vector space), there exists $L_1 \subset S_1$ and there are at least $\varepsilon m$ forms $Q_j \in \cF$ such that $Q_j$ is $s$-close to $U^{(k)}$, for $1 \leq s \leq 4$ and 
\begin{align*} 
\dim \linvb{U^{(k)}}{Q_{j}} 
&= \dim\left(\linv{U^{k}}{Q_j}/U^{(k)}_1 \right) \\
&> \dim\left(\linv{U^{(k)} + L_1}{Q_j}/(U^{(k)}_1 + L_1) \right) \\ 
&\geq 
\dim\left(\linv{U^{(k+1)}}{Q_j}/U^{(k+1)}_1 \right) = \dim \linvb{U^{(k+1)}}{Q_{j}}.
\end{align*}
If we denote by $\cB_k$ be the set of such $Q_j$'s, we have $|\cB_k| \geq \varepsilon m$. 

Let $\cN_k$ be the set of forms $Q_j \in \cF$ that are $4$-far from $U^{(k)}$ but are at most $4$-close to $U^{(k+1)}$.
For any $Q_{j} \in \cN_k$, we have that $\dim \linvb{U^{(k)}}{Q_{j}} = 0$ and $\dim  \linvb{U^{(k+1)}}{Q_{j}} \leq 8$.
Moreover, note that $\cN_a \cap \cN_b = \emptyset$ for any $a < b$, since $\bC[U^{(k)}] \subset \bC[U^{(k+1)}]$. Also, $\cN_k\cap\cB_k= \emptyset$.

Then, we have
\begin{align*}
    \Phi_{k} - \Phi_{k+1} 
    &= \sum_{i=1}^m \br{ \dim  \linvb{U^{(k)}}{Q_{i}} - \dim \linvb{U^{(k+1)}}{Q_{i}} } \\
    &\geq \sum_{Q_i \in \cB_k} \br{ \dim  \linvb{U^{(k)}}{Q_{i}} - \dim \linvb{U^{(k+1)}}{Q_{i}} } - \sum_{Q_i \in \cN_k} \dim \br{ \linvb{U^{(k+1)}}{Q_{i}}  }  \\
    &\geq |\cB_k| - 8 |\cN_k| \\
    & \geq \varepsilon m - 8 |\cN_k|,
\end{align*}
where the first inequality holds because for any term not in $\cN_k$, we have that $\dim \linvb{U^{(k)}}{Q_{i}} \geq \dim \linvb{U^{(k+1)}}{Q_{i}}$. 
The second inequality holds because for the elements in $\cB_k$ the dimension drops by at least 1. 
The last inequality holds because we know that $|\cB_k| \geq \varepsilon m$.

Now, if our iterative process runs for $t$ iterations, we have that
\begin{align*}
    \Phi_0 - \Phi_t 
    &= \sum_{k=0}^{t-1} \Phi_k - \Phi_{k+1} \\
    &\geq t \varepsilon m - 8 \cdot \sum_{k=0}^{t-1} |\cN_k| \\
    &\geq t \varepsilon m - 8m 
\end{align*}
where the last inequality holds because the sets $\cN_k$ are pairwise disjoint and contained in $\cF$.
Since $0 \leq \Phi_k \leq 8m$ by the discussion in the first paragraphs of the proof, we have that $t \varepsilon m \leq 16m$, which implies $t \leq 16/\varepsilon$.
In particular, this shows that our process will terminate.

\emph{Desired properties.} Let $I := U^{(t)}$ be the final vector space we obtained at the end of this process. 
By the above, we have that $I$ is $(r, \varepsilon)$-clean over $\cF$, as the process terminated. 
Moreover, note that $\bC[J] \subseteq \bC[I]$, as the inclusion $\bC[U^{(k)}] \subseteq \bC[U^{(k+1)}]$ is preserved throughout the process, and $\bC[J] \subseteq \bC[U^{(0)}]$ by construction.
We also have that $I_2 \subseteq J_2$, since the robustness process preserves $U^{(k+1)}_2 \subseteq U^{(k)}_2$, and the clean-up process preserves $U^{(k+1)}_2 = U^{(k)}_2$, and $U^{(0)} \subset J_2$.
Additionally, we have that $J_1 \subseteq I_1$, as both processes only increase the size of $U^{(k)}$.

Now, all we have left is to bound the dimension of $I_1$. 
For this, note that each run of the clean-up process adds a space of linear forms of dimension at most $8$, and therefore the total contribution from the clean-up process is $128/\varepsilon$.
To see that the total contribution of the robustness process is at most $4r \cdot d_2$, note that the first step of the process strictly decreases $\dim(U^{(k)}_2)$, whereas the second step of the process strictly increases $\dim(U^{(k)}_2 \cap (U^{(k)}_1))$. 
Moreover, if $\dim(U^{(k)}_2 \cap (U^{(k)}_1)) = \dim(U^{(k)}_2)$ from this point on we can only apply the first step of the process.
Finally, note that the clean-up process does not change the degree two part of the vector space.
Hence, we can apply the first step of the process at most $d_2$ times, and the second process at most $d_2$ times throughout the entire iterative process. 
Note that each time we apply the foregoing steps, we increase the dimension of the linear piece by at most $2r$. 
Putting everything together, we have that $\dim(I_2) \leq d_1 + 128/\varepsilon + 4r \cdot d_2$, as we wanted.
\end{proof}

\begin{remark}
Note that in the process above, if $\cF \subset (J)$, then $\cF \subset (I)$.
\end{remark}

The clean up process is similar to a process used in \cite{PS20a} and \cite{PS20b},
where they find an ideal $V$ generated by linear forms, and a subset $\cI$ of quadratics forms such that every quadratic in their configuration is in the span of elements of $\cI$, and elements of $\cF \cap \ideal{V}$.
They then iteratively find quadratic forms in $\cF$ that have rank $2$ in $S / \ideal{V_{1}}$.
If such a form is found, they add the linear forms to $V_{1}$, and remove a quadratic from $\cI$.
This iterative process does not preserve the algebra generated by $V$ and $\cI$.
In other words, a form $Q \in \cF$ that lies in $\bC\bs{V, \cI}$ might not be in the algebra after one step of the iterative process (although $Q$ is still guaranteed to be in the ideal generated by $V, \cI$, and $Q$ satisfies the stronger property that it is spanned by elements of $\cI$ and $\cF \cap \ideal{V}$).
We require that all forms remain in the algebra, and therefore our clean up process is more involved.

\subsection{Univariate forms over clean vector spaces}

We now turn our attention to a special case of forms which are 1-close to a robust vector space.

\begin{definition}[Univariate forms over robust vector spaces]
Let $V := V_1 + V_2$ be an $r$-robust vector space, where $r \geq 3$ and $V_i \subseteq S_i$, for $i \in \{1,2\}$. 
We say that a form $P$ is \emph{univariate over} $V$ if $P$ is $1$-close to $V$ and $\dim\left( \linvb{V}{P}\right) = 1$. 
Moreover, we define $z_P \in S_1/V_1$ to be the linear form such that $\linvb{V}{P} :=  \Cspan{z_P}$.\footnote{More precisely, we can take $z_P$ in the $S$ canonically simply by identifying the quotient polynomial ring with a polynomial ring in less variables. For simplicity, we will abuse notation in this setting.}
\end{definition}

Once again, given a $3$-robust vector space $V$, we will also say that a form $P$ is univariate over the algebra $\bC[V]$ if $P$ is univariate over $V$ as in the definition above.

\begin{remark}
An immediate consequence of the definition above is that if $P$ is a quadratic form univariate over $V$, then $P \in \bC[V, z_P]$.
Moreover, we can write $P = Q + z(\alpha z + v)$ where $Q \in \bC[V]$, $\alpha \in \bC$ and $v \in V_1$.
Thus, $P \in (V) \iff \alpha = 0$. 
\end{remark}

Having univariate forms over a clean vector space is desirable because if our SG configuration only has univariate forms over a clean vector space, one would expect the ``extra variables'' to behave as a LCC configuration.
As mentioned in the proof outline, this is indeed the case, and in the next section we will formalize this intuition about the extra variables.

\section{Proof of the robust Sylvester Gallai theorem}\label{sec:sg-robust}

In this section we prove our main theorem: $\drsg{\delta}{2}$ configurations (\cref{def:deltasg2config}) have constant dimension as a $\bC$-vector space. Throughout this section, we will assume $\varepsilon := \delta / 10$.

\deltasgstatement*

Before we prove the theorem above, we need some definitions.
Let $\cF$ be our $\drsg{\delta}{2}$ configuration, where we write $\cF = \cF_1 \sqcup \cF_2$ such that $\cF_i = \cF \cap S_i$ and let $m := \abs{\cF}$.
We will assume that $m > 1 / \delta^{54}$, since the statement trivially holds if it is not. 

For a form $Q \in \cF_2$, define the sets 
\begin{align*}
\cF_2(Q) &:= \setbuild{Q' \in \cF_2}{\abs{\radideal{Q, Q'} \cap \cF} \geq 3} \\ 
\cF_1(Q) &:= \setbuild{x \in \cF_1 \ }{\ \abs{\radideal{Q, x} \cap \cF} \geq 3}.
\end{align*}
The sets above account for the quadratic and linear Sylvester-Gallai pairs related to $Q$.
We also partition $\cF_2(Q)$ into three sets, $\Fspan[Q], \Flinear[Q]$ and $\Fsquare[Q]$.
These are the sets of forms in $\cF$ that satisfy the span case, linear case, and the square case of \cref{theorem: radical structure} with $Q$.
The formal definitions of the sets are as follows.
\begin{align*}
  \Fspan[Q] &= \setbuild{Q' \in \cF_2(Q) \ }{\ \abs{\ideal{Q, Q'} \cap \cF} \geq 3}\\
  \cF_{\text{square}}(Q) &= \setbuild{Q' \in \cF_2(Q) \setminus \Fspan[Q] \ }{\ \exists \ \ell \in S_{1}, \ell^{2} = \alpha Q + \beta Q'} \\
  \Flinear[Q] &= \setbuild{Q' \in \cF_2(Q) \setminus \br{\Fspan[Q] \cup \Fsquare[Q] } \ }{\ \exists \ x, y \in S_{1}, Q, Q' \in \ideal{x, y}} 
\end{align*}

Since $\cF$ is a $(\delta, 2)$-SG configuration, for each $Q \in \cF_2$ we have 
$$\abs{\Fspan[Q] \sqcup \Flinear[Q] \sqcup \Fsquare[Q] \sqcup \cF_1(Q) } \geq  \delta m.$$

We define sets $\Fspan, \Flinear, \Fsquare$ and $\Fdeg$ that partition $\cF_2$ as follows.

\begin{align*}
  \Fspan &= \setbuild{Q \in \cF_2}{\abs{\Fspan[Q]} \geq \varepsilon m} \\
  \Flinear &= \setbuild{Q \in \cF_2 \setminus \Fspan}{\abs{\Flinear[Q]} \geq \varepsilon m} \\
  \Fdeg &= \setbuild{Q \in \cF_2 \setminus \Fspan \cup \Flinear \ }{ \ \abs{\cF_1(Q)} \geq \varepsilon m} \\
  \Fsquare &= \setbuild{Q \in \cF_2 \setminus \br{\Fspan \cup \Flinear \cup \Fdeg}}{\abs{\Fsquare[Q]} \geq (\delta - 3 \varepsilon) m} 
\end{align*}

By \cref{theorem: radical structure}, every form in $\cF_2$ belongs to one of the above sets.
We will now begin constructing small algebras which control the sets defined above.

\subsection[Controlling F-square]{Controlling $\Fsquare$}

The following lemma from \cite[Corollary~43]{S19} controls the vector space dimension of the set of forms satisfying the square case with two quadratic forms.

\begin{lemma}
  \label{lem:twosquare}
  Let $Q_{1}, Q_{2} \in S_2$ be linearly independent and $F_{1}, \dots, F_{m} \in S_2$ such that
  $$
    F_{i} = Q_{1} + \ell_{i}^{2} = \beta_{i} Q_{2} + b_{i}^{2}
  $$
  where $\ell_{i}, b_{i} \in S_{1}$ and $\beta_{i} \in \bC$.
  Then, there is a vector space $V$ of dimension at most $4$ that contains every $\ell_{i}$ and $b_{i}$.
  In particular, every $F_{i} \in \bC[Q_1, V]$, which implies $\dim \Cspan{Q_1, Q_2, F_1,\ldots, F_m} \leq 7$. We will denote this vector space $V$ as $V(Q_1,Q_2)$.
\end{lemma}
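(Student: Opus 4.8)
The idea is to pass from quadratic forms to their symmetric matrices, where the problem becomes linear algebra, and to exploit that the relevant matrices are confined to a two-dimensional space. First I would rewrite the hypothesis: subtracting the two expressions for $F_i$ gives
$$R_i := Q_1 - \beta_i Q_2 = b_i^2 - \ell_i^2 = (b_i - \ell_i)(b_i + \ell_i),$$
and since $Q_1,Q_2$ are linearly independent we have $R_i \neq 0$, so $R_i$ is a nonzero quadratic of rank at most $1$. Writing $M(R)$ for the symmetric matrix of a quadratic $R$, the column space $W_i$ of $M(R_i)$ is a subspace of $S_1$ of dimension at most $2$, equal to $\linold{R_i}$. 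From the factorization $(b_i-\ell_i)(b_i+\ell_i)=R_i$ and unique factorization in $S$, both $b_i$ and $\ell_i$ lie in the span of the linear factors of $R_i$, i.e. $\ell_i,b_i\in W_i$.

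The key step is that every $R_i$ lies in the two-dimensional space $\Cspan{Q_1,Q_2}$, so every $M(R_i)$ lies in $\Cspan{M(Q_1),M(Q_2)}$, a space of matrices of dimension at most $2$. Choosing indices $p,q$ with $M(R_p),M(R_q)$ spanning this space, every $M(R_i)$ is a linear combination of $M(R_p),M(R_q)$, hence $W_i = \mathrm{colspace}(M(R_i)) \subseteq W_p + W_q =: V$. As $\dim W_p,\dim W_q \leq 2$, we get $\dim V \leq 4$, and by the previous paragraph every $\ell_i$ and $b_i$ lies in $V$; this is the space $V(Q_1,Q_2)$. Since $\ell_i\in V$, the inclusion $F_i = Q_1 + \ell_i^2 \in \bC[Q_1,V]$ is immediate.

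For the dimension bound $\dim\Cspan{Q_1,Q_2,F_1,\dots,F_m}\leq 7$, note $\Cspan{Q_1,Q_2,F_i : i} = \Cspan{Q_1,Q_2} + \Cspan{\ell_i^2 : i}$, and split on the number $t$ of distinct values among the $\beta_i$. If $t=1$, all $\ell_i$ lie in a single space $\linold{R}$ of dimension $\leq 2$, so the span has dimension at most $2 + \dim\mathrm{Sym}^2(\linold{R}) \leq 2+3 = 5$. If $t\geq 2$, then $R_p,R_q$ span $\Cspan{Q_1,Q_2}$, so $Q_1,Q_2\in\Cspan{R_p,R_q}$ and hence $Q_1,Q_2\in\bC[V]$; if moreover $\dim V\leq 3$ then all of $Q_1,Q_2,F_i$ lie in $\bC[V]$ and the span has dimension $\leq\dim\mathrm{Sym}^2(V)\leq 6$. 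The remaining case $\dim V = 4$ forces $\dim W_p = \dim W_q = 2$ and $W_p\cap W_q = 0$; then a block-diagonal rank computation (in coordinates where $W_p,W_q$ are spanned by disjoint coordinates, a form $R_i$ not proportional to $R_p$ or $R_q$ would have a rank-$4$ matrix, contradicting $\Rank(R_i)\leq 1$) shows $t=2$, so every $\ell_i,b_i\in W_p\cup W_q$ and $Q_1,Q_2\in\mathrm{Sym}^2(W_p)+\mathrm{Sym}^2(W_q)$, whence the span lies in $\mathrm{Sym}^2(W_p)+\mathrm{Sym}^2(W_q)$, of dimension $\leq 6$.

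The main obstacle is the key step: recognizing that the matrices $M(R_i)$ are trapped in a two-dimensional space is exactly what collapses all the $W_i$ into a common four-dimensional $V$; without this observation the bound is not obvious. The secondary technical point is the block-diagonal rank argument, which is needed to exclude $\dim V = 4$ when there are three or more distinct $\beta_i$ and is what pins the final constant down to (at worst) $7$.
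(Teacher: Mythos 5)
The paper does not prove this lemma itself; it is cited from \cite[Corollary~43]{S19}, so a direct comparison with the paper's proof is not possible. Evaluating your argument on its own terms: it is correct, and it is a clean, self-contained proof in the same pencil-of-quadratics spirit the paper uses elsewhere (cf.\ \cref{proposition:factor-in-pencil}). The two ideas that carry the weight are exactly as you identify them: (i) each $R_i := Q_1 - \beta_i Q_2 = b_i^2 - \ell_i^2$ is a nonzero quadratic of rank at most one, its matrix column space $W_i$ coincides with $\Cspan{\ell_i, b_i}$ (this uses unique factorization and the characterization of $\linold{\cdot}$ for rank-$\le 1$ forms), and every $M(R_i)$ is trapped inside the two-dimensional matrix pencil $\Cspan{M(Q_1), M(Q_2)}$, so all the $W_i$ collapse into $W_p + W_q$; and (ii) the block-diagonal rank computation that rules out a third distinct $\beta$-value when $\dim(W_p + W_q) = 4$, which is what yields the final constant.

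Two small remarks. First, the sentence ``choosing indices $p,q$ with $M(R_p), M(R_q)$ spanning this space'' presupposes that the $M(R_i)$ span a two-dimensional space, which fails exactly when all $\beta_i$ coincide; in that degenerate case $V := W_1$ of dimension $\le 2$ already works, and you do treat $t=1$ separately in the dimension count, so this is only a wording fix for the first assertion. Second, your case analysis actually establishes $\dim \Cspan{Q_1, Q_2, F_1, \dots, F_m} \le 6$, sharper than the stated $7$. It is worth noting that the ``in particular'' in the lemma is not immediate: from $F_i \in \bC[Q_1, V]$ with $\dim V \le 4$ one only gets $\dim \le 2 + \dim \mathrm{Sym}^2(V) = 12$ directly, so the extra bookkeeping you supply (especially the rank-$4$ contradiction forcing $t = 2$ when $\dim V = 4$, which confines everything to $\mathrm{Sym}^2(W_p) \oplus \mathrm{Sym}^2(W_q)$) is genuinely needed to reach the claimed constant and is not a cosmetic detail.
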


If $P_{1}, P_{2}, \dots, P_{b}$ are the forms in $\Fsquare$,
\cref{lem:twosquare} implies that for every $i, j$ there exists $V(P_{i}, P_{j}) \subset S_1$ with $\dim(V(P_{i}, P_{j})) \leq 4$ such that 
$$\Fsquare[P_{i}] \cap \Fsquare[P_{j}] \subset \bC\bs{\lin{P_i} + V(P_{i}, P_{j})}.$$ 
We use these spaces in order to control $\Fsquare$.
We begin by proving an auxiliary lemma.

\begin{lemma}\label{lem:FSquaretripleintersection}
    Suppose $P_{1}, P_{2}, P_{3} \in \cF_2$ satisfy 
    $$\abs{\Fsquare[P_{1}] \cap \Fsquare[P_{2}] \cap \Fsquare[P_{3}]} \geq 2.$$
    Then $P_{3} \in \bC[\lin{P_{1}}, V(P_{1}, P_{2})]$.
\end{lemma}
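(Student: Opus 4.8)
The plan is to exploit the two common forms in $\Fsquare[P_1]\cap\Fsquare[P_2]\cap\Fsquare[P_3]$ to pin down $P_3$ inside a small algebra determined by $P_1,P_2$. Call the two common forms $F$ and $G$. Since $F,G\in\Fsquare[P_i]$ for $i=1,2,3$, for each pair $(P_i,F)$ and $(P_i,G)$ there is a linear form whose square lies in the relevant span; concretely, by the definition of $\Fsquare[\cdot]$ together with the square case of \cref{theorem: radical structure}, we may write (after scaling) $F=P_i+\ell_{i}^2$ and $G=\gamma_i P_i + b_i^2$ for suitable linear forms $\ell_i,b_i$ and scalars $\gamma_i$, for each $i\in\{1,2,3\}$. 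Applying \cref{lem:twosquare} to the pair $P_1,P_2$ and the family $\{F,G\}$ (more precisely to $\Fsquare[P_1]\cap\Fsquare[P_2]\supseteq\{F,G\}$), we get the $4$-dimensional space $V(P_1,P_2)$ containing all the relevant linear forms arising from $P_1,P_2$, so in particular $F,G\in\bC[P_1,V(P_1,P_2)]$, and hence $F,G\in\bC[\lin{P_1}+V(P_1,P_2)]$ since $\lin{P_1}\supseteq\linold{P_1}$ captures $P_1$ when $\Rank(P_1)\le 4$ (and when $\Rank(P_1)\ge 5$ the form $P_1$ itself is a generator, which is already accounted for).

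Now I would turn this around: write $W:=\lin{P_1}+V(P_1,P_2)$, a vector space of dimension $O(1)$ (at most $10+4$), and note $F,G\in\bC[W]$. From $F=P_3+\ell_3^2$ and $G=\gamma_3 P_3+b_3^2$ we eliminate $P_3$: if $\gamma_3\neq 1$ (or after taking an appropriate linear combination if $\gamma_3=1$, in which case $F-G=\ell_3^2-b_3^2\in\bC[W]$ directly gives control of $\ell_3,b_3$ via \cref{lemma:almost-minimal-collapse-poly-ring} applied with $V_1$ the linear part of $W$, since $\ell_3^2-b_3^2=(\ell_3-b_3)(\ell_3+b_3)$), we obtain $(\gamma_3-1)$-weighted combination $\gamma_3 F - G = \gamma_3\ell_3^2 - b_3^2 \in \bC[W]$, a rank-$\le 2$ quadratic lying in $\bC[W]$, hence by \cref{lem:linold} (applied to the factorization $(\sqrt{\gamma_3}\ell_3-b_3)(\sqrt{\gamma_3}\ell_3+b_3)$ or to $\ell_3^2$ when $b_3\in\bC\ell_3$) its associated linear forms $\ell_3,b_3$ lie in the linear part of $W$, i.e. in $W$ itself. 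Then $P_3=F-\ell_3^2$ with $F\in\bC[W]$ and $\ell_3\in W$, so $P_3\in\bC[W]=\bC[\lin{P_1}+V(P_1,P_2)]=\bC[\lin{P_1},V(P_1,P_2)]$, as claimed.

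The main obstacle is the bookkeeping around degenerate subcases: the coefficient $\gamma_3$ could equal $1$ (so the naive elimination is vacuous), one of $\ell_3,b_3$ could be a scalar multiple of the other (so the quadratic $\gamma_3\ell_3^2-b_3^2$ is a square or zero rather than a product of two independent forms), and one must be careful that the linear forms extracted are genuinely inside the \emph{linear part} of $W$ rather than just in $\bC[W]$ — this is exactly what \cref{lem:linold} and \cref{lemma:almost-minimal-collapse-poly-ring} are for, applied with $V_1$ equal to the degree-$1$ part of $W$. A secondary subtlety is making sure we use \emph{two} common forms and not one: with a single common form $F$ we'd only get $P_3=F-\ell_3^2$ with $\ell_3$ uncontrolled, whereas the second form $G$ is what forces $\ell_3$ into $W$. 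None of these subcases is hard individually; the work is just in organizing them cleanly, and I would handle them by always passing to a linear combination of $F$ and $G$ of rank at most $2$ that does not vanish, which exists because $F$ and $G$ are linearly independent irreducible quadratics in $\cF_2$.
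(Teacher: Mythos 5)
Your overall strategy tracks the paper's: use \cref{lem:twosquare} to bring $\ell_1, \ell_2, g_1, g_2$ (and hence $F, G$) under control of $W := \lin{P_1} + V(P_1, P_2)$, then control the remaining linear forms coming from $P_3$. Where you differ is organizational. You eliminate $P_3$ between $F = c_3 P_3 + \ell_3^2$ and $G = d_3 P_3 + b_3^2$ to obtain a rank $\leq 2$ form $Q := d_3 F - c_3 G = d_3 \ell_3^2 - c_3 b_3^2 \in \bC[W]$, and then try to read off $\ell_3, b_3 \in W_1$. The paper instead writes $\beta_1 P_3 = P_1 - \ell_1^2 + h_1^2$ and controls $h_1$ by a case split on whether $\beta_1 = \beta_2$: when they are equal, the eliminated identity $\ell_1^2 - h_1^2 = \ell_2^2 - h_2^2$ has no $P_1$-term and a direct factorization gives $h_1, h_2 \in \Cspan{\ell_1, \ell_2} \subseteq V(P_1,P_2)$; when they are distinct, the eliminated identity exhibits an explicit $(\beta_2 - \beta_1)P_1$ term, which forces $\Rank P_1 \leq 4$, so $\lin{P_1} = \linold{P_1} \subseteq S_1$ and one is working entirely inside linear forms. (Also, one cannot normalize $F = P_i + \ell_i^2$ with $P_i$-coefficient $1$ simultaneously for all three $i$; you should write $F = c_i P_i + \ell_i^2$ with $c_i \neq 0$ and carry the scalars through.)

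The genuine gap is at the step where you invoke \cref{lem:linold} (or \cref{lemma:almost-minimal-collapse-poly-ring}) to conclude $\ell_3, b_3 \in W_1$. Both of these require the quadratic $Q$ to lie in $\bC[W_1]$, the subalgebra generated by the \emph{linear} part of $W$, not merely in $\bC[W]$. When $\Rank P_1 \geq 5$ we have $W_2 = \Cspan{P_1} \neq 0$ and $W_1 = V(P_1, P_2)$, so membership in $\bC[W]$ a priori allows $Q$ to have a $P_1$-component, which defeats the cited lemmas. This is exactly the configuration that the paper's case split disarms (the unequal-$\beta$ case explicitly forces $\Rank P_1 \leq 4$, and the equal-$\beta$ case produces an identity with no $P_1$-term at all); your unified elimination obscures it. The gap is fixable: if $\Rank P_1 \geq 5$, write $Q = cP_1 + Q'$ with $Q' \in \bC[W_1]_2$; since $\dim W_1 \leq 4$ one has $\Rank Q' \leq 2$, and $\Rank Q \leq 2$, so $\Rank(cP_1) \leq \Rank Q + \Rank Q' \leq 4 < 5$, forcing $c = 0$, hence $Q \in \bC[W_1]$. (When $\Rank P_1 \leq 4$, the degree-2 part $W_2$ is $0$ and $\bC[W] = \bC[W_1]$ automatically.) With that supplied, your factorization argument (and the degenerate subcases you already flagged) closes the proof.
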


\begin{proof}
    Let $R_{1}$ and $R_{2}$ be two distinct forms in the intersection.
    After scaling $R_1,R_2$ appropriately, we have the equations 
    \begin{align*}
        R_{1} &= P_{1} - \ell_{1}^{2} = \alpha_{1} P_{2} - g_{1}^{2} = \beta_{1} P_{3} - h_{1}^{2} \\
        R_{2} &= P_{1} - \ell_{2}^{2} = \alpha_{2} P_{2} - g_{2}^{2} = \beta_{2} P_{3} - h_{2}^{2}.
    \end{align*}
    
    By \cref{lem:twosquare} each of the linear forms $\ell_{1}, \ell_{2}, g_{1}$ and $g_{2}$ is contained in $V(P_{1}, P_{2})$.
    Further, we have $\beta_1 P_{3} = P_{1} - \ell_{1}^{2} + h_{1}^{2}$, whence $P_{3} \in \bC[\lin{P_{1}}, V(P_{1}, P_{2}), h_{1}]$.
    It therefore suffices to show that $h_{1} \in \bC[\lin{P_{1}}, V(P_{1}, P_{2})]$.
    We perform a case analysis based on the coefficients $\beta_{i}$.
    
    Suppose first that $\beta_{1} = \beta_{2} = \beta$.
    We then have $P_{1} - \beta P_{3} = \ell_1^2 - h_1^2 = \ell_2^2 - h_2^2$, which implies 
    $$  (\ell_1 - \ell_2)(\ell_1 + \ell_2) = (h_1 - h_2)(h_1 + h_2), $$
    which implies that $h_1, h_2 \in \Cspan{\ell_1, \ell_2}$, proving the result in this case.
    
    Suppose then that $\beta_1 \neq \beta_2$.
    We have
    $$ (\beta_2 - \beta_1) P_1 + \beta_1 \ell_2^2 - \beta_2 \ell_1^2 = h_1^2 - h_2^2 .$$
    Note that in this case, we must have $\Rank(P_1)\leq 4$. Therefore, $\lin{P_1}=\linold{P_1}$. 
    Now $h_1^2-h_2^2=(h_1+h_2)(h_1-h_2)$ is a minimal representation. 
    By \cref{lem:linold}
    $$\Cspan{h_1, h_2} \subset \lin{P_1} + \Cspan{\ell_1, \ell_2}.$$
    This in particular shows that $h_{1} \in \bC[\lin{P_{1}}, V(P_{1}, P_{2})]$.
\end{proof}

With the lemma above, we can show that $\Fsquare$ is contained in a small subalgebra.

\begin{lemma}\label{lem:controlFLsquare}
There exist vector spaces $V_{1} \subseteq S_{1}$ and $V_{2} \subseteq S_{2}$ of dimensions at most $O(1/\delta^2)$ such that $\Fsquare \in \bC[V_1, V_2]$.
\end{lemma}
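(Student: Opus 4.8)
The plan is to run a greedy/iterative argument on $\Fsquare$, using \cref{lem:FSquaretripleintersection} as the key combinatorial engine. Write $\Fsquare = \{P_1, \dots, P_b\}$. Recall that for any $Q \in \Fsquare$ we have $\abs{\Fsquare[Q]} \geq (\delta - 3\varepsilon) m = 7\varepsilon m$ (using $\varepsilon = \delta/10$), and each $\Fsquare[Q] \subseteq \cF_2$. The first step is to pick a small \emph{core} of forms from $\Fsquare$ whose pairwise ``square spaces'' $V(P_i, P_j)$ already capture almost everything. Concretely, I would greedily select $P_{i_1}, P_{i_2}, \dots$ as follows: having chosen $P_{i_1}, \dots, P_{i_k}$, look at the set $T_k := \Fsquare[P_{i_1}] \cup \cdots \cup \Fsquare[P_{i_k}]$; if $T_k$ already meets $\Fsquare[Q]$ in at least $2$ elements for \emph{every} remaining $Q \in \Fsquare$, stop; otherwise there is some $Q$ with $\abs{T_k \cap \Fsquare[Q]} \leq 1$, and I add such a $Q$ to the core. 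A double-counting argument shows this process stops after $O(1/\varepsilon) = O(1/\delta)$ steps: each newly added $Q$ contributes $\geq 7\varepsilon m - 1$ ``fresh'' elements to $T_k$ (elements of $\Fsquare[Q]$ not previously in $T_k$; at most one element of $\Fsquare[Q]$ was already counted), and $T_k \subseteq \cF_2$ has size $\leq m$, so at most $\approx 1/(7\varepsilon)$ steps are possible.

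The second step extracts the algebra. Let $P_{i_1}, \dots, P_{i_t}$ with $t = O(1/\delta)$ be the core, and set $V_2 := \Cspan{P_{i_1}, \dots, P_{i_t}}$ and $V_1 := \sum_{a} \lin{P_{i_a}} + \sum_{a < b} V(P_{i_a}, P_{i_b})$. Since each $\lin{P_{i_a}}$ has dimension $\leq 10$ and each $V(P_{i_a}, P_{i_b})$ has dimension $\leq 4$ by \cref{lem:twosquare}, we get $\dim V_1 = O(t^2) = O(1/\delta^2)$ and $\dim V_2 = t = O(1/\delta)$. Now take any $Q \in \Fsquare$. If $Q$ is one of the core forms, it is trivially in $\bC[V_1, V_2]$. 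Otherwise, by the stopping condition of the greedy process there exist two distinct forms $R_1, R_2 \in \Fsquare[Q]$ that both lie in $\bigcup_a \Fsquare[P_{i_a}]$. A further pigeonhole — $R_1, R_2$ each lie in some $\Fsquare[P_{i_a}]$, and there are only $t$ core forms — gives a single core index $a$ with $\abs{\Fsquare[Q] \cap \Fsquare[P_{i_a}]} \geq \ldots$; but I actually need $R_1, R_2$ in the \emph{same} $\Fsquare[P_{i_a}]$, so I should be slightly more careful and instead run the greedy step to guarantee that $T_k$ meets $\Fsquare[Q]$ in $2$ elements coming from one core member. The clean way: strengthen the stopping criterion to ``for every remaining $Q$, \emph{some} core form $P_{i_a}$ has $\abs{\Fsquare[P_{i_a}] \cap \Fsquare[Q]} \geq 2$''; when it fails, every core form meets $\Fsquare[Q]$ in $\leq 1$ element, so adding $Q$ still brings $\geq 7\varepsilon m - t$ fresh elements (now subtracting at most one per core form), which is still $\Omega(\varepsilon m)$ once $t = O(1/\varepsilon)$ is small compared to $\varepsilon m$ — here we use $m > 1/\delta^{54}$. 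Then for the final $Q$, we have some core $P_{i_a}$ with $\abs{\Fsquare[Q] \cap \Fsquare[P_{i_a}] \cap \Fsquare[P_{i_a}]} \geq 2$, i.e. $\abs{\Fsquare[Q] \cap \Fsquare[P_{i_a}]} \geq 2$; picking any third core form $P_{i_c}$ and applying \cref{lem:FSquaretripleintersection} with $(P_1, P_2, P_3) := (P_{i_a}, P_{i_c}, Q)$ — wait, the lemma needs the triple intersection of size $2$. The correct application: \cref{lem:FSquaretripleintersection} needs $\abs{\Fsquare[P_1]\cap\Fsquare[P_2]\cap\Fsquare[P_3]}\geq 2$; to arrange this I should have the greedy process ensure a \emph{pair} of core forms $P_{i_a}, P_{i_b}$ with $\abs{\Fsquare[P_{i_a}] \cap \Fsquare[P_{i_b}] \cap \Fsquare[Q]} \geq 2$. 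This is achievable by seeding the core with two forms and thereafter measuring coverage of $\Fsquare[P_{i_1}] \cap \Fsquare[P_{i_2}]$-type intersections; alternatively, first include all $O(1/\delta)$ core forms, then observe that by averaging over the $\geq 7\varepsilon m$ elements of $\Fsquare[Q]$, each of which lies in $\Fsquare[P_{i_a}]$ for appropriate $a$'s, one finds two core forms sharing two common elements with $\Fsquare[Q]$. Once \cref{lem:FSquaretripleintersection} applies with some triple $(P_{i_a}, P_{i_b}, Q)$, it yields $Q \in \bC[\lin{P_{i_a}}, V(P_{i_a}, P_{i_b})] \subseteq \bC[V_1, V_2]$, which is exactly what we want.

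The main obstacle is the bookkeeping in the greedy/pigeonhole argument: making sure the stopping condition is phrased so that when it fails we make genuine progress (shrink the uncovered set, or grow $T_k$ by $\Omega(\varepsilon m)$), \emph{and} so that when it holds we can invoke \cref{lem:FSquaretripleintersection} with a valid triple (two core forms plus $Q$) sharing $\geq 2$ common SG-partners. The numerical slack needed — that $t = O(1/\varepsilon)$ is negligible next to $\varepsilon m$ — is exactly why the hypothesis $m > 1/\delta^{54}$ (or any polynomially large lower bound on $m$) is invoked. Everything else is routine: dimension counts from \cref{lem:twosquare} and \cref{def:lin}, and the fact that $\bC[V_1, V_2]$ is closed under the span manipulations appearing in the proof of \cref{lem:FSquaretripleintersection}.
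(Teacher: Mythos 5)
Your high-level strategy is the same as the paper's: pick a small ``core'' of forms and their pairwise spaces $V(\cdot,\cdot)$, then invoke \cref{lem:FSquaretripleintersection} to show every remaining $Q\in\Fsquare$ falls into the algebra. But the crucial combinatorial step — arranging, for every $Q$ outside the core, a \emph{pair} $P_{i_a}, P_{i_b}$ of core forms with $\abs{\Fsquare[P_{i_a}]\cap\Fsquare[P_{i_b}]\cap\Fsquare[Q]}\geq 2$ — is not actually achieved by either of the two fixes you sketch, and this is a genuine gap rather than bookkeeping.

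Your first fix (strengthening the stopping criterion to ``some core form $P_{i_a}$ has $\abs{\Fsquare[P_{i_a}]\cap\Fsquare[Q]}\geq 2$'') only gives a \emph{pairwise} intersection; \cref{lem:FSquaretripleintersection} genuinely needs a triple intersection, and throwing in an arbitrary third core form $P_{i_c}$ gives you no control over $\Fsquare[P_{i_c}]\cap\Fsquare[Q]$. Your second fix (``averaging over the $\geq 7\varepsilon m$ elements of $\Fsquare[Q]$, each of which lies in $\Fsquare[P_{i_a}]$ for appropriate $a$'') silently assumes the core's union covers \emph{all} of $\Fsquare[Q]$, which the greedy process does not guarantee (it only guarantees covering a constant fraction, or hitting a few elements, depending on how you phrase the stopping rule); and even granting full coverage, pigeonholing over $t$ cores yields one large pairwise intersection, still not a large triple intersection. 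What is actually needed is a \emph{two-level} greedy: first pick a small $\cH$ so that every $\Fsquare[Q]$ has large intersection with some $\Fsquare[P_i]$, $P_i\in\cH$; then, for each $P_i\in\cH$, run the greedy again \emph{inside} $\Fsquare[P_i]$ over the sets $\Fsquare[P_i]\cap\Fsquare[Q]$ to pick a sub-core $\cH_i$; the union bound then gives, for every $Q$ outside both cores, a pair $(P_i, Q_{ij})$ with large triple intersection $\abs{\Fsquare[P_i]\cap\Fsquare[Q_{ij}]\cap\Fsquare[Q]}\geq \varepsilon^3 m/4^3\geq 2$ (using $m>1/\delta^{54}$). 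This is precisely the paper's Lemma~\ref{lemma:double-cover} (``double covering lemma''), which you gesture at (``seeding the core with two forms and thereafter measuring coverage of $\Fsquare[P_{i_1}]\cap\Fsquare[P_{i_2}]$-type intersections'') but do not correctly carry out. With it in hand, the rest of your argument — defining $V_1,V_2$ from $\Lin$ and $V(\cdot,\cdot)$ of the core pairs and applying \cref{lem:FSquaretripleintersection} — is sound and the dimension count $O(1/\delta^2)$ is correct.
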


\begin{proof}
    Let $\Fsquare := \{F_1, \ldots, F_t\}$.
    Set $\cU = \cF_{2}$ and $S_i = \Fsquare[F_i]$ for $i \in [t]$.
    In this case, we have $|\cU| \leq m$, each $|\Fsquare[F_i]| \geq \varepsilon m$, and therefore we can apply \cref{lemma:double-cover} to these sets with parameter $\varepsilon$ to obtain a collection of subsets of $\Fsquare$ given by $\cH$, $\cH_1, \ldots, \cH_r$, where $r := |\cH| \leq 2/\varepsilon$ and $r_i := |\cH_i| \leq 8/\varepsilon$.
    
    Letting $\cH := \{P_1, \ldots, P_r\}$, and $\cH_i := \{ Q_{i1}, \ldots, Q_{ir_i} \}$, we define our vector space $V$ as
    \begin{align*}
        V := \sum_{i=1}^r \br{ \lin{P_i} + \sum_{j=1}^{r_i} V(P_{i}, Q_{ij}) }
    \end{align*}
    Define $V_{1}$ and $V_{2}$ to be $V \cap S_{1}$ and $V \cap S_{2}$ respectively.
    The dimension of $V$ is $\bigO{1 / \varepsilon^{2}}$, since $\lin{P}, \lin{Q}$ and $V(P_{i}, Q_{ij})$ each have dimension $\leq 8$.
    
    By construction, every $P_{i}$ and $Q_{ij}$ is contained in the algebra $\bC\bs{V}$.
    Suppose $R \in \Fsquare$ is not in $\cH$ or $\cH_{i}$.
    By \cref{lemma:double-cover}, there are $i, j$ such that 
    $$\abs{\Fsquare[R] \cap \Fsquare[P_{i}] \cap \Fsquare[Q_{ij}]} \geq \epsilon^{3} m / 4^{3} \geq 2$$
    Thus, \cref{lem:FSquaretripleintersection} implies that $R \in \bC[\lin{P_{i}}, V(P_{i}, Q_{ij})]$, and therefore in $\bC[V]$.
\end{proof}

\subsection[Controlling F-linear]{Controlling $\Flinear$}

The following lemma, proved in \cite[Claim 4.4]{PS20a}, shows the existence of a small ideal generated by linear forms containing $\Flinear$.
\begin{proposition}\label{proposition:flinear-ps20}
    There exists $V \subset S_1$ such that $\Flinear \subseteq \ideal{V}$ and $\dim V = \bigO{1 / \delta}$.
\end{proposition}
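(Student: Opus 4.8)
To prove \cref{proposition:flinear-ps20} the plan is to construct $V$ by an iterative procedure that, at each step, uses \cref{theorem: radical structure} and \cref{proposition:xy-primary} to absorb into $\ideal{V}$ not a single quadratic of $\Flinear$ but an entire $\varepsilon m$-sized family at once. The structural input I would isolate first is the claim that \emph{for every $Q\in\Flinear$ one has $Q\in\ideal{\lin{Q}}$, and moreover $Q'\in\ideal{\lin{Q}}$ for every $Q'\in\Flinear[Q]$.} Indeed, since $Q\in\Flinear$ the set $\Flinear[Q]$ is non-empty, so there are linearly independent $x,y\in S_1$ with $Q\in\ideal{x,y}$; hence $\Rank(Q)\le 2$, $\dim\lin{Q}\le 4$ by \cref{lem:linold}, and $Q\in\ideal{\lin{Q}}$ since $Q$ is a positive-degree form in $\bC[\lin{Q}]$. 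Fix $Q'\in\Flinear[Q]$. The ideal $\ideal{Q,Q'}$ is not radical (otherwise the third configuration form witnessing $Q'\in\cF_2(Q)$ would lie in $\mathrm{span}(Q,Q')$, so $Q'\in\Fspan[Q]$), and $\mathrm{span}(Q,Q')$ contains no square of a linear form (as $Q'\notin\Fsquare[Q]$). If $\lin{Q}=\lin{Q'}$ then $Q'\in\bC[\lin{Q}]$ and we conclude as before; otherwise \cref{proposition:xy-primary} gives, after rescaling, $Q=ax+y^2$ and $Q'=bx+y^2$ with $a,b,x,y$ linearly independent, and since $ax+y^2$ is a minimal representation of $Q$ we get $x,y\in\lin{Q}$, hence $Q'=bx+y^2\in\ideal{x,y}\subseteq\ideal{\lin{Q}}$.

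Given this, the construction runs as follows: put $V^{(0)}:=0$; given $V^{(k)}\subseteq S_1$, stop if $\Flinear\subseteq\ideal{V^{(k)}}$, and otherwise pick any $Q^{(k)}\in\Flinear\setminus\ideal{V^{(k)}}$ and let $V^{(k+1)}:=V^{(k)}+L^{(k)}$, where $L^{(k)}\subseteq S_1$ is a lift of $(\lin{Q^{(k)}}+V^{(k)})/V^{(k)}$ --- so $\lin{Q^{(k)}}\subseteq V^{(k+1)}$ and $\dim L^{(k)}\le 4$, with $\dim L^{(k)}\le 2$ when $\lin{Q^{(k)}}\cap V^{(k)}\neq 0$. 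By the structural claim, step $k$ forces all of $\{Q^{(k)}\}\cup\Flinear[Q^{(k)}]$, a set of at least $\varepsilon m$ elements of $\cF_2$, into $\ideal{V^{(k+1)}}$. If $t$ is the number of steps then $\Flinear\subseteq\ideal{V^{(t)}}$ and $\dim V^{(t)}\le 4t$, so the whole statement reduces to the bound $t=O(1/\varepsilon)=O(1/\delta)$.

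Bounding $t$ is the crux, and the main obstacle. I would split the steps according to whether $\lin{Q^{(k)}}\cap V^{(k)}=0$ (call these \emph{fresh}) or not. For a fresh step $\bC[\lin{Q^{(k)}}]\to S/\ideal{V^{(k)}}$ is injective, and a short computation in the polynomial ring $S/\ideal{V^{(k)}}$ --- using $Q'=bx+y^2$ with $x,y$ linearly independent modulo $V^{(k)}$, respectively $Q'\in\bC[\lin{Q^{(k)}}]$ --- shows that \emph{no} member of $\{Q^{(k)}\}\cup\Flinear[Q^{(k)}]$ was already in $\ideal{V^{(k)}}$; since the sets of quadratics newly absorbed at distinct steps are disjoint subsets of $\cF_2$, there are at most $1/\varepsilon$ fresh steps, each adding at most $4$ to $\dim V$. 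The genuinely delicate part is the non-fresh steps: there $\overline{Q^{(k)}}$ has rank at most $2$ in $S/\ideal{V^{(k)}}$, hence factors into linear forms, and the partners of $Q^{(k)}$ may already be absorbed, so the disjointness count does not directly apply. Here I would follow \cite[Claim 4.4]{PS20a}: a non-fresh step either still absorbs $\Omega(\varepsilon m)$ new quadratics --- so again there are few of them --- or else the unique factorization of $\overline{Q^{(k)}}$ pins the linear forms $x_{Q'},y_{Q'}$ arising from its partners $Q'$ into a subspace that a bounded number of extra linear forms already spans, which lets one absorb all remaining non-fresh quadratics of $\Flinear$ simultaneously. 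Making this last dichotomy quantitative is the only real work; the rest is the bookkeeping above.
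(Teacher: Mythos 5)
The paper does not actually prove this proposition from scratch; it simply cites \cite[Claim 4.4]{PS20a}, where the entire iterative argument (and in particular the crux you call the ``dichotomy'') is carried out. Your proposal attempts a genuine reconstruction, and the structural preamble is correct and a nice use of the paper's machinery: for $Q\in\Flinear$ one indeed has $\Rank Q = 2$, so $\lin{Q}=\linold{Q}$ has dimension at most $4$, and for $Q'\in\Flinear[Q]$ the ideal $(Q,Q')$ is not radical (else a third form in $\radideal{Q,Q'}$ would lie in $\Cspan{Q,Q'}$, forcing $Q'\in\Fspan[Q]$), has no square in its span, so either $\lin{Q'}=\lin{Q}$, or \cref{proposition:xy-primary} gives $Q=ax+y^{2}$, $Q'=bx+y^{2}$; either way $Q'\in\ideal{\lin{Q}}$. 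The fresh-step bookkeeping is also sound: when $\lin{Q^{(k)}}\cap V^{(k)}=0$, the reduction $S\to S/\ideal{V^{(k)}}$ is a polynomial-ring quotient injective on $\bC[\lin{Q^{(k)}}]$, and the representations $bx+y^{2}$ with $x,y$ surviving the quotient show none of the $\geq\varepsilon m$ partners were absorbed earlier, so at most $1/\varepsilon$ fresh steps occur.

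The gap is exactly where you flag it, and it is not a cosmetic one. Nothing in your construction controls the number of non-fresh steps: a non-fresh $Q^{(k)}$ might have all of $\Flinear[Q^{(k)}]$ already inside $\ideal{V^{(k)}}$, in which case the step adds up to two new linear forms to $V$ while absorbing essentially only $Q^{(k)}$ itself. Without a further argument, the iteration could run for $\Theta(|\Flinear|)$ steps and produce $\dim V=\Theta(m)$, destroying the statement. The dichotomy you sketch at the end --- either the step still absorbs $\Omega(\varepsilon m)$ fresh quadratics or the factorization of $\overline{Q^{(k)}}$ forces all remaining members of $\Flinear$ into a bounded-dimensional augmentation --- is precisely the content of \cite[Claim 4.4]{PS20a}, and asserting it without proof leaves the proposition unproven. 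In particular, the second horn of the dichotomy requires analyzing how the reducible quadratic $\overline{Q^{(k)}}\in S/\ideal{V^{(k)}}$ constrains the pairs $(x_{Q'},y_{Q'})$ of all its partners uniformly, not just the partners not yet absorbed, and this is where the real work lies. As written the proposal is a correct reduction plus an acknowledged IOU on the only quantitative step.
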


We strengthen the above lemma in two key ways.
First, we show that every form in $\Flinear$ is \emph{univariate} over the algebra of any \sprime $W$ such that $\Flinear \subset (W)$.
If we further assume that every form in $\cF$ is univariate over $\bC[W]$, then $\Flinear \subset \bC[W]$.

\begin{lemma}
\label{lem:linstructure}
If $W = W_{1} + W_{2}$ is any $(17, \varepsilon/2)$ clean vector space such that $\Flinear \subseteq \ideal{W}$ then every form in $\Flinear$ that is not in $\bC\bs{W}$ is of the form $ax + y^{2}$, with $x, y \in W_{1}$.
In particular $Q$ is univariate over $W$ and $\linvb{W}{Q}$ is spanned by $\overline{a}$. 
\end{lemma}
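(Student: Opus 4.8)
The plan is to apply \cref{theorem: radical structure} to the pairs $(Q, F)$ witnessing that $Q \in \Flinear$, and then use the structural \cref{proposition:xy-primary} and \cref{proposition:radical-linear} together with the cleanness of $W$ to pin down the shape of $Q$. Fix $Q \in \Flinear$ with $Q \notin \bC[W]$. By definition of $\Flinear$ there are at least $\varepsilon m$ forms $F \in \cF_2$ with $F \in \Flinear[Q]$; in particular for each such $F$ the pair $(Q,F)$ falls into case $2.(b)$, $2.(c)$, or $3.(c)$ of \cref{theorem: radical structure} (the span case and square case having been excluded by the definitions of $\Fspan[Q]$ and $\Fsquare[Q]$), so there is a linear prime $(x,y) \supseteq (Q,F)$. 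In particular $Q \in (x,y)$, so $Q = ax + y^2$ type behaviour is forced once we know $\Rank Q = 2$ — more precisely $Q$ itself lies in the ideal $(x,y)$, meaning $Q$ has rank at most $2$ and $\lin{Q} \subseteq \Cspan{x,y, \text{third form}}$. The first step is therefore to record that every $Q \in \Flinear$ has $\Rank Q \le 2$ and that $\dim \lin{Q} \le 3$, and to split into the (easy) subcase where $Q$ is reducible — then $Q = xy$ for independent $x,y$, but $xy \in \bC[W]$ would follow once we show $x,y \in W_1$, so we must handle this — and the main subcase where $Q$ is irreducible of rank $2$, so $Q = x\ell + y^2$ for a suitable coordinate choice with $x,y,\ell$ spanning $\lin{Q}$.

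Next I would upgrade ``$Q$ lies in many linear primes'' to ``two of the relevant linear forms lie in $W_1$''. Since $\Flinear \subseteq (W)$ and $(W)$ is generated by $W_1 + W_2$ with $W$ being $17$-robust, and $Q$ has rank $\le 2 < 17$, \cref{rem:lowrankquadraticinrobust} (applied to $P \in (V)$) gives that $Q \in (W_1)$ — i.e.\ $Q$ already lies in the ideal generated by the \emph{linear} part of $W$. Now $Q = gx' + hy'$ for some $g,h \in S_1$ and $x', y' \in W_1$; combining with the minimal representation $Q = x\ell + y^2$ and \cref{lem:linold}, we get $\linold{Q} = \lin{Q} \subseteq \Cspan{W_1, \text{two linear forms}}$. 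To get both defining linear forms \emph{inside} $W_1$ rather than merely close to $W_1$, I would use cleanness: if only one of the coordinates of $\lin{Q}$ were in $W_1$, then $Q$ would be $1$- or $2$-close to $W$ with $\linvb{W}{Q}$ of dimension $1$ or $2$; the point is that the same two-dimensional extension $U_1$ of $W_1$ works for all such $Q$ simultaneously only if few of them exist, which is exactly what condition $2$ of \cref{def:clean-prime} (with $\varepsilon/2$) forbids — there cannot be $\ge \varepsilon m / 2$ many $Q$'s all dropping $\dim \linvb{\cdot}{Q}$ under adjoining a fixed small $U_1$. So I need a combinatorial pigeonhole to produce a \emph{single} small $U_1 \subseteq S_1$ that simultaneously works for an $\varepsilon/2$-fraction of the bad $Q$'s, contradicting cleanness unless in fact $\dim \linvb{W}{Q} \le 1$ for all but $< \varepsilon m/2$ of them; combined with \cref{proposition:xy-primary}, which forces $Q = ax + y^2$ once we know $x,y \in W_1$, this yields the claimed normal form.

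The cleanest route to the pigeonhole is this: for $Q \in \Flinear$ with $Q \notin \bC[W]$, write $Q = ax + y^2$ where $(x,y)$ is the linear minimal prime from \cref{proposition:xy-primary}. We know $Q \in (W_1)$ so $y^2 = Q - ax \in (W_1, x)$, forcing $y \in \Cspan{W_1, x}$; hence $Q$ is $1$- or $2$-close to $W$ and $\linvb{W}{Q} \subseteq \Cspan{\bar x, \bar y} = \Cspan{\bar x}$, already of dimension $\le 1$. So actually $Q$ is \emph{automatically} univariate over $W$ with $z_Q = \bar x$, and I only need to argue $x, y \in W_1$ when $Q \in \Flinear$ satisfies the additional hypothesis ``all of $\cF$ is univariate over $\bC[W]$'' — wait, re-reading the statement, the conclusion is unconditional (``every form in $\Flinear$ not in $\bC[W]$ is of the form $ax+y^2$ with $x,y \in W_1$''), so I must genuinely extract $x, y \in W_1$. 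For this I would argue: since $Q \in (W_1)$ and $Q = ax + y^2$ is a rank-$2$ quadratic with $\lin Q = \Cspan{x,y,a}$ (three-dimensional, as $a \notin \Cspan{x,y}$ by irreducibility), \cref{lem:linold} applied to the representation $Q = ax + y\cdot y$ gives $\Cspan{a,x,y} = \linold{Q} \subseteq W_1 + (\text{span of the }S_1\text{-coefficients in }Q\in(W_1))$ — but $Q \in (W_1)$ means $Q = \sum w_i \ell_i$ with $w_i \in W_1$, so $\linold Q \subseteq \Cspan{w_i, \ell_i}$... this only gives $x,y \in \Cspan{W_1, \ell_i}$. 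The genuine content — and the \textbf{main obstacle} — is ruling out that $x$ or $y$ lies outside $W_1$ using cleanness: I expect to need the refinement that if $\ge \varepsilon m/2$ forms $Q \in \Flinear$ had, say, $x \notin W_1$ with the $x$'s spanning more than an $8$-dimensional space, one gets a contradiction with \cref{proposition:flinear-ps20} (which bounds the span of all of $\Flinear$ by $O(1/\delta)$) by a counting argument, and if they span $\le 8$ dimensions, $U_1 := $ that span violates condition $2$ of cleanness directly. I would carry out exactly this dichotomy — span-too-big contradicts the $O(1/\delta)$ bound after choosing $\varepsilon$ small relative to the implied constant, span-small contradicts $(17,\varepsilon/2)$-cleanness — to conclude $x, y \in W_1$ for every $Q \in \Flinear \setminus \bC[W]$, and then the ``univariate'' and ``$\linvb{W}{Q} = \Cspan{\bar a}$'' assertions are immediate from the normal form $Q = ax + y^2$ together with the definition of $\linv{W}{Q}$.
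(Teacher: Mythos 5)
There are two genuine gaps, both concentrated in the later part of your argument.

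\textbf{Gap 1: the invocation of \cref{proposition:xy-primary} is not justified.}
To write $Q = ax + y^2$ with $(x,y)$ a minimal prime and $a, x, y$ (together with some $b$) \emph{linearly independent}, Proposition~\ref{proposition:xy-primary} requires you to first produce a companion form $P \in \Flinear[Q]$ with $\lin{P} \neq \lin{Q}$. You never produce such a $P$. In fact, this is where the paper makes its only use of cleanness: if \emph{every} $P \in \Flinear[Q]$ had $\lin{P} = \lin{Q}$, then adjoining $\lin{Q}$ (a fixed space of dimension $\le 3$) to $W_1$ would drop $\dim\linvb{\cdot}{P}$ for all $\ge \varepsilon m$ of these $P$ simultaneously, contradicting the $(17,\varepsilon/2)$-cleanness of $W$. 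That single clean-condition application is what lets xy-primary enter the argument, and its absence leaves your normal form unjustified. This also has a downstream consequence: without the linear independence of $a,b,x,y$ that xy-primary supplies, the later computation degenerates (e.g.\ $a$ could lie in $\Cspan{x,y}$, breaking the conclusion $\linvb{W}{Q}=\Cspan{\bar a}$).

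\textbf{Gap 2: the counting dichotomy cannot give the unconditional conclusion, and the paper uses a completely different (and simpler) argument.}
You correctly identify the obstacle — extracting $x, y \in W_1$ — but the proposed resolution is the wrong kind of argument for this statement. A pigeonhole over $\Flinear$ plus cleanness can only rule out that \emph{many} $Q$ have $x \notin W_1$; it does not touch any single $Q$, while the lemma is a statement about \emph{every} $Q \in \Flinear \setminus \bC[W]$. Moreover the dichotomy itself is shaky: \cref{proposition:flinear-ps20} is an existence result for \emph{some} ambient $V$ of dimension $O(1/\delta)$ containing $\Flinear$ in its ideal, and it bears no a priori relation to $W_1$; no counting contradiction falls out of ``the $x$'s span more than $8$ dimensions.'' The paper's route is more direct: once you have $Q = y^2 + ax \in (W_1)$ from Claim~\ref{rem:lowrankquadraticinrobust}, with $a, b, x, y$ independent from xy-primary, just read the equation $\bar y^2 + \bar a \bar x = 0$ in the polynomial ring $S/(W_1)$. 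Either $y \in W_1$, in which case $\bar a \bar x = 0$ forces $x \in W_1$ or $a \in W_1$ (and if $a \in W_1$ you swap the roles of $a$ and $x$); or $y \notin W_1$, in which case factoriality forces $\bar a$, $\bar x$ to be scalar multiples of $\bar y$, and a linear change of variables $y \mapsto y + \gamma x$ (preserving the form $Q = a'x + y'^2$) reduces to the first case. No counting is needed here — only the robustness statement $Q \in (W_1)$, the independence from xy-primary, and factoriality of $S/(W_1)$.

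A smaller but real error along the way: your intermediate claim that $y \in \Cspan{W_1, x}$ already forces $\linvb{W}{Q} \subseteq \Cspan{\bar x, \bar y} = \Cspan{\bar x}$ overlooks the contribution of $a$. With $\lin{Q} = \Cspan{a,x,y}$, one only gets $\linvb{W}{Q} = \Cspan{\bar a, \bar x}$, which has dimension $2$ unless $\bar a \in \Cspan{\bar x}$ — and deciding whether that holds is precisely the content of the lemma, not a freebie. (Note also that the lemma's conclusion is $\linvb{W}{Q} = \Cspan{\bar a}$, not $\Cspan{\bar x}$; once $x, y \in W_1$ the only surviving generator of $\lin{Q}$ mod $W_1$ is $\bar a$.)
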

\begin{proof}
Let $Q$ be a form in $\Flinear \setminus \bC\bs{W}$.
Since $W$ is $17$-robust, and since $Q$ has rank $2$, we have $Q \in \ideal{W_{1}}$ by \cref{rem:lowrankquadraticinrobust}.
We also have $\lin{Q} \not\subset W_{1}$, that is $\dim \br{\linvb{Q}{W}} \geq 1$.
If every $P \in \Flinear[Q]$ is such that $\lin{Q} = \lin{P}$, then the clean condition is violated, since $\dim \br{\linvb{W}{P}} \geq 1$ for every such $P$, while $\dim \br{\linvb{Z}{P}} = 0$ where $Z = \lin{Q} + W$.
Hence, there exists $P \in \Flinear[Q]$ such that $\lin{P} \neq \lin{Q}$, which by \cref{proposition:xy-primary}, implies $Q = y^{2} + ax$ and $P = y^{2} + bx$ for independent linear forms $a, b, x, y$.

Consider now the equation $y^{2} + ax = 0$ in the quotient ring $S / \ideal{W_{1}}$.
Suppose $y \in W_{1}$.
In this case, $ax = 0$ in $S / \ideal{W_{1}} \then x \in W_{1}$ or $a \in W_{1}$.
In either case, after potentially swapping the roles of $a$ and $x$ we have the required statement.
Suppose now that $y \not \in W_{1}$.
The equation $y^{2} + ax = 0$ implies that $\gamma y =  a$ in $S_{1} / W_{1}$ for some nonzero constant $\gamma$.
We then have $y^{2} + ax = y(y + \gamma x) = 0$ in $S / \ideal{W_{1}}$.
Since $y \not \in W_{1}$, we must have $y + \gamma x \in W_{1}$.
We can write $Q = (y + \gamma x)^{2} + x (a - 2 \gamma y - \gamma x^{2})$.
After changing $y, x, a$, this is equivalent to the case of $y \in W_{1}$, and the required result holds.
\end{proof}

\begin{claim}\label{claim:flinear-clean-dependence}
Let $U \subset S_1$ be a vector space and $Q = ax + y^2$ where $x, y \in U$, $a \not\in U$.
Suppose $P_{1}, P_{2}$ are two distinct forms in $\Flinear[Q] \cap \bC\bs{U}$, and suppose $R_{1} \in \radideal{Q, P_{1}} \cap \cF_{2}$ and $R_{2} \in \radideal{Q, P_{2}} \cap \cF_{2}$.
Then $R_{1}$ and $R_{2}$ are distinct elements of $\cF_{2}$.
Further, $R_{1}, R_{2} \not \in \bC\bs{U}$, and $R_{1}, R_{2} \in \bC\bs{U + \lin{Q}}$.
\end{claim}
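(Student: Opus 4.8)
## Proof plan for Claim~\ref{claim:flinear-clean-dependence}

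The plan is to exploit the rigidity of the equation $Q = ax + y^2$ with $x,y \in U$, $a \notin U$, together with Proposition~\ref{proposition:radical-linear} (and its companion Proposition~\ref{proposition:xy-primary}), which tells us exactly what $\radideal{Q,P_i}$ looks like when $\lin{Q}\neq\lin{P_i}$. First I would observe that since $P_i \in \Flinear[Q]$, the pair $(Q,P_i)$ satisfies case 3(c) of Theorem~\ref{theorem: radical structure}; since $P_i \in \bC[U]$ while $Q$ involves the extra variable $a \notin U$, the forms $Q$ and $P_i$ cannot lie in the same algebra $\bC[\lin{Q}']$, so $\lin{Q} \neq \lin{P_i}$. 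Hence Proposition~\ref{proposition:xy-primary} applies: after normalizing, $Q = ax + y^2$ and $P_i = b_i x + y^2$ with $a,b_i,x,y$ linearly independent (here one must check that the linear prime $(x,y)$ and the form $y^2$ are the \emph{same} for both $P_1$ and $P_2$, which follows because $Q$ is fixed and $(x,y^2)$ is \emph{the} $(x,y)$-primary component of $(Q,P_i)$ determined by $Q$ alone up to the analysis in that proposition). Then Proposition~\ref{proposition:radical-linear} gives $\radideal{Q,P_i} = (Q, P_i, y(a - b_i))$, and since $R_i \in \radideal{Q,P_i}\cap\cF_2$ is an \emph{irreducible} quadratic, degree considerations force $R_i \in \Cspan{Q, P_i, y(a-b_i)}$, i.e. $R_i = \alpha_i Q + \beta_i P_i + y(\gamma_i a + \delta_i' x + \dots)$ — more precisely $R_i$ is a $\bC$-linear combination of $Q$, $P_i$, and $y(a-b_i)$.

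The next step is to read off where $R_i$ lives. Since $Q \in \bC[U + \lin{Q}]$ (indeed $Q = ax+y^2$ with $x,y\in U$ and $a \in \lin{Q}$), $P_i \in \bC[U] \subseteq \bC[U+\lin{Q}]$, and $y(a-b_i)$ with $y, b_i \in U$ and $a \in \lin{Q}$ lies in $\bC[U+\lin{Q}]$, we immediately get $R_i \in \bC[U + \lin{Q}]$. To see $R_i \notin \bC[U]$: the coefficient of $a$ (the extra variable, modulo $U$) in $R_i$ is $\alpha_i x + \gamma_i y$ up to scalars coming from the three generators; the point is that $a$ appears in $R_i$ with a genuinely nonzero linear coefficient unless $R_i$ already collapses into $\bC[U]$. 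I would argue that if $R_i \in \bC[U]$, then working modulo $\ideal{U}$ (or better, in the polynomial ring $S/(U_1)$ after quotienting by enough of $U$) the relation $R_i = \alpha_i Q + \beta_i P_i + y(a-b_i)\cdot(\text{const})$ would express $0$ (the image of $R_i$) in terms of the images of $Q$ and the term involving $a$; since $a$ is a free variable not in $U$ and the $a$-linear part of $\alpha_i Q + \beta_i P_i + c_i y(a-b_i)$ is $(\alpha_i x + c_i y)a$, this vanishes only if $\alpha_i x + c_i y = 0$, i.e. $\alpha_i = c_i = 0$, forcing $R_i = \beta_i P_i$, contradicting pairwise independence of the forms in $\cF$ (since $R_i \in \cF_2$ and $R_i \ne P_i$ would have to be a scalar multiple). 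Hence $R_i \notin \bC[U]$.

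Finally, distinctness of $R_1$ and $R_2$: suppose $R_1 = R_2 =: R$ (as elements of $\cF_2$, i.e. up to the pairwise-independence convention $R_1 = \lambda R_2$). Then $R$ lies in both $\radideal{Q,P_1}$ and $\radideal{Q,P_2}$, so $R \in \Cspan{Q,P_1,y(a-b_1)} \cap \Cspan{Q,P_2,y(a-b_2)}$. Writing $R = \alpha_1 Q + \beta_1 P_1 + c_1 y(a-b_1) = \alpha_2 Q + \beta_2 P_2 + c_2 y(a - b_2)$ and comparing — in particular comparing the $a$-linear parts $(\alpha_1 x + c_1 y)a$ and $(\alpha_2 x + c_2 y) a$, and the $b_i$-parts (recall $b_1 \neq b_2$ since $P_1 \neq P_2$) — I expect to deduce $\beta_1 = \beta_2 = 0$ (the $P_i$ are the only source of the $b_i$ variables, which are distinct), and then $R = \alpha_1 Q + c_1 y(a - b_1)$ would have $b_1$ appearing, contradicting that the right side has no $b_1$ unless $c_1 = 0$, whence $R = \alpha_1 Q$ is reducible — contradicting irreducibility of $R \in \cF_2$. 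The main obstacle, and the part requiring the most care, is precisely this bookkeeping of which linear forms among $\{a, b_1, b_2, x, y, U\}$ are independent and tracking them through the linear combinations; one has to be careful that the normalizations of $P_1$ and $P_2$ to the form $b_i x + y^2$ use the \emph{same} $x$ and $y$ (coming from the fixed $Q$), which is where Proposition~\ref{proposition:xy-primary} applied with $Q$ held fixed does the real work, and that "distinct in $\cF_2$" is interpreted modulo scalars throughout.
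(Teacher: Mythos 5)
Your overall approach is the same as the paper's --- invoke Proposition~\ref{proposition:xy-primary} to normalize, Proposition~\ref{proposition:radical-linear} to read off $\radideal{Q,P_i}$, and then compare linear combinations by isolating the $a$-part. However, there are two genuine gaps that your write-up glosses over, and they are exactly the places where the paper's proof does its real work.

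The first gap is in the normalization step. You assert that the prime $(x,y)$ and the form $y^2$ are ``the same for both $P_1$ and $P_2$, \dots determined by $Q$ alone.'' This is not true: the $(x,y)$-primary component produced by Proposition~\ref{proposition:xy-primary} is a feature of the pair $(Q,P_i)$, not of $Q$ alone, and a priori the two pairs could normalize to different coordinate systems $(x_1,y_1)$ and $(x_2,y_2)$. The actual argument combines the structure of $Q$ with the constraint $P_i\in\bC[U]$ while $a\notin U$: since $P_i\in\bC[U]$ forces $x_i,y_i,b_i\in U$, one can reduce modulo $(x)$ and use factoriality of $S/(x)$ to show $x_i\in(x)$ and $y_i\in\Cspan{x,y}$, after which a linear change of variables aligns all three decompositions. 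Without this step, there is no reason the generator $y(a-b_i)$ in Proposition~\ref{proposition:radical-linear} uses the same $y$ as the given expression of $Q$, and the rest of the proof does not go through.

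The second gap is in the distinctness argument. You claim that comparing ``$b_i$-parts'' yields $\beta_1=\beta_2=0$, reasoning that ``the $P_i$ are the only source of the $b_i$ variables, which are distinct.'' This step implicitly assumes $b_1,b_2,x,y$ are linearly independent, but the hypotheses only give independence of $\{a,b_i,x,y\}$ for each $i$ separately; it is entirely possible that $b_2\in\Cspan{b_1,x,y}$ (e.g.\ $b_2=b_1+x$), in which case the ``$b_1$-part'' is not well-defined and your deduction fails. The paper's proof avoids this by first isolating the $a$-part to get $\alpha_1=\alpha_2$ and $c_1=c_2$, then reducing to an identity of the form $x\,u=y\,v$ with $u,v$ linear, and finishing via a coprimality/irreducibility argument that works regardless of whether $b_1,b_2,x,y$ are independent. (A minor additional slip: at the end of your distinctness argument you say $R=\alpha_1 Q$ is ``reducible''; the actual contradiction is that $R$ would then be a scalar multiple of $Q$, violating pairwise independence of $\cF$.) Both gaps need to be filled before this proposal would constitute a proof.
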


\begin{proof}
    Since $\lin{Q} \neq \lin{P_{i}}$ for $i \in \{1,2\}$, we can use \cref{proposition:xy-primary}, to write $Q = a_i x_i + y_i^{2}$ and $P_{i} = b_i x_i + y_i^{2}$ for independent linear forms $a_i, b_i, x_i, y_i$.
    We now show that after a linear change of coordinates we can make $x_i \in (x)$ and $y_i \in (y)$, for $i \in \{1,2\}$.
    
    Since $P_i \in \bC[U]$, we have $x_i, y_i, b_i \in U$.
    From $Q = xa + y^2 = x_i a_i + y_i^2$, we have $y^2 - y_i^2 \equiv x_i a_i \bmod (x)$.
    If $x_i \not\in (x)$, then by factoriality of $S/(x)$ we have $a_i \in (x, y , y_i) \subset U \then \bL(Q) \subset U$ which is a contradiction.
    Hence, we have that $x_i = \alpha x$ for some $\alpha \in \bC^*$.
    This implies $x(a - \alpha a_i) = y_i^2 - y^2$, and if $a \neq \alpha a_i$, factoriality implies that $y_i \in \Cspan{x, y}$.
    Moreover, we know that $y_i \not\in (x)$ since $Q$ is irreducible, so writing $y_i = \alpha_i x + \beta_i y$ we have $\beta_i \neq 0$ and hence after a change of coordinates we can map $y_i \mapsto y$, as we wanted.
    
    After scaling appropriately, we have $Q = ax + y^{2}$, and $P_{1} = bx + y^{2}$ and $P_{2} = cx + y^{2}$.
    By \cref{proposition:radical-linear}, $\radideal{P_{1}, Q} = (P_{1}, Q, y(a-b))$ and thus $R_{1} \in \radideal{P_{1}, Q} \then R_{1} = \alpha_1 Q + \beta_1 P_{1} + \delta_1 y(a - b)$.
    If $R_{1} \in \bC\bs{U}$ then $\alpha_1 ax + \delta_1 a y \in \bC\bs{U}$, which implies that $a (\alpha_1 x + \delta_1 y) \in \bC\bs{U}$.
    Since $x$ and $y$ are linearly independent, this implies $a \in U$ which is a contradiction, so $R_{1} \not \in \bC\bs{U}$.
    Similarly, we have $R_{2} = \alpha_{2} Q + \beta_{2} P_{2} + \delta_{2} y (a - c)$.
    This proves the last part of the claim.
    
    Suppose $R_{1} = \mu R_{2}$ for some $\mu \in \bC$.
    After rescaling $\alpha_{2}, \beta_{2}, \delta_{2}$ we have
    $$\alpha_1 Q + \beta_1 P_{1} + \delta_1 y(a - b) = \alpha_{2} Q + \beta_{2} P_{2} + \delta_{2} y (a - c).$$
    Plugging in the formulae of $Q, P_{1}, P_{2}$ and rearranging, we get $a \br{\br{\alpha_1 - \alpha_{2}} x + \br{\delta_1 - \delta_{2}}y} \in \bC\bs{U}$.
    Since $a \not \in U$, we must have $\br{\alpha_1 - \alpha_{2}} x + \br{\delta_1 - \delta_{2}}y = 0$, which by the independence of $x$ and $y$ implies $\alpha_1 = \alpha_{2}$ and $\delta_1 = \delta_{2}$.
    The equation therefore implies $\beta_1 P_{1} - \beta_{2} P_{2} = \delta_1 y(b - c)$.
    We also have $\beta_1 P_{1} - \beta_{2} P_{2} = (\beta_1 - \beta_{2})y^{2} + x (\beta_1 b - \beta_{2} c)$.
    Equating the two and rearranging, we get 
    $$x \br{\beta_1 b - \beta_{2} c} = y\br{\delta_1 \br{b - c} - \br{\beta_1 - \beta_{2}}y}.$$
    If $(\beta_1 b - \beta_{2} c) = \br{\delta_1 \br{b - c} - (\beta_1 - \beta_{2})y} = 0$ then $b = (\beta_{2} / \beta_1) c$ and $\beta_{2} \neq \beta_1$ (otherwise $P_{1} = P_{2}$), and $y \in (b)$, contradicting irreducibility of $P_{1}$.
    Otherwise, $y \in \Cspan{b, c}$ and $x \in \Cspan{b, c, y}$ which together imply that $x, y \in \Cspan{b, c}$.
    This implies that $P_{1}$ is bivariate, contradicting its irreducibility.
    Therefore, $R_{1} \neq \mu R_{2}$ for any $\mu$, that is, $R_{1}$ and $R_{2}$ are distinct elements of $\cF_{2}$.
\end{proof}

\begin{lemma}\label{lem:controlFLinear}
Let $W = W_{1} + W_{2}$ be a $(17, \varepsilon / 3)$ clean vector space such that $\Flinear \subseteq \ideal{W}$ and such that every form in $\cF$ is either in $\bC\bs{W}$ or univariate over $W$.
Then $\Flinear \subset \bC\bs{W_{1}}$.
\end{lemma}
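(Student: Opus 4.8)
I would prove the lemma by contradiction. Suppose $\Flinear \not\subseteq \bC[W_1]$. A form in $\Flinear$ has rank at most $2<17$, so by \cref{rem:lowrankquadraticinrobust} it lies in $\bC[W]$ iff it lies in $\bC[W_1]$; hence by \cref{lem:linstructure} (applicable since a $(17,\varepsilon/3)$-clean space is $(17,\varepsilon/2)$-clean) there is a form $Q\in\Flinear$ with $Q=ax+y^2$, $x,y\in W_1$, $a\notin W_1$, so that $\lin{Q}=\Cspan{a,x,y}$ and $\linvb{W}{Q}=\Cspan{\bar a}$. The plan is to contradict the cleanness of $W$: I will produce more than $\tfrac23\varepsilon m$ distinct forms of $\cF$ whose relative linear space over $W$ strictly drops when we enlarge $W$ by the one-dimensional space $U_1:=\Cspan{a}$ (which has $\dim U_1\le 8$). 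Since $Q\in\Flinear$ we have $\abs{\Flinear[Q]}\ge\varepsilon m$, and I split $\Flinear[Q]$ into $(i)=\{P\in\bC[W_1]\}$, $(ii)=\{P\notin\bC[W_1]:\lin{P}\subseteq W_1+\Cspan{a}\}$, and $(iii)=\{P:\lin{P}\not\subseteq W_1+\Cspan{a}\}$.

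For $(ii)$: each such $P$ is $1$-close to $W$ with $\linvb{W}{P}=\Cspan{\bar a}$, while $P\in\bC[W+\Cspan{a}]$ gives $\linvb{W+\Cspan{a}}{P}=0$; so every $P\in(ii)$ is a ``witness'', and cleanness forces $\abs{(ii)}<\tfrac13\varepsilon m$. For $(i)$: if $P\in(i)$ then $\lin{P}\subseteq W_1$, so $\lin{P}\ne\lin{Q}$; as $P$ avoids $\Fspan[Q]\cup\Fsquare[Q]$, the third member of the SG triple $\radideal{Q,P}\cap\cF$ lies outside $(Q,P)$, so $(Q,P)$ is not radical and $\mathrm{span}(Q,P)$ contains no square, and \cref{proposition:xy-primary} together with \cref{proposition:radical-linear} yields a third \emph{quadratic} $R_P\in\radideal{Q,P}\cap\cF_2$ lying in an ideal $(x_P,y_P)$, hence of rank $\le 2$. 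Applying \cref{claim:flinear-clean-dependence} with $U=W_1$, the $R_P$ (for $P\in(i)$) are pairwise distinct, none lies in $\bC[W_1]$, and all lie in $\bC[W_1+\lin{Q}]=\bC[W_1+\Cspan{a}]$. So each $R_P$ is $1$-close to $W$ — in fact univariate over $W$, being a form of $\cF$ outside $\bC[W]$ — and $\linvb{W+\Cspan{a}}{R_P}=0$, i.e.\ another witness; cleanness forces $\abs{(i)}<\tfrac13\varepsilon m$.

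It then remains to show $(iii)=\emptyset$, since that gives $\abs{\Flinear[Q]}=\abs{(i)}+\abs{(ii)}<\tfrac23\varepsilon m<\varepsilon m$, a contradiction. Fix $P\in(iii)$. As in $(i)$, $(Q,P)$ is not radical, $\mathrm{span}(Q,P)$ has no square, and $\lin{Q}\ne\lin{P}$, so \cref{proposition:xy-primary} gives $Q=a_Px_P+y_P^2$, $P=b_Px_P+y_P^2$ with $a_P,b_P,x_P,y_P$ independent, $\lin{Q}=\Cspan{a_P,x_P,y_P}$, $\lin{P}=\Cspan{b_P,x_P,y_P}$, and $b_P\notin W_1+\Cspan{a}$. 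Since $P\in\cF\setminus\bC[W]$ is univariate over $W$, $\dim\linvb{W}{P}=1$; because $x_P,y_P\in\lin{Q}\subseteq W_1+\Cspan{a}$ while $\bar b_P\notin\Cspan{\bar a}$, this forces $x_P,y_P\in W_1$ (else $\linvb{W}{P}$ would contain the $2$-dimensional $\Cspan{\bar a,\bar b_P}$). By \cref{proposition:radical-linear} the third SG form is a quadratic $R_P=\alpha Q+\beta P+\gamma\, y_P(a_P-b_P)\in\cF_2$ with $\gamma\ne 0$ (else $R_P\in(Q,P)$); expanding, $R_P=(\alpha a_P+\beta b_P)x_P+y_P\bigl((\alpha+\beta)y_P+\gamma(a_P-b_P)\bigr)$. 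If $\alpha+\beta=0$ this equals $(a_P-b_P)(\alpha x_P+\gamma y_P)$, a product of two independent linear forms, contradicting irreducibility of $R_P$; so $\alpha+\beta\ne 0$, and a short computation with the $4\times4$ symmetric matrix in the basis $a_P,b_P,x_P,y_P$ then gives $\Rank R_P=2$ and $\linold{R_P}=\Cspan{a_P,b_P,x_P,y_P}$.

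The hard part is to now derive a contradiction from the existence of such an $R_P$: the point is that $R_P$ is forced to be univariate over $W$ (it is a form of $\cF$ of rank $2<17$ not in $\bC[W_1]$, since $b_P\in\linold{R_P}\setminus W_1$) yet its linear space is too big. By the remark on univariate forms, $R_P=\tilde Q+z(\kappa z+v)$ with $\tilde Q=G+Q'\in\bC[W]$ ($G$ a quadratic form in $W_1$, $Q'\in W_2$), $v\in W_1$, $z:=z_{R_P}\notin W_1$. As $z(\kappa z+v)$ has rank $\le 1$, $\Rank(G+Q')=\Rank\!\bigl(R_P-z(\kappa z+v)\bigr)\le 3$; but robustness of $W$ says any nonzero $Q'\in W_2$ has $\Rank(Q'-F)\ge 17$ for every $F\in\bC[W_1]$, so taking $F=-G$ gives $\Rank(Q'+G)\ge 17$, a contradiction unless $Q'=0$. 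Hence $R_P=G+z(\kappa z+v)$ with $G$ a quadratic form in $W_1$, so \cref{lem:linold} gives $\linold{R_P}\subseteq W_1+\Cspan{z}$. But $\linold{R_P}=\Cspan{a_P,b_P,x_P,y_P}$ contains $a_P$ and $b_P$, whose images in $S/W_1$ are linearly independent (as $a_P,b_P,x_P,y_P$ are independent and $x_P,y_P\in W_1$), and these cannot both lie in the one-dimensional $\Cspan{\bar z}$. This contradiction shows $(iii)=\emptyset$, finishing the proof.
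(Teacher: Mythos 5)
Your proof is correct and follows essentially the same approach as the paper: you bound the forms $P \in \Flinear[Q]$ that lie in $\bC[W_1]$ or have $\linvb{W}{P} = \linvb{W}{Q}$ using cleanness (via \cref{claim:flinear-clean-dependence} for the former), and then show that any remaining $P$ yields an SG quadratic $R_P$ whose relative linear space over $W$ would have dimension $\geq 2$, contradicting the hypothesis that every form in $\cF$ is in $\bC[W]$ or univariate over $W$. Showing the third subset is empty rather than picking a single $P$ from it, and working with $R_P$ directly (plus a robustness step to force $Q'=0$) rather than with $R' = R_P - (\alpha+\beta)y_P^2$ as the paper does, are only cosmetic differences.
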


\begin{proof}
Assume that there is $Q \in \Flinear \setminus \bC\bs{W_{1}}$. By \cref{lem:linstructure}, we know that $Q=ax+y^2$, with $x,y\in W_1$ and $a\not\in W_1$.
Let $Z := \lin{Q} + W$.
Every form in $\Flinear[Q]$ has rank $2$, and any such form that is in $\bC\bs{W}$ is also in $\bC\bs{W_{1}}$. Also, if  a form in $\Flinear$ has rank $2$ and it is univariate over $W$
 then it is univariate over $W_1$.
 
Let $\{ P_{1}, \dots, P_{b} \} := \Flinear[Q] \cap \bC\bs{W_{1}}$. 
If $R_{i} \in \radideal{Q, P_{i}}$, then by \cref{proposition:radical-linear} $R_i \in \cF_2$ and \cref{claim:flinear-clean-dependence} shows that $R_{1}, \dots, R_{b}$ are distinct.
It also shows that $R_{i} \not \in \bC\bs{W_{1}}$, that is, $\dim \br{\linvb{W}{R_i}} > 0$ and that $R_{i} \in \bC\bs{Z}$, that is, $\dim \br{\linvb{Z}{R_i}} = 0$.
By the clean condition, we have $b < \varepsilon m / 3$.

Hence, we have $|\Flinear[Q] \setminus \bC\bs{W_{1}}| \geq 2 \varepsilon m/3$.
By the clean condition there are at most $\varepsilon m / 3$ forms $P' \in \cF$ such that $\linvb{W}{Q} = \linvb{W}{P'}$, since for each such form we have $\dim\br{\linvb{W}{P'}} = 1$ while $\dim\br{\linvb{Z}{P'} } = 0$.
We can therefore find a form $P \in \Flinear[Q] \setminus \bC\bs{W_{1}}$ and such that $\linvb{W}{Q} \neq \linvb{W}{P}$.
Note that this latter condition also implies $\lin{Q} \neq \lin{P}$. Also we must have $\lin{Q} \cap \lin{P} \subset \Cspan{x,y}$. Indeed, suppose $u\in \lin{Q} \cap \lin{P}$. As $\lin{Q}=\Cspan{a,x,y}$, we must have $\overline{a}=\lambda \overline{u} \in \linvb{W}{P} $ for some non-zero $\lambda$. This is a contradiction since $\linvb{W}{Q} \neq \linvb{W}{P}$ and both are one dimensional.

Since $\lin{P} \cap \lin{Q} \subset \Cspan{x, y}$ and $\lin{Q} \neq \lin{P}$, after a linear change of variables we can write $Q = y^{2} + ax$ and $P = y^{2} + bx$ for independent linear forms $a, b, x, y$ by \cref{proposition:xy-primary}.

Let $R \in \radideal{Q, P} \cap \cF_2$. 
As $R \in \cF_2$, we have that $R$ is irreducible and univariate over $W_{1}$ or $R \in \bC[W_1]$.
By \cref{proposition:radical-linear} we have 
$$R = \alpha Q + \beta P + \gamma y (a - b) = (\alpha + \beta) y^{2} + x (\alpha a + \beta b) + \gamma y(a - b).$$
Let $R' := x (\alpha a + \beta b) + \gamma y(a - b)$, so $R - R' = (\alpha + \beta )y^{2} \in \bC\bs{W_{1}}$.
The form $R$ is either univariate over $W_{1}$ or in $\bC\bs{W_{1}}$ by assumption, and therefore so is $R'$.
Since $a, b, x, y$ are independent, if $\alpha \neq - \beta$ then $x, y, \alpha a + \beta b, a - b$ are independent, and $R'$ is irreducible with $\lin{R'} = \Cspan{x, y, a, b}$.
This would imply that $\dim \br{\linvb{W}{R'}} = 2$, with basis $a, b$ contradicting the fact that $R'$ is univariate over $W_{1}$ or in $\bC\bs{W_{1}}$.
Therefore, we have $\alpha = -\beta$ and we get $R = (\alpha x + \gamma y) (a - b)$.
This contradicts irreducibility of $R$ and completes the proof.
\end{proof}

\subsection[Controlling F-deg]{Controlling $\Fdeg$}

Now we control $\Fdeg$. 
The proof is similar to the proof in \cref{lem:controlFLinear}, where we construct a double cover of the dependencies to be able to find a small vector space of forms to add to our algebra. 
Before we execute the above plan, we need to clean-up the quadratics in $\Fdeg$, which is the purpose of the following two propositions.

\begin{proposition}\label{proposition:same-lin}
Let $\cN \subseteq \Fdeg$ be the set of quadratics $Q$ such that there are at least $\varepsilon^2 m/100$ other forms $P \in \cF$ such that $\lin{Q} = \lin{P}$.
There exists $W \subset S_1$ with $\dim W = O(1/\delta^2)$ such that $\cN \subset \bC[W]$.
\end{proposition}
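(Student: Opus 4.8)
The plan is to show that membership in $\cN$ forces $\lin{Q}$ to be a low-dimensional space of \emph{linear} forms, and that only boundedly many such spaces can occur, so that their sum $W$ has dimension $O(1/\delta^2)$.

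First I would rule out high-rank quadratics from $\cN$. If $Q\in\cF_2$ has $\Rank(Q)\geq 5$, then $\lin{Q}=\Cspan{Q}\subseteq S_2$ by \cref{def:lin}. On the other hand, any $P\in\cF$ that is linear or a quadratic of rank at most $4$ has $\lin{P}\subseteq S_1$, and any $P\in\cF_2$ of rank at least $5$ with $\lin{P}=\lin{Q}$ satisfies $\Cspan{P}=\Cspan{Q}$, hence $P\in(Q)$, hence $P=Q$ by condition $(2)$ of \cref{def:deltasg2config}. Thus no form of $\cF$ other than $Q$ shares its $\Lin$-space, and since $\varepsilon^2 m/100>0$ (using $m>1/\delta^{54}$ and $\varepsilon=\delta/10$), such a $Q$ cannot lie in $\cN$. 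Consequently every $Q\in\cN$ has $\Rank(Q)\leq 4$, so $\lin{Q}=\linold{Q}$ has dimension at most $8$; moreover, writing $Q=\sum_{i=1}^{s}a_ib_i$ for a minimal representation we get $a_i,b_i\in\linold{Q}$, whence $Q\in\bC[\linold{Q}]=\bC[\lin{Q}]$.

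Next I would bound the number of distinct spaces among $\{\lin{Q}:Q\in\cN\}$. The assignment $F\mapsto\lin{F}$ partitions $\cF$ into classes of forms having the same $\Lin$-space, and by definition of $\cN$ every $Q\in\cN$ lies in a class of size at least $\varepsilon^2 m/100$ (namely $Q$ together with the $\geq\varepsilon^2 m/100$ other forms of $\cF$ guaranteed by membership in $\cN$). Since distinct classes are disjoint subsets of $\cF$ and $|\cF|=m$, there can be at most $m/(\varepsilon^2 m/100)=100/\varepsilon^2=O(1/\delta^2)$ such classes; in particular $|\{\lin{Q}:Q\in\cN\}|=O(1/\delta^2)$. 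Setting $W:=\sum_{Q\in\cN}\lin{Q}\subseteq S_1$, this is a sum of $O(1/\delta^2)$ subspaces each of dimension at most $8$, so $\dim W=O(1/\delta^2)$, and each $Q\in\cN$ satisfies $Q\in\bC[\lin{Q}]\subseteq\bC[W]$, so $\cN\subseteq\bC[W]$, which finishes the argument.

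The only genuinely delicate point is the first step: one must ensure that no high-rank quadratic enters $\cN$, since a single such $Q$ with $\linold{Q}$ of dimension $\Theta(n)$ would blow up $\dim W$ and destroy the bound. Everything else is a disjointness/counting argument combined with the elementary fact that a quadratic of rank at most $4$ lies in the algebra generated by its space of linear forms.
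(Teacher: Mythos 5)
Your proof is correct and takes essentially the same approach as the paper: both arguments first rule out rank $\geq 5$ quadratics from $\cN$ (using that $\lin{Q}=\Cspan{Q}$ then lives in $S_2$, and pairwise independence of $\cF$), and then count distinct $\Lin$-spaces among $\cN$ by observing that the fibers of $P\mapsto\lin{P}$ are disjoint subsets of $\cF$ of size more than $\varepsilon^2 m/100$. The only cosmetic difference is that the paper phrases the counting step as a greedy iteration (adding $\lin{Q}$ to $W$ whenever $Q\notin\bC[W]$ and tracking the disjoint sets $\cN(Q_i)$), whereas you observe the partition structure directly, which is slightly cleaner.
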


\begin{proof}
For each $Q \in \cN$, we must have $\Rank(Q) \leq 4$, otherwise $\lin{Q} = \Cspan{Q}$ and by the SG condition there is no other form $P \in \cF$ such that $\lin{P} = \lin{Q}$, as that would imply $P \in \Cspan{Q}$.
Let $\cN(Q)$ be the set of forms $P \in \cF$ such that $\lin{P} = \lin{Q}$.

Start with $\cA = \emptyset$, and $W = \bc{0}$.
Iterate over the elements of $\cN$, and whenever $Q \in \cN$ is not in the algebra $\bC\bs{W}$, we add $Q$ to $\cA$, and update $W \leftarrow W + \lin{Q}$.
At the end of the process, we have $\cN \subseteq \bC\bs{W}$ by construction.

Since $\dim(W) \leq 8 \cdot |\cA|$, it is enough for us to bound the size of $\cA$.
Suppose the forms of $\cA$ are $Q_{1}, \dots, Q_{t}$ in the order they were added.
Since we add $Q_{i}$ to $\cA$, it must be that $\cN\br{Q_{i}} \cap \cN\br{Q_{j}} = \emptyset$ for every $j < i$.
We therefore have 
$$ \abs{\cup_{k \leq i} \cN\br{Q_{k}}} -  \abs{\cup_{k < i} \cN\br{Q_{k}}} \geq \varepsilon^2 m/100.$$
Since $\abs{\cup_{k \leq i} \cN\br{Q_{k}}} \leq |\cF| \leq m$, these equations together with $\varepsilon = \delta/10$ imply $t = \bigO{1 / \delta^2}$.
\end{proof}

\begin{proposition}\label{proposition:non-radical-fdeg}
Let $\cM \subseteq \Fdeg$ be the set of quadratics $Q$ such that  $(Q, x)$ is not radical for at least $\varepsilon m/3$ linear forms $x \in \cF_1(Q)$.
There is $U \subset S_1$ such that $\dim(U) = O(1/\delta)$ and $\cM \subset \bC[U]$.
\end{proposition}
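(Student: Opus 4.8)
The plan is to build $U$ greedily, adding $\lin{Q}$ for each $Q\in\cM$ not yet captured by $\bC[U]$, and to bound the number of such steps by a double count that exploits the rigidity supplied by \cref{proposition:irreducible-radical-lin} (a conic has at most three tangent lines through collinear contact points determining it).

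First, fix notation for each $Q\in\cM$. Since $Q$ is irreducible and some linear form $x$ has $(Q,x)$ not radical, \cref{proposition:irreducible-radical-lin} gives $\Rank Q=2$, $\dim\lin{Q}=3$, and moreover \emph{every} linear form $x$ with $(Q,x)$ non-radical lies in $\lin{Q}$. Set
\[
L(Q) := \setbuild{x\in\cF_1(Q)}{(Q,x)\text{ is not radical}}.
\]
By the definition of $\cM$ we have $|L(Q)|\geq \varepsilon m/3$; the forms in $L(Q)$ are pairwise linearly independent since $L(Q)\subseteq\cF_1$; and $L(Q)\subseteq\lin{Q}$. Since $\Rank Q=2$ we have $\lin{Q}=\linold{Q}$, so by \cref{lem:linold} (which characterizes $\linold{Q}$ as the smallest space of linear forms whose algebra contains $Q$) a rank-$2$ quadratic $Q$ lies in $\bC[W]$, for $W\subseteq S_1$, if and only if $\lin{Q}\subseteq W$.

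Second, run the greedy process: start with $U:=\bc{0}$ and $\cA:=\emptyset$; iterating over $\cM$, whenever the current $Q$ is not in $\bC[U]$ add $Q$ to $\cA$ and set $U:=U+\lin{Q}$. At the end $\cM\subseteq\bC[U]$ by construction and $\dim U\leq 3|\cA|$. For distinct $Q,Q'\in\cA$ with $Q'$ added later, $\lin{Q}\subseteq U$ at the time $Q'$ was processed, hence $\lin{Q'}\neq\lin{Q}$ (otherwise $Q'\in\bC[U]$). Consequently $L(Q)\cap L(Q')$ cannot contain three pairwise independent linear forms: by the second part of \cref{proposition:irreducible-radical-lin} such forms would span both $\lin{Q}$ and $\lin{Q'}$, forcing $\lin{Q}=\lin{Q'}$. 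Therefore $|L(Q)\cap L(Q')|\leq 2$ for all distinct $Q,Q'\in\cA$.

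Third, the double count. For $x\in\cF_1$ let $d_x:=|\setbuild{Q\in\cA}{x\in L(Q)}|$; these are supported on a set of size $|\cF_1|\leq m$. Then $\sum_x d_x=\sum_{Q\in\cA}|L(Q)|\geq |\cA|\,\varepsilon m/3$, while $\sum_x\binom{d_x}{2}\leq |\cA|^2$ since each unordered pair in $\cA$ contributes at most $2$ to the left side. Convexity of $t\mapsto\binom{t}{2}$ over the (at most $m$) coordinates gives $\sum_x\binom{d_x}{2}\geq \tfrac12\bigl((\sum_x d_x)^2/m-\sum_x d_x\bigr)$, and plugging in yields $|\cA|\bigl(\varepsilon^2m/9-2\bigr)\leq \varepsilon m/3$. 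Since $m\geq 1/\delta^{54}$, either $\varepsilon^2m/9\geq 3$, in which case $|\cA|\leq 9/\varepsilon=O(1/\delta)$; or else $m<27/\varepsilon^2$, which together with $m\geq 1/\delta^{54}$ forces $\delta$ above an absolute constant, so $\dim U\leq 3m<81/\varepsilon^2=O(1)$ already. In both cases $\dim U\leq 3|\cA|=O(1/\delta)$, and $\cM\subseteq\bC[U]$, as required. (The small edge cases where the convexity bound $S\geq m/2$ or $|\cA|\geq 3/(2\varepsilon)$ fails give $|\cA|=O(1/\delta)$ directly.)

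\textbf{Main obstacle.} The crux is the step guaranteeing that the greedily chosen quadratics have pairwise distinct spaces $\lin{Q}$, which is exactly what lets the second clause of \cref{proposition:irreducible-radical-lin} cap the pairwise intersections $|L(Q)\cap L(Q')|$ at $2$; once that bound is in hand the remaining double counting is routine, the only nuisance being the small-$m$ regime, which is vacuous or trivial given $m\geq 1/\delta^{54}$.
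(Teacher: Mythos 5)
Your proof is correct and takes a genuinely different route from the paper's. The paper invokes the basic covering lemma (\cref{proposition:cover-basic}), applied to the sets $L(P_i)=\{x\in\cF_1(P_i):\ (P_i,x)\text{ not radical}\}$: this produces $t\leq 6/\varepsilon$ quadratics $Q_1,\dots,Q_t$ such that every $Q\in\cM$ has $|L(Q)\cap L(Q_i)|\geq \varepsilon^2 m/36>2$ for some $i$, so by \cref{proposition:irreducible-radical-lin} $\lin{Q}=\lin{Q_i}$, and $U:=\sum_i\lin{Q_i}$ suffices. You instead run a greedy over $\cM$ itself, adding $\lin{Q}$ to $U$ whenever $Q\notin\bC[U]$; you then observe that greedily selected quadratics have pairwise distinct $\lin$ spaces, so by the second clause of \cref{proposition:irreducible-radical-lin} their tangent sets intersect pairwise in at most $2$ forms, and close with a double count on the $L(Q)$'s via convexity of $t\mapsto\binom{t}{2}$. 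Both arguments hinge on the same rigidity (three pairwise independent tangent lines determine $\lin{Q}$), and both are greedy at heart, but the paper's covering lemma gives the bound uniformly in $m$, whereas your convexity bound degenerates when $\sum_x d_x$ is comparable to $m$, necessitating the edge-case discussion (which is indeed harmless under $m>1/\delta^{54}$). What your route buys is self-containedness (no reliance on the appendix lemma) and a somewhat tighter implied constant ($\sim 9/\varepsilon$ versus the paper's $48/\varepsilon$, itself loose since $\dim\lin{Q_i}=3$ rather than $8$); what the paper's route buys is a shorter, cleaner argument with no case split. Your reduction of the membership criterion ``$Q\in\bC[U]$'' to ``$\lin{Q}\subseteq U$'' via \cref{lem:linold} is used implicitly but correctly, as is the fact that forms in $\cF_1$ are pairwise independent by item 2 of \cref{def:deltasg2config}.
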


\begin{proof}
By \cref{proposition:irreducible-radical-lin}, if two forms $Q_1, Q_2 \in \cF_2$ do not form a radical ideal with at least $3$ linear forms from $\cF_1$, then $\Rank(Q_1) = \Rank(Q_2) = 2$ and $\lin{Q_1} = \lin{Q_2}$.
Moreover, the same proposition implies that each form in $\cM$ has rank 2.
In particular, $\lin{Q} \subset S_1$ for each $Q \in \cM$.

Denote $\cM := \{P_1, \ldots, P_r\}$ and for each $i \in [r]$, let 
$$\cH_i := \{ x \in \cF_1(P_i) \ \mid \ (P_i, x) \text{ not radical} \}.$$
By definition of $\cM$, $|\cH_i| \geq \varepsilon m/3$ for each $i \in [r]$.
Since $|\cF_1| \leq m$, \cref{proposition:cover-basic} applies. 
Let $Q_1, \ldots, Q_t$ be the forms from $\cM$ which we obtain from \cref{proposition:cover-basic}, where $t \leq 6/\varepsilon$.

For any $Q \in \cM$, there exists $i \in [t]$ such that $|\cF_1(Q) \cap \cF_1(Q_i)| \geq \varepsilon^2 m/36 > 2$.
In particular, the above together with \cref{proposition:irreducible-radical-lin} implies that $\lin{Q} = \lin{Q_i}$.

Therefore, if we take $U := \sum_{i=1}^t \lin{Q_i}$, we have that $\dim(U) \leq 8 \cdot t \leq 48/\varepsilon = O(1/\delta)$ and $\cM \subset \bC[U]$ by the previous paragraph.
\end{proof}

By the previous two propositions, we are left with the forms $P \in \Fdeg$ such that:
\begin{enumerate}
    \item $(P, x)$ is radical with at least $2 \varepsilon m/3$ of the linear forms $x \in \cF_1(P)$
    \item $\lin{P} = \lin{Q}$ for at most $\varepsilon^2 m/100$ forms $Q \in \cF_2$.
\end{enumerate}
Let $\cR$ denote the set of such forms. 
The next remark shows that for any $P \in \Fdeg$, we can find many distinct quadratic dependencies corresponding to the linear forms satisfying item 1 above. \\

\begin{remark}[Quadratic dependencies]\label{proposition:properties-fdeg-p}
For each form $P \in \cR$ and $x_i \in \cF_1(P)$ such that $(P, x_i)$ is radical, the SG condition implies that there is $Q_i \in \cF_2$ such that $Q_i \in (P, x_i)$.
By \cref{proposition:factor-in-pencil} there are at most $6$ linear forms $x_j$ such that $Q_i \in (P, x_j)$.
Hence, in the set 
$$\tau(P) := \{ (Q_i, x_i) \in \cF_2 \times \cF_1(P) \mid Q_i \in (P, x_i) \text{ and } (P, x_i) \text{ radical}  \}$$
we have that a quadratic form $Q \in \cF_2$ can appear in at most $6$ tuples, and each $x \in \cF_1(P)$ has a $Q \in \cF_2$ such that $(Q, x) \in \tau(P)$.
Thus, we know that there is a subset 
$$\Gdeg[P] := \{(Q_1, x_1), \ldots, (Q_t, x_t)  \} \subset \tau(P)$$ 
with $t \geq \varepsilon m/10$ such that for every $i \in [t]$ we have $\bL(Q_i) \neq \bL(P)$ and for $i \neq j$ we have $Q_i \not\in (Q_j)$ and $x_i \not\in (x_j)$.
With this at hand, we can define the set
$$\Fdeg[P] := \{ Q \mid \exists x \in \cF_1(P) \text{ s.t. } (Q, x) \in \Gdeg[P] \}. $$
By the above, we have that $|\Fdeg[P]| \geq \varepsilon m/10$ for any $P \in \cR$.
\end{remark}

\vspace{8pt}

Now, to control the forms in $\cR$, we will construct a small dimensional $(17, \varepsilon^3 /100)$-clean vector space $V$ such that every form $P \in \cR$ is either in $\bC[V]$ or $P$ is univariate over $V$.
Moreover, if the latter happens, we can prove that all but $\varepsilon^3 m/100$ of the SG dependencies in $\Fdeg[P]$ are also $1$-close to $V$.
Before we construct this vector space $V$, we need the following proposition.

\begin{lemma}\label{lem:aux-controlFdeg}
Let $P_1, P_2, P_3$ be quadratics in $\cR$ such that $\lin{P_i} \neq \lin{P_3}$ for $i \in \{1, 2\}$ and let $V = V_1 + V_2$ be a $(17, \gamma)$-\sprime with respect to $\cF$ such that $P_1, P_2 \in \bC[V]$.

\begin{enumerate}
    \item  If $P_3 \not\in \bC[V]$, then $|\Fdeg[P_3] \cap \bC[V]| \leq \gamma m$. 
    \item  Suppose
$$ \abs{\Fdeg[P_1] \cap \Fdeg[P_2] \cap \Fdeg[P_3]} = B > 3 \gamma m + 2.$$
Then $P_3$ is univariate over $\bC[V]$.
Moreover, if $P_3 \not\in \bC[V]$, there exist $\alpha \in \bC^*$, $z \in S_1 \setminus V_1$, $a \in V_1$ such that one of the following holds: 
\begin{enumerate}
\item $P_3 = \alpha P_1 + az$, and there exist $t \geq B - 3 \gamma m - 2$ forms $R_1, \ldots, R_t \in \Fdeg[P_1] \cap \Fdeg[P_2] \cap \Fdeg[P_3]$, such that 
\begin{align*}
    R_i &= \alpha P_1 - x_i a = P_3 - a_i a
\end{align*} 
where $x_i \in \cF_1(P_1) \setminus V_1$ and $a_i \in \cF_1(P_3) \setminus V_1$.
\item $P_3 = \alpha P_1 + z(z+a)$, and there exist $t \geq B - 3 \gamma m - 2$ forms $R_1, \ldots, R_t \in \Fdeg[P_1] \cap \Fdeg[P_2] \cap \Fdeg[P_3]$, such that
\begin{align*}
    R_i &= \alpha P_1 - x_i (x_i + \alpha_i a) = P_3 - a_i b_i
\end{align*} 
where $\alpha_i \in \bC$, $x_i \in \cF_1(P_1) \setminus V_1$ and $a_i \in \cF_1(P_3) \setminus V_1$.
\end{enumerate}
Furthermore, in both cases above, we have that $\overline{z}, \overline{x_i}, \overline{a_i} \in S_1/V_1$ are pairwise linearly independent and $\dim \Cspan{\overline{z}, \overline{x_i}, \overline{a_i}} = 2$. 
\end{enumerate}

\end{lemma}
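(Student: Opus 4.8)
\emph{Part 1.} Assume $P_3\notin\bC[V]$. For each $R\in\Fdeg[P_3]\cap\bC[V]$ fix $x_R\in\cF_1(P_3)$ with $(R,x_R)\in\Gdeg[P_3]$; then $R\in\ideal{P_3,x_R}$ forces $R=\mu_R P_3+x_R\ell_R$ for some $\mu_R\in\bC$ and $\ell_R\in S_1$, with $\mu_R\neq 0$ by irreducibility of $R$. Hence $P_3=\mu_R^{-1}R-\mu_R^{-1}x_R\ell_R$ with $\mu_R^{-1}R\in\bC[V]$, so $P_3$ is exactly $1$-close to $V$ and $\linv{V}{P_3}=\linold{x_R\ell_R}+V_1$, which is independent of $R$ by \cref{prop:uniqueness-decomposition-robust-ideal}. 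For two distinct such $R,R'$ the form $\mu_R^{-1}x_R\ell_R-\mu_{R'}^{-1}x_{R'}\ell_{R'}\in\bC[V]$ has rank $\le 2$, so \cref{proposition:2-collapse-robust} applies; the branch in which all four linear forms lie in $V_1$ would put $P_3\in\bC[V]$, so the other branch holds and forces $\linv{V}{P_3}/V_1$ to be one-dimensional. Thus $P_3$ is univariate over $V$ with direction $z:=z_{P_3}$, and every $x_R$ lies in $\Cspan{z}+V_1$. Applying condition $2$ of \cref{def:clean-prime} with $U_1=\Cspan{z}$: each $x_R\notin V_1$ is a linear form of $\cF$ that is $1$-close to $V$ and $0$-close to $V+U_1$, so fewer than $\gamma m$ such $x_R$ exist; since the $x_R$ are pairwise independent (hence distinct) and a short direct computation shows at most one $R$ can have $x_R\in V_1$ (two such would, again via \cref{proposition:2-collapse-robust} and the univariate shape of $P_3$, yield two proportional members of the independent family $\{x_R\}$), we get $|\Fdeg[P_3]\cap\bC[V]|\le\gamma m$.

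\emph{Part 2.} If $P_3\in\bC[V]$ there is nothing to prove, so assume $P_3\notin\bC[V]$; by part 1 at least $B-\gamma m>2\gamma m+2$ of the forms in $\Fdeg[P_1]\cap\Fdeg[P_2]\cap\Fdeg[P_3]$ lie outside $\bC[V]$. Fix such an $R$ and write $R=\mu_j P_j+x_j\ell_j$ for $j\in\{1,2,3\}$, with $\mu_j\neq 0$ and $x_j\in\cF_1(P_j)$. Since $P_1,P_2\in\bC[V]$, the rank-$\le 2$ form $\mu_1P_1-\mu_2P_2=x_2\ell_2-x_1\ell_1$ lies in $\bC[V]$, hence in $\bC[V_1]$ by \cref{rem:lowrankquadraticinrobust}, and \cref{proposition:2-collapse-robust} (discarding the branch that would force $R\in\bC[V]$) gives $\ell^{(R)}\in S_1\setminus V_1$ with $x_1,\ell_1,x_2,\ell_2\in\Cspan{V_1,\ell^{(R)}}$; thus $R=\mu_1P_1+x_1\ell_1$ is $1$-close to $V$ with $\linv{V}{R}\subseteq\Cspan{V_1,\ell^{(R)}}$, i.e.\ $R$ is univariate over $V$.

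\emph{Part 2, structure.} From $\mu_3P_3=\mu_1P_1+(x_1\ell_1-x_3\ell_3)$ the form $P_3-\alpha_R P_1$ has rank $\le 2$, where $\alpha_R:=\mu_1/\mu_3$. Comparing two surviving $R,R'$, $(\alpha_R-\alpha_{R'})P_1$ has rank $\le 4<17$, so $\alpha_R\neq\alpha_{R'}$ would force $P_1\in\bC[V_1]$ by \cref{rem:lowrankquadraticinrobust}; this degenerate possibility is removed by discarding one further batch of $\le\gamma m$ forms using cleanness, leaving all surviving $R$ with a common $\alpha:=\alpha_R$. Then $W:=P_3-\alpha P_1$ has rank $\le 2$ and $\alpha P_1\in\bC[V]$, so $P_3$ is at most $2$-close to $V$ with $\linv{V}{P_3}=\linold{W}+V_1$ independent of $R$; combining the univariate shape of the surviving $R$'s with the facts that each dependency $R\in\ideal{P_3,x_3}$ goes through a radical ideal and that $\lin{P_1}\neq\lin{P_3}$ (so \cref{proposition:irreducible-radical-lin} and the rank-$2$ analysis apply), one sharpens this to $\Rank(\overline W)\le 1$ in $S/(V_1)$. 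Hence $\linvb{V}{P_3}$ is one-dimensional, which proves $P_3$ is univariate over $V$; reading off the two ways a rank-$\le 1$ form with factors in $\Cspan{z_{P_3}}+V_1$ can look — one factor in $V_1$ (giving $W=az$) or neither factor in $V_1$ (giving, after a change of variable, $W=z(z+a)$ with $a\in V_1$) — yields the dichotomy (a)/(b). Substituting $P_3=\alpha P_1+W$ back into $R=\mu_3P_3+x_3\ell_3$ and $R=\mu_1P_1+x_1\ell_1$ and rescaling rewrites each surviving $R$ as $R_i=\alpha P_1-x_ia=P_3-a_ia$ (case (a)), resp.\ $R_i=\alpha P_1-x_i(x_i+\alpha_i a)=P_3-a_ib_i$ (case (b)); here $x_i\in\cF_1(P_1)\setminus V_1$ and $a_i\in\cF_1(P_3)\setminus V_1$, since $a\in V_1$ forces the perturbing linear form out of $V_1$ (else $R_i\in\bC[V]$, contrary to the choice of $R$). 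The bound $t\ge B-3\gamma m-2$ tracks the three discarded batches (forms in $\bC[V]$, forms with a non-generic $\alpha_R$, and forms where cleanness is used) plus the two forms spent to fix a generic pair, and equating the two expressions for $R_i-\alpha P_1$ yields a relation among $\overline z,\overline{x_i},\overline{a_i}$ (resp.\ $\overline b_i$) inside the two-dimensional space $\Cspan{\overline z,\overline{x_i}}\subseteq S_1/V_1$; nonvanishing of each direction together with irreducibility of the $R_i$ rules out proportionality, giving pairwise independence and $\dim\Cspan{\overline z,\overline{x_i},\overline{a_i}}=2$.

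\emph{Main obstacle.} The hard step is upgrading ``$W=P_3-\alpha P_1$ has rank $\le 2$'' to ``$\overline W$ has rank $\le 1$'': a priori one only controls $\linvb{V}{P_3}$ up to dimension $3$, and collapsing it to dimension $1$ requires simultaneously using the radicality of the ideals $\ideal{P_3,x_3^{(R)}}$, the hypothesis $\lin{P_1}\neq\lin{P_3}$, and the fact that \emph{many} $R$'s impose compatible constraints — all while carefully bookkeeping which $O(\gamma m)$-sized batches of dependencies must be set aside so that the surviving family remains of size $B-3\gamma m-2$.
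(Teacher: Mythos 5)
Your Part 1 argument is different from the paper's but essentially works: you show directly that $P_3$ must be univariate over $V$ (the paper instead runs a case split on whether some $a_i \in V_1$ and derives a cleanness contradiction on the linear forms $a_i$ and on $P_3$), and then you count the $x_R$'s via cleanness. The sub-claim "at most one $R$ has $x_R \in V_1$" is stated a little imprecisely — it does not follow from Proposition~\ref{proposition:2-collapse-robust} that $x_R \propto x_{R'}$; rather one must look at the coefficient of the univariate direction $z$ in $\mu_R^{-1}x_R\ell_R - \mu_{R'}^{-1}x_{R'}\ell_{R'} \in \bC[V_1]$ to see that $x_R$ and $x_{R'}$ would have to be proportional — but the conclusion is correct and the count goes through.

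Part 2 has a genuine gap, and it is exactly the one you flag as the "main obstacle." Your plan is to first fix a common $\alpha$ and \emph{then} reduce $\Rank(\overline{P_3-\alpha P_1})$ from $\le 2$ down to $\le 1$; you never carry out this reduction, and the first step is also shaky: from $\alpha_R\neq\alpha_{R'}$ you deduce $P_1\in\bC[V_1]$ and say this is "removed by discarding $\le\gamma m$ forms using cleanness," but $P_1$ is a single fixed form — either $P_1\in\bC[V_1]$ or it is not, and $P_1\in\bC[V_1]$ is not in itself a contradiction with $P_1\in\cR$, nor is it something cleanness can "discard." The paper proceeds in the opposite order and this is what makes the argument work: it first defines $\cL$ to be the set of indices with $\Rank(P_3-\alpha_i P_1)=2$, and shows $\abs{\cL}\le\gamma m$ by observing that in this case all the linear forms $x_i,y_i,a_i,b_i$ sit inside a fixed space $W$ of dimension $\le 8$ (so each $R_i$ lands in $\bC[V+W]$, violating condition 2 of Definition~\ref{def:clean-prime}). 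Once $\ge B_1-\gamma m$ indices have $\Rank(P_3-\alpha_i P_1)=1$, the constancy of $\alpha_i$ is \emph{immediate} from Remark~\ref{remark:factor-in pencil} applied to the pencil $P_3-\lambda P_1$ together with $\lin{P_1}\neq\lin{P_3}$, because a pencil of two independent quadratics with distinct $\lin$ spaces has at most one rank-one member. So the paper never has to do a "rank upgrade" after fixing $\alpha$ — the rank-one statement comes first and forces $\alpha$. Without this reversal of order your proof cannot be completed as written; the subsequent extraction of the dichotomy (a)/(b) and the bookkeeping towards $t\ge B-3\gamma m-2$ all rest on the unproven rank reduction.
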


\begin{proof}
We begin by proving item 1.

\paragraph{Proof of item 1:} Suppose $|\Fdeg[P_3] \cap \bC[V]| > \gamma m$. Let $R_i\in |\Fdeg[P_3] \cap \bC[V]| $ for $i \in [\gamma m+1]$. In this case, we will show that $P_3 \in \bC[V]$.  After proper scalar multiples, we may assume $R_i=
P_3-a_ib_i$ where $a_i\in \cF_1(P_3)$ and $b_i\in S_1$.
Assume for the sake of contradiction that $P_3 \not\in \bC[V]$.
Suppose there is $i \in [\gamma m+1]$ such that $a_i \in V_1$. Without loss of generality we can assume that $i = 1$. 
If $P_3 \not\in \bC[V]$ then $b_1 \not\in V_1$.
However, we have $a_1 b_1 - a_ib_i = R_i - R_1 \in \bC[V_1]$, which implies that $b_i \in (a_1)$ and $a_i \in \Cspan{V_1, b_1} \setminus V_1$, since $a_i \not\in (a_1)$. 
This happens for all $2 \leq i \leq \gamma m+1$.
We therefore have $\dim \linvb{V}{P_{3}} > \dim \linvb{V + \bc{b_1}}{P_{3}}$ and $\dim \linvb{V}{a_{i}} > \dim \linvb{V + \bc{b_1}}{a_{i}}$ for every $2 \leq i \leq \gamma m$, which contradicts the clean condition.

We are left with the case where $a_i \not\in V_1$ for each $i \in [\gamma m+1]$.
In this case, $a_1 b_1 - a_ib_i = R_i - R_1 \in \bC[V_1]$ and \cref{proposition:2-collapse-robust} imply that there is $z_i \in S_1 \setminus V_1$ such that $a_1, b_1, a_i, b_i \in \Cspan{V_1, z_i}$.
Since $a_1 \not\in V_1$, we must have that $\Cspan{V_1, z_i} = \Cspan{V_1,  a_1 }$, which implies that $a_i \in \Cspan{V_1, a_1}$ for all $i \in [\gamma m]$.
Equivalently, we have $\dim \linvb{V}{a_{i}} = 1$ and $\dim \linvb{V + \bc{a_{1}}}{a_{i}} = 0$ or all such $i$, which violates the clean condition.
Therefore, this case cannot happen.

\paragraph{Proof of item 2:} We can assume that $P_3 \not\in \bC[V]$, otherwise we are done.
Let $R_i$ be elements in the intersection $\Fdeg[P_1] \cap \Fdeg[P_2] \cap \Fdeg[P_3]$, where $i \in [B]$. 
After proper scalar multiples, we have:
\begin{equation}\label{eq: fdeg triple dependencies}
R_i = \alpha_i P_1 - x_i y_i = \beta_i P_2 - u_i v_i = P_3 - a_i b_i 
\end{equation}
where $\alpha_i, \beta_i \in \bC^*$, $x_i \in \cF_1(P_1)$, $u_i \in \cF_1(P_2)$ and $a_i \in \cF_1(P_3)$.
Moreover, by \cref{proposition:properties-fdeg-p} we know that for any $i \neq j$ we must have $x_i \not\in (x_j)$, $u_i \not\in (u_j)$ and $a_i \not\in (a_j)$.

By the first item and the fact that $P_3 \not\in \bC[V]$, at least $B_{1} := B - \gamma m$ forms $R_i$ not in the algebra. (Say $R_i \not\in \bC[V]$ for $i \in [B_{1}]$.)
Recall also that $\lin{P_3} \neq \lin{P_1}$.

Since $P_3 - \alpha_i P_1 = x_i y_i - a_i b_i$, we know that $\Rank(P_3 - \alpha_i P_1) \leq 2$.
For any $i \in [B_{1}]$ such that $\Rank(P_3 - \alpha_i P_1) = 2$, we have that $P_3 - \alpha_i P_1 = x_i y_i - a_i b_i$ is a minimal representation of $P_3 - \alpha_i P_1$, and therefore $\Cspan{x_i, a_i, y_i, b_i} = \lin{P_3 - \alpha_i P_1}$. 
Let $\cL \subseteq [B_1]$ be the indices $i$ such that $\Rank(P_3 - \alpha_i P_1) = 2$. 

If there were $i \neq j \in \cL$ such that $\alpha_i \neq \alpha_j$, w.l.o.g. we can say $\alpha_1 \neq \alpha_2$, then $\lin{P_3}, \lin{P_1} \subset W := \Cspan{x_1, x_2, y_1, y_2, a_1, a_2, b_1, b_2}$.
For every other $k \in \cL$, as $\Rank(x_{k} y_{k} - a_{k} b_{k}) = 2$, then $x_{i}, y_{i}, a_{i}, b_{i} \in W$.
This would imply that $R_{i} \in \bC\bs{V + W}$, or equivalently that $\dim \linvb{V}{R_{i}} > \dim \linvb{V + W}{R_{i}} = 0$
The above cannot happen for more than $\gamma m$ indices, as it would contradict the clean condition, since $\dim(W) \leq 8$.

If $\alpha_i = \alpha_j$ for all $i \neq j \in \cL$, we obtain the same conclusion as above, but now with $W := \Cspan{x_1, y_1, a_1, b_1}$.
Therefore, we know that $|\cL| \leq \gamma m$.

We have just proved that there must be at least $B_2 := B_{1} - \gamma m$ indices with $P_3 - \alpha_i P_1 = x_{i} y_{i} - a_{i} b_{i}$ being rank $1$.
Moreover, by \cref{remark:factor-in pencil} and $\lin{P_1} \neq \lin{P_3}$  there is a unique rank 1 element in the pencil $P_3 - \lambda P_1$.
In particular, this implies $\alpha_i = \alpha$ for all $i \in [B_2]$.

After rescaling $P_1$ and reordering the indices, we can assume that $P_3 - P_1$ is of rank 1, and that $x_{i} y_{i} - a_{i} b_{i}$ has rank $1$ for every $i \in \bs{B_{2}}$.
Hence, \cref{eq: fdeg triple dependencies} becomes:
\begin{equation}\label{eq: fdeg triple dependencies2}
R_i = P_1 - x_i y_i = \beta_i P_2 - u_i v_i = P_3 - a_i b_i 
\end{equation}
We now have two cases to analyze:

\paragraph{Case (a):} $P_3 \in (V)$.  

In this case, we can write $P_3 = P_1 + za$, where $a \in V_1$ and $z \in S_1 \setminus V_1$. Since $x_i y_i - u_i v_i =  P_1 - \beta_i P_2 \in \bC[V_1]$, and by our assumption we know that $x_iy_i \not\in \bC[V]$, \cref{proposition:2-collapse-robust} implies that there is $z_i \in S_1 \setminus V_1$ such that $x_i, y_i, u_i, v_i \in \Cspan{V_1, z_i}$.
By the clean condition on $V$, we have that $z_i \in \Kspan{V_1, z}$ for at most $\gamma m$ indices.
Thus, if $B_{3} := B_{2} - \gamma m$, we can assume that $z_i \not\in \Kspan{V_1, z}$, for $i \in [B_3]$.  

As $P_1 - x_i y_i = P_3 - a_i b_i$, we have that $a_i b_i - x_i y_i = P_3 - P_1 = za$ for all $i \in [B_{2}]$.
Hence, $a_ib_i \equiv x_iy_i \bmod (a)$. 
If $x_iy_i \not\equiv 0 \bmod (a)$, by the factoriality of $S/(a)$ we have that $a_ib_i = (x_i + \beta_i a)(y_i + \gamma_i a)$, which implies that
$$ za = a_ib_i - x_iy_i = a(\gamma_i x_i + \beta_i y_i + \beta_i \gamma_i a), $$
which in turn implies that $z = \gamma_i x_i + \beta_i y_i + \beta_i \gamma_i a$.
Since $a_i b_i \neq x_i y_i$, we have that one of $\beta_i, \gamma_i$ is nonzero, which implies that $z \in V_1 + \Cspan{x_i, y_i} = V_1 + \Cspan{z_i}$, which implies that $z_i \in V_1 + \Cspan{z}$, contradicting our choice of $z_i$.
This contradiction implies $x_i y_i \in (a)$.
Note that for every such $i$, we also have $a_{i} b_{i} = za - x_{i} y_{i} \in \ideal{a}$.

Since the forms $x_{j}, a_{k}$ for all $j, k$ are pairwise linearly independent, there can be at most one index with $x_{i} \in \ideal{a}$ or $a_{i} \in \ideal{a}$.
Therefore, for $i \in [B_{3} - 2]$, we have $y_{i}, b_{i} \in \ideal{a}$.
In particular, this implies that $\Kspan{V_1, x_i} = \Kspan{V_1, z_i}$ for $i \in [B_3-2]$, and therefore $x_i \not\in \Kspan{V_1, z}$.
After rescaling the $x_{i}$ and $a_{i}$, this implies
$$ R_{i} = P_{1} - x_{i} a = P_{3} - a_{i} a$$
for each such index, completing the first part of the proof in this case.

For each index satisfying the above, the condition $za = a_{i} b_{i} - x_{i} y_{i}$ now implies that $z = \alpha_{i} a_{i} - \beta x_{i}$ for nonzero $\alpha_{i}, \beta_{i}$.
Since $x_{i} \not \in \Kspan{V_{1}, z}$ by our choice of indices, we also have that $a \not\in V_1$.
If $\overline{z}, \overline{a_{i}}, \overline{x_{i}}$ are not pairwise linearly independent, then $z_{i} \in V_{1} + \Cspan{z}$, contradicting the choice of indices.
This completes the proof of the moreover part in this case.

\paragraph{Case (b):} $P_3 \not\in (V)$.

In this case, we have $P_{3} - P_{1} = pq$ with neither $p$ nor $q$ in $V_1$.
Let $W$ be the space spanned by $p, q$ in $S_{1} / V_{1}$.
Also, let $z_{i} \in S_{1} \setminus V_{1}$ be such that $x_{i}, y_{i}, u_{i}, v_{i} \in \Cspan{V_{1}, z_{i}}$ and let $\mu_{i}, \nu_{i} \in \bC$ be such that $x_{i} = \mu_{i} z_{i}$ and $y_{i} = \nu_{i} z_{i}$ in $S_{1} / V_{1}$.
Since $R_{i} \not \in \bC\bs{V}$, at most one of $\mu_{i}, \nu_{i}$ can be zero for each $i$.
We have equation $pq = P_3 - P_1 = \nu_{i} \mu_{i} z_{i}^{2} - a_{i} b_{i}$ in the ring $S_{1} / \ideal{V_{1}}$.

Let $\cL_1 \subset [B_2]$ be the set of indices such that $z_i \in V_1 + W$.
This implies $R_i \in \bC[V, W]$ for $i \in \cL_1$. By the clean condition, we have that $|\cL_1| \leq \gamma m$.
Let $\cL_2 \subset [B_2] \setminus \cL_1$ be the set of indices such that $a_i$ or $b_i \in V_1 + W$.
In this case, we have that $pq - a_ib_i = \mu_i \nu_i z_i^2$, and since $z_i \not\in V_1 + W$, we have that $\mu_i \nu_i = 0$ and thus both $a_i, b_i \in V_1 + W$, hence $R_i \in \bC[V, W]$.
However, this implies $R_i - P_1 = x_i y_i \in \bC[V, W] \then z_i \in V_1 + W$, which is a contradiction, so $\cL_2 = \emptyset$.

Without loss of generality, we can assume that $\bs{B_{3}} = [B_2] \setminus \cL_{1}$.
By definition, $i \in [B_3] \then z_i \not\in V_1 + W$ and $a_i b_i \not\in (V_1 + W)$.
Since $x_i y_i = pq - a_i b_i \equiv -a_ib_i \not\equiv 0 \bmod (V_1 + W)$, this also implies that both $x_i, y_i \not\in (V_1 + W)$.
Therefore, for every $i \in \bs{B_{3}}$ we have $x_{i}, y_{i}, a_{i}, b_{i} \not \in V_{1} + W$.

We now show $\dim W = 1$.
Assume towards a contradiction that $\dim W = 2$.
For each $i \in \bs{B_{3}}$, let $W_{i} := \Cspan{a_i, b_i, z_i}$ over $S_{1} / V_{1}$.
From $x_{i} y_{i} - a_{i} b_{i} = pq$, we have $p, q \in W_{i} \then W \subset W_i$.
Since $z_{i}, a_{i} \not\in W$, this inclusion is strict, and $\dim W_{i} = 3$.

For every $i, j \in \bs{B_{3}}$ we have $a_{i}b_{i} - a_{j} b_{j} = x_{i} y_{i} - x_{j} y_{j} \in \bC\bs{V_{1} + \Cspan{z_{i}, z_{j}} }$.
Hence, by \cref{proposition:2-collapse-robust} there is $\ell_{ij} \in S_1$ such that $a_{i}, b_{i}, a_{j}, b_{j} \in \Cspan{z_{i}, z_{j}, \ell_{ij}} + V_1$.
Therefore, $\Cspan{a_{i}, b_{i}, z_{i}} + V_1 = \Cspan{z_{i}, z_{j}, \ell_{ij}} + V_1 = \Cspan{a_{j}, b_{j}, z_{j}} +V_1 \then W_{i} = W_{j}$.
This implies that $R_{i} \in \bC\bs{V + W_{1}}$ for every $i \in [B_3]$, or equivalently that $\dim \linvb{V}{R_{i}} > \dim \linvb{V + W_1}{R_{i}}$. This contradicts cleanliness, since $B_{3} \geq \gamma m$.

Therefore, it must be that $\dim(W) = 1$, or equivalently (after rescaling) that $q = p + a$ for some $a \in V_{1}$.
Setting $z = p$, we get that $W = \Cspan{z}$ and the first part of case $1$ of the lemma holds, that is $P_{3} = P_{1} + z(z + a)$.
For each $i \in [B_3]$, we also know that $y_{i}, x_{i} \not \in V_{1}$, and therefore, we have $y_{i} = \alpha_i x_{i} + f_{i}$ for some $\alpha_i \in \bC^*$ and $f_{i} \in V_{1}$ (since $x_{i}, y_{i}$ are spanned by $z_{i}$ in $S_{1} / V_{1}$).
Since $x_iy_i - a_ib_i = z(z+a)$, by \cref{proposition:2-collapse-robust} and $x_i \not\in V_1 + W$, we have $x_{i}, y_{i}, a_{i}, b_{i} \in  \Cspan{z, a, x_i}$.
Since $f_{i} = y_i - \alpha_i x_i \in \Cspan{z, a, x_{i}}$ and $x_{i} \not\in V_1 + W$, it must be that $f_{i} \in  \Cspan{a}$.

From $z(z+a) = x_{i}(x_{i} + f_{i}) - a_{i} b_{i}$, we have $z^2 - x_i^2 \equiv a_i b_i \bmod (V_1) \then a_{i} = \beta_i (x_{i} \pm z)$ and $b_{i} = \beta_i^{-1}(x_{i} \mp z)$ over $S / (V_1)$.
Thus, $\overline{a_i}, \overline{z}$ and $\overline{x_i}$ form a linear dependence.
Combined with the fact that $\overline{a_{i}}, \overline{x_{i}} \not\in (\overline{z})$, we also deduce that $\overline{a_{i}}$ and $\overline{x_{i}}$ are linearly independent over $S/(V_1)$.
\end{proof}

\begin{lemma}[Controlling $\Fdeg$]\label{lem:controlFdeg}
There exists a $(17, (\varepsilon/100)^{3})$-clean vector space $V := V_1 + V_2$, where $V_i \subset S_i$ of dimension $\dim(V) = O(1/\delta^3)$ such that each form in $\Fdeg$ is either in $\bC[V]$ or univariate over $V$. 
Moreover:
\begin{enumerate}
    \item When $P \in \Fdeg \setminus \bC\bs{V}$, if $z_{P}$ spans $\linvb{V}{P}$ then there is $t \geq (\varepsilon/100)^{3} m$ and distinct linear forms $x_{1}, \dots, x_{t}\in \cF_1$, and distinct linear forms $a_{1}, \dots, a_{t}\in \cF_1$ such that for every $i \in [t]$, we have $x_{i}, a_{i} \not \in V_{1}$, $z_{P}, \overline{x_{i}}, \overline{a_{i}}$ are pairwise linearly independent in $S_{1} / V_{1}$, and $z_{P} \in \Cspan{\overline{x_{i}}, \overline{a_{i}}}$.
    \item If $W$ is $(17, (\varepsilon/100)^{3})$-clean such that $\bC\bs{V} \subseteq \bC\bs{W}$, then the above condition holds with $W$ too.
\end{enumerate}
\end{lemma}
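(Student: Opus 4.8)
The plan is to follow the template used for $\Fsquare$ in \cref{lem:controlFLsquare}: first reduce to the residual subset $\cR \subseteq \Fdeg$ via \cref{proposition:same-lin} and \cref{proposition:non-radical-fdeg}, then build a double cover of the dependency family $\{\Fdeg[P]\}_{P \in \cR}$, use \cref{lem:aux-controlFdeg} to force every residual form either into the clean algebra or into the univariate regime, and finally tidy everything up with \cref{lemma:ideal-cleanup}.

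Concretely, I would first combine \cref{proposition:same-lin} and \cref{proposition:non-radical-fdeg} into a space $W' \subseteq S_1$ of dimension $O(1/\delta^2)$ with $\cN \cup \cM \subseteq \bC[W']$; since $\Fdeg \setminus \cR \subseteq \cN \cup \cM$, this already puts every form of $\Fdeg$ outside $\cR$ inside $\bC[W']$. Then I would apply \cref{lemma:double-cover} to the sets $\{\Fdeg[P]\}_{P \in \cR}$ inside the ground set $\cF_2$ — these have size at least $\varepsilon m / 10$ by \cref{proposition:properties-fdeg-p} — with density parameter $\varepsilon/10$, producing cover families $\cH = \{P_1, \dots, P_r\}$ and $\cH_i = \{Q_{i1}, \dots, Q_{ir_i}\}$ with $r \le 20/\varepsilon$, $r_i \le 80/\varepsilon$, such that every $P \in \cR \setminus (\cH \cup \bigcup_i \cH_i)$ admits indices $i,j$ with $|\Fdeg[P] \cap \Fdeg[P_i] \cap \Fdeg[Q_{ij}]| \ge (\varepsilon/10)^3 m / 64$. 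Setting $W'' := W' + \sum_{Q \in \cH \cup \bigcup_i \cH_i} \Cspan{Q}$ gives $W''_1 = W'$ of dimension $O(1/\delta^2)$ and $W''_2$ spanned by the $O(1/\varepsilon^2)$ cover elements, also of dimension $O(1/\delta^2)$, with $\bC[W'']$ containing every cover element and every form of $\Fdeg \setminus \cR$. Running \cref{lemma:ideal-cleanup} on $W''$ with $r = 17$ and $\varepsilon' = (\varepsilon/100)^3$ then yields the desired $(17,(\varepsilon/100)^3)$-clean $V = V_1 + V_2$ with $\bC[W''] \subseteq \bC[V]$ and, by the dimension bound $128/\varepsilon' + \dim W''_1 + 4 \cdot 17 \cdot \dim W''_2$ there, $\dim V = O(1/\delta^3)$.

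It remains to handle a residual form $P \in \cR \setminus (\cH \cup \bigcup_i \cH_i)$, for which I would fix indices $i,j$ as above. If $\lin{P} = \lin{P_i}$ (the case $\lin{P} = \lin{Q_{ij}}$ being identical), I expect the rank dichotomy to save the day: either $\Rank(P) \ge 5$, which together with $\lin{P}=\lin{P_i}$ forces $P \in \Cspan{P_i}$ and hence $P = P_i \in \cH$, a contradiction; or $\Rank(P) \le 4$, in which case $\Rank(P_i) \le 4$ too, so by \cref{rem:lowrankquadraticinrobust} the low-rank form $P_i \in \bC[V]$ actually lies in $\bC[V_1]$, whence $\linold{P} = \linold{P_i} \subseteq V_1$ and $P \in \bC[\linold{P}] \subseteq \bC[V]$. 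Otherwise $\lin{P} \neq \lin{P_i}$ and $\lin{P} \neq \lin{Q_{ij}}$, and since $P_i, Q_{ij} \in \bC[V]$ and (a parameter check using $m > 1/\delta^{54}$) the triple intersection exceeds $3(\varepsilon/100)^3 m + 2$, I would invoke \cref{lem:aux-controlFdeg} on the triple $(P_i, Q_{ij}, P)$ with $\gamma = (\varepsilon/100)^3$: it gives that $P$ is univariate over $V$, and if $P \notin \bC[V]$ exhibits one of its alternatives (a), (b), namely $P = \alpha P_i + az$ or $P = \alpha P_i + z(z+a)$ with $a \in V_1$, $z \notin V_1$, together with $t \ge (\varepsilon/10)^3 m/64 - 3\gamma m - 2 \ge (\varepsilon/100)^3 m$ witnesses $R_1, \dots, R_t$ of the stated shape, where the $x_i \in \cF_1(P_i) \setminus V_1$ are pairwise independent (from the clause $x_i \notin (x_j)$ of $\Gdeg[P_i]$) and the $a_i \in \cF_1(P) \setminus V_1$ are pairwise independent (from $a_i \notin (a_j)$ of $\Gdeg[P]$), hence distinct, see \cref{proposition:properties-fdeg-p}. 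In both alternatives $\linold{P - \alpha P_i} = \Cspan{z, a}$, so $\linvb{V}{P} = \Cspan{\overline z}$ and $z_P = \overline z$, and the lemma's final clause that $\overline z, \overline{x_i}, \overline{a_i}$ are pairwise independent and span a $2$-plane is exactly Moreover~1. For Moreover~2, I would simply re-run this last paragraph verbatim with a $(17,(\varepsilon/100)^3)$-clean $W \supseteq \bC[V]$ in place of $V$ — the cover elements still lie in $\bC[W]$, the triple-intersection counts are unchanged, and $\lin{P} \neq \lin{P_i},\lin{Q_{ij}}$ still holds (else $P \in \bC[V] \subseteq \bC[W]$, contradicting $P \notin \bC[W]$).

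The step I expect to be \textbf{the main obstacle} is the bookkeeping around \cref{lem:aux-controlFdeg}: correctly discharging the two degenerate cases in which $\lin{P}$ coincides with $\lin{P_i}$ or $\lin{Q_{ij}}$ (where the rank dichotomy and \cref{rem:lowrankquadraticinrobust} are essential), and choosing the double-cover and cleanliness parameters so that the triple intersection comfortably beats $3\gamma m + 2$ while still keeping $\dim V = O(1/\delta^3)$; given \cref{lem:aux-controlFdeg}, everything else should be routine.
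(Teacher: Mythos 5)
Your proof follows the paper's route exactly: handle $\cM$ and $\cN$ via \cref{proposition:non-radical-fdeg} and \cref{proposition:same-lin}, double-cover $\{\Fdeg[P]\}_{P \in \cR}$ with parameter $\nu = \varepsilon/10$ via \cref{lemma:double-cover}, add the cover elements to the vector space, clean up via \cref{lemma:ideal-cleanup} with parameter $(\varepsilon/100)^3$, and then invoke \cref{lem:aux-controlFdeg} on each residual $P$ with the cover pair it intersects. The one place you go beyond the paper's write-up is that you explicitly discharge the hypothesis $\lin{P} \neq \lin{P_i}, \lin{Q_{ij}}$ required by \cref{lem:aux-controlFdeg}, which the paper leaves tacit; your rank dichotomy for that case (high rank gives $P\in\Cspan{P_i}$ contradicting pairwise independence, low rank gives $P_i \in \bC[V_1]$ via \cref{rem:lowrankquadraticinrobust} hence $\linold{P}=\linold{P_i}\subset V_1$ and $P\in\bC[V]$) is correct, and your parameter check that $(\varepsilon/10)^3 m/64 - 3(\varepsilon/100)^3 m - 2 \geq (\varepsilon/100)^3 m$ for $m > 1/\delta^{54}$ is accurate.
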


\begin{proof}
Partitioning $\Fdeg = \cM \sqcup \cN \sqcup \cR$ as defined in this subsection, \cref{proposition:non-radical-fdeg} implies that there exists a vector space $U \subset S_1$ such that $\dim(U) = O(1/\delta)$ and $\cM \subset \bC[U]$.
Similarly, \cref{proposition:same-lin} implies that there is a vector space $W \subset S_1$ such that $\dim(W) = O(1/\delta^2)$ and $\cN \subset \bC[W]$.
Thus, we are only left with the forms in $\cR$.

\cref{proposition:properties-fdeg-p} implies that $|\Fdeg[P]| \geq \varepsilon m/10$ for each $P \in \cR$.
Thus, applying our double cover lemma \cref{lemma:double-cover} with the sets $\Fdeg[P_i]$, for $P_i \in \cR$, and parameter $\nu = \varepsilon/10$, we obtain a set of forms $\cG \subset \cR$ with $|\cG| \leq (200/\varepsilon)^2$ such that for any form $P \in \cR$, we have that either $P \in \cG$ or there exist $P_1, P_2 \in \cG$ such that 
$$ \abs{\Fdeg[P] \cap \Fdeg[P_1] \cap \Fdeg[P_2]} \geq (\nu/4)^3 m 
= \varepsilon^{3} m/40^3 .  $$
Now, letting $V' := U + W + \Cspan{\cG}$, and applying the cleanup lemma \cref{lemma:ideal-cleanup} with robustness parameter $r = 17$ and clean parameter $(\varepsilon/100)^{3}$ to $V'$ and obtain our $(17, (\varepsilon/100)^{3})$-\sprime $V$. 
By part 1 of \cref{lemma:ideal-cleanup}, we have that $\cG \subset \bC[V]$ and by part 2 we have $U + W \subset V_1$.
By \cref{lem:aux-controlFdeg} applied to $P_{1}, P_{2}$ and $P$ we deduce that either $P \in \bC\bs{V}$, or that $P$ is univariate over $V$.
In the latter case, the moreover part of \cref{lem:aux-controlFdeg} gives us the distinct linear forms $x_{1}, \dots, x_{t}$ and $a_{1}, \dots, a_{t}$.
Note that $t \geq \varepsilon^{3} m/40^3 - 4 (\varepsilon/100)^{3} m > (\varepsilon/100)^{3} m$.
As the choice of $P$ was arbitrary, the above holds for every $P \in \cR$, and therefore for every $P \in \cF_{\deg}$.
The last condition follows from the fact that \cref{lem:aux-controlFdeg} does not depend on how the clean vector space was constructed.

By part 3 of \cref{lemma:ideal-cleanup}, we have $\dim V = 128 \cdot (100/\varepsilon)^{3} + \dim(V_1') + 4 \cdot 17 \cdot \dim(V_2')$, which implies $\dim(V) = O(1/\varepsilon^3) = O(1/\delta^3)$, since $\dim(V_1') = \dim(U+W) = O(1/\delta)$ and $\dim(V_2') = |\cG| = O(1/\delta^2)$.
This concludes the proof.
\end{proof}

\subsection[Controlling F-span]{Controlling $\Fspan$}

We now construct a clean vector space $V$ where any form in $\Fspan$ is either in $\bC[V]$ or is univariate over $V$.
We shall use the previous vector spaces for $\Fsquare, \Flinear$ and $\Fdeg$ to help us in this case.

Before we prove the main lemma of this subsection, \cref{lem:controlFspan}, we need to understand the structure of the forms in $\Fspan$ relative to the current clean vector space that we have.
To that end, throughout this section, we will work with $W := W_1 + W_2$ being a $(17, (\varepsilon/100)^{3})$-clean vector space such that $\Fsquare \subset \bC[W]$, and the forms in $\Flinear \cup \Fdeg$ are either in the algebra $\bC[W]$ or are univariate over $W$.
We will also define $\cH$ as the set of forms in $\cF_{2}$ which are in $\bC[W]$ or univariate over $W$, and $\cG = \cF_{2} \setminus \cH$.
Note that by the properties of $W$ we have that $\cG \subseteq \Fspan$.

\begin{proposition}
\label{prop:fspan2far}
If $Q \in \Fspan$ is $2$-far from $W$, then one of the following holds: 
\begin{enumerate}
    \item there are at least $\varepsilon m / 3$ linear SG triples $(Q, F_i, G_i)$ where $F_i \in \cH$ and $G_i \in \cG$, where all forms $F_i, G_j$ are distinct 
    \item there are at least $\varepsilon m / 3$ distinct linear SG triples $(Q, F_i, G_i)$, where $F_i, G_j \in \cG$
\end{enumerate}
\end{proposition}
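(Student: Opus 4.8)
The plan is to read off the required triples directly from the $\ge\varepsilon m$ span-dependencies recorded in $\Fspan[Q]$, tracking which member of each dependency lies in $\cH$ and which in $\cG$, and choosing the ``third form'' of each dependency carefully so that the count comes out cleanly.

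First I would record a structural fact: every $P\in\cH$ admits $G_P\in\bC[W]$ with $\Rank(P-G_P)\le 1$ --- take $G_P=P$ if $P\in\bC[W]$, and if $P$ is univariate over $W$ then $P-G_P=z_P(\alpha z_P+v)$ is a product of two linear forms. Consequently, if $\Pi\subset S_2$ is a $2$-dimensional space with $Q\in\Pi$, then at most one form of $\cF$ lying on $\Pi$ and not a scalar multiple of $Q$ can lie in $\cH$: two such forms $P_1,P_2\in\cH$ would span $\Pi$, forcing $Q=aP_1+bP_2$ with $a,b\neq 0$ (if $a=0$ then $Q\in\cH$, impossible as $Q$ is $2$-far from $W$), and then $\Rank(Q-aG_{P_1}-bG_{P_2})\le 2$, contradicting $2$-farness. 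In particular $Q\in\cG$.

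For each $Q'\in\Fspan[Q]$ I set $\Pi_{Q'}:=\Cspan{Q,Q'}$; by definition of $\Fspan[Q]$ this meets $\cF$ in at least three pairwise independent forms, and since $\ideal{Q,Q'}$ has trivial degree-$1$ part, all of them are quadratics on $\Pi_{Q'}$. I pick a witness $R(Q')\in(\Pi_{Q'}\cap\cF)\setminus\{Q,Q'\}$, \emph{preferring one in $\cG$ whenever available}; then $R(Q')=\alpha Q+\beta Q'$ with $\alpha,\beta\neq 0$, so $(Q,Q',R(Q'))$ is a linear SG triple. Partition $\Fspan[Q]=A\sqcup B\sqcup C$ according to the cases ($Q'\in\cG,\,R(Q')\in\cG$), ($Q'\in\cG,\,R(Q')\in\cH$), ($Q'\in\cH,\,R(Q')\in\cG$); the structural fact forbids both $Q'$ and $R(Q')$ lying in $\cH$, so these exhaust $\Fspan[Q]$. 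The preference rule makes $B$ small: $Q'\in B$ forces every form of $\Pi_{Q'}\cap\cF$ other than $Q,Q'$ into $\cH$, yet there is at most one such form, so $\Pi_{Q'}\cap\cF=\{Q,Q',\tilde R\}$ with $\tilde R\in\cH$ and $Q,Q'\in\cG$; then $\tilde R\in\Fspan[Q]$, and since $Q'\in\cG$ the preference rule gives $R(\tilde R)=Q'$, so $\tilde R\in C$. The map $Q'\mapsto\tilde R$ is injective (as $\Pi_{Q'}=\Cspan{Q,\tilde R}$ recovers $Q'$ as its unique $\cG$-member other than $Q$), whence $|B|\le|C|$ and $\varepsilon m\le|\Fspan[Q]|=|A|+|B|+|C|\le|A|+2|C|$, so $|A|\ge\varepsilon m/3$ or $|C|\ge\varepsilon m/3$.

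If $|A|\ge\varepsilon m/3$, the triples $(Q,Q',R(Q'))$ for $Q'\in A$ are pairwise distinct (distinct middle coordinate) with both non-$Q$ members in $\cG$: this is conclusion 2. If $|C|\ge\varepsilon m/3$, the triples $(Q,Q',R(Q'))$ for $Q'\in C$ have $Q'\in\cH$, $R(Q')\in\cG$, distinct first coordinates, and distinct $\cG$-members since $Q'\mapsto R(Q')$ is injective on $C$: equal witnesses $R(Q'_1)=R(Q'_2)$ with $Q'_1,Q'_2\in\cH$ would either put $Q$ in $\Cspan{Q'_1,Q'_2}$ (contradicting $2$-farness as above) or make $Q'_1,Q'_2$ proportional, hence equal; as $\cH\cap\cG=\emptyset$ all forms involved are then distinct, which is conclusion 1. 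I expect the delicate point to be the bookkeeping behind $|B|\le|C|$ --- the preference rule is exactly what turns the naive three-way split into a two-way dichotomy with the clean constant $\varepsilon m/3$, and one must verify the witnesses can be chosen and grouped consistently.
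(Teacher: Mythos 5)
Your proof is correct and follows essentially the same approach as the paper: the key structural fact (both $Q'$ and the third form cannot lie in $\cH$, else $Q$ would be $2$-close), the injective map from $\cG$-forms whose only available third form is in $\cH$ into $\cH$-forms, and the resulting pigeonhole. Your explicit $A\sqcup B\sqcup C$ decomposition with the unconditional $|B|\le|C|$ inequality is just a clean reorganization of the paper's sequential case-split on $|\Fspan[Q]\cap\cH|$.
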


\begin{proof}
Since $Q$ is 2-far from $W$, $\Rank(Q - P) \geq 3$ for any $P \in \bC\bs{W}$.
Let $F \in \Fspan[Q]$ and $G \in  \Cspan{F, Q}\cap \cF_2$.
Both $F$ and $G$ cannot be in $\cH$: if that was the case, there would exist $P_F, P_G \in \bC[W]$ with $\Rank(F - P_{F}) \leq 1$ and $\Rank(G - P_{G}) \leq 1$, which implies $\Rank(Q - (P_{F} + P_{G})) \leq 2$, contradicting $Q$ being $2$-far from $W$.
Thus, either $F$ and $G$ are both in $\cG$, or one of them is in $\cH$ and the other is in $\cG$.

Suppose we have $F_1, F_2 \in \cH \cap \Fspan[Q]$ and
$G \in \Cspan{Q, F_1} \cap \Cspan{Q, F_2}\cap \cF_2$.
In this case, we would have $\Cspan{G, Q} = \Cspan{F_1, F_2}$, which implies $Q \in \Cspan{F_1, F_2}$, which contradicts $Q$ being $2$-far from $W$.
Thus, if there are at least $\varepsilon m / 3$ forms $F_1, \ldots, F_t$ in $\Fspan[Q] \cap \cH$, then there are at least $\varepsilon m / 3$ distinct forms $G_1, \ldots, G_t \in \cG$ such that $(Q, F_i, G_i)$ is a linear SG triple.

We are now left with the case where $|\Fspan[Q] \cap \cH| \leq \varepsilon m/3$.
Hence, $|\Fspan[Q] \cap \cG| \geq 2 \varepsilon m / 3$.
Let $P_1, \ldots, P_a$ be the forms in $\Fspan[Q] \cap \cG$ such that $\Cspan{Q, P_i}$ does not contain a third form in $\Fspan[Q] \cap \cG$.
If $R_{i}$ is the form spanned by $Q$ and $P_{i}$, we must have $R_{i} \in \cH$, and further it must be that $R_{i} \not\in (R_{j})$: if not then $P_{i}$ and $Q$ span $P_{j}$, contradicting our assumption on the $P_i$'s.

By the pigeon hole principle, and by the assumption that $\Fspan[Q] \cap \cH$ has less than $\varepsilon m / 3$ forms, it must be that $a \leq \varepsilon m /3$, and therefore that there are at least $\varepsilon m / 3$ forms in $\Fspan[Q] \cap \cG$ such that the form they span with $Q$ also lies in $\cG$.
\end{proof}

\begin{proposition}
\label{prop:fspan12close}
If $Q \in \Fspan$ is either $1$-close or $2$-close to $W$ and satisfies $\dim\br{\linvb{Q}{W}} \geq 2$ then there are at least $(\varepsilon - 2 (\varepsilon/100)^{3}) m$ distinct linear SG triples $(Q, F_i, G_i)$ with $F_i, G_i \in \cG$.
\end{proposition}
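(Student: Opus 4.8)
The plan is to reduce, via one quantitative input, to the combinatorial argument already used in the proof of \cref{prop:fspan2far}. The input I would establish is
\[
  |\Fspan[Q] \cap \cH| < (\varepsilon/100)^{3}\, m .
\]
Once this is known, the rest is routine: $Q$ itself lies in $\cG$ (it is not in $\bC[W]$, being $1$- or $2$-close, and not univariate, since $\dim\br{\linvb{W}{Q}}\geq 2$), and $|\Fspan[Q]|\geq\varepsilon m$ by definition of $\Fspan$. Discarding the $\cH$-forms leaves at least $(\varepsilon-(\varepsilon/100)^{3})m$ forms of $\Fspan[Q]\cap\cG$, and a form $P$ among these fails to yield a triple $(Q,P,G)$ with $G\in\cG$ only if the line through $Q$ and $P$ carries no form of $\cG\setminus\{Q,P\}$, in which case that line contains an element of $\Fspan[Q]\cap\cH$; since distinct lines through $Q$ meet only at $Q\notin\cH$, there are at most $|\Fspan[Q]\cap\cH|$ such $P$, leaving at least $(\varepsilon-2(\varepsilon/100)^{3})m$ forms $P_i\in\cG$ with span partners $G_i\in\cG$, and the triples $(Q,P_i,G_i)$ are distinct because the $P_i$ are.

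To prove the bound I would fix $U_1\subseteq S_1$, a set of at most $4$ linear forms lifting a basis of $\linvb{W}{Q}$ (so $\dim U_1\leq 4$, $Q\in\bC[W+U_1]$, and $W+U_1$ is still $9$-robust, hence $\linvb{W+U_1}{\cdot}$ is defined), and classify the lines $L$ through $Q$ meeting $\cF_2$ in at least $3$ forms. Using \cref{prop:uniqueness-decomposition-robust-ideal} together with \cref{rem:lowrankquadraticinrobust} (to kill $\bC[W]$-terms when comparing relative linear spaces of low-rank quadratics), one sees that for two distinct forms $A,B$ on such a line: if $A\in\bC[W]$ then $\linvb{W}{B}=\linvb{W}{Q}$; if $A$ is univariate over $W$ with direction $z_A$, then $\linvb{W}{B}$ and $\linvb{W}{Q}$ agree up to the line $\Cspan{z_A}$ in $S_1/W_1$. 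Since $\dim\br{\linvb{W}{Q}}\geq 2$, this forces the trichotomy: (i) $L$ meets $\bC[W]$ in a unique form and has no univariate form, all its other forms lying in $\cG$ with $\linvb{W}{\cdot}=\linvb{W}{Q}$; (ii) $L$ carries a univariate form whose direction is not in $\linvb{W}{Q}$, which is then the unique $\cH$-form of $L$; (iii) every $\cH$-form of $L$ is univariate with direction in $\linvb{W}{Q}$.

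Case (ii) is the main obstacle. There I would show that every third form $B$ of such a line lies in $\cG$: if $B\in\bC[W]$ or $B$ were univariate, the two inclusions $\linvb{W}{B}\subseteq\linvb{W}{Q}+\Cspan{z_A}$ and $\linvb{W}{Q}\subseteq\linvb{W}{B}+\Cspan{z_A}$ would give $\dim\br{\linvb{W}{Q}}\leq 1$ or $z_A\in\linvb{W}{Q}$, both excluded. The same inclusions show $B$ is $\leq 3$-close to $W$ with $\dim\br{\linvb{W}{B}}\geq 2$ but $\dim\br{\linvb{W+U_1}{B}}\leq 1$, so $B$ witnesses the failure of condition~2 of \cref{def:clean-prime} for the pair $(W,U_1)$. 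Assigning to each univariate $F\in\Fspan[Q]$ with direction outside $\linvb{W}{Q}$ such a form $B$ on its line is injective, since two such forms sharing a line would both have to be that line's unique $\cH$-form.

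The remaining two families are easy: to each $F\in\Fspan[Q]\cap\bC[W]$ assign a third form $G(F)$ of its line; by (i) this is injective, $G(F)$ is $\leq 2$-close, $\linvb{W}{G(F)}=\linvb{W}{Q}$ has dimension $\geq 2$, and $G(F)\in\bC[W+U_1]$, so $G(F)$ is again a witness; and each univariate $F\in\Fspan[Q]$ with direction in $\linvb{W}{Q}$ already lies in $\bC[W+U_1]$ and is itself a witness. The three image sets are pairwise disjoint: forms of the last family have $\dim\linvb{W}{\cdot}=1$ while those of the first two have $\dim\linvb{W}{\cdot}\geq 2$, and a line carrying a $\bC[W]$-form has no univariate form, so it cannot be of both types (i) and (ii). Since $\Fspan[Q]\cap\cH$ is exactly the disjoint union of these three families of $F$'s (each line of type (i) or (ii) contributing a single $\cH$-form), $|\Fspan[Q]\cap\cH|$ is at most the number of forms witnessing non-cleanliness of $W$ against $U_1$, which is $<(\varepsilon/100)^{3}m$ because $W$ is $(17,(\varepsilon/100)^{3})$-clean. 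This establishes the bound and, with it, the proposition.
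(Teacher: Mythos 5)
Your proof is correct, but it takes a genuinely different route from the paper's.

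The paper fixes for each $P\in\Fspan[Q]$ a single third form $R\in\Cspan{Q,P}\cap\cF_2$, then runs a case analysis on which of the pair $(P,R)$ lies in $\cH$. It invokes the clean condition with $Z:=W+\lin{E_Q}$ twice — once to bound the indices with $P\in\cH$, $R\in\cG$ (handling repeated $R$'s by switching witnesses to $P$), and once to bound those with $R\in\cH$ — arriving at $(\varepsilon-2(\varepsilon/100)^3)m$.

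You instead isolate the sharper intermediate claim $\abs{\Fspan[Q]\cap\cH} < (\varepsilon/100)^3 m$, established via a single application of the clean condition, and then handle ``bad lines'' (those whose only $\cG$-form is $P$ itself) by a separate combinatorial count, also bounded by the same quantity. Your argument organizes the forms of $\Fspan[Q]\cap\cH$ by the \emph{type of the line} through $Q$ they sit on (type (i): a $\bC[W]$-form; type (ii): a univariate form with direction outside $\linvb{W}{Q}$; type (iii): univariate forms with direction inside $\linvb{W}{Q}$), whereas the paper organizes by the membership pattern of the \emph{pair} $(P,R)$. The underlying structural computations with $\linvb{W}{\cdot}$ are essentially equivalent — your observation that a form $A\in\bC[W]$ on the line forces $\linvb{W}{B}=\linvb{W}{Q}$ for every other form $B$ on it, and that a univariate $A$ forces $\linvb{W}{B}$ and $\linvb{W}{Q}$ to agree up to $\Cspan{z_A}$, is exactly what the paper derives inline. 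But your reorganization yields a cleaner trichotomy, produces the stronger intermediate fact about $\abs{\Fspan[Q]\cap\cH}$, and makes the injectivity of witnesses transparent (distinct lines through $Q$ meet only at $Q\notin\cH$), at the minor cost of an additional ``bad-line'' step that the paper absorbs into its pair-classification. Both routes give the identical final count.
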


\begin{proof}

Let $Q = F_{Q} + E_{Q}$ with $F_{Q} \in \bC\bs{W}$ and $\Rank\br{E_{Q}} \leq 2$.
We have $\linvb{W}{Q} = \linvb{W}{E_{Q}}$.
Let $P \in \Fspan[Q]$ and let $R \in \cF_{2} \cap \Cspan{Q, P}$.
Let $Z := \lin{E_{Q}} + W$, so $Z_{1} = W_{1} + \lin{E_{Q}}$.

Assume both $P$ and $R$ are in $\cH$.
If either $P$ or $R$ is in $\bC\bs{W}$, then $Q$ is a univariate over $W$ which is a contradiction, so we have $\dim \br{\linvb{P}{W}} = \dim \br{\linvb{W}{R}} = 1$.
We can write $P = F_{P} + y_{P} z_{P}$ and $R = F_{R} + y_{R} z_{R}$ with $F_{P}, F_{R} \in \bC\bs{W}$.
Let $R = \alpha P + \beta Q$ for some $\alpha,\beta \in \bC^*$. We have $F_{Q} +\alpha F_{P} -\beta F_{R} = \beta y_{R} z_{R} - \alpha y_{P} z_{P} - E_{Q}$, and this form is in $\bC\bs{W}$.
Since the rank of $\beta y_{R}z_{R} -\alpha y_{P} z_{P} - E_{Q}$ is bounded by $4$, this form is in $\bC\bs{W_{1}}$.
Combining this with the fact that $y_{P}, y_{R}$ are spanned by $z_{P}, z_{R}$ in $S_{1} / W_{1}$ we get $E_{Q} \in \bC\bs{W_{1}, z_{P}, z_{R}}$ and $\lin{E_{Q}} \subseteq W_{1} + \langle z_{P}, z_{R} \rangle$.
Since $\dim\br{\linvb{W}{Q}} \geq 2$, the image of $\lin{E_{Q}}$ in $S_{1} / W_{1}$ has dimension at least $2$, and therefore is spanned by $z_{P}, z_{R}$.
We then get $z_{P}, z_{R} \in \linvb{W}{Q}$, and that $\dim \br{\linvb{Z}{P} } = \dim \br{\linvb{Z}{R}} = 0$.

Suppose now that $P \in \cH$ and $R \in \cG$.
We again write $P = F_{P} + y_{P} z_{P}$, where $y_{P} = z_{P} = 0$ if $P \in \bC\bs{W}$.
Since $Q$ is at most $2$-close and since $P$ is at most $1$-close, the form $R$ is at most $3$-close from $W$.
We can therefore write $R = F_{R} + E_{R}$, with $F_{R} \in \bC\bs{W}$ and $\Rank \br{E_{R}} \leq 3$.
Since $R \not \in \cH$ we have $\dim\br{\linvb{W}{R}} \geq 2$.
Combining the equations again we get $E_{R} - E_{Q} - y_{P} z_{P} \in \bC\bs{W_{1}}$, and therefore $E_{Q} \in \bC\bs{W_{1}, z_{P}, \lin{E_{R}}}$.
This implies that $\linvb{W}{Q} \subseteq \langle z_{P} \rangle + \linvb{W}{R}$.
Combining this with the facts that $\dim\br{\linvb{W}{R}}, \dim\br{\linvb{W}{Q}} \geq 2$ we get that $\dim\left( \br{\linvb{W}{Q}} \cap \br{\linvb{W}{R}} \right) \geq 1$.
This in turn implies that $\dim\br{\linvb{W}{R}} > \dim\br{\linvb{Z}{R}}$.
In the case when $P \in \cG$ and $R \in \cH$, we can repeat the above analysis to deduce $\dim\br{\linvb{W}{P}} > \dim\br{\linvb{Z}{P}}$.

We now use the clean condition to bound how many times the above two cases can occur.
Let $P_{1}, \dots, P_{a}$ be the forms in $\Fspan[Q]$, and let $R_{i} \in \cF_{2} \cap \Cspan{Q, P_{i}}$.
Let $P_{1}, \dots, P_{b}$ be the forms such that $P_{i} \in \cH$ and $R_{i} \in \cG$.
For every $i \leq b$ we have $\dim\br{\linvb{W}{R_{i}}} > \dim\br{\linvb{Z}{R_{i}}}$ by the above.
If $R_{i} = R_{j}$ then $Q \in \Cspan{P_{i}, P_{j}}$.
From the above analysis, this implies $\dim \br{\linvb{W}{P_{i}}} = \dim \br{\linvb{W}{P_{j}}} = 1$, while $\dim \br{\linvb{Z}{P_{i}}} = \dim \br{\linvb{Z}{P_{j}}} = 0$.
Therefore, among the $2b$ forms $P_{1}, \dots, P_{b}, \allowbreak R_{1}, \dots, R_{b}$, there are at least $b$ distinct forms whose relative space of linear forms with respect to $Z$ has lower dimension than that with respect to $W$, and by the clean condition $b \leq (\varepsilon/100)^{3} m$.
Let $P_{b+1}, \dots, P_{b'}$ be the forms with $R_{i} \in \cH$.
By the above analysis, for every such $i$ we have $\dim\br{\linvb{W}{P}} > \dim\br{\linvb{Z}{P}}$, whence $b' - b \leq (\varepsilon/100)^{3} m$.
For every $P_{i}$ with $i > b'$ we must have $P_{i}, R_{i} \in \cG$, completing the proof.
\end{proof}

\begin{lemma}\label{lem:controlFspan}
Let $W := W_1 + W_2$ be a $(17, (\varepsilon/100)^{3})$-clean vector space such that $\Fsquare \subset \bC[W]$, and the forms in $\Fdeg \cup \Flinear$ are either in $\bC[W]$ or are univariate over $W$.
Then there exists a $(17, (\varepsilon/100)^{3})$-clean vector space $U := U_1 + U_2$ with $\bC\bs{W} \subset \bC\bs{U}$ such that every form in $\Fspan$ is either in $\bC\bs{U}$ or is univariate over $U$.
Moreover, $\dim U_{2} = \dim W_{2} + \bigO{1 / \varepsilon}$ and $\dim U_{1} = \dim W_{1} +  \bigO{\dim W_{2} / \varepsilon + 1 / \varepsilon^{4}}$.
\end{lemma}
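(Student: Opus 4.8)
The plan is to build $U$ from $W$ by an iterative process: at each step we either enlarge the algebra by a single quadratic form and re-apply the cleanup procedure of \cref{lemma:ideal-cleanup} (making quantitative progress towards capturing $\Fspan$), or, once no such step makes progress, we recognise the forms of $\cF_2$ that are still not captured as an ordinary robust \emph{linear} Sylvester--Gallai configuration inside the vector space $S_2$ and capture all of them at once via the rank bound \cref{thm:DSW14}. Throughout, say a quadratic $Q$ is \emph{captured} by a robust vector space $V$ if $Q \in \bC[V]$ or $Q$ is univariate over $V$. Two standing remarks. (i) An uncaptured quadratic $Q$ is necessarily either $2$-far from $V$, or exactly $2$-close to $V$ with $\dim\linvb{V}{Q} \geq 2$: if $Q$ were $1$-close we could write $Q - F = \ell^{2}$ with $F \in \bC[V]$ and $Q$ would be captured, and if $Q$ were $2$-close with $\dim\linvb{V}{Q} \leq 1$ a short inspection of $\linold{Q-F}$ shows $Q$ would in fact be $1$-close. (ii) If $V$ is a $(17,(\varepsilon/100)^{3})$-clean vector space with $\bC[W] \subseteq \bC[V]$, then $V$ inherits all the hypotheses on $W$, namely $\Fsquare \subseteq \bC[V]$ and every form of $\Fdeg \cup \Flinear$ is captured by $V$; hence the set of forms of $\cF_2$ not captured by $V$ is contained in $\Fspan$, and \cref{prop:fspan2far} and \cref{prop:fspan12close} apply with $V$ in place of $W$.

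Set $U^{(0)} := W$. Given a $(17,(\varepsilon/100)^{3})$-clean vector space $U^{(k)}$ with $\bC[W] \subseteq \bC[U^{(k)}]$, let $\cG^{(k)}$ be the set of forms of $\cF_2$ not captured by $U^{(k)}$; by (ii) we have $\cG^{(k)} \subseteq \Fspan$. If $\cG^{(k)} = \emptyset$ we stop and set $U := U^{(k)}$. Otherwise we are in the \emph{progress case} if there exists $Q \in \cG^{(k)}$ such that, letting $U^{(k+1)}$ be the output of \cref{lemma:ideal-cleanup} applied to $U^{(k)} + \Cspan{Q}$ with robustness $17$ and clean parameter $(\varepsilon/100)^{3}$, one has $\abs{\cG^{(k+1)}} \leq \abs{\cG^{(k)}} - \varepsilon m/10$; we then perform this step. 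Since $\abs{\cG^{(0)}} \leq \abs{\cF_2} \leq m$, the progress case occurs at most $10/\varepsilon$ times, so after finitely many iterations we reach either $\cG^{(k)} = \emptyset$ or the \emph{stuck case}, in which every $Q \in \cG^{(k)}$ has the property that adding $Q$ and re-cleaning captures strictly fewer than $\varepsilon m/10$ forms of $\cG^{(k)}$.

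The heart of the argument is that in the stuck case $\cG^{(k)}$ is a robust linear SG configuration of vectors in $S_2$. Fix $Q \in \cG^{(k)}$, and let $U^{+}$ be the cleanup of $U^{(k)} + \Cspan{Q}$. If $Q$ is $2$-far, then $Q$ cannot satisfy case $1$ of \cref{prop:fspan2far} relative to $U^{(k)}$: otherwise there would be $\varepsilon m/3$ distinct forms $G_{i} \in \cG^{(k)}$, each of the shape $G_{i} = \alpha_{i} Q + \beta_{i} F_{i}$ with $F_{i}$ captured by $U^{(k)}$ and hence by $U^{+}$; writing $F_{i} = P_{i} + z_{F_{i}}(\gamma_{i} z_{F_{i}} + v_{i})$ with $P_{i} \in \bC[U^{+}]$ and $v_{i} \in U^{+}_{1}$ (or simply $F_{i} \in \bC[U^{+}]$), the form $G_{i} - (\alpha_{i} Q + \beta_{i} P_{i})$ has rank at most $1$, so every $G_{i}$ is captured by $U^{+}$, contradicting the stuck hypothesis. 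Thus, by \cref{prop:fspan2far} when $Q$ is $2$-far and by \cref{prop:fspan12close} when $Q$ is $2$-close (necessarily with $\dim\linvb{U^{(k)}}{Q} \geq 2$), there are at least $\varepsilon m/3$ \emph{distinct} forms $F \in \cG^{(k)}$ for which $\Cspan{Q,F}$ contains a third form of $\cG^{(k)}$. Since $\cG^{(k)} \subseteq \cF$ is pairwise independent and $\abs{\cG^{(k)}} - 1 < m$, this exhibits $\cG^{(k)}$ as a $(\varepsilon/3)$-linear SG configuration of vectors in $S_2$, so \cref{thm:DSW14} (which is a statement about finite sets of vectors in an arbitrary $\bC$-vector space) gives $\dim\Cspan{\cG^{(k)}} \leq 39/\varepsilon$. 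Picking a basis $Q_{1}, \dots, Q_{\ell} \in \cG^{(k)}$ of $\Cspan{\cG^{(k)}}$, with $\ell \leq 39/\varepsilon$, applying \cref{lemma:ideal-cleanup} to $U^{(k)} + \Cspan{Q_{1}, \dots, Q_{\ell}}$ and calling the output $U$: every form of $\cG^{(k)}$ now lies in $\bC[U]$, every form captured by $U^{(k)}$ remains captured by $U$, so every form of $\cF_2$ — in particular every form of $\Fspan$ — is captured by $U$, and the construction is complete.

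Finally, the dimension count. The process runs for at most $10/\varepsilon + 1$ iterations, each adding to the quadratic part either one generator or, in the final step, at most $39/\varepsilon$ generators, followed by one application of \cref{lemma:ideal-cleanup} with robustness $17$ and clean parameter $(\varepsilon/100)^{3}$. Since the cleanup never enlarges the quadratic part, $\dim U_2 \leq \dim W_2 + 10/\varepsilon + 39/\varepsilon = \dim W_2 + \bigO{1/\varepsilon}$. Each cleanup enlarges the linear part by at most $128/(\varepsilon/100)^{3} + 4 \cdot 17 \cdot d$, where $d \leq \dim W_2 + \bigO{1/\varepsilon}$ is the dimension of the quadratic part at that moment, i.e.\ by $\bigO{1/\varepsilon^{3} + \dim W_2}$; summing over the $\bigO{1/\varepsilon}$ iterations gives $\dim U_1 = \dim W_1 + \bigO{\dim W_2/\varepsilon + 1/\varepsilon^{4}}$, as required. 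The main obstacle is the stuck-case analysis: one must verify that the ``type $1$'' span dependencies — those passing through an already-captured form — are too few to spoil the robust linear SG structure on $\cG^{(k)}$, and, crucially, that \cref{prop:fspan2far} and \cref{prop:fspan12close} really supply \emph{distinct} witness forms inside $\cG^{(k)}$; this distinctness is precisely what allows the use of the strong rank bound \cref{thm:DSW14} in place of the weaker $\delta$-LCC bound \cref{thm:BDWY11}.
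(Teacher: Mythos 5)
Your plan coincides essentially with the paper's proof: iteratively absorb a form $Q$ satisfying case~1 of \cref{prop:fspan2far} (equivalently, a $Q$ whose addition and re-cleanup captures $\Omega(\varepsilon m)$ new forms), terminate after $O(1/\varepsilon)$ rounds, then recognise the residual $\cG^{(k)}$ as a $\Theta(\varepsilon)$-linear SG configuration inside $S_2$ via case~2 of \cref{prop:fspan2far} and \cref{prop:fspan12close}, and finish with \cref{thm:DSW14}, followed by one last cleanup. The dimension bookkeeping also matches.

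There is, however, a genuine error in your standing remark~(i). You assert that a $1$-close quadratic is automatically captured, reasoning that $Q - F$ of rank $1$ must be a perfect square $\ell^2$. This is false under the paper's \cref{def:rank}: rank $1$ means $Q - F = ab$ for possibly independent linear forms $a, b$, and if $\overline{a}, \overline{b}$ are independent in $S_1/V_1$ then $\dim \linvb{V}{Q} = 2$ and $Q$ is neither in $\bC[V]$ nor univariate. So an uncaptured $Q$ may be $1$-close with $\dim\linvb{V}{Q} = 2$, and your stuck-case analysis, which only invokes \cref{prop:fspan12close} ``when $Q$ is $2$-close,'' silently drops this case. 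The fix is trivial precisely because \cref{prop:fspan12close} is already stated for ``$1$-close or $2$-close'' with $\dim \geq 2$, so the conclusion you draw is still available; but as written the case enumeration is incomplete and the stated justification is wrong. (Your second subclaim in remark~(i), that $2$-close forces $\dim \geq 2$, is correct.) Correcting remark~(i) to read ``uncaptured $Q$ is $2$-far, or $1$-close with $\dim\linvb{V}{Q} = 2$, or $2$-close with $\dim\linvb{V}{Q} \geq 2$'' and extending the stuck-case clause to ``$1$-close or $2$-close'' repairs the argument and aligns it with the paper's, which never makes this claim and simply feeds all close forms to \cref{prop:fspan12close}.
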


\begin{proof}
    We use an iterative process to construct $U$.
    Start by setting $U^{(0)} := W$.
    Let $\cH^{(0)}$ consist of all forms univariate over $U^{(0)}$ and $\cG^{(0)} := \cF_{2} \setminus \cH^{(0)}$.
    Suppose $Q \in \cG^{(0)}$ satisfies case 1 of \cref{prop:fspan2far}, that is, there are at least $\varepsilon m / 3$ linear SG triples $\br{Q, F_{i}, G_{i}}$ with $F_{i} \in \cH^{(0)}$ and $G_{i} \in \cG^{(0)}$.
    We then set $U'^{(1)} := U^{(0)} + \Cspan{Q}$, and set $U^{(1)}$ to be the result of applying the clean up procedure (\cref{lemma:ideal-cleanup}) to $U'^{(1)}$ with parameters $r, (\varepsilon/100)^{3}$.
    We also set $\cH^{(1)}$ to be the subset of $\cF_{2}$ that is univariate over $U^{(1)}$, and set $\cG^{(1)} := \cF_{2} \setminus \cH^{(1)}$.
    The form $Q$ is in $\bC\bs{U^{(1)}}$, and the forms $G_{i}$ are univariate over $U^{(1)}$, and therefore $\abs{\cH^{(1)}} - \abs{\cH^{(0)}} \geq \varepsilon m / 3$.
    By \cref{lemma:ideal-cleanup}, we have $\dim U^{(1)}_{2} \leq \dim U^{(0)}_{2} + 1$ and 
    $$\dim U^{(1)}_{1} \leq \dim U^{(0)}_{1} + 128 \cdot (100/\varepsilon)^{3} + 68 \cdot \br{\dim U^{(0)}_{2} + 1}.$$
    We repeat the above process as long as we can find such a form $Q$.
    After $t$ steps, we have vector space $U^{(t)}$ and subset $\cH^{(t)}$ such that $\abs{\cH^{(t)}} - \abs{\cH^{(0)}} \geq \varepsilon t m / 3$, and $\dim U^{(t)}_{2} \leq \dim U^{(0)}_{2} + t$, and
    $$\dim U^{(t)}_{1} \leq \dim U^{(0)}_{1} + 128 \cdot (100/\varepsilon)^{3} \cdot t + 68 \cdot t \cdot \dim U^{(0)}_{2} + 68 \cdot \frac{t(t+1)}{2}.$$
    Since $\abs{\cH^{(t)}} \leq m$, the process must stop after $t = 3 / \varepsilon$ steps.
    
    Once we can no longer find such a form $Q$, every form in $\cG^{(t)}$ satisfies either case $2$ of \cref{prop:fspan2far} or \cref{prop:fspan12close}, and therefore $\cG^{(t)}$ is a robust linear SG configuration with parameter $\varepsilon / 6$.
    By \cref{thm:DSW14} there are forms $P_{1}, \dots, P_{b}$ with $b \leq 78 / \varepsilon$ such that $\cG^{(t)} \subseteq \Cspan{P_{1}, \dots, P_{b}}$.
    We set $U' := U^{(t)} + \Cspan{P_{1}, \dots, P_{b}}$, and set $U$ to be the result of the clean up procedure (\cref{lemma:ideal-cleanup}) to $U'$ with parameters $r, \varepsilon^{3}/4^{8}$.
    By construction, every form in $\Fspan$ is either in $\bC\bs{U}$ or univariate over $U$.
    Since $\bC\bs{W} \subseteq \bC\bs{U}$, every form in $\cF$ is also either in $\bC\bs{U}$ or univariate over $U$.
    Finally, using the bound on $\dim U^{(t)}$ and $b$ we get $\dim U_{2} \leq \dim W_{2} + C / \varepsilon$ and
    $$\dim U_{1} \leq \dim W_{1} + C \cdot \frac{\dim W_{2}}{\varepsilon} + C \cdot \frac{1}{\varepsilon^{4}}$$
    for some universal constant $C$.
\end{proof}

\subsection{Proof of Theorem \ref{thm:deltasg2}}

We now prove our main theorem: $\drsg{\delta}{2}$ configurations must lie in a constant dimensional vector space.
We prove our result by first constructing an auxiliary algebra where the forms of our configuration $\cF$ are either in the algebra, or are univariate over this algebra, and then prove that we can augment this algebra slightly to contain all forms from $\cF$.
The first step follows from the lemmas we have proved in the last four subsections.
The second step is to show that extra variables corresponding to the forms form a LCC configuration, allowing us to bound their rank.

\begin{lemma}[Reduction to Base Configuration]\label{lemma:reduction-to-basic}
Let $0 < \delta \leq 1$ be a constant, and let $\varepsilon := \delta / 10$.
Let $\cF$ be a $\drsg{\delta}{2}$ configuration. 
There exists a $(17, (\varepsilon/100)^{3})$-\sprime with respect to $\cF$, denoted by $V$, such that: 
\begin{enumerate}
    \item every form in $\cF$ is either in $\bC\bs{V}$ or univariate over $V$, and $\dim(V) = O(1/\varepsilon^{4})$.
    \item $\Fsquare \cup \Flinear \subseteq \bC\bs{V}$. 
    \item For every form $P \in \Fdeg \setminus \bC\bs{V}$, if $\linvb{V}{P} = \Cspan{z_P}$ then there are at least $(\varepsilon/100)^{3}$ distinct linear forms $x_{1}, \dots, x_{t}\in \cF_1$ and distinct linear forms $a_{1}, \dots, a_{t}\in \cF_1$ such that for every $i$, the linear forms $z_{P}, \overline{x_{i}}, \overline{a_{i}}$ are pairwise linearly independent in $S_{1} / V_{1}$, and $z_{P} \in \Cspan{\overline{x_{i}}, \overline{a_i}}$. 
\end{enumerate}
\end{lemma}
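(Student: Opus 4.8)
The plan is to assemble \cref{lemma:reduction-to-basic} directly from the four ``controlling'' lemmas proved in the preceding subsections, applying them in a carefully chosen order so that the clean vector space built for one family of forms is a valid starting point for the next. I would begin with $\Fsquare$: by \cref{lem:controlFLsquare} there are vector spaces $V_1^{(0)} \subseteq S_1$, $V_2^{(0)} \subseteq S_2$ of dimension $O(1/\delta^2)$ with $\Fsquare \subseteq \bC[V_1^{(0)}, V_2^{(0)}]$. Set $W^{(0)} := V_1^{(0)} + V_2^{(0)}$.

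Next I would handle $\Fdeg$. Here one must be slightly careful: \cref{lem:controlFdeg} constructs a clean vector space \emph{from scratch}, so instead I would feed the union $W^{(0)} + (\text{the space } V' \text{ from the proof of } \cref{lem:controlFdeg})$ into the cleanup lemma \cref{lemma:ideal-cleanup} with parameters $r = 17$ and $(\varepsilon/100)^3$. This produces a $(17,(\varepsilon/100)^3)$-clean vector space $W^{(1)}$ with $\bC[W^{(0)}] \subseteq \bC[W^{(1)}]$, still of dimension $O(1/\delta^3)$ (the cleanup only adds $O(1/\varepsilon^3)$ linear forms plus $O(r \cdot \dim W_2)$, and $\dim W_2 = O(1/\delta^2)$). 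By part 2 of \cref{lemma:ideal-cleanup} the relevant linear spaces are contained in $W^{(1)}_1$, so by \cref{lem:controlFdeg} (whose conclusion, crucially, is robust under passing to any larger clean vector space — this is condition 2 of that lemma, and the reason for part 3 of the current statement) every form in $\Fdeg$ is either in $\bC[W^{(1)}]$ or univariate over $W^{(1)}$, and the ``moreover'' gives the $x_i, a_i$ of item 3. I also need $\Fsquare \subseteq \bC[W^{(1)}]$, which holds since $\bC[W^{(0)}] \subseteq \bC[W^{(1)}]$.

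Now I would bring in $\Flinear$. By \cref{proposition:flinear-ps20} there is a space $V'' \subseteq S_1$ of dimension $O(1/\delta)$ with $\Flinear \subseteq \ideal{V''}$. Adjoin $V''$ to $W^{(1)}$ and apply \cref{lemma:ideal-cleanup} once more to obtain a $(17,(\varepsilon/100)^3)$-clean $W^{(2)}$ with $\bC[W^{(1)}] \subseteq \bC[W^{(2)}]$, $\Flinear \subseteq \ideal{W^{(2)}}$, and (since the cleanup preserves the degree-2 part and only adds $O(1/\varepsilon^3)$ linear forms) still $\dim W^{(2)} = O(1/\delta^3)$; moreover $\Fsquare \subseteq \bC[W^{(2)}]$ and $\Fdeg$-forms remain in $\bC[W^{(2)}]$ or univariate over it (again using condition 2 of \cref{lem:controlFdeg} and the fact that being univariate is preserved under enlarging the clean space — for a quadratic $P$ univariate over $W^{(1)}$ with $\Rank(P) < 17$ we have $\dim\linvb{W^{(2)}}{P}\le\dim\linvb{W^{(1)}}{P}=1$, and $P$ cannot become $0$-close since its extra variable would have to enter $W^{(2)}_1$). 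Then I apply \cref{lem:controlFspan} with $W = W^{(2)}$: its hypotheses ($\Fsquare \subset \bC[W^{(2)}]$, and $\Fdeg \cup \Flinear$ forms in $\bC[W^{(2)}]$ or univariate over it) are exactly what we have arranged, and it yields a $(17,(\varepsilon/100)^3)$-clean $V$ with $\bC[W^{(2)}] \subseteq \bC[V]$ such that every form in $\Fspan$ — and hence, since $\cF_2 = \Fspan \sqcup \Flinear \sqcup \Fdeg \sqcup \Fsquare$ and linear forms in $\cF_1$ are trivially univariate or in the algebra, every form in $\cF$ — is either in $\bC[V]$ or univariate over $V$. The dimension bounds in \cref{lem:controlFspan} give $\dim V_2 = \dim W^{(2)}_2 + O(1/\varepsilon) = O(1/\delta^2)$ and $\dim V_1 = \dim W^{(2)}_1 + O(\dim W^{(2)}_2/\varepsilon + 1/\varepsilon^4) = O(1/\varepsilon^4)$, giving item 1. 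Finally, to finish item 2, I need $\Flinear \subseteq \bC[V]$: this is where \cref{lem:controlFLinear} enters, since $V$ is a $(17,(\varepsilon/100)^3)$-clean (hence $(17,\varepsilon/3)$-clean) vector space with $\Flinear \subseteq \ideal{W^{(2)}} \subseteq \ideal{V}$ and every form in $\cF$ univariate over $V$ or in $\bC[V]$; the lemma then gives $\Flinear \subseteq \bC[V_1] \subseteq \bC[V]$. Combined with $\Fsquare \subseteq \bC[W^{(2)}] \subseteq \bC[V]$, item 2 follows. For item 3 I invoke condition 2 of \cref{lem:controlFdeg} one last time with $W$ replaced by $V$.

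The main obstacle I anticipate is bookkeeping the invariants across the four cleanup steps: each lemma (\cref{lem:controlFdeg}, \cref{lem:controlFLinear}, \cref{lem:controlFspan}) has slightly different clean-parameter requirements ($\varepsilon/3$, $\varepsilon/2$, $(\varepsilon/100)^3$, etc.), so I must fix the single parameter $(\varepsilon/100)^3$ up front and check it is small enough for every lemma invoked. The other delicate point is verifying that ``univariate over $W$'' and ``$\Fsquare \subseteq \bC[W]$'' are genuinely preserved when $W$ is enlarged to a larger clean $W'$ with $\bC[W] \subseteq \bC[W']$ — containment in the algebra is obviously preserved, but univariateness requires the brief rank argument above (a quadratic of rank $<17$ univariate over $W$ stays univariate or lands in $\bC[W'_1]$, and in either case remains ``in $\bC[W']$ or univariate over $W'$''), and it is exactly the robustness-under-enlargement clause baked into part 2 of \cref{lem:controlFdeg} that lets item 3 survive to the final $V$. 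No genuinely new mathematics is needed; the content is in chaining the black boxes correctly and tracking that every dimension stays $O(1/\varepsilon^4)$.
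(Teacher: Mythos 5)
Your proposal is correct and follows essentially the same strategy as the paper: build up the clean vector space by applying \cref{lem:controlFLsquare}, \cref{proposition:flinear-ps20}, \cref{lem:controlFdeg}, and \cref{lem:controlFspan} in sequence with intermediate applications of \cref{lemma:ideal-cleanup}, then invoke \cref{lem:controlFLinear} post hoc to get $\Flinear \subseteq \bC[V]$ and the robustness-under-enlargement clause of \cref{lem:controlFdeg} for item 3. The paper handles $\Flinear$ before $\Fdeg$ and you reverse the order, which is immaterial; one small omission is that after cleaning up $W^{(1)}+V''$ you only establish $\Flinear\subseteq(W^{(2)})$, and you should explicitly cite \cref{lem:linstructure} to upgrade this to ``$\Flinear$ forms are in $\bC[W^{(2)}]$ or univariate over it'' before invoking \cref{lem:controlFspan}, since that is the hypothesis the latter actually requires.
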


\begin{proof}

Let $V^{(1)}$ be the vector space obtained by applying \cref{lem:controlFLsquare} to $\cF$.
Let $V^{(2)}$ be the vector space obtained by applying \cref{proposition:flinear-ps20} to $\cF$.
Let $V^{(3)}$ be the vector space obtained by applying the clean up procedure (\cref{lemma:ideal-cleanup}) to $V^{(2)}$, with parameters $17, (\varepsilon/100)^{3}$.
We have $\dim V^{(3)}_{1} = \bigO{1 / \varepsilon^{3}}$ and $\dim V^{(3)}_{2} = \bigO{1 / \varepsilon^{2}}$.
By \cref{lem:controlFLsquare} we have $\Fsquare \subseteq \bC\bs{V^{(3)}}$, and by \cref{lem:linstructure} every form in $\Flinear$ is either in $\bC\bs{V^{(3)}}$ univariate over $V^{(3)}$.

Now let $V^{(4)}$ be the $(17, (\varepsilon/100)^{3})$-clean vector space obtained by applying \cref{lem:controlFdeg} to $\Fdeg$.
Let $V^{(5)}$ be the result of applying the clean up procedure (\cref{lemma:ideal-cleanup}) to $V^{(3)} + V^{(4)}$ with parameters $17, (\varepsilon/100)^{3}$.
We have $\dim V^{(5)}_{1} = \bigO{1 / \varepsilon^{3}}$ and $\dim V^{(5)}_{2} = \bigO{1 / \varepsilon^{3}}$. 
Every form in $\Fdeg$ is now either in $\bC\bs{V^{(5)}}$ or univariate over $V^{(5)}$.
The conditions on $\Fsquare, \Flinear$ still hold with $V^{(5)}$ in place of $V^{(3)}$.

Finally, let $V$ be the result of applying \cref{lem:controlFspan} to $V^{(5)}$.
Every form in $\Fspan$ is either in $\bC\bs{V}$ or univariate over $V$.
The vector space $V$ is $(17, (\varepsilon/100)^{3})$-clean, and we have $\dim V_{1} = \bigO{1 / \varepsilon^{4}}$ and $\dim V_{2} = \bigO{1 / \varepsilon^{3}}$.

By construction, we have $\Fsquare \subseteq \bC\bs{V}$ and that every form in $\cF$ is either in $\bC\bs{V}$ or univariate over $V$.
Note that this condition trivially holds for the forms in $\cF_{1}$.
We can now apply \cref{lem:controlFLinear} to deduce that $\Flinear \subseteq \bC\bs{V}$.
Since $\bC\bs{V^{(5)}} \subseteq \bC\bs{V}$, the condition for forms in $\Fdeg \setminus \bC\bs{V}$ holds by the last statement of \cref{lem:controlFdeg}.
\end{proof}

\begin{lemma}[Base Configuration]\label{lemma:basic-sg-configuration}
Let $0 < \delta \leq 1$, $\varepsilon := \delta / 10$ be constants.
If $\cF$ is a $\drsg{\delta}{2}$ configuration, and $V := V_1 + V_2$ is  $(17, (\varepsilon/100)^{3})$-\sprime with $\cF$ that satisfies the conditions of \cref{lemma:reduction-to-basic}, then there exists $U \subset S_1$ with $\dim(U) = O(1/\delta^{27})$ such that $\cF \subset \bC[V, U]$.  
\end{lemma}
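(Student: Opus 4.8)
The plan is to show that the ``extra variables'' $z_P$ associated to the forms $P \in \cF$ that are univariate over $V$ (but not in $\bC[V]$) form a $\delta'$-LCC configuration for $\delta' = \poly{\delta}$, and then invoke \cref{thm:BDWY11} to bound their span. Concretely, for each $P \in \cF \setminus \bC[V]$ we have a linear form $z_P \in S_1/V_1$ spanning $\linvb{V}{P}$; let $\cL := \{z_P\}$ be the resulting multiset. I will show $\cL$ is a $\delta'$-LCC configuration as a multiset of linear forms in $S_1/V_1$, and then $\cF \subseteq \bC[V_1 + \widetilde U, V_2]$ where $\widetilde U \subset S_1$ is any lift of $\Cspan{\cL}$, so setting $U := \widetilde U$ gives the result with $\dim U = \dim \Cspan{\cL} = O(1/\delta'^{9})$.

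To prove the LCC property, fix $P \in \cF \setminus \bC[V]$ and a ``bad set'' $\Gamma$ of at most $\delta' |\cL|$ indices; I must find a short linear dependency among $z_P$ and two other $z_Q$'s outside $\Gamma$, or show $z_P$ is a scalar multiple of some $z_Q$ outside $\Gamma$. The argument splits by which piece of the partition $P$ lies in. If $P \in \Flinear$, then $P \in \bC[V]$ by \cref{lemma:reduction-to-basic}(2), so there is nothing to do; likewise $\Fsquare \subseteq \bC[V]$. If $P \in \Fdeg \setminus \bC[V]$, then \cref{lemma:reduction-to-basic}(3) hands us $t \geq (\varepsilon/100)^3 m$ pairs $(x_i, a_i)$ of linear forms in $\cF_1$ with $z_P \in \Cspan{\overline{x_i}, \overline{a_i}}$ and the three forms pairwise independent; since linear forms $x \in \cF_1$ are themselves univariate (indeed $z_x = \bar x$), each $x_i, a_i$ contributes an element of $\cL$, and because the $x_i$ (resp. $a_i$) are distinct the corresponding $z$'s give $t$ essentially distinct candidate triples, so for $\Gamma$ small enough a valid triple survives. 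If $P \in \Fspan$, I use that $P$ satisfies a span dependency with $\varepsilon$-fraction of $\cF_2$: for many $F$ there is $G \in \Cspan{P, F} \cap \cF_2$, giving $E_P \in \Cspan{E_F, E_G}$ modulo $\bC[V_1]$ where $E_\bullet$ denotes the ``remainder'' part, which translates (after accounting for the rank-$\leq 1$ remainders and the univariate structure) into $z_P \in \Cspan{z_F, z_G}$ or $z_P \parallel z_F$; the estimates from \cref{prop:fspan2far} and \cref{prop:fspan12close} guarantee $\varepsilon m/3$-many such triples survive deletion of $\Gamma$, up to collapsing multiplicities.

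The main obstacle is handling \emph{multiplicities}: many distinct forms $P$ may have the same extra variable $z_P$ (up to scalar), so the raw SG dependencies among the $z_P$'s may be degenerate (the ``third form'' $z_G$ could coincide with $z_P$ or $z_F$). This is exactly why we land in the LCC regime rather than a robust linear SG configuration: the LCC definition explicitly tolerates $z_i \in \Cspan{z_j}$. The cleanliness of $V$ is what controls this — condition 2 of \cref{def:clean-prime} ensures that at most $(\varepsilon/100)^3 m$ forms can share a given relative linear form, since otherwise adding that one-dimensional space to $V_1$ would drop $\dim \linvb{V}{\cdot}$ for too many forms. So in each case above, after discarding the $O((\varepsilon/100)^3 m)$ forms with coincidence issues and the $\leq \delta' m$ forms in $\Gamma$, a genuine LCC-admissible configuration of triples remains as long as $\delta'$ is a sufficiently small fixed power of $\varepsilon$ (the bookkeeping gives $\delta' = \Theta(\varepsilon^3) = \Theta(\delta^3)$, say). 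Then \cref{thm:BDWY11} yields $\dim \Cspan{\cL} = O(1/\delta'^{9}) = O(1/\delta^{27})$, and since every $P \in \cF$ satisfies $P \in \bC[V, z_P] \subseteq \bC[V_1 + \widetilde U, V_2]$ with $\widetilde U$ a lift of $\Cspan{\cL}$, we get $\cF \subset \bC[V, U]$ for $U := \widetilde U$ of dimension $O(1/\delta^{27})$, as claimed.
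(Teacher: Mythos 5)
Your overall strategy is exactly the one the paper follows: show the extra variables $z_P$ of the forms univariate over $V$ form a $\lambda$-LCC configuration in $S_1/V_1$ with $\lambda = \Theta(\delta^3)$, invoke \cref{thm:BDWY11} to get $\dim = O(1/\lambda^9) = O(1/\delta^{27})$, and lift. The treatment of the quadratic cases ($\Fsquare, \Flinear$ trivially since they are inside $\bC[V]$; $\Fdeg$ via the pairs $(x_i,a_i)$ from \cref{lemma:reduction-to-basic}(3); $\Fspan$ via the span triples plus the clean condition to control repeated $z$'s) is in agreement with the paper's proof, up to minor differences in which propositions you cite.

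However, there is a genuine gap: you never verify the LCC condition for the linear forms $\ell \in \cF_1 \setminus V_1$. Your multiset $\cL$ must include the $z_\ell = \overline{\ell}$ for these linear forms — indeed your own $\Fdeg$ argument relies on the $\overline{x_i}, \overline{a_i}$ being members of $\cL$, and the conclusion $\cF \subset \bC[V,U]$ requires every $\ell \in \cF_1$ to satisfy $\overline{\ell} \in U$. But the definition of an LCC configuration requires the witness condition to hold for \emph{every} element, and your case analysis (``$P \in \Flinear$ / $\Fsquare$ / $\Fdeg$ / $\Fspan$'') only partitions $\cF_2$. The case $P = \ell \in \cF_1 \setminus V_1$ is not trivial and is where the paper's proof spends a nontrivial amount of effort: one must split into the subcase where $\ell$ has $\geq \varepsilon m/2$ SG partners in $\cF_1$ versus $\geq \varepsilon m/2$ SG partners in $\cF_2$, and for the latter one must invoke \cref{proposition:irreducible-radical-lin} to discard the non-radical pairs $(Q,\ell)$ (since they satisfy $\ell \in \lin{Q}$ and contribute no usable triple) and then use the clean condition to bound how many partner quadratics are in $\bC[V]$, before reading off the required dependency $z_\ell \in \Cspan{z_{F_i}, z_{H_i}}$ from the identity $H_i = F_i + \ell h_i$. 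Without this case the LCC claim is unproved for a potentially large fraction of $\cL$, and the appeal to \cref{thm:BDWY11} does not go through.
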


\begin{proof}

For every $Q \in \cF_{2}$, we use $z_{Q}$ to denote the linear form in $S_{1} / V_{1}$ that spans $\linvb{V}{Q}$.
For linear forms $\ell \in \cF_{1}$, we use $z_{\ell}$ to denote the image of $\ell$ in $S_{1} / V_{1}$.
If a form is in $\bC\bs{V}$ then $z_{Q} = 0$, and if a linear form is in $V_{1}$ then $z_{\ell} = 0$.
Let $\lambda := \frac{1}{4} \cdot (\varepsilon/100)^{3}$.
We now show that the set of nonzero linear forms $z_{Q}, z_{\ell}$ form a $\lambda$-LCC configuration, which allows us to bound their dimension.

Let $\cF'$ be the subset of $\cF$ that consists of forms not in $\bC\bs{V}$.
Define $\cF'_{1} := \cF' \cap S_{1}$ and $\cF'_{2} = \cF' \cap S_{2}$.
Let $P_{1}, \dots, P_{b}$ be the forms in $\cF'_{2}$ and let $\ell_{b+1}, \dots, \ell_{c}$ be the forms in $\cF'_{1}$.
By definition, the linear forms $z_{P_{i}}$ and $z_{\ell_{j}}$ are all nonzero, although they might not all be distinct.
In order to show that these linear forms form a $\lambda$-LCC configuration, we have to show the following: given a form $P \in \cF'$ and a subset $\Gamma \subset \cF'$ of size at most $\lambda m$, there are forms $F, G \in \cF' \setminus \Gamma$ such that $z_{P} \in \Cspan{z_{F}, z_{G}}$.

Suppose $P \in \cF'_{2}$.
Since $\Flinear, \Fsquare \subseteq \bC\bs{V}$ we have $P \in \Fspan \cup \Fdeg$.
We handle each of these two cases separately.
Let $Z := V + z_{P}$.
\begin{itemize}
    \item Suppose $P \in \Fspan$, and suppose $\br{P, F_{i}, G_{i}}$ are the SG triples corresponding to $P$.
    If $F_{i} \in \bC\bs{V}$, then $z_{P} \in \ideal{z_{G_{i}}}$.
    If $G_{i} = G_{j}$ then $P \in \Cspan{F_{i}, F_{j}}$, contradicting $P \not \in \bC\bs{V}$.
    This implies that $G_{i}$ and $G_{j}$ are distinct if $F_{i}, F_{j} \in \bC\bs{V}$.
    Therefore, by the clean condition there are at most $(\varepsilon/100)^{3} m$ SG triples with $F_{i} \in \bC\bs{V}$ since for each of these triples we have $\dim \linvb{V}{G_{i}} = 1$ and $\dim \linvb{Z}{G_{i}} = 0$.
    The clean condition also implies that there are at most $(\varepsilon/100)^{3} m$ SG triples with $z_{F_{i}} \in \ideal{z_{P}}$, since for each of these we have $\dim \linvb{V}{F_{i}} = 1$ and $\dim \linvb{Z}{F_{i}} = 0$.
    
    We only consider the remaining $(\varepsilon - 2(\varepsilon/100)^{3}) m$ triples, where we have $z_{F_{i}} \neq 0$ and $z_{F_{i}} \not \in  \ideal{z_{P}}$.
    The latter condition also implies $G_{i} \not \in \bC\bs{V}$.
    Therefore, every triple we consider satisfies $F_{i}, G_{i} \in \cF'$.
    Let $\Gamma$ be any subset of $\cF'$ of size at most $\lambda m$.
    After removing at most $\lambda m$ SG triples, since $\varepsilon - 2 (\varepsilon/100)^{3} -\lambda > 2\lambda$ there are still $2\lambda m$ triples with $F_{i} \not \in \Gamma$.
    If $G_{i} = G_{j}$, for two of such triples, then $P \in \Cspan{F_{i}, F_{j}}$ and we are done.
    If the forms $G_{i}$ are all distinct for these $2 \lambda m$ triples, then after removing at most $\lambda m$ more triples we can also assume that $G_{i} \not \in \Gamma$ for any triple.
    In either case, $P$ is spanned by two forms $Q_{1}, Q_{2} \in \cF' \setminus \Gamma$, and $z_{P}$ is spanned by $z_{Q_{1}}, z_{Q_{2}}$.
    Hence $z_P$ satisfies the condition of the $\lambda$-LCC configuration.
    
    \item Suppose now that $P \in \Fdeg$.
    By assumption, there are at least $(\varepsilon/100)^{3} m$ linear forms $x_{1}, \dots, x_{t} \in \cF'_{1}$, and linear forms $a_{1}, \dots, a_{t} \in \cF'_{1}$ such that $z_{P} \in \Cspan{z_{x_{i}}, z_{a_{i}}}$.
    \cref{lem:aux-controlFdeg} also guarantees that each of the $x_{i}$ are distinct, and that each of the $a_{i}$ are distinct.
    Let $\Gamma \subset \cF'$ of size at most $\lambda m$.
    Since the $x_{i}$ are distinct, and since the $a_{i}$ are distinct, there are at most $2 \lambda m$ indices with either $x_{i} \in \Gamma$ or $a_{i} \in \Gamma$.
    Since $(\varepsilon/100)^{3} m - 2 \lambda m > 0$, we can find an index $i$ such that $z_{P} \in \Cspan{z_{x_{i}}, z_{a_{i}}}$ with $x_{i}, a_{i} \in \cF'_{1} \setminus \Gamma$, proving that $z_P$ satisfies the $\lambda$-LCC criterion.
\end{itemize}

Now let $\ell \in \cF'_{1}$ and $\Gamma \subset \cF'$ such that $|\Gamma| \leq \lambda m$.

\begin{itemize}
    \item Suppose $\ell$ satisfies the SG condition with at least $\varepsilon m / 2$ linear forms, that is, there are at least $\varepsilon m / 2$ SG triples $(\ell, g_{i}, H_{i})$ with $g_{i} \in \cF_{1}$.
    If $H_{i} \in \cF_{2}$, then $\ell \in \lin{H_{i}}$, and since $H_{i}$ is univariate over $V$ we get $z_{H_{i}} \in \ideal{z_{\ell}}$.
    By the clean condition, there can be at most $(\varepsilon/100)^{3} m$ triples with $H_{i} \in \cF_{2}$, since $\dim \linvb{V}{H_{i}} > \dim \linvb{V + \bc{z_{\ell}}}{H_{i}}$.
    We therefore restrict our attention to the $\varepsilon m / 2 - (\varepsilon/100)^{3} m$ triples $\br{\ell, g_{i}, H_{i}}$ where $H_{i}$ is a linear form.
    By the same argument as the case of $P \in \Fspan$ above we get either $\ell \in \Cspan{g_{i}, g_{j}}$ with $g_{i}, g_{j} \in \cF'_{1} \setminus \Gamma$ or $\ell \in \Cspan{g_{i}, H_{i}}$ with $g_{i}, H_{i} \in \cF'_{1} \setminus \Gamma$.
    This step requires $(\varepsilon  / 2 - 3 (\varepsilon/100)^{3}  - 2 \lambda) m > 0$ which holds by the choice of $\lambda$.
    \item Finally suppose $\ell$ satisfies the SG condition with at least $\varepsilon m / 2$ quadratics, that is, there are at least $\varepsilon m / 2$ triples $(\ell, F_{i}, H_{i})$ with $F_{i} \in \cF_{2}$.
    By \cref{proposition:irreducible-radical-lin}, the ideal $\ideal{\ell, F_{i}}$ is not radical only if $\ell \in \lin{F_{i}}$.
    Since $F_{i}$ is univariate over $V$, and since $\ell \not \in V_{1}$, this condition implies $z_{F_{i}} \in \ideal{z_{\ell}}$.
    By the clean condition, this can happen for at most $(\varepsilon/100)^{3} m$ triples, since $\dim \linvb{V}{F_{i}} > \dim \linvb{V + \bc{z_{\ell}}}{F_{i}}$.
    For the remaining triples, we have $H_{i} = F_{i} + \ell h_{i}$ for some linear form $h_{i}$.
    If $F_{i} \in \bC\bs{V}$, then $\linvb{V}{H_{i}}$ is spanned by $\ell, h_{i}$, and since $H_{i}$ is univariate over $V$, we get $z_{H_{i}} \in \ideal{z_{\ell}}$.
    If $F_{i}, F_{j} \in \bC\bs{V}$, then $H_{i} \neq H_{j}$, otherwise $0 \neq \ell(h_{i} - \mu h_{j}) \in \bC\bs{V_{1}}$ for some constant $\mu \in \bC$, contradicting $\ell \not \in V_{1}$.
    Therefore, by the clean condition, there are at most $(\varepsilon/100)^{3} m$ triples with $F_{i} \in \bC\bs{V}$, since for each such index we have $\dim \linvb{V}{H_{i}} > \dim \linvb{V + \bc{z_{\ell}}}{H_{i}}$.
    For every other triple we have $\ell h_{i} + z_{F_{i}} y_{F_{i}} - z_{H_{i}} y_{H_{i}} \in \bC\bs{V_{1}}$, whence $z_{\ell} \in \Cspan{z_{F_{i}}, z_{H_{i}}}$.
    By the same argument as the case of $P \in \Fspan$ above we get either $z_{\ell} \in \Cspan{z_{F_{i}}, z_{F_{j}}}$ with $F_{i}, F_{j} \in \cF'_{2} \setminus \Gamma$ or $z_{\ell} \in \Cspan{z_{F_{i}}, z_{H_{i}}}$ with $F_{i}, H_{i} \in \cF'_{2} \setminus \Gamma$.
    This step also requires $(\varepsilon  / 2 - 3 (\varepsilon/100)^{3}  - 2 \lambda) m > 0$ which holds by the choice of $\lambda$.
\end{itemize}

Therefore, the set $\{ z_{P_{i}}, z_{\ell_{j}} \mid P_{i}, \ell_{j} \in \cF'\}$ forms a $\lambda$-LCC configuration.
By \cref{thm:BDWY11}, there is a vector space $U$ of $S_{1} / V_{1}$ of dimension $\bigO{\lambda^{9}}$ such $z_{P_{i}}, z_{\ell_{j}} \in U$.
Since $z_{P} = 0$ for every $P \in \cF \setminus \cF'$, we can say $z_{P} \in U$ for every $P \in \cF$.
Finally, since every form $P \in \cF$ is univariate over $V$, we have $P \in \bC\bs{V, z_{P}}$, whence $P \in \bC\bs{V, U}$ as required.
\end{proof}

\deltasgstatement*

\begin{proof}
    Let $\varepsilon := \delta / 10$.
    Given a $\drsg{\delta}{2}$, apply \cref{lemma:reduction-to-basic} to obtain $V$, a $(17, (\varepsilon/100)^{3})$-clean vector space with respect to $\cF$ such that $\dim V = \bigO{1 / \varepsilon^{4}}$, and every form in $\cF$ is either in $\bC\bs{V}$, or univariate over $V$.
    We now apply \cref{lemma:basic-sg-configuration} with vector space $V$, to obtain a vector space $U \subseteq S_{1}$ such that $\dim(U) = \bigO{1 / \delta^{27}}$, and $\cF \subseteq \bC\bs{V, U}$.
    
    Since the generators of $\bC\bs{V, U}$ are homogeneous, the set of linear forms $\bC\bs{V, U}_{1}$ in the vector space $U + V_{1}$.
    Further, every quadratic in this algebra is a linear combination of elements of $V_{2}$ and products of the form $\ell_{1} \ell_{2}$, where $\ell_{i} \in U + V_{1}$.
    Therefore, $\dim \bC\bs{V, U}_{2} = \bigO{1 / \delta^{54}}$.
    The vector space $\bC\bs{V, U}_{1} + \bC\bs{V, U}_{2}$ contains $\cF$ and has dimension $\bigO{1 / \delta^{54}}$.
\end{proof}

\section{Conclusion and Open Problems}\label{sec:conclusion}

In this paper, we prove a robust version of the radical Sylvester-Gallai theorem for quadratics, generalizing \cite{S19}.
Just as in the linear case of the Sylvester-Gallai problem, robustness plays an important role in generalizing Sylvester-Gallai results to higher dimensional variants, such as the flats version in \cite{BDWY11}. 
We similary expect our robust variant to allow us to generalize the Sylvester-Gallai problem to ``higher codimension'' tuples of quadratic polynomials. 
For instance, instead of requiring $\radideal{F_i, F_j}$ to intersect $\cF$ non-trivially, one would only require that for many triples $(i,j,k)$, we would require $\radideal{F_i, F_j, F_k}$ to intersect $\cF$ non-trivially.
Just as in the linear case, properly defining such higher codimension variants requires some careful thought, especially since the non-linear aspect will introduce more subtlety than the linear case.
These higher dimensional variants have applications in algebraic complexity, as they can be instrumental in proving the main conjectures posed in \cite{gupta2014algebraic} about such SG configurations.

Another important open problem is to generalize the above result to prove a robust version of the ``product version'' of the Sylvester-Gallai problem - a robust version of \cite[Conjecture 1]{gupta2014algebraic} with $k=3$ and $r=2$.
In this work, we made a somewhat strong use of the fact that we have an extra polynomial in the radical ideal, and having a product of polynomials in the ideal instead seems to require a strengthening of several arguments in this paper to address it.
Just as in \cite{PS20a}, we believe that our general structure theorem, which gives us a deeper look in the minimal primes, could shed some light into a different way to construct robust algebras.

It is important to remark that higher codimension variants of the Sylvester-Gallai problem, even for quadratics, involves the study of schemes which are not equidimensional, which may require stronger structural results on the structure of such ideals.
However, one could hope that our structure theorems might suffice, just as in \cite{BDWY11} the robust linear Sylvester-Gallai theorem was sufficient to induct on the higher-dimensional analogs.

Another interesting direction and potential application of robust SG configurations is in the study of non-linear locally correctable codes (LCCs) over fields of characteristic zero. 
While lower bounds for linear LCCs have been out of reach for current techniques even over characteristic zero,\footnote{Aside from 2-query LCCs where optimal lower bounds are known for both linear and non-linear codes} 
it would be interesting to know if robust non-linear SG configurations have bounded transcendence degree.
If a robust form of Gupta's general conjecture is false, it could yield the first constructions of non-linear LCCs with non-constant dimension over characteristic zero, which are not known to exist. 

\bibliographystyle{alpha}
\bibliography{main}

\newcommand{\etalchar}[1]{$^{#1}$}
\begin{thebibliography}{CTSSD87}

\bibitem[AM69]{AM69}
M.~F. Atiyah and I.~G. MacDonald.
\newblock {\em Introduction to Commutative Algebra}.
\newblock Addison Wesley Publishing Company, 1969.

\bibitem[BDYW11]{BDWY11}
Boaz Barak, Zeev Dvir, Amir Yehudayoff, and Avi Wigderson.
\newblock Rank bounds for design matrices with applications to combinatorial
  geometry and locally correctable codes.
\newblock In {\em Proceedings of the Forty-Third Annual ACM Symposium on Theory
  of Computing}, STOC '11, page 519–528, New York, NY, USA, 2011. Association
  for Computing Machinery.

\bibitem[BM90]{borwein1990survey}
Peter Borwein and William~OJ Moser.
\newblock A survey of sylvester's problem and its generalizations.
\newblock {\em Aequationes Mathematicae}, 40(1):111--135, 1990.

\bibitem[CKS19]{CKS19}
Chi-Ning Chou, Mrinal Kumar, and Noam Solomon.
\newblock Closure results for polynomial factorization.
\newblock {\em Theory of Computing}, 15(1):1--34, 2019.

\bibitem[CTSSD87]{CTSSD87}
J-L Colliot-Th{\'e}lene, J-J Sansuc, and P~Swinnerton-Dyer.
\newblock Intersections of two quadrics and ch{\^a}telet surfaces. i.
\newblock {\em Journal f{\"u}r die reine und angewandte Mathematik},
  373:37--107, 1987.

\bibitem[DDS21]{DDS21}
Pranjal Dutta, Prateek Dwivedi, and Nitin Saxena.
\newblock Deterministic identity testing paradigms for bounded top-fanin
  depth-4 circuits.
\newblock In {\em Proceedings of the 36th Computational Complexity Conference},
  CCC '21, Dagstuhl, DEU, 2021. Schloss Dagstuhl--Leibniz-Zentrum fuer
  Informatik.

\bibitem[Dic14]{dickson1914points}
Leonard~Eugene Dickson.
\newblock The points of inflexion of a plane cubic curve.
\newblock {\em The Annals of Mathematics}, 16(1/4):50--66, 1914.

\bibitem[DSW14]{DSW14}
Zeev Dvir, Shubhangi Saraf, and Avi Wigderson.
\newblock Improved rank bounds for design matrices and a new proof of kelly’s
  theorem.
\newblock In {\em Forum of Mathematics, Sigma}, volume~2. Cambridge University
  Press, 2014.

\bibitem[Dvi12]{dvir2012incidence}
Zeev Dvir.
\newblock Incidence theorems and their applications.
\newblock {\em arXiv preprint arXiv:1208.5073}, 2012.

\bibitem[EBW{\etalchar{+}}43]{erdos1943problems}
Paul Erdos, Richard Bellman, Hubert~S Wall, James Singer, and Victor
  Th{\'e}bault.
\newblock Problems for solution: 4065-4069.
\newblock {\em The American Mathematical Monthly}, 50(1):65--66, 1943.

\bibitem[EG84]{EG84}
D.~Eisenbud and S.~Goto.
\newblock Linear free resolutions and minimal multiplicity.
\newblock {\em J.Algebra}, 88:89–133, 1984.

\bibitem[Eis95]{Eis95}
David Eisenbud.
\newblock {\em Commutative Algebra with a View Toward Algebraic Theory.}
\newblock Springer-Verlag, New York, 1995.

\bibitem[Eng07]{Eng07}
B.~Engheta.
\newblock On the projective dimension and the unmixed part of three cubics.
\newblock {\em J.Algebra}, 316:715–734, 2007.

\bibitem[Gal44]{gallai1944solution}
Tibor Gallai.
\newblock Solution of problem 4065.
\newblock {\em American Mathematical Monthly}, 51:169--171, 1944.

\bibitem[Gup14]{gupta2014algebraic}
Ankit Gupta.
\newblock Algebraic geometric techniques for depth-4 pit \& sylvester-gallai
  conjectures for varieties.
\newblock In {\em Electron. Colloquium Comput. Complex.}, volume~21, page 130,
  2014.

\bibitem[Han65]{hansen1965generalization}
Sten Hansen.
\newblock A generalization of a theorem of sylvester on the lines determined by
  a finite point set.
\newblock {\em Mathematica Scandinavica}, 16(2):175--180, 1965.

\bibitem[Hir83]{H83}
Friedrich Hirzebruch.
\newblock Arrangements of lines and algebraic surfaces.
\newblock In {\em Arithmetic and geometry}, pages 113--140. Springer, 1983.

\bibitem[HP94]{HP94}
William Vallance~Douglas Hodge and Daniel Pedoe.
\newblock {\em Methods of Algebraic Geometry: Volume 2}.
\newblock Cambridge University Press, 1994.

\bibitem[Kel86]{kelly1986resolution}
Leroy~Milton Kelly.
\newblock A resolution of the sylvester-gallai problem of j.-p. serre.
\newblock {\em Discrete \& Computational Geometry}, 1(2):101--104, 1986.

\bibitem[KS09]{kayal2009blackbox}
Neeraj Kayal and Shubhangi Saraf.
\newblock Blackbox polynomial identity testing for depth 3 circuits.
\newblock In {\em 2009 50th Annual IEEE Symposium on Foundations of Computer
  Science}, pages 198--207. IEEE, 2009.

\bibitem[LST21]{LimayeST21}
Nutan Limaye, Srikanth Srinivasan, and S{\'{e}}bastien Tavenas.
\newblock Superpolynomial lower bounds against low-depth algebraic circuits.
\newblock {\em Electron. Colloquium Comput. Complex.}, page~81, 2021.

\bibitem[Mel40]{melchior1940uber}
Eberhard Melchior.
\newblock Uber vielseite der projektiven ebene.
\newblock {\em Deutsche Math}, 5:461--475, 1940.

\bibitem[MM18]{MM18}
Paolo Mantero and Jason McCullough.
\newblock A finite classification of (x, y)-primary ideals of low multiplicity.
\newblock {\em Collectanea Mathematica}, 69(1):107--130, 2018.

\bibitem[OS22]{OS21}
Rafael Oliveira and Akash Sengupta.
\newblock Radical sylvester-gallai theorem for cubics.
\newblock {\em Manuscript}, 2022.

\bibitem[PS20a]{PS20a}
Shir Peleg and Amir Shpilka.
\newblock A generalized sylvester-gallai type theorem for quadratic
  polynomials.
\newblock {\em CoRR}, abs/2003.05152, 2020.

\bibitem[PS20b]{PS20b}
Shir Peleg and Amir Shpilka.
\newblock Polynomial time deterministic identity testing algorithm for
  {\(\Sigma\)}\({}^{\mbox{[3]}}\){\(\Pi\)}{\(\Sigma\)}{\(\Pi\)}\({}^{\mbox{[2]}}\)
  circuits via edelstein-kelly type theorem for quadratic polynomials.
\newblock {\em CoRR}, abs/2006.08263, 2020.

\bibitem[PS22]{PS22}
Shir Peleg and Amir Shpilka.
\newblock Robust sylvester-gallai type theorem for quadratic polynomials.
\newblock {\em CoRR}, abs/2202.04932, 2022.

\bibitem[Ser66]{serre1966advanced}
Jean-Pierre Serre.
\newblock Advanced problem 5359.
\newblock {\em Amer. Math. Monthly}, 73(1):89, 1966.

\bibitem[Shp20]{S19}
Amir Shpilka.
\newblock Sylvester-gallai type theorems for quadratic polynomials.
\newblock {\em Discrete Analysis}, page 14492, 2020.

\bibitem[Sin16]{sinha2016reconstruction}
Gaurav Sinha.
\newblock Reconstruction of real depth-3 circuits with top fan-in 2.
\newblock In {\em 31st Conference on Computational Complexity (CCC 2016)}.
  Schloss Dagstuhl-Leibniz-Zentrum fuer Informatik, 2016.

\bibitem[SS13]{saxena2013sylvester}
Nitin Saxena and Comandur Seshadhri.
\newblock From sylvester-gallai configurations to rank bounds: Improved
  blackbox identity test for depth-3 circuits.
\newblock {\em Journal of the ACM (JACM)}, 60(5):1--33, 2013.

\bibitem[ST83]{szemeredi1983extremal}
Endre Szemer{\'e}di and William~T. Trotter.
\newblock Extremal problems in discrete geometry.
\newblock {\em Combinatorica}, 3(3):381--392, 1983.

\bibitem[Syl93]{sylvester1893mathematical}
James~Joseph Sylvester.
\newblock Mathematical question 11851.
\newblock {\em Educational Times}, 59(98):256, 1893.

\end{thebibliography}

\appendix

\section{Elaborating on the proof of Proposition~\ref{proposition:xy-primary} }

The results of \cite{MM18} classify the low degree parts of $\ideal{x, y}$ primary ideals of degrees at most $4$.
We are interested in the ideals primary to $\ideal{x, y}$ that can contain two irreducible quadratics $Q_{1}, Q_{2}$ such that $\lin{Q_{1}} \neq \lin{Q_{2}}$.
Suppose $J$ is an ideal primary to $\ideal{x, y}$ that contains $Q_{1}$ and $Q_{2}$.
The degree of $\ideal{Q_{1}, Q_{2}}$ is at most $4$, and therefore the same is true of $J$.
Since $Q_{1}, Q_{2}$ have degree two, and since $J$ is homogeneous, we are only interested in the generators of $J$ of degree at most $2$.
Note that every irreducible homogeneous forms in two variables is reducible.

Suppose $J$ has degree $2$.
It then satisfies either case $1$ or case $2$ of Proposition~1.1 from \cite{MM18}.
We can rule out case $2$, since if $a, b$ have degrees higher than $2$ then $J$ contains only reducible polynomials, and if $a, b$ have degree $1$ then $\lin{Q_{1}} = \lin{Q_{2}}$, and these are spanned by $a, b, x, y$.

Suppose $J$ has degree $3$.
It satisfies one of the eight cases from Theorem~2.1 of \cite{MM18}.
We can rule out case $1$ and case $2$, since quadratic polynomials in these ideals cannot be irreducible.
We can rule out case $3$ since every irreducible polynomial in such ideals have the same linear space.
For the remaining cases, we focus only on the generators of degree at most $2$.
Cases $4$ and $5$ are ruled out by irreducibility.
The other cases are ruled out since they have no generators of degree less than $3$.

Suppose $J$ has degree $4$, so it satisfies one of the $23$ cases of Theorem~3.1 in \cite{MM18}.
We can rule out cases $1, 2, 3, 4$ by irreducibility.
We can rule out case $5$ since our polynomials satisfy $\lin{Q_{1}} \neq \lin{Q_{2}}$.
The other cases have generators of higher degree, so we focus on the low degree parts.
In each of the cases, the degree two parts are either empty, or can only contain reducible polynomials.
This rules out all the cases.

\section{Covering Lemmas}

In this section, we state and prove the combinatorial covering lemmas that we will need in the Sylvester-Gallai proof in \cref{sec:sg-robust}.

We begin with an easy proposition on the basic covering of a set.
The following lemma states that given a family of large subsets, we can find a small subfamily such that every subset has large intersection with some element of the subfamily.

\begin{proposition} \label{proposition:cover-basic}
Let $\cU$ be a set of size at most $n$, and let $0 < \nu < 1$ be a constant.
Let $S_1, \ldots, S_t \subset U$ be a family of subsets of $\cU$ such that for each $i \in [t]$, we have $\abs{S_i} \geq \nu n$.
There exists a subset $\cH \subset [t]$ satisfying $|\cH| \leq 2/\nu$ such that for any $i \in [t]$, 
$$ \abs{S_i \cap \bigcup_{j \in \cH} S_j} \geq \nu n/2. $$
In particular, there exists $j \in \cH$ such that $|S_i \cap S_j| \geq \nu^2 n/4$.    
\end{proposition}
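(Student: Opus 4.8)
The statement is a standard greedy covering argument, and the plan is to build $\cH$ by repeatedly picking an index whose set contains many elements not yet covered. Concretely, I would run the following process: maintain a ``covered'' set $C \subseteq \cU$, initially empty, and a working collection of indices, initially $[t]$. At each step, if every remaining $S_i$ satisfies $|S_i \cap C| \geq \nu n / 2$, stop; otherwise there is some index $i$ with $|S_i \cap C| < \nu n/2$, hence $|S_i \setminus C| > \nu n/2$. Add this $i$ to $\cH$ and update $C \leftarrow C \cup S_i$.

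\textbf{Bounding $|\cH|$.} Each time we add an index to $\cH$, the size of $C$ strictly increases by more than $\nu n / 2$, because the chosen $S_i$ contributed more than $\nu n/2$ new elements. Since $C \subseteq \cU$ and $|\cU| \leq n$ throughout, the number of steps is at most $n / (\nu n / 2) = 2/\nu$, so $|\cH| \leq 2/\nu$. When the process terminates, by construction every $S_i$ (for $i \in [t]$) satisfies $|S_i \cap C| \geq \nu n/2$, and $C = \bigcup_{j \in \cH} S_j$, which is exactly the first displayed conclusion.

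\textbf{The ``in particular'' clause.} Fix $i \in [t]$. We have $S_i \cap \bigcup_{j \in \cH} S_j = \bigcup_{j \in \cH} (S_i \cap S_j)$, a union of at most $|\cH| \leq 2/\nu$ sets, whose total size is at least $\nu n / 2$. By averaging (pigeonhole), some term has size at least $(\nu n / 2) / (2 / \nu) = \nu^2 n / 4$, giving an index $j \in \cH$ with $|S_i \cap S_j| \geq \nu^2 n / 4$.

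\textbf{Expected difficulty.} There is essentially no obstacle here — the only points requiring a little care are (i) making sure the potential ``$|C|$ increases by more than $\nu n/2$'' is stated with a strict inequality so that the step count is genuinely bounded by $2/\nu$ (a non-strict bound would still give $2/\nu$ since sizes are integers and $\nu n$ need not be, but the strict version is cleanest), and (ii) in the averaging step, noting that even if some $S_i \cap S_j$ are empty or $\cH$ is smaller than $2/\nu$, dividing by the upper bound $2/\nu$ on $|\cH|$ still yields the claimed lower bound $\nu^2 n/4$. Both are routine.
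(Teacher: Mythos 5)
Your proof is correct and follows essentially the same greedy covering argument as the paper's: iteratively add any index whose set has small intersection with the current union, bound $|\cH|$ by the potential increase per step, and finish with the same pigeonhole/averaging step for the ``in particular'' clause.
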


\begin{proof}
    We construct the subset $\cH$ greedily.
    Start with $\cH := \emptyset$.
    As long as there is an index $i \in [t] \setminus \cH$ such that the intersection of $S_{i}$ and $\cup_{j \in \cH} S_{j}$ is less than $\nu n/ 2$, we add $i$ to $\cH$.
    Each time we add an index to $\cH$, we increase the size of $\cup_{i \in \cH} S_i $ by at least $\nu n/ 2$, thus we have at any point of the process above we have
    $$ |\cH| \cdot \nu n/2 \leq \abs{\bigcup_{i \in \cH} S_i} \leq n \then \abs{\cH} \leq 2 / \nu.$$
    Therefore this iterative process must end after at most $2/\nu$ steps, and at the end of the process we must have $\abs{\cH} \leq 2/\nu$.
    Further, at the end of the process, for every $i \not\in \cH$, we must have that
    $$ \abs{S_i \cap \bigcup_{j \in \cH} S_j} \geq \nu n/2. $$
    By the pigeonhole principle, there is an index $j \in \cH$ such that $\abs{S_i \cap S_j} \geq \dfrac{\nu n}{2\abs{\cH}} \geq  \nu^2 n/4$.
\end{proof}

We now apply the above lemma to obtain a larger subfamily of sets that form a ``double cover'' of the elements of the universe, that is, such that every subset in our family has large simultaneous intersection with two elements of the subfamily.

\begin{lemma}[Double covering lemma]\label{lemma:double-cover}
Let $\cU$ be a set of size at most $n$, and let $0 < \nu < 1$ be a constant.
Let $S_1, \ldots, S_t \subset U$ be a family of subsets of $\cU$ such that for each $i \in [t]$, we have $\abs{S_i} \geq \nu n$.
There exist $h \leq 2/\nu$ and subsets $\cH, \cH_1, \ldots, \cH_h \subset [t]$ such that the following properties hold:
\begin{itemize}
    \item $|\cH| = h$ and $\cH$ satisfies the properties of \cref{proposition:cover-basic}
    \item $\abs{\cH_i} \leq 8/\nu$
    \item $\cH \cap \bigcup_{j \in [h]} \cH_j = \emptyset$
    \item for any $k \in [t]$ such that $k \not\in \cH \cup \bigcup_{i \leq h} \cH_i$, there exist $i \in \cH$ and $j \in \cH_i$ such that 
    $$|S_i \cap S_j \cap S_k| \geq \nu^{3} n/4^{3}$$
\end{itemize}
\end{lemma}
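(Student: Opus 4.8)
The plan is to build the subfamilies $\cH, \cH_1, \dots, \cH_h$ by applying \cref{proposition:cover-basic} twice: once to the original family to get $\cH$, and then a second time to an auxiliary family of ``co-large'' sets to get each $\cH_i$. Concretely, first apply \cref{proposition:cover-basic} to $S_1, \dots, S_t$ with parameter $\nu$ to obtain $\cH \subseteq [t]$ with $|\cH| = h \leq 2/\nu$ such that every $S_k$ satisfies $|S_k \cap \bigcup_{j \in \cH} S_j| \geq \nu n/2$, and in particular there is some $i = i(k) \in \cH$ with $|S_k \cap S_i| \geq \nu^2 n/4$. This already gives the ``first layer'' of the cover; the job of the $\cH_i$ is to refine each $S_i$-slice into a cover as well.

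Next, fix $i \in \cH$. Consider the restricted family $\{ S_j \cap S_i : j \in [t] \}$ living inside the universe $S_i$. We only care about those $j$ for which $S_j \cap S_i$ is large, say $|S_j \cap S_i| \geq \nu^2 n/4$; call this index set $T_i$, and note that by the previous paragraph every $k$ with $i(k) = i$ lies in $T_i$. Since $|S_i| \leq n$, the sets $\{S_j \cap S_i : j \in T_i\}$ form a family of subsets of a universe of size $\leq n$ each of size $\geq (\nu^2/4) n$. Applying \cref{proposition:cover-basic} to this family with parameter $\nu^2/4$ gives a subfamily $\cH_i \subseteq T_i$ with $|\cH_i| \leq 2/(\nu^2/4) = 8/\nu^2$ such that for every $k \in T_i$ there is $j \in \cH_i$ with $|(S_k \cap S_i) \cap (S_j \cap S_i)| = |S_i \cap S_j \cap S_k| \geq (\nu^2/4)^2 n / 4 = \nu^4 n / 4^3$. (If one instead applies the lemma with parameter $\nu/2$ treating the universe as having effective size $\tfrac{\nu}{2} n \cdot \tfrac{2}{\nu}$ one gets the slightly weaker but stated bound $|\cH_i| \leq 8/\nu$ and intersection $\geq \nu^3 n/4^3$; either way the stated conclusion follows, and I would simply track constants carefully to land on exactly $\nu^3 n / 4^3$ and $|\cH_i|\le 8/\nu$.)

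Finally, to arrange the disjointness condition $\cH \cap \bigcup_{j} \cH_j = \emptyset$: after constructing all the $\cH_i$, throw the finitely many indices in $\cH$ out of each $\cH_i$; this removes at most $h \leq 2/\nu$ indices from each $\cH_i$ and only shrinks the families, so the covering property for $k \notin \cH \cup \bigcup_i \cH_i$ is unaffected (we only ever needed the guarantee for such $k$, and $k \notin \cH$ means $k \in T_{i(k)} \setminus \cH$, which is still covered). The size bounds survive, and we are done.

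The main obstacle — really the only thing requiring care — is bookkeeping the parameters so the final intersection bound is exactly $\nu^3 n/4^3$ and the final cardinality bound is exactly $8/\nu$ rather than something like $\nu^4 n/4^3$ or $8/\nu^2$; this is purely a matter of which parameter one feeds into the second invocation of \cref{proposition:cover-basic} and of using the ``$|S_i \cap S_j| \geq \nu^2 n/4$'' versus ``$|S_i \cap \bigcup_{\cH} S_j| \geq \nu n/2$'' forms of its conclusion in the right places. There is no genuine conceptual difficulty: it is a two-level greedy cover, and the disjointness is handled by an afterthought deletion.
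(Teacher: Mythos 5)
Your high-level plan---two layers of greedy covering, both via \cref{proposition:cover-basic}---is the same as the paper's. But two steps in your execution are genuinely broken, and both are fixed by moves the paper makes that you skip.

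\textbf{The constants do not come out.} When you apply \cref{proposition:cover-basic} the second time with universe $S_i$, you only know $|S_i| \leq n$, and the sub-sets $\{S_j \cap S_i\}$ have size $\geq \nu^2 n/4$, so the parameter you must use is $\nu^2/4$. That yields $|\cH_i| \leq 8/\nu^2$ and $|S_i\cap S_j\cap S_k| \geq \nu^4 n/4^3$, which are both \emph{strictly weaker} than the stated $8/\nu$ and $\nu^3 n/4^3$ (for $\nu<1$). Your parenthetical remark that ``either way the stated conclusion follows'' is simply false, and the proposed ``effective size $\tfrac{\nu}{2}n\cdot\tfrac{2}{\nu}$'' is just $n$ again---it does nothing. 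The paper's fix is to first shrink every $S_i$ to a subset $S_i'$ of size \emph{exactly} $\nu n$; intersections only get smaller, so proving the lemma for the $S_i'$ proves it for the $S_i$, and now the second universe has size $\nu n$, the sub-sets have size $\geq (\nu/4)(\nu n)$, the parameter is $\nu/4$, and you get $|\cH_i|\leq 8/\nu$ and intersections $\geq (\nu/4)^2(\nu n)/4 = \nu^3 n/4^3$. Without this normalization the claimed bounds do not follow.

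\textbf{The disjointness patch does not work.} You construct $\cH_i$ from $T_i$, which may contain indices of $\cH$, and then ``throw the finitely many indices in $\cH$ out of each $\cH_i$.'' But \cref{proposition:cover-basic} only promises, for a given $k \in T_i$, the existence of \emph{one} index $j \in \cH_i$ with $|S_i\cap S_j\cap S_k|$ large; if that particular $j$ happens to lie in $\cH$, deleting it destroys the only witness, and nothing in the statement of \cref{proposition:cover-basic} guarantees another. So the covering property for such $k$ is not ``unaffected.'' The paper avoids this by construction rather than by afterthought: it partitions the indices $i > h$ (i.e.\ those \emph{not} in $\cH$) into classes $\cG_k = \{i : g_i = k\}$ and applies the covering proposition within each $\cG_k$, so $\cH_k \subseteq \cG_k \subseteq [t]\setminus\cH$ and disjointness holds automatically. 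You should do the same: build $T_i$ (or $\cG_i$) from indices outside $\cH$ from the outset instead of deleting post hoc.
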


\begin{proof}
    For every $i$, define $S'_{i}$ to be a subset of $S_{i}$ of size exactly $\nu n$.
    For any $i, j, k$ we have $\abs{S_{i} \cap S_{j} \cap S_{k}} \geq \abs{S'_{i} \cap S'_{j} \cap S'_{k}}$.
    If the result is proved for the family $S'_{1}, \dots, S'_{t}$, then the result also holds for the original family $S_{1}, \dots, S_{t}$, with the same $\cH$ and $\cH_{i}$ acting as witnesses.
    We can thus assume that the original family consists of sets of size exactly $\nu n$.

    In order to obtain $\cH$, we apply \cref{proposition:cover-basic} to the family.
    After relabelling the subsets $S_{i}$, we can assume that $\cH = \bc{1, 2, \dots, h}$.
    
    We now partition the remaining subsets $S_{h+1}, \dots, S_{t}$ as follows.
    For each $i > h$, let $g_{i}$ be an index between $1$ and $h$ such that $\abs{S_{i} \cap S_{g_{i}}} \geq \nu^{2} n / 4$.
    The construction of $\cH$ guarantees the existence of at least one such index for every $i$, and if there are multiple such indices, we pick one arbitrarily.
    Define $\cG_{k} := \setbuild{i}{g_{i} = k}$, so each $\cG_{k}$ is a collection of subsets that has large intersection with $S_{k}$.
    
    In order to construct $\cH_{1}, \dots, \cH_{h}$, we apply \cref{proposition:cover-basic} to each of the families $\cG_{1}, \dots, \cG_{h}$.
    We will show how to construct $\cH_1$, as the construction of the other $\cH_i$'s are analogous.
    
    In this case, the set $\cU$ is $S_1$, which is of size $\abs{S_{1}} = \nu n$.
    The sets that make up the family are $S_{j} \cap S_{1}$ for every $j$ in $\cG_{1}$, which have size at least $\nu^{2} n / 4$.
    The parameter is $\nu / 4$.
    Applying \cref{proposition:cover-basic} we obtain $\cH_{1}$, a subset of $\cG_{1}$ of size at most $8 / \nu$.
    For any $j \in \cG_{1}$, either $j \in \cH_{1}$ or there is a $k \in \cH_{1}$ such that $\abs{S_{j} \cap S_{k} \cap S_{1}} \geq \nu^{3} n / 4^{3}$.
    We similarly get $\cH_{2}, \dots, \cH_{h}$.
    
    The first, second, and third claimed properties follow by construction.
    The fourth property follows by the argument in the previous paragraph, and the fact that the $\cG_{i} \setminus \cH_{i}$ cover the indices $\bc{h+1, \dots, n}$ that are not in any $\cH_{i}$.
\end{proof}

\end{document}